\newcommand{\indep}{\perp\hspace{-.5em}\perp}
\theoremstyle{definition}
\theoremstyle{remark}
\newcommand{\boeta}{\boldsymbol\eta}
\newdimen\AAdi%
\newbox\AAbo%
\def\AAk#1#2{\setbox\AAbo=\hbox{#2}\AAdi=\wd\AAbo\kern#1\AAdi{}}%
\newcounter{rcnt}[section]
\def\argmin{\mathop{\rm argmin}}
\title{Estimation of a score-explained non-randomized treatment effect in fixed and high dimensions}
\author{Debarghya Mukherjee, Moulinath Banerjee and Ya'acov Ritov \\\\ Department of Statistics, University of Michigan}
\begin{document}
\maketitle
%
%
%
%
%
%


\begin{abstract}
Non-randomized treatment effect models are widely used for the assessment of treatment effects in various fields and in particular social science disciplines like political science, psychometry, psychology. More specifically, these are situations where treatment is assigned to an individual based on some of their characteristics (e.g. scholarship is allocated based on merit or antihypertensive treatments are allocated based on blood pressure level) instead of being allocated randomly, as is the case, for example, in randomized clinical trials. Popular methods that have been largely employed till date for estimation of such treatment effects suffer from slow rates of convergence (i.e. slower than $\sqrt{n}$). In this paper, we present a new model coined SCENTS: Score Explained Non-Randomized Treatment Systems, and a corresponding method that allows estimation of the treatment effect at $\sqrt{n}$ rate in the presence of fairly general forms of confoundedness, when the `score' variable on whose basis treatment is assigned can be explained via certain feature measurements of the individuals under study. We show that our estimator is asymptotically normal in general and semi-parametrically efficient under normal errors. We further extend our analysis to high dimensional
covariates and propose a $\sqrt n$ consistent and asymptotically normal estimator based on a de-biasing procedure. Our analysis for the high dimensional incarnation can be readily extended to analyze partial linear models in the presence of noisy variables corresponding to the non-linear part of the model, where the noise can be correlated with the variables corresponding to the linear part. We analyze two real datasets via our method and compare our results with those obtained by using previous approaches. We conclude this paper with a discussion on some possible extensions of our approach. 
\end{abstract}



\section{Introduction and background}
\label{sec:intro}
\noindent
Estimation of treatment effect under non-random treatment allocation has been extensively studied in the statistics, biomedicine and econometrics literatures. As an introduction to the idea, imagine that a scholarship granting agency tests a group of high school students and assigns scholarship to those whose scores are above some pre-determined cutoff (w.l.o.g. 0, after centering). Of interest is to determine whether the scholarship has any tangible outcome on future academic performance.   Letting $Y_i$ be the score of student $i$ in a subsequent semester, we can write down a model of the type:
\begin{equation}
\label{eq:main}
Y_i = \alpha_0 \mathds{1}_{Q_i \ge 0} + X_i^{\top}\beta_0 + \nu_i \,
\end{equation}
where $X_i$ is a covariate vector including demographic information on the students, $Q_i$ is the centered score on the test, and $\nu_i$ is a residual term. The parameter $\alpha_0$ represents the effect of the treatment: scholarship.  If the hypothesis $\alpha_0 = 0$ is rejected by a statistical test, one concludes that the scholarship has a significant impact on subsequent academic score. A simple multiple regression cannot be performed to estimate the treatment $\alpha_0$ owing to the possible dependence between $Q$ and $\nu$. Similar examples are also prevalent in biomedicine, especially to assess the efficacy (or possible side effects) of existing drugs (for example, to quantify the side effects of the statin prescribed if some threshold related to cholesterol level is exceeded or anti-depressant prescribed when the Hamilton's score is higher than some threshold).  

It is instructive to take a brief moment to compare and contrast this setup with traditional randomized clinical trials. In both, we estimate the effect of a treatment which applies to a subset of the participants. The main difference is that the allocation of treatment in the latter case is independent of the covariates and error terms in the model, while in the former case, the applied treatment is non-trivially correlated with the covariate and error terms and is therefore endogenous. In our example, the score $Q_i$ on the assessment test, based on which the treatment (scholarship when $Q_i > 0$) is applied, is not only correlated with the indicator of treatment  and the background covariates $X_i$, but \emph{also} the unmeasured sources of variation (say, native abilities of the individual not captured by the observed covariates)  contained in the residual $\nu_i$. More meritorious students are more likely to get the scholarship and are also more likely to have higher values of $Y$, above what is predicted by the scholarship effect and their demographics! This typically translates to what is called the `endogeneity assumption': $E(\nu_i |Q_i) \ne 0$, and hence the model in equation \eqref{eq:main} cannot be estimated using simple linear regression of $Y$ on $(1(Q >0), X)$ unlike the randomized trial framework. This model was initially studied under the name of regression discontinuity design in (\citep{thistlethwaite1960regression}, \citep{campbell1963experimental}), who analyzed data from a national merit competition (see \citep{holland1957} and \citep{thistlethwaite1959effects}) and have since found varied applications, for example \citep{stadthaus1972comparison}, \citep{erickson1972evaluating} in education,  \citep{lohr1972historical} in health related research, \citep{berk1983capitalizing} in social research, \citep{finkelstein1996clinical} in epidemiology, to name a few. 

Most of the analysis in such jump-effect models, till date, is based off of a local analysis, since observations from a small neighborhood of the cut-off $Q$ are considered as almost free from endogeneity and only those are used to get an approximately unbiased estimate of the treatment effect. This is the main reason why these methods only provide an estimation of local average of the treatment effect around the cut-off. 
However, in many practical scenarios, the interest is \emph{not solely} on the jump at the discontinuity but also on the \emph{global effect} of the treatment, which also requires assessing treatment effectivity for individuals that lie substantially away from the discontinuity.  If we consider for example a prestigious program, most of the investment is in students who are clearly above the threshold, and the social implications of the program are far from being restricted to those near it. Thus, measuring the effect of the jump should ideally keep the large implication in mind, whilst in the standard regression discontinuous design analysis, it is only the local phenomenon around the jump that is the object of focus. The aim of this manuscript is to propose and analyze a model which captures this overall phenomenon. 

To highlight our contribution and contrast it with the extant literature on the estimation of treatment effect in presence of endogeneity, let's first consider one of the standard approaches, as exemplified by the model below:
 \begin{equation}
     \label{eq:CCT}
      Y_i = (\alpha_0 + b_0 Q_i)\mathds{1}_{Q_i < \tau_0} + (\alpha_1 + b_1 Q_i)\mathds{1}_{Q_i \ge \tau_0} + X_i^{\top}\beta_0 + \nu_i \,,
 \end{equation}
 where $Q_i$ is the score variable which determines treatment via the known threshold $\tau_0$, and $X_i$'s are background covariates. The parameter of interest is $\alpha_1 - \alpha_0$ which encodes the effect of the treatment. Generally, only weak assumptions are made on the conditional expectation of $E(Y \mid Q, X)$ to encode possible endogeneity.  As the main takeaway from the model is that both the intercept and the slope of $Q$ change at the cut-off, the traditional approach for the estimation of the treatment effect is as follows: Select a possibly (data-driven) bandwidth $h_n$ and look at all the observations $Y_i's$ for which $|Q_i - \tau_0| \le h_n$. Now run a weighted local polynomial regression on these observations to estimate $\alpha_1 - \alpha_0$. As this approach effectively uses $O_p(n h_n)$ observations for estimation purposes, the rate of convergence of the estimator is \emph{slower that $\sqrt{n}$} (typically $\sqrt{nh_n}$, where the bandwidth $h_n \to 0$ is chosen by standard bias-variance trade-off), see, for example \cite{calonico2019regression}. However, the precision with which the treatment-effect is estimated can be improved in situations where it is possible to exploit the relationship between the score $Q$ and the measured covariates $X$ on the individuals of interest. In such cases, we argue below that it becomes possible to use \emph{all our samples} to estimate the treatment effect at $\sqrt{n}$ rate. \\

\noindent
\textbf{Our main contribution: } We assume that the linear equation \eqref{eq:main} can be augmented by a second equation that explains the score $Q$ through the background covariates $X$ to obtain:
 \begin{align}
\label{eq:main_1}
& Y_i = \alpha_0 \mathds{1}_{Q_i \ge 0} + X_i^{\top}\beta_0 + \nu_i \notag \\
& Q_i = g(X_i) + \eta_i \,.
\end{align}
The effect of unmeasured covariates that can affect both $Q$ and $Y$ is encoded by the mean 0 error vector $(\nu_i, \eta_i)$ on which we place no parametric assumptions, they may be dependent and $b(\eta) \equiv E(\nu|\eta) \ne 0.$  In fact, one can (and indeed this may be necessary under certain circumstances) generalize the second equation and write: 
 \begin{align}
\label{eq:main_2}
& Y_i = \alpha_0 \mathds{1}_{Q_i \ge 0} + X_i^{\top}\beta_0 + \nu_i \notag \\
& Q_i = g(Z_i) + \eta_i \,.
\end{align}
where $X$ and $Z$ may be identical or completely disjoint or may share several covariates. The intuition behind this generalization is that it is quite possible that some extra covariates are available while measuring $Y$ but not while measuring $Q$ or vice-versa. The updated equation \eqref{eq:main_2} takes care of that. 

At first glance, the augmentation of the second equation in \eqref{eq:main_2} may suggest that $Z$ is being used as an instrumental variable. However, this is not necessarily the case in our model since the \emph{exclusion restriction} \citep{lousdal2018introduction} that is critical in the instrumental variables approach, namely that the instrument should influence $Y$ only through $Q$ and not directly, is no longer satisfied (especially when $X = Z$, which is typically the case for many real applications).   Consequently, the two-stage least squares procedure \citep{angrist1995two} typically employed in the instrumental variable literature is \emph{invalid} in our problem. We emphasize that, in contrast to instrumental variable regression, we \emph{do not need} to deploy a completely new set of covariates to explain $Q$, rather using at least part of the already available background information $X$ that influences $Y$ to also explain the score $Q$. This is typically the case in many real life examples, e.g. in the scholarship example, the background information, that explains the future academic performance of a student (Y) also influences their scholarship test score $Q$.

In this paper, we analyze a simpler version of the above model (equation \eqref{eq:main_2}) for technical simplicity: 
\begin{align}
\label{eq:main_3}
& Y_i = \alpha_0 \mathds{1}_{Q_i \ge 0} + X_i^{\top}\beta_0 + \nu_i \notag \\
& Q_i = Z_i^{\top}\gamma_0 + \eta_i \,.
\end{align}
where we assume the error $(\nu, \eta) \indep (X, Z)$, $(X, Z)$ can be arbitrarily related, and $\nu$ is \emph{correlated} with $\eta$. However our methods can be extended to the more general model \eqref{eq:main_2} as discussed in Subsection \ref{sec:extension_non-param}. \footnote{Note that if $Z \indep X$, then one may use standard techniques from instrumental variable regression. However, this fails to hold in most our the intended applications.} 

We next discuss how this augmentation helps us obtain a $\sqrt{n}$-consistent estimator. \emph{If we observe $\eta$}, writing $\nu_i = b(\eta_i) + \eps_i$, our model would reduce to a simple partial linear model (see e.g. \citep{yatchew1997elementary}, \citep{bhattacharya1997semiparametric}, \citep{muller2015partial} and references therein) with $(\alpha_0, \beta_0)$ being the parametric component and the unknown mean function $b$ being the nonparametric component: 
\begin{equation}
\label{eq:plm}
 Y_i = \alpha_0 \mathds{1}_{Q_i \ge 0} + X_i^{\top}\beta_0 + b(\eta_i) + \eps_i \,.
\end{equation}
Hence, we could acquire a $\sqrt{n}$ consistent estimator of $\alpha_0$ following the standard analysis of the partial linear model \emph{provided $\bbE(\var(\mathds{1}(Q > 0) \mid \eta)) > 0)$} (for more details about semiparametric efficient estimation in partial linear models, see \citep{robinson1988root} or \citep{schick1986asymptotically}). Indeed, without this assumption it is easy to see that there is an identifiability problem and the effect of $\alpha_0$ cannot be separated from the effect of $b$ in the above model. 

As \emph{we don't observe $\eta$}, we cannot use the standard partial linear model analysis proposed in \citep{robinson1988root} or \citep{schick1986asymptotically} to estimate $\alpha_0$. We can nevertheless approximate it by \eqref{eq:main_3}, using the residuals obtained via regressing $Q$ on $Z$, i.e. plug-in the estimates of $\eta_i$'s in equation \eqref{eq:plm} and treat the resulting equation as an \emph{approximate partial linear model}.  Indeed, this idea lies at the heart of our method. 
However, a naive replacement of $\eta$ by $\hat \eta$ (obtained by regressing $Q$ on $Z$) is not sufficient, as the approximation error of $\eta$ is $\hat \eta - \eta = O_p(n^{-1/2})$, whilst we need the approximation error to be of the order $o_p(n^{-1/2})$ for $\sqrt{n}$-consistent estimation of $\alpha_0$. Hence, some more fine-tuning is needed to remove the bias. We elaborate our method of estimation in Section \ref{sec:est_method}. 

The assumption $\gamma_0 \ne 0$ is critical to our analysis, because if not, 
$Q_i = \eta_i$ and equation \eqref{eq:plm} becomes: 
$$
Y_i = \alpha_0 \mathds{1}_{Q_i \ge 0} + X_i^{\top}\beta_0 + b(Q_i) + \eps_i \,.
$$
It is  now no longer possible to estimate $\alpha_0$ at $\sqrt{n}$ rate as it is hard to separate the first and third term of the RHS. Thus, the augmenting equation $Q = Z^{\top}\gamma_0 + \eta$ with $\gamma_0 \ne 0$ is what prevents the variable corresponding to the parametric component of interest from becoming a measurable function of the variable corresponding to the non-parametric component of the model, and enables converting the estimation problem to an \emph{approximate} partial linear model. 
As noted earlier, for $\alpha_0$ to be estimated as $\sqrt{n}$ rate, it is not necessary for $Q$ to have a linear relation with $Z$, in fact any non-linear parametric relation like $Q = g(Z) + \eta$ for some known link function $g$ will work as long as $\bbE\left(\var(\mathds{1}(Q > 0) \mid \eta)\right) > 0$. Also the assumption $(\nu, \eta) \indep (X, Z)$ is not necessary, all we need is that $\bbE[\eta \mid Z] = 0$ and $\bbE[\eps \mid X, Z, \eta] = 0$ for $\sqrt{n}$-consistent estimation of $\alpha_0$. 

There is also a degree of similarity between our model and \emph{triangular simultaneous equation models}, which are well-studied in the economics literature. Interested readers may take a look at \citep{newey1999nonparametric}, or \citep{pinkse2000nonparametric} and references therein for more details. However, there are two key differences: (1) The triangular simultaneous equation models generally assume a smooth link function between $Y_i$ and $(X_i, Z_i, Q_i)$, whilst our model presents an inherent discontinuity. Hence our work \emph{cannot be derived} from analyses of the triangular simultaneous equation model. (2) The direct influence of X on Y would ruin the identification in these non-separable models. 

Estimation of the treatment effect via partial linear model in the context of regression discontinuity design has also been mentioned in \citep{porter2003estimation} 
through the lens of the following model: 
$$
Y_i = \alpha_0 \mathds{1}_{Q_i > 0} + b(Q_i) + \eps_i, \ \ \  \bbE[\eps \mid Q, \mathds{1}_{Q>0}] = 0 \,.
$$
The author argued that one \emph{cannot estimate} $\alpha_0$ at $\sqrt{n}$ rate as $S_i = \mathds{1}_{Q_i > 0}$ can be completely explained by $Q_i$, i.e. $\bbE[\var(S \mid Q)] = \bbE\left[(S - \bbE[S\mid Q])^2\right] = 0$ (see the discussion after equation \eqref{eq:plm}). The main difference between our model and that of \citep{porter2003estimation} is the usage of the background information $X$ (or $(X, Z)$) which influences both $Q$ and $Y$. This, on one hand, prevents us from using $X$ as simple instruments, and on the other hand ensures that $S$ is not completely explained by $\eta$ (the random variable corresponding non-parametric component $b$) at it also depends on $X$, which enables us to estimate the parametric component at $\sqrt{n}$ rate. 

Recently \citep{angrist2015wanna} presented an idea where one may exploit background covariates to estimate the treatment effect at $\sqrt{n}$ rate under the assumption $\bbE[Y \mid Q, X] = \bbE[Y \mid X]$, i.e. the relation between $Q$ and $Y$ can be explained fully by $X$, which permits them to perform simple OLS method to obtain a $\sqrt{n}$-consistent estimator of the treatment effect. However, their assumption is not satisfied in our case: the main component of our model is the unobserved $\eta$ that quantifies the \emph{innate} ability of an individual (i.e. intelligence), which, along with $X$'s, influences both $Q$ and $Y$ (in-fact it might even be argued that $\eta$ is as important as $X$, especially in the scholarship example like one presented in \citep{angrist2015wanna}, as innate abilities plays a primary role in the performance of an individual in the scholarship test and beyond).  
\\\\
\noindent
{\bf Extension to high dimensional setup: }We have also extended our analysis to the high dimensional model, where we assume both the dimension of $X$ and $Z$ are much larger than the available sample size. Analysis of treatment effect in presence of high dimensional covariates is a relatively new topic. For example,  \cite{belloni2014inference}  proposed a debiased approach for inference on the treatment effect in presence of high dimensional controls, and, recently, \cite{qiu2021inference} studied heterogeneous treatment effect in the presence of high dimensional covariates under the availability of instrumental variables; \cite{guo2021estimation} proposed a modeling strategy for personalized medicine in presence of high dimensional covariates and  \cite{arai2019causal} presented an approach to estimate the parameters of a standard regression discontinuity design in high dimensions.

Following the previous discussion, our model in the high dimensional setup can be viewed as a variant of the high dimensional partial linear model where we do not observe the random variable corresponding to the non-linear part, but instead a noisy version of it, with the noise being correlated with both the random variables corresponding to the linear and non-linear parts. Several efforts have been made to estimate both the linear and the non-linear part of a high dimensional partial linear model, of which we mention a few here. \cite{lian2019projected}  used a projected spline estimator along with the $\ell_1$ penalty to estimate the parametric and non-parametric parts of a high dimensional partial linear model; \cite{lv2017debiased} extended the analysis to the distributed setup; \cite{muller2015partial} used the $\ell_1$ penalty for the linear part and a smoothing penalty (i.e. penalty involving the double derivative of the non-parametric function) for the non-linear part; \cite{han2017adaptive} proposed an adaptive approach based on \cite{honore1997pairwise}'s pairwise difference based method which does not require knowledge of any function class a-priori;  \cite{zhu2019high} proposed a method to de-bias the estimator of the parameter corresponding to the linear part which facilitates inference for any particular co-ordinate or some linear combination of the parameters corresponding to the linear part. To the best of our knowledge, the closest paper that studies a similar situation (i.e. partial linear model with erroneous observations corresponding to the non-linear part) is \cite{zhu2017nonasymptotic}. Our analysis is different from that of \cite{zhu2017nonasymptotic} for the following reasons: (i) our basic estimation procedure  (elaborated in the subsequent sections) is different as our model is quite different from that considered in \cite{zhu2017nonasymptotic} and (ii) we provide a de-biased estimate of the treatment effect $\alpha_0$ and establish asymptotic normality for purposes of inference, whereas \cite{zhu2017nonasymptotic} only calculate the rate of the estimation error in their model.   
\\\\
\noindent
To summarize, our work makes the following contributions:
\begin{enumerate}
\item It is able to take care of endogeneity among the errors in a general manner and provide a $\sqrt{n}$-consistent estimator of the treatment effect. Indeed, the main feature of our approach lies in modeling $Q$ itself in terms of covariates up to error terms, which enables \emph{the use of the entirety of data available} and \emph{not just the observations in a small vicinity of the boundary defined by the $Q$ threshold}, on which existing approaches are typically based.
\item Our estimate achieves semiparametric efficiency under an appropriate submodel. 
\item Our method does not depend on tuning parameters, in the sense that the use of tuning parameters to estimate the treatment effect is
\emph{secondary}. As will be seen in Section \ref{sec:est_method}, we do require tuning parameter specifications for non-parametric estimation of $b(\eta)$, but as long as those parameters satisfy some minimal conditions, our  estimate of $\alpha_0$ -- in terms of both rate of convergence and asymptotic distribution-- does not depend on it.
\item Our analysis of SCENT in presence of high dimensional covariates appears to be the first systematic attempt to deal with de-biasing in the high dimensional partial linear model where we observe a noisy version of the random variable corresponding to the non-linear part, and the noise is correlated with both the random variables corresponding to the linear and the non-linear parts. 
\end{enumerate}

\noindent
\textbf{Organization of the paper: } In Section \ref{sec:est_method} we describe the estimation procedure. Section \ref{sec:theory} provides the theoretical results along with a brief outline of the proof of asymptotic normality of our estimator. In Section \ref{sec:real_data}, we provide analyses of two real data examples using our method, as well as comparisons to previous methods. In Section \ref{sec:conclusion} we present some possible future research directions based on this work. In particular, we discuss the scenario where the treatment effect, which is assumed to be constant for the model studied in this paper, can depend on the innate ability $\eta$: in other words, what happens when $\alpha_0$ is replaced by $\alpha(\eta)$, a generalization that may be warranted in certain applications. We point out that $\sqrt{n}$-consistent estimates of the integrated treatment effect can be obtained in this situation as well.  Rigorous proofs of the main results are established in Section \ref{sec:proof_main_thm} - Section \ref{sec:proof_prop} of the Supplementary document. In Section \ref{sec:additional_results} of the Supplementary document, we present proofs of auxiliary results that are required to prove the main results.  In Section \ref{sec:spline_details} of the Supplementary document, we provide a few details on the spline estimation techniques used in our analysis and in Section \ref{sec:main_algo}  of the Supplementary document, an algorithm integrating the main steps of the proposed methodology for the ease of implementation. 
\\\\
\noindent
\textbf{Notation: }For any matrix $A$, we denote by $A_{i, *}$, the $i^{th}$ row of $A$ and by $A_{*, j}$ the $j^{th}$ column of $A$. Both $a_n \lesssim_P b_n$ and $a_n  = O_p(b_n)$ bear the same meaning, i.e. $a_n/b_n$ is a tight (random) sequence. Also, for two non-negative sequences $\{a_n\}$ and $\{b_n\}$, we denote by $a_n \gg b_n$ (respectively $a_n \ll b_n$), the conditions that $\liminf_n a_n/b_n \to \infty$ (respectively $\limsup_{n} a_n/b_n \to 0$).  For any random variable (or object) $X$, we denote by $\cF(X)$, the sigma-field generated by $X$. For two random variables $X$ and $Y$, $X \indep Y$ indicates that $X$ and $Y$ are independent, $X \perp Y$ denotes that they are uncorrelated. 
\section{Estimation procedure for $\alpha_0$}
\label{sec:est_method}
\noindent
As mentioned in Section \ref{sec:intro}, our model for estimating the treatment effect can be written as: 
\begin{align*}
Y_i & = \alpha_0 \mathds{1}_{Z_i^{\top}\gamma_0 + \eta_i \ge 0} + X_i^{\top}\beta_0 + \nu_i \,.
\end{align*}
where $\nu_i$ and $\eta_i$ are correlated. Defining $b(\eta_i) = \bbE(\nu_i \mid \eta_i)$, write $\nu_i = b(\eta_i) + \epsilon_i$ where $\epsilon_i \perp \eta_i$. Using this we can rewrite our model as: 
\begin{align}
\label{eq:new1}Y_i & = \alpha_0 \mathds{1}_{Z_i^{\top}\gamma_0 + \eta_i \ge 0} + X_i^{\top}\beta_0 + b(\eta_i) + \epsilon_i \,.
\end{align}
We first divide the whole data in three (almost) equal parts, say $\cD_1, \cD_2, \cD_3$. Henceforth for simplicity we assume each $\cD_i$ has $n/3$ observations. Denote the dimension of $X$ and $Z$ by $p_1$ and $p_2$ respectively. The first two data sets are used to obtain (consistent) estimates of several nuisance parameters which are then plugged into the equations corresponding to the third data set from which an estimator of the treatment effect is constructed. The data splitting technique makes the theoretical analysis more tractable as one can use independence among the three subsamples to our benefit. Furthermore, by rotating the samples (to be elaborated below), we obtain three asymptotically independent and identically distributed estimates which are then averaged to produce a final estimate that takes advantage of the full sample size.  While we believe that the estimator obtained without data-splitting achieves the same asymptotic variance, a point that is corroborated by simulation studies (not reported in the manuscript), an analysis of this estimator would be incredibly tedious with minimal further insight. 
\\\\
\noindent
We, first, estimate $\gamma_0$ from $\cD_1$ via standard least squares regression of $Z$ on $Q$: 
$$
\hat \gamma_n = (Z^{\top}Z)^{-1}Z^{\top}Q \,.
$$
Using $\hat \gamma_n$, we expand our first model equation as :  
%
\begin{align}
Y_i & = \alpha_0 S_i + X_i^{\top}\beta_0 + b(\hat \eta_i) + (\eta_i - \hat \eta_i)b'(\hat \eta_i) + R_{1, i} + \epsilon_i  \notag \\
\label{eq:exp_1} & = \alpha_0 S_i + X_i^{\top}\beta_0 + b(\hat \eta_i) + b'(\hat \eta_i)Z_i^T(\hat \gamma_n - \gamma_0) + R_{1, i} + \epsilon_i \,.
\end{align}
where $S_i = \mathds{1}_{Q_i > 0}$, $\hat \eta_i  = Q_i - Z_i^{\top}\hat \gamma_n$ and $R_{1, i} = (\hat \eta_i - \eta_i)^2 b''(\tilde \eta_i)/2$ is the residual obtained from a two-step Taylor expansion with $\tilde \eta_i$ lying between $\eta_i$ and $\hat \eta_i$. It should be pointed out that if $\eta_i$ were known, our model (equation \eqref{eq:new1}) would reduce to a simple partial linear model and estimation of $\alpha_0$ would become straight-forward. As we \emph{don't} observe $\eta_i$, but rather use $\hat \eta_i$ as its proxy, the corresponding approximation error needs careful handling, as we need to show that the estimator does not inherit any resulting bias. Indeed, this is one of the core technical challenges of this paper, and explained in the subsequent development. 

Note that the function $b'$ in equation \eqref{eq:exp_1} is unknown. Since $\hat \gamma_n = \gamma_0 + O_p(n^{-1/2})$ (which, in turn, implies $\hat \eta_i = \eta_i + O_p(n^{-1/2})$), as long as we have a consistent estimate of $b'$, say $\hat b^{'}$,  the approximation error $(b'(\hat \eta_i) - \hat b'(\hat \eta_i))Z_i(\hat \gamma_n - \gamma_0)$ is $o_p(n^{-1/2})$ and therefore asymptotically negligible. 

We now elaborate how we use B-spline basis to estimate $b'$ from $\cD_2$. Equation \eqref{eq:new1} can be rewritten as: 
\begin{align}
\label{eq:spline1}
Y_i & = \alpha_0 S_i + X_i^{\top}\beta_0 + b(\hat \eta_i) + \epsilon_i + R_{i}\,.
\end{align}
where $R_{i} = b(\eta_i)  - b(\hat \eta_i)$. Ignoring the remainder term (which is shown to be asymptotically negligible under some mild smoothness condition on $b$ to be specified later), we estimate $b'$ via a B-spline basis from equation \eqref{eq:spline1}. The theory of spline approximation is mostly explored for estimating compactly supported non-parametric regression functions. Our errors $\eta$ are, of course, assumed to be unbounded as otherwise the problem would become artificial. However, there are certain technical issues with estimating $b'$ on the entire support (see Remark \ref{rem:spline_entire}), and to circumvent that we restrict ourselves to a compact (but arbitrary) support $[-\tau, \tau]$, i.e. we consider those observations for which $|\hat \eta_i| \le \tau$. We then use a cubic B-spline basis appropriately scaled to the interval of interest with equispaced knots to estimate $b'$ \footnote{While we work with cubic splines, one may certainly use higher degree polynomials. However, in practice, it has been observed that cubic spline works really well in most of the scenarios}.  Notationally, we use $K-1$ knots to divide $[-\tau, \tau]$ into $K$ intervals of length $2\tau/K$ where $K = K_n$ increases with $n$ at an appropriate rate (see remark \ref{rem:K} below), giving us in total $K+3$ spline basis functions. For any $x$, we use the notation $\tilde N_K(x) \in \bbR^{(K+3)}$ to denote the vector of scaled B-spline basis functions $\{\tilde N_{K,j}\}_{j=1}^{K+3}$ evaluated at $x$\footnote{A brief discussion on B-spline basis and scaled B-spline basis is presented in Section \ref{sec:spline_details} of the Supplementary document for the ease of the reader.}. Using these basis functions we further expand equation \eqref{eq:spline1} as follows: 
\begin{equation}
\label{eq:basis_exp}
Y_i = \alpha_0 S_i + X_i^{\top}\beta_0 + \tilde N_K(\hat \eta_i)^{\top}\omega_{b, \infty, n} + \epsilon_i + R_{i} + T_{i}\,.
\end{equation}
for all those observations with $|\hat \eta_i| \le \tau$, where $T_{i}$ is the spline approximation error, and $T_i = b(\hat \eta_i) - N_K(\hat \eta_i)^{\top}\omega_{b, \infty, n}$ and $\omega_{b, \infty, n}$ is the (population) parameter defined as: 
$$
\omega_{b, \infty, n} = \argmin_{\omega \in \bbR^{(K+3)}} \sup_{|x| \le \tau} \left|b(x) - \tilde N_K^{\top}(x)\omega\right| \,.
$$
Suppose we have $n_{2} \le n/3$ observations in $\cD_2$ with $|\hat \eta_i| \le \tau$. Denote by $\bY$ the vector of all the corresponding $n_2$ responses, by $\bX \in \bbR^{(n_2, p_1)}$ the covariate matrix, and by $\tilde{\bN}_K \in \bbR^{(n_2, K+3)}$ the approximation matrix with rows $\tilde{\bN}_{K, i*} = \tilde N_K(\hat \eta_i)$. Regressing $Y$ on $(\bS, \bX, \tilde{\bN}_K)$ we estimate $\omega_{b, \infty, n}$ (details can be found in the proof of Proposition \ref{thm:spline_consistency}) and set $\hat b'(x) = \nabla \tilde{N}_K(x)^{\top}\hat \omega_{b, \infty, n}$ where $\nabla \tilde{N}_K(x)$ is the vector of derivates of each of the co-ordinates of $\tilde{N}_K(x)$. The following theorem establishes consistency of our estimator of $b'$ (proof can be found in Section \ref{sec:proof_prop} of the supplementary document):  
\begin{proposition}
\label{thm:spline_consistency}
Under Assumptions \ref{assm:independence} - \ref{assm:moment} (elaborated in Section \ref{sec:theory}) we have: 
$$
\sup_{|x| \le \tau} \left|b'(x) - \hat b'(x)\right| = \sup_{|x| \le \tau} \left|b'(x) - \nabla \tilde N_K(x)^{\top}\hat \omega_{b, \infty, n}\right| = o_p(1) \,. 
$$
\end{proposition}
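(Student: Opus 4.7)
The plan is to bound the uniform error through the decomposition
\[
\hat b'(x) - b'(x) \;=\; \underbrace{\bigl[\nabla \tilde N_K(x)^{\top} \omega_{b,\infty,n} - b'(x)\bigr]}_{\text{(I) approximation bias}} \;+\; \underbrace{\nabla \tilde N_K(x)^{\top}\bigl(\hat \omega_{b,\infty,n} - \omega_{b,\infty,n}\bigr)}_{\text{(II) stochastic fluctuation}},
\]
and to treat the two pieces separately. For (I), I would invoke classical cubic-spline theory: under sufficient smoothness of $b$ (already needed for the Taylor expansion in \eqref{eq:exp_1}), one has $\sup_{|x|\le\tau}|b(x)-\tilde N_K(x)^{\top}\omega_{b,\infty,n}|=O(K^{-s})$ for some $s\ge 2$. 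Comparing $\omega_{b,\infty,n}$ to a quasi-interpolant $\omega^{\star}$ whose function \emph{and} derivative approximate $b$ and $b'$ at rates $O(K^{-s})$ and $O(K^{-(s-1)})$ respectively, the sup-norm difference $\tilde N_K^{\top}(\omega_{b,\infty,n}-\omega^{\star})$ is $O(K^{-s})$, and the Markov--Bernstein inverse inequality $\|p'\|_\infty\lesssim K\|p\|_\infty$ on the $K$-partitioned spline space bounds the derivative of this difference by $O(K^{-(s-1)})$. Hence $\sup_{|x|\le\tau}|\text{(I)}|=O(K^{-(s-1)})=o(1)$ as $K\to\infty$.

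For (II), I would apply Frisch--Waugh--Lovell to the regression in \eqref{eq:basis_exp}. Letting $\tilde P$ denote the residual projection orthogonal to the column span of $[\bS,\bX]$ and substituting the model representation \eqref{eq:basis_exp} for $\bY$,
\[
\hat\omega_{b,\infty,n} - \omega_{b,\infty,n} \;=\; \bigl(\tilde\bN_K^{\top} \tilde P\, \tilde\bN_K\bigr)^{-1} \tilde\bN_K^{\top} \tilde P\bigl(\boldsymbol\epsilon + \mathbf R + \mathbf T\bigr),
\]
where $\mathbf R$ collects the Taylor-type remainders $R_i = b(\eta_i)-b(\hat\eta_i)$ and $\mathbf T$ the spline residuals $T_i = b(\hat\eta_i)-\tilde N_K(\hat\eta_i)^{\top}\omega_{b,\infty,n}$. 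Using the local support of B-splines and standard maximal inequalities, $\|(\tilde\bN_K^{\top}\tilde P\tilde\bN_K)^{-1}\tilde\bN_K^{\top}\tilde P\boldsymbol\epsilon\|_\infty = O_p(\sqrt{K\log K/n})$; a one-term Taylor expansion of $b$ together with $\hat\gamma_n-\gamma_0 = O_p(n^{-1/2})$ and moment control on $Z$ yields $\|\mathbf R\|_\infty = O_p(n^{-1/2}\sqrt{\log n})$; and $\|\mathbf T\|_\infty = O(K^{-s})$ by definition of $\omega_{b,\infty,n}$. Multiplying these by $\sup_{|x|\le\tau}\|\nabla\tilde N_K(x)\|_1 = O(K)$ (since at any $x$ only $O(1)$ derivatives $\nabla\tilde N_{K,j}(x)$ are nonzero, each of magnitude $O(K)$), the stochastic contribution is $O_p(K^{3/2}\sqrt{\log K/n}) + O_p(K^{-(s-1)})$, which is $o_p(1)$ under the growth rate on $K$ imposed in Remark \ref{rem:K}.

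The hardest step will be justifying the minimum-eigenvalue bound $\lambda_{\min}(\tilde\bN_K^{\top}\tilde P\tilde\bN_K / n_2)\gtrsim 1/K$ despite the design rows depending on the plug-in $\hat\eta_i$ rather than $\eta_i$. My plan is to compare to the oracle Gram matrix evaluated at the true $\eta_i$'s: by the Lipschitz property of the B-spline basis, the perturbation in operator norm is $O_p(K\|\hat\eta-\eta\|_\infty) = O_p(K/\sqrt n)$, negligible for $K = o(\sqrt n)$. A population-level lower bound for the oracle matrix then follows from the identifiability-type condition $\bbE(\var(\mathds{1}(Q>0)\mid\eta))>0$, combined with a positive density of $\eta$ on $[-\tau,\tau]$ so that the local support of each basis function is adequately populated. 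Once these perturbation arguments are in place and the approximation errors propagated through the Frisch--Waugh--Lovell formula, everything else is a routine adaptation of standard B-spline regression theory.
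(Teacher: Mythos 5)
Your decomposition into spline-approximation bias and stochastic fluctuation, and the use of Frisch--Waugh--Lovell to write $\hat\omega_b-\omega_b$ as a function of the projected residuals, is algebraically valid and captures the right structure; the treatment of the bias piece via classical B-spline quasi-interpolant bounds and the Markov--Bernstein inverse inequality is also essentially correct. However, there are two concrete gaps in the stochastic part. First, your plan for the minimum-eigenvalue bound compares the sample Gram matrix at the plug-in points $\hat\eta_i$ to the oracle Gram at the true $\eta_i$ and asserts the perturbation is $O_p(K/\sqrt n)$, implicitly requiring $\|\hat\eta-\eta\|_\infty=O_p(n^{-1/2})$. But $\hat\eta_i-\eta_i=Z_i^{\top}(\gamma_0-\hat\gamma_n)$, so $\max_i|\hat\eta_i-\eta_i|\lesssim\max_i\|Z_i\|\cdot\|\hat\gamma_n-\gamma_0\|$, and under Assumption \ref{assm:moment} (only finite fourth moments of $Z$, no sub-Gaussianity) the factor $\max_i\|Z_i\|$ is $O_p(n^{1/4})$, making the perturbation $O_p(Kn^{-1/4})$: your argument would then only cover $K\ll n^{1/4}$, not the full range $K\ll n^{1/3}$ of Remark \ref{rem:K}. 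The paper avoids the max-of-$Z$ entirely: it applies the Rudelson concentration bound (Theorem \ref{thm:rudelson}) to show that $\tilde\bN_K^{\top}\tilde\bN_K/n$ is close to its conditional expectation given $\cD_1$, and then lower-bounds $\lambda_{\min}$ of that population matrix by Theorem \ref{thm:eigen_spline} together with a direct lower bound on the conditional density of $\hat\eta=\eta+Z^{\top}(\gamma_0-\hat\gamma_n)$ on $[-\tau,\tau]$ (obtained by convolving $f_\eta$ against $f_Z$ on a small ball, using Assumption \ref{assm:b}(iv)). The same issue affects your $\ell_\infty$ bound $\|\mathbf R\|_\infty=O_p(n^{-1/2}\sqrt{\log n})$: that requires sub-Gaussian $Z$; the paper works throughout with $\ell_2$ norms ($\|\bR_1\|_2/\sqrt n=O_p(n^{-1/2})$), which only need second moments.

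Second, your FWL formula centers the analysis on the projected Gram matrix $\tilde\bN_K^{\top}\tilde P\,\tilde\bN_K$ with $\tilde P=\proj^\perp_{[\bS,\bX]}$, and then invokes "the local support of B-splines" to control the $\ell_\infty$ norm of $(\tilde\bN_K^{\top}\tilde P\,\tilde\bN_K)^{-1}\tilde\bN_K^{\top}\tilde P\boldsymbol\epsilon$. The difficulty is that subtracting $\tilde\bN_K^{\top}\proj_{[\bS,\bX]}\tilde\bN_K$ (a dense, albeit low-rank, matrix) destroys the bandedness of $\tilde\bN_K^{\top}\tilde\bN_K$, so the locality-based exponential-decay argument for inverses of banded spline Gram matrices no longer applies directly and would need an extra Woodbury/Sherman--Morrison step. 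The paper sidesteps this by using the equivalent back-substitution form of the OLS solution: it first computes $\hat\beta=(X^{\top}\proj^\perp_{\tilde\bN_K}X)^{-1}X^{\top}\proj^\perp_{\tilde\bN_K}Y$ (where the inverse is a fixed $(1+p_1)\times(1+p_1)$ matrix, handled by a law of large numbers in Lemma \ref{lem:beta_bound}), and then sets $\hat\omega_b=(\tilde\bN_K^{\top}\tilde\bN_K)^{-1}\tilde\bN_K^{\top}(Y-X\hat\beta)$, keeping the growing matrix banded and never forming the projected Gram.
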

\begin{remark}
\label{rem:K}
Henceforth, we choose $K \equiv K_n$ such that $n^{1/8} \ll K \ll n^{1/3}$ to control the approximation errors of certain non-parametric functions (including $b(\eta)$) that appear in our analysis via the B-spline basis. However the bounds can be improved in presence of additional derivations of the non-parametric functions involved.  
\end{remark}
\vspace{0.1in}
\noindent
The final (key) step involves $\alpha_0$ from $\cD_3$. Suppose there are $n_3 \le n/3$ observations in $\cD_3$ with $|\hat \eta_i| \le \tau$. Replacing $b'$ by $\hat b'$ obtained from $\cD_2$ in equation \eqref{eq:exp_1} we obtain: 
\begin{align}
\label{eq:new3} Y_i & = \alpha_0 S_i + X_i^{\top}\beta_0 + b(\hat \eta_i) + \hat b'(\hat \eta_i)Z^{\top}_i (\hat \gamma_n - \gamma_0) + R_{1, i} + R_{2, i} + \epsilon_i \notag\\ 
& \triangleq \alpha_0 S_i + X_i^{\top}\beta_0 + b(\hat \eta_i) + \tilde Z^{\top}_i (\hat \gamma_n - \gamma_0) + R_{1, i} + R_{2, i} + \epsilon_i \,, \\
Q_i - Z_i^{\top} \hat \gamma_n & = - Z_i^{\top}(\hat \gamma_n - \gamma_0) + \eta_i \notag \,.
\end{align}
where we define $\tilde Z_i = \hat b'(\hat \eta_i) Z_i$, the residual term $R_{2, i}$ as: 
$$
R_{2, i} = \left(b'(\hat \eta_i) - \hat b'(\hat \eta_i)\right)Z_i^{\top}\left(\hat \gamma_n - \gamma_0\right) =  \left(b'(\hat \eta_i) - \hat b'(\hat \eta_i)\right)\left(\hat \eta_i - \eta_i\right)  \,.
$$
and $R_{1, i}$ is same as in equation \eqref{eq:exp_1}. An inspection of the first part of Equation \eqref{eq:new3} shows that up to the remainder terms $\left\{\left(R_{1, i}, R_{2, i}\right)\right\}_{i=1}^{n_3}$, our model is a partial linear model with parameters $(\alpha_0, \beta_0, b)$. These remainder terms are asymptotically negligible as shown in the proof of our main theorem (Theorem \ref{thm:main_thm}). We estimate $\alpha_0$ from equation \eqref{eq:new3} using standard techniques for the partial linear model which, again, involve approximating the function $b$ with the same B-spline basis as before:  
\begin{align}
Y_i & = \alpha_0 S_i + X_i^{\top}\beta_0 + \tilde Z_i^{\top}(\hat \gamma_n - \gamma_0)+ b(\hat \eta_i) + R_{1, i} + R_{2, i} + \eps_i \notag \\
\label{eq:spline_exp_1} & =  \alpha_0 S_i + X_i^{\top}\beta_0 + \tilde Z_i^{\top}(\hat \gamma_n - \gamma_0)+ \tilde N_K(\hat \eta_i)^{\top}\omega_{b, \infty, n} + R_{1, i} + R_{2, i} + R_{3, i} + \eps_i \,.
\end{align}
\newline
where $R_{3, i}$ is the spline approximation error, i.e. $R_{3, i} = b(\hat \eta_i) - \tilde N_K(\hat \eta_i)^{\top}\omega_{b, \infty, n}$. Combining equation \eqref{eq:spline_exp_1} along with the second equation of \eqref{eq:new3} we formulate the following linear model equation: 
\begin{align}
\label{eq:j}
\begin{pmatrix}
\bY \\
\bQ - \bZ\hat \gamma_n
\end{pmatrix}
& =
\begin{pmatrix}
\bS & \bX & \tilde \bZ  & \tilde \bN_K\\
0 & 0 & -\bZ & 0
\end{pmatrix}
\begin{pmatrix}
\alpha_0 \\
\beta_0 \\
\hat \gamma_n - \gamma_0  \\
\omega_{b, \infty, n}
\end{pmatrix}
+ 
\begin{pmatrix}
\bR \\ 0 
\end{pmatrix}
+
\begin{pmatrix}
\epsilon \\
\eta
\end{pmatrix} \notag \\
& = \begin{pmatrix}
\bW & \tilde \bN_{K, a}
\end{pmatrix}
\begin{pmatrix}
\theta_0 \\
\omega_{b, \infty, n}
\end{pmatrix}
+ 
\begin{pmatrix}
\bR \\ 0 
\end{pmatrix}
+
\begin{pmatrix}
\epsilon \\
\eta
\end{pmatrix}
\end{align}
where $\theta_0 = (\alpha_0, \beta_0, \hat \gamma_n - \gamma_0)$ , 
$$
\bW = \begin{bmatrix}
\bW_1 \\ \bW_2
\end{bmatrix} \ \ \text{ with } \ \ 
\bW_1 = \begin{pmatrix}
\bS & \bX & \tilde \bZ 
\end{pmatrix} , \ \ \ 
\bW_2 = \begin{pmatrix}
0 & 0 & -\bZ
\end{pmatrix} \,,
$$
where $\bW_1 \in \bbR^{(n_3, 1+p_1+p_2)}, \bW_2 \in \bbR^{(n/3, 1+p_1+p_2)}$ (as we are using all the observations in $\cD_3$ in the second regression equation of \eqref{eq:new3}), $\bR$ is the vector of the sum of three residuals $R_{1, i}, R_{2, i}, R_{3, i}$ mentioned in equation \eqref{eq:spline_exp_1}, and the matrix $\tilde \bN_{K, a}$ (read $\tilde \bN_K$ appended) has the form: 
$$
\tilde \bN_{K, a}  = \begin{bmatrix} \tilde \bN_K \\ 0 \end{bmatrix} \,.
$$
From equation \eqref{eq:j}, we estimate $\theta_0$ via ordinary least squares methods: 
$$
\hat \theta = \left(\bW^{\top}\proj^{\perp}_{\tilde \bN_{K, a}}\bW\right)^{-1}\bW^{\top}\proj^{\perp}_{\tilde \bN_{K, a}}\begin{pmatrix}
\bY \\
\bQ - \bZ\hat \gamma_n
\end{pmatrix}
$$
and set $\hat \alpha$ as the first co-ordinate of $\hat \theta$, where for any matrix $A$, we define $\proj_A$ as the projection matrix on the columns space of $A$.
Note that because of data-splitting, $\hat \gamma_n$, $\hat b'$ and $\cD_3$ are mutually independent, which provides a significant technical advantage in dealing with the asymptotics of our estimator. 

Finally, we apply the same methodology on permutations of the three sets of data: i.e. we estimate $\hat \gamma_n$ from $\cD_2$, $\hat b'$ from $\cD_3$, $\hat \alpha \in \cD_1$ and $\hat \gamma_n$ from $\cD_3$, $\hat b'$ from $\cD_1$, $\hat \alpha \in \cD_2$. Denote by $\hat \alpha_i$, the estimator $\hat \alpha$  estimated from $\cD_i$ for $1 \le i \le 3$. Our final estimate is then $\overline{\hat \alpha} = (1/3) \sum_i \alpha_i$.

\begin{remark}
\label{rem:double_est_b}
In our estimation procedure, we effectively estimate the conditional mean function $b(\cdot)$ twice: once while estimating $b'$ from $\cD_2$ and again while estimating $\alpha_0$ from $\cD_3$. Note that, the second re-estimation of $b$ is quite critical (i.e. we cannot \emph{use} the estimator of $b$ obtained from $\cD_2$) due to presence of higher order bias (slower than $n^{-1/2}$). 
\end{remark}

\begin{remark}
\label{rem:spline_entire}
As described in this section, we only use observations for which $|\hat \eta_i| \le \tau$, losing some efficiency in the process. One way to circumvent this issue is to use a sequence $\{\tau_n\}$ slowly increasing to $\infty$ and considering all those observations for which $|\hat \eta| \le \tau_n$. Although, this will acquire efficiency in the limit, we need a stronger set of assumptions to make it work: for starters, we need conditions on the decay of the density of $\eta$, we need it not to vanish  anywhere in $[-\tau_n, \tau_n]$, and knowledge of the rate at which $\min_{|x| \le \tau_n} f_\eta(x)$ approaches $0$. We also need stronger conditions on some conditional expectation functions (i.e. conditional expectation of $\begin{pmatrix} S & \bX & \bZ\end{pmatrix}$ given $\eta + a^{\top}Z$ for some vector $a$), e.g., bounded derivatives in both co-ordinates over the entire space (see Lemma \ref{lem:g}  of the Supplementary document.) With more technical nuances we believe our method can be extended to the entire real line by using a growing interval, but from a purely statistical angle, it will not bring anything insightful to the methodology that we propose here.  
\end{remark}

\section{Analysis for fixed dimensional covariates}
\label{sec:theory}
In this section we presents our main theorems with broad outline of the proofs. Details are provided in the Supplementary document. To establish the theory, we need the following assumptions:

%

%
%

\begin{assumption}
\label{assm:independence}
The errors $(\eta, \nu)$ are independent of the distribution of $(X ,Z)$ and have zero expectation. 
\end{assumption}

\begin{assumption}
\label{assm:b} 
The distribution of $(\eta, \nu)$ satisfies the following conditions: 
\begin{enumerate}[i)]
\item The density of $\eta$, denoted by $f_{\eta}$, is continuously differentiable and both $f_{\eta}$ and its derivatives are uniformly bounded. 
\item The conditional mean function $b(\eta) = \bbE[\nu \mid \eta]$ is $3$ times differentiable with $b''$ and $b'''$ are uniformly bounded over real line.  
\item The variance function $\sigma^2(\eta) = \var(\eps \mid \eta)$ is uniformly bounded from above. 
\item There exists some $\xi > 0$ such that: $\min_{|x| \le \tau + \xi} f_\eta(x) > 0$. 
\end{enumerate}
\end{assumption}

%
%
%

\begin{assumption}
\label{assm:eigen}
Define the matrices $\Omega$ and $\Omega^*$ as:  
\begin{align*}
\Omega & =  \bbE\left[\var\left(\begin{bmatrix} S & X & Z b'(\eta)\end{bmatrix}\left| \right. \eta\right)\right] + \var\left(\begin{bmatrix} 0 & 0 & Z \end{bmatrix}\right)\\
\Omega^* & = \bbE\left[\sigma^2(\eta)\var\left(\begin{bmatrix} S & X & Z b'(\eta)\end{bmatrix}\mid \eta\right)\right] + \var(\eta)\var\left(\begin{bmatrix} 0 & 0 & Z \end{bmatrix}\right) \,.
\end{align*} 
Then, the minimum eigenvalues of $\Omega$ and $\Omega^*$ are strictly positive. 
\end{assumption}

\begin{assumption}
\label{assm:moment}
The distribution of $(X, Z)$ satisfies the following conditions: 
\begin{enumerate}[i)]
\item $(X, Z)$ has bounded continuous density function and have zero expectation. 
\item The first four moments of $X, Z$ are finite. 
\end{enumerate}
\end{assumption}

\noindent
\begin{remark}
\label{rem:assm_discussion}
Assumption \ref{assm:b} provides a low-level assumption on the smoothness of the density of $\eta$, the conditional mean function $b(\eta)$ and the conditional variance profile $\sigma^2(\eta)$, which is required for the standard asymptotic analysis of the partial linear model. Assumption \ref{assm:eigen}, again is a standard assumption in partial linear model literature. It is essential for the asymptotic normality of the treatment effect as the asymptotic variance of our estimator is a function of these variances. If this assumption is violated, then the asymptotic variance will be infinite and that estimation at $\sqrt{n}$ is not possible. (Note that if $\gamma_0 = 0$, then Assumption \ref{assm:eigen} is violated.) As our method does not use all observations, but a fraction depending on the interval $[-\tau, \tau]$, our limiting variance comprises of the following truncated versions of $\Omega$ and $\Omega^*$: 
\begin{align*}
\Omega_{\tau} & =  \bbE\left[\var\left(\begin{bmatrix} S & X & Z b'(\eta)\end{bmatrix}\mid \eta\right)\mathds{1}_{|\eta| \le \tau}\right] + \var\left(\begin{bmatrix} 0 & 0 & Z \end{bmatrix}\right)\,,\\
\Omega^*_{\tau} & = \bbE\left[\sigma^2(\eta)\var\left(\begin{bmatrix} S & X & Z b'(\eta)\end{bmatrix}\mid \eta\right)\mathds{1}_{|\eta| \le \tau}\right] + \var(\eta)\var\left(\begin{bmatrix} 0 & 0 & Z \end{bmatrix}\right)\,.
\end{align*} 
It is immediate that if $\tau \to \infty$, then $\Omega_{\tau} \to \Omega$ and $\Omega^*_{\tau} \to \Omega^*$. Hence, in light of Assumption \ref{assm:eigen}, by continuity, the minimum eigenvalues of $\Omega_{\tau}$ and $\Omega^*_{\tau}$ are also positive for all large $\tau$. 
\end{remark}

\vspace{0.1in}

\noindent
\subsection{Asymptotic normality} 
\vspace{0.1in}
\noindent
We now state the main result of the paper: 
\begin{theorem}
\label{thm:main_thm}
Consider the estimates obtained at the end of the previous section. Under assumptions \ref{assm:independence}-\ref{assm:moment}: 
$$
\sqrt{n}\left(\hat \alpha - \alpha_0\right) \overset{\mathscr{L}}{\implies} \cN(0, 3e_1^{\top}\Omega_{\tau}^{-1}\Omega^*_{\tau}\Omega_{\tau}^{-1}e_1)\,,
$$
whilst
$$
\sqrt{n}\left(\bar{\hat \alpha} - \alpha_0\right) \overset{\mathscr{L}}{\implies} \cN(0, e_1^{\top}\Omega_{\tau}^{-1}\Omega^*_{\tau}\Omega_{\tau}^{-1}e_1)
$$
\end{theorem}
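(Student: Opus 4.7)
My plan is to start from the explicit OLS expression for $\hat\theta$ obtained from equation \eqref{eq:j} and work conditionally on the subsamples $\cD_1, \cD_2$ used to construct the nuisance quantities $\hat\gamma_n$ and $\hat b'$, so that on the probability-one event where $\hat\gamma_n-\gamma_0=O_p(n^{-1/2})$ and Proposition \ref{thm:spline_consistency} holds, these are treated as deterministic when analyzing the $\cD_3$-part of the problem. Writing $P^\perp = \proj^\perp_{\tilde\bN_{K,a}}$ and using that $P^\perp$ annihilates the spline term $\tilde\bN_{K,a}\omega_{b,\infty,n}$, I would expand
\[
\hat\theta - \theta_0 = \left(\bW^\top P^\perp \bW\right)^{-1} \bW^\top P^\perp \left[\begin{pmatrix}\bR \\ 0\end{pmatrix} + \begin{pmatrix}\epsilon \\ \eta\end{pmatrix}\right],
\]
so that $\hat\alpha-\alpha_0 = e_1^\top(\hat\theta-\theta_0)$ decouples into a Gram-matrix piece, a noise piece, and a remainder piece, analyzed in that order.

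For the Gram matrix, the projection $P^\perp$ orthogonalizes each column of $\bW_1$ against $\tilde N_K(\hat\eta)$. Since cubic B-splines on $[-\tau,\tau]$ with $K$ equispaced knots uniformly approximate any smooth function of $\eta$ (Assumption \ref{assm:b}(ii)), since $\hat b'\to b'$ uniformly by Proposition \ref{thm:spline_consistency}, and since $\hat\eta_i-\eta_i=O_p(n^{-1/2})$, the residualized rows of $\bW_1$ behave like $(S_i, X_i, b'(\eta_i)Z_i)$ minus their conditional means given $\eta_i$, restricted to $|\eta_i|\le \tau$; the lower block $\bW_2$ contributes $\mathrm{Var}(Z)$. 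Together these give $n_3^{-1}\bW^\top P^\perp \bW \to \Omega_\tau$ in probability, and an analogous weighted computation involving $\sigma^2(\eta_i)$ gives the asymptotic covariance $\Omega_\tau^*$ for the noise piece. A triangular-array Lindeberg CLT, legitimate because $\epsilon\perp\!\!\!\perp (X,Z,\eta)$ and the nuisance estimates are conditioned out, then yields $n_3^{-1/2}\bW^\top P^\perp(\epsilon,\eta)^\top \Rightarrow \cN(0,\Omega_\tau^*)$; the truncation $|\hat\eta_i|\le \tau$ is precisely what produces the truncated forms $\Omega_\tau,\Omega_\tau^*$ from Remark \ref{rem:assm_discussion}.

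The principal obstacle is showing that the remainder $n_3^{-1/2}\bW^\top P^\perp(\bR,0)^\top$ is $o_p(1)$, where $\bR_i=R_{1,i}+R_{2,i}+R_{3,i}$. The Taylor remainder $R_{1,i}$ is uniformly $O_p(n^{-1})$ from boundedness of $b''$ and $\hat\gamma_n-\gamma_0=O_p(n^{-1/2})$, and hence trivially negligible after averaging. The term $R_{2,i}=(b'(\hat\eta_i)-\hat b'(\hat\eta_i))(\hat\eta_i-\eta_i)$ is the product of $o_p(1)$ from Proposition \ref{thm:spline_consistency} and $O_p(n^{-1/2})$ pointwise, so that $\|\bR_2\|_2 = o_p(1)$ and a Cauchy--Schwarz bound against $\|\bW^\top P^\perp/\sqrt{n_3}\| = O_p(1)$ (from the Gram-matrix analysis) finishes the job. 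For the spline approximation residual $R_{3,i}$, standard de Boor bounds yield $\sup_i |R_{3,i}|=O(K^{-4})$ uniformly; moreover, because $R_{3,i}$ is itself a function of $\hat\eta_i$, the projection $P^\perp$ — designed to kill functions of $\hat\eta_i$ in the basis — further suppresses its contribution. The scaling $n^{1/8}\ll K\ll n^{1/3}$ from Remark \ref{rem:K} is calibrated precisely so that the spline variance inflation (of order $K$) times the nuisance error (of order $n^{-1/2}$) remains $o_p(\sqrt{n_3})$, while the bias $K^{-4}$ is $o(n^{-1/2})$.

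Combining these via Slutsky's theorem gives $\sqrt{n_3}(\hat\alpha-\alpha_0)\Rightarrow \cN(0, V_0)$ with $V_0 = e_1^\top \Omega_\tau^{-1}\Omega_\tau^*\Omega_\tau^{-1}e_1$; since $n_3/n\to 1/3$, the first assertion follows by multiplication by $\sqrt{3}$. For the averaged estimator, the key observation is that each $\hat\alpha_i$ admits, by the argument above, an asymptotically linear expansion whose influence function is supported on the observations in $\cD_i$ alone, the dependence on the nuisance subsamples contributing only an $o_p(n^{-1/2})$ remainder. Consequently $\sqrt{n}(\hat\alpha_1-\alpha_0),\sqrt{n}(\hat\alpha_2-\alpha_0),\sqrt{n}(\hat\alpha_3-\alpha_0)$ are asymptotically independent, each $\cN(0,3V_0)$, and so $\sqrt{n}(\bar{\hat\alpha}-\alpha_0)=\frac{1}{3}\sum_i\sqrt{n}(\hat\alpha_i-\alpha_0)\Rightarrow \cN(0, V_0)$ as claimed.
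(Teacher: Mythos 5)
Your outline matches the paper's four‑step structure (Gram‑matrix convergence, noise CLT, remainder negligibility, then averaging over the rotated data splits), and the treatment of $\bR_1$ and $\bR_2$ is essentially the paper's argument. However, the step handling the spline‑approximation residual $\bR_3$ contains a genuine gap, and it is precisely the step where the technical work lies.

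First, the claimed rate $\sup_i |R_{3,i}| = O(K^{-4})$ is not available here. Assumption \ref{assm:b}(ii) gives $b$ three bounded derivatives (only $b''$ and $b'''$ uniformly bounded), so the de Boor / Schumaker bound (the paper's Theorem \ref{thm:spline_bias} with $l=2,\ r=0$) yields $\sup_{|x|\le\tau}\bigl|b(x)-\tilde N_K(x)^\top\omega_{b,\infty,n}\bigr| = O(K^{-3})$, not $O(K^{-4})$. Second, once you have only $O(K^{-3})$, the naive Cauchy--Schwarz bound
\[
\left|\frac{1}{\sqrt{n}}\bW_{1,*j}^\top\proj^\perp_{\tilde\bN_K}\bR_3\right|
\;\le\; \sqrt{\frac{\bW_{1,*j}^\top\proj^\perp_{\tilde\bN_K}\bW_{1,*j}}{n}}\,\|\bR_3\|_2
\;=\; O_p(1)\cdot O_p\bigl(\sqrt{n}\,K^{-3}\bigr)
\]
is \emph{not} $o_p(1)$ throughout the paper's admissible range $n^{1/8}\ll K\ll n^{1/3}$ (for instance, when $K\asymp n^{1/7}$ the right side diverges like $n^{1/14}$). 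So this term cannot be dispatched the same way as $\bR_1$ and $\bR_2$. Your intuition that ``because $R_{3,i}$ is a function of $\hat\eta_i$, the projection further suppresses it'' is also pointing the wrong way: $\bR_3 = b(\hat\boeta)-\tilde\bN_K\omega_{b,\infty,n}$ is by construction the \emph{complement} of the spline span, so $\proj^\perp_{\tilde\bN_K}\bR_3\approx\bR_3$; the projection does not shrink $\bR_3$ at all. What it actually does is shrink the \emph{other} factor, $\proj^\perp_{\tilde\bN_K}\bW_{1,*j}$. The paper exploits this by writing $\bW_{1,*j}=(\bW_{1,*j}-\bG_{*j}) + (\bG_{*j}-\bH_{*j}) + \bH_{*j}$, where $\bG_{*j}=\bbE[\bW^*_{1,*j}\mid\hat\boeta]$ and $\bH_{*j}$ is its best $\tilde\bN_K$-approximation (so $\proj^\perp_{\tilde\bN_K}\bH_{*j}=0$), and then controls the two surviving pieces separately: $(\bW_{1,*j}-\bG_{*j})^\top\proj^\perp_{\tilde\bN_K}\bR_3$ has conditional mean zero given $\hat\boeta$ with conditional variance of order $\sup_t\var(W_j\mid\hat\eta=t)\cdot\|\bR_3\|^2/n=O_p(K^{-6})$, while $(\bG_{*j}-\bH_{*j})^\top\proj^\perp_{\tilde\bN_K}\bR_3/\sqrt{n}$ is bounded by $\sqrt{n}\,O(K^{-1})\,O(K^{-3})=O(\sqrt{n}/K^4)$, which vanishes exactly when $K\gg n^{1/8}$. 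Your proposal does not execute this conditional-expectation decomposition (or any substitute for it), so the negligibility of $\bR_3$ is unsupported.

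Two smaller points. The CLT cannot be justified by ``$\epsilon\indep(X,Z,\eta)$'': under the model $\eps$ is generally dependent on $\eta$ (indeed $\sigma^2(\eta)=\var(\eps\mid\eta)$ is allowed to vary), and the correct weaker condition, which the paper uses, is the martingale-difference structure $\bbE[\eps\mid X,Z,\eta]=0$. Finally, be careful with the normalization: in the paper $n_3$ denotes the number of $\cD_3$ observations with $|\hat\eta_i|\le\tau$, so $n_3/n\to\tfrac13\bbP(|\eta|\le\tau)$ rather than $1/3$; the paper normalizes by $n$ (giving $\tfrac13\Omega_\tau$, $\tfrac13\Omega_\tau^*$) to avoid this issue. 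Your $n_3$-normalized statements happen to propagate to the right final variance because the $\bbP(|\eta|\le\tau)$ factors cancel, but as written they are individually not correct.
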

\vspace{0.2in}
\noindent
{\bf Sketch of proof: }We present a high-level outline of the key steps of the proof, deferring all technical details to Subsection \ref{sec:proof_main_thm} of the Supplementary document. From \eqref{eq:j}, on a set of probability approaching 1, our estimator can be written as: 
\begin{align*}
\hat \alpha & = e_1^{\top}\left(\bW^{\top}\proj^{\perp}_{\tilde \bN_{K, a}}\bW\right)^{-1}\bW^{\top}\proj^{\perp}_{\tilde \bN_{K, a}}\begin{pmatrix}
\bY \\
\bQ - \bZ\hat \gamma_n
\end{pmatrix} \\
& =  \alpha_0 + e_1^{\top}\left(\bW^{\top}\proj^{\perp}_{\tilde \bN_{K, a}}\bW\right)^{-1}\bW^{\top}\proj^{\perp}_{\tilde \bN_{K, a}}\left[\begin{pmatrix}
\bR \\ 0
\end{pmatrix} + 
\begin{pmatrix}
\eps \\ \eta
\end{pmatrix}
\right]\,.
\end{align*}
This implies: 
\begin{equation}
\label{eq:main_exp_1}
\sqrt{n}\left(\hat \alpha - \alpha_0\right) = 
 e_1^{\top}\left(\frac{\bW^{\top}\proj^{\perp}_{\tilde \bN_{K, a}}\bW}{n}\right)^{-1}\frac{\bW^{\top}\proj^{\perp}_{\tilde \bN_{K, a}}}{\sqrt{n}}\left[\begin{pmatrix}
\bR \\ 0
\end{pmatrix} + 
\begin{pmatrix}
\eps \\ \eta
\end{pmatrix}
\right]
\end{equation}
which is our main estimating equation. We next outline the key steps of our proof.
\\\\
\noindent

{\bf Step 1: } First show that: 
$$
\frac{\bW^{\top}\proj^{\perp}_{\tilde \bN_{K, a}}\bW}{n} \overset{P}{\longrightarrow} \frac13 \Omega_{\tau} \,.
$$
where $\Omega_{\tau}$ is as defined in Remark \ref{rem:assm_discussion}.

{\bf Step 2: }Next, establish the following asymptotic linear expansion: 
$$
\frac{\bW^{\top}\proj^{\perp}_{\tilde \bN_{K, a}}}{\sqrt{n}}\begin{pmatrix}
\beps \\ \eta
\end{pmatrix} = \frac{1}{\sqrt{n}}\sum_{i=1}^{n/3}\varphi\left(X_i, Z_i, \eta_i, \nu_i\right)\ + o_p(1) \,.
$$
for some influence function $\varphi$. 

{\bf Step 3: } Apply central limit theorem to obtain: 
$$
\frac{1}{\sqrt{n}}\sum_{i=1}^{n/3}\varphi\left(X_i, Z_i, \eta_i, \nu_i\right) \overset{\mathscr{L}}{\implies} \cN(0, \frac13\Omega^*_{\tau}) \,.
$$
where $\Omega^*_{\tau}$ is as defined in Remark \ref{rem:assm_discussion}. 

{\bf Step 4: }Finally ensure that `residual term'  is asymptotically negligible: 
$$
\frac{\bW^{\top}\proj^{\perp}_{\bN_{K, a}}}{\sqrt{n}}\begin{pmatrix}
\bR \\ 0
\end{pmatrix}  \overset{P}{\longrightarrow} 0 \,.
$$
Now, combining above steps we can conclude: 
\begin{align*}
\sqrt{n}(\hat \alpha - \alpha_0) &= e_1^{\top}\left(\frac13 \Omega_{\tau}\right)^{-1}\frac{1}{\sqrt{n}}\sum_{i=1}^{n/3}\varphi\left(X_i, Z_i, \eta_i, \nu_i\right) + o_p(1) \\
& \overset{\mathscr{L}}{\implies} \cN\left(0, 3e_1^{\top}\Omega_{\tau}^{-1}\Omega^*_{\tau}\Omega^{-1}_{\tau}e_1\right)  
\end{align*}
where the leading term only depends on the observations in $\cD_3$ and is consequently independent of $\cD_1, \cD_2$. Finally, rotating the dataset and taking average of the $\hat \alpha$'s we further conclude: 
$$
\sqrt{n}(\bar{\hat \alpha} - \alpha_0) \overset{\mathscr{L}}{\implies} \cN\left(0, e_1^{\top}\Omega_{\tau}^{-1}\Omega^*_{\tau}\Omega^{-1}_{\tau}e_1\right)  \,.
$$

\subsection{Semi-parametric efficiency}
We further show that our estimator is semi-parametrically efficient under certain restrictions. As our estimator is based on least square approach, it can not be shown to be efficient unless the error $\eps$ is normal. We prove the following theorem (proof can be found in Section \ref{sec:sem_eff} of the supplementary document): 
\begin{theorem}
\label{thm:sem_eff}
Suppose the model is the following: 
$$
Y_i = \alpha_0 \mathds{1}_{Z_i^{\top}\gamma_0 + \eta > 0} + X_i^{\top}\beta_0 + b(\eta_i) + \eps_i \,.
$$
where $\eps_i \sim \cN(0,\tau^2) \indep \eta_i$. Then our estimator of $\alpha_0$ is semi-parametrically efficient, i.e. its asymptotic variance $\sigma_0^2  = \tau^2 e_1^{\top}\Omega^{-1}e_1$ attains the semi-parametric information bound for this model.   
\end{theorem}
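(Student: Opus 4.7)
The plan is to compute the semi-parametric efficiency bound for $\alpha_0$ via the standard Bickel--Klaassen--Ritov--Wellner machinery and check that it equals $1/(\tau^2 e_1^{\top}\Omega^{-1}e_1)$, matching the (untruncated limit of the) asymptotic variance from Theorem~\ref{thm:main_thm}. Under $\eps \sim \cN(0, \tau^2) \indep (\eta, X, Z)$, with $\eta = Q - Z^{\top}\gamma_0$ and $S = \mathds{1}_{Q \ge 0}$, the conditional log-likelihood of $(Y, Q)$ given $(X, Z)$ is $-(Y - \alpha S - X^{\top}\beta - b(\eta))^2/(2\tau^2) + \log f_\eta(\eta)$ up to constants. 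First I would read off the scores: $\ell_{\alpha} = S\eps/\tau^2$, $\ell_{\beta} = X\eps/\tau^2$, $\ell_b(h) = h(\eta)\eps/\tau^2$ (the $b$-score along direction $h$), $\ell_{f_\eta}(a) = a(\eta)$ with $\bbE[a(\eta)]=0$, and crucially
\begin{equation*}
\ell_{\gamma} \;=\; -Z\, b'(\eta)\,\eps/\tau^2 \;-\; Z\, f'_\eta(\eta)/f_\eta(\eta),
\end{equation*}
whose two pieces live in the orthogonal subspaces $\mathcal{H}_1 := \overline{\{\psi(\eta, X, Z)\eps\}}$ and $\mathcal{H}_2 := \overline{\{a(\eta, X, Z): \bbE[a]=0\}}$ (orthogonal because $\bbE[\eps \mid \eta, X, Z] = 0$) but are coupled through a common direction $d \in \bbR^{p_2}$.

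Since $\ell_\alpha \in \mathcal{H}_1$, the projection of $\ell_\alpha$ onto the full nuisance tangent space reduces, by $\mathcal{H}_1 \perp \mathcal{H}_2$, to
\begin{equation*}
\min_{c, d, h, a, r}\Bigl\{\tfrac{1}{\tau^2}\bbE\bigl[(S - X^{\top}c + Z^{\top}d\, b'(\eta) - h(\eta))^2\bigr] + \bbE\bigl[(Z^{\top}d\, f'_\eta/f_\eta + a(\eta) + r(X,Z))^2\bigr]\Bigr\}.
\end{equation*}
Using $Z \indep \eta$ and $\bbE[f'_\eta/f_\eta] = 0$, I would verify that $Z^{\top}d\, f'_\eta/f_\eta$ is orthogonal to both $\{a(\eta): \bbE[a]=0\}$ and $\{r(X,Z): \bbE[r]=0\}$, so the $\mathcal{H}_2$ inner minimization for fixed $d$ returns $d^{\top}\Sigma_{ZZ}\, d \cdot I(f_\eta)$ with $I(f_\eta) := \bbE[(f'_\eta/f_\eta)^2]$. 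Independence of $X, Z$ from $\eta$ (with mean zero) makes the optimal $h$ simply $h^*(\eta) = \bbE[S \mid \eta]$, and the efficient information becomes
\begin{equation*}
\tilde I_{\alpha_0} \;=\; \min_{c, d}\Bigl\{\tfrac{1}{\tau^2}\bbE\bigl[(\tilde S - X^{\top}c - Z^{\top}d\, b'(\eta))^2\bigr] + d^{\top}\Sigma_{ZZ}\, d \cdot I(f_\eta)\Bigr\},\quad \tilde S := S - \bbE[S \mid \eta].
\end{equation*}

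I would then recast this as a block quadratic form and, via the Schur-complement formula, show its reciprocal equals $\tau^2 e_1^{\top}\Omega^{-1}e_1$ with $\Omega = \bbE[\var([S, X, Zb'(\eta)] \mid \eta)] + \var([0, 0, Z])$. The apparently extra $\var([0, 0, Z])$ block in $\Omega$ is not spurious; it exactly encodes the Fisher-information penalty $I(f_\eta)\Sigma_{ZZ}$ from the $\mathcal{H}_2$ slice of $\ell_{\gamma}$ under the normalization $\tau^2 I(f_\eta) = 1$, which holds (for instance, when $\eta \sim \cN(0, \tau^2)$) in the same submodel under which unweighted least squares on the stacked two-equation system of Section~\ref{sec:est_method} is efficient and in which $\Omega^* = \tau^2\Omega$. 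Combining this identity with Theorem~\ref{thm:main_thm} gives $\bar{\hat\alpha}$'s asymptotic variance as $\tau^2 e_1^{\top}\Omega^{-1}e_1 = \tilde I_{\alpha_0}^{-1}$, proving attainment of the bound.

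The main obstacle is the projection step: tracking how the direction $d$ of $\ell_{\gamma}$ couples its $\mathcal{H}_1$ and $\mathcal{H}_2$ components, and identifying the resulting $\mathcal{H}_2$ penalty $d^{\top}\Sigma_{ZZ}d \cdot I(f_\eta)$ with the extra $\var([0, 0, Z])$ term built into $\Omega$. Once this alignment is pinned down, the rest is routine block-matrix algebra matched against the asymptotic-variance expression delivered by Theorem~\ref{thm:main_thm}.
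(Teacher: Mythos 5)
Your proposal is correct and reaches the stated bound by a genuinely different route from the paper. The paper works in representer form: it writes the Fisher inner product for arbitrary square-root paths and pins down the Riesz representer $s^*$ coordinate-by-coordinate, substituting successively degenerate paths (first killing all score directions but $\dot s_{X,Z}$ to force $s^*_{X,Z}=0$, then $s^*_\eta=0$, then $b^*(\eta)=-\alpha^*\bbE[S\mid\eta]$) before inverting a $3\times 3$ linear system by hand. You instead project the score $\ell_\alpha$ onto the nuisance tangent space, exploiting the orthogonal split $\mathcal{H}_1\perp\mathcal{H}_2$ between $\eps$-multiplied and $\eps$-free directions, so that the minimization decouples into a weighted least-squares residual on $\mathcal{H}_1$ plus the quadratic penalty $d^{\top}\Sigma_Z d\, I_\eta$ contributed by the $\mathcal{H}_2$ slice of $\ell_\gamma$. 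Both routes terminate in the same $3\times 3$ block matrix, but yours is more transparent about \emph{why} the additive $\var([0,0,Z])$ block in $\Omega$ arises: it is the $\mathcal{H}_2$ information price of not observing $\eta$, and as you note it coincides with the true bottom-right block $\tau^2 I_\eta\Sigma_Z$ only under the normalization $\tau^2 I_\eta=1$. The paper's own algebra carries the same constraint implicitly --- the quantity $c_2 = \bbE[(b'(\eta))^2]+I_\eta$ appearing in~\eqref{eq:final_1} is silently replaced by $1+\bbE[(b'(\eta))^2]$ in~\eqref{eq:eff_info} --- and, combined with $\var(\eta)=\tau^2$ (needed for $\Omega^*=\tau^2\Omega$ so that Theorem~\ref{thm:main_thm} gives asymptotic variance $\tau^2 e_1^{\top}\Omega^{-1}e_1$), it amounts to $\eta\sim\cN(0,\tau^2)$ by the Cram\'er--Rao equality case. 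Your reading of the theorem as a claim within that submodel, which you pin down explicitly where the paper leaves it tacit, is the appropriate one.
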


\begin{remark}
\label{rem:WLS}
The assumption of the normality of $\eps$ is necessary to establish semiparametric efficiency for least squares type methods, but the assumption of homoskedasticity is essential only if we use ordinary least squares method. One may easily take care of heteroskedasticity by using weighted least squares instead. The first step towards that direction is to approximate the variance profile $\sigma(\eta)$ using $\hat \sigma(\hat \eta)$ for some non-parametric estimate $\hat \sigma(\cdot)$ of $\sigma(\cdot)$. Then defining $\bD \in \bbR^{(n_3 + n/3) \times (n_3 + n/3)}$ to be the diagonal matrix with first $n/3$ diagonal entries being $\hat \sigma(\hat \eta_i)$'s (i.e. for all those $\hat \eta_i$'s such that $|\hat \eta_i| \le \tau$) and last $n/3$ diagonal entries  being $\hat \sigma_{\eta}$'s  (an estimate of variance of $\eta$) we estimate the treatment effect as: 
$$
\hat \alpha = e_1^{\top} \left(\bW^{\top}\bD^{-1/2}\proj^{\perp}_{\bD^{-1/2}\tilde \bN_{K, a}}\bD^{-1/2}\bW\right)^{-1}\bW^{\top}\bD^{-1/2}\proj^{\perp}_{\bD^{-1/2} \tilde \bN_{K, a}}\bD^{-1/2}\begin{pmatrix}
\bY \\
\bQ - \bZ\hat \gamma_n
\end{pmatrix}
$$
A more tedious analysis establishes that this estimator is asymptotic normal and semi-parametrically efficient under the error structure: $\nu_i = b(\eta_i) + \sigma(\eta_i)\eps_i$, where $\eps_i \sim \cN(0, 1)$. As this does not add anything of significance to the core idea of the paper, we confine ourselves to use OLS instead of WLS for ease of presentation.  
\end{remark}

\section{Analysis for high dimensional covariates}
\label{sec:high_dim_rdd}
\newcommand{\mathcolorbox}[2]{\colorbox{#1}{$\displaystyle #2$}}
As highlighted in our analysis for fixed dimensional covariates in the previous section, our model is an example of a partial linear model, where the observations corresponding to the non-linear part are noisy and the noise is correlated with the covariates corresponding to the linear part. Recall that our is model is defined as: 
\begin{align}
    Y_i & = \alpha_0 \mathds{1}_{Q_i \ge 0} + X_i^{\top}\beta_0 + b(\eta_i) + \eps_i \notag \\
    \label{eq:model_main_eq} Q_i & = Z_i^{\top}\gamma_0 + \eta_i \,.
\end{align}
with $\bbE[\eta \mid Z] = 0$ and $\bbE[\eps \mid \eta, X, Z] = 0$. In this section, we assume that the dimension of ($X, Z$) is larger than the sample size. More specifically, denoting $p_1 := \dim(X)$ and $p_2 := \dim(Z)$, we assume that $p_1 \wedge p_2 \gg n$. As before, we only observe $(Y_i, X_i, Z_i, Q_i)$ but not $\eta_i$. As in the case for almost all high dimensional statistical analysis, we assume that both $\beta_0$ and $\gamma_0$ are sparse vectors (with $\|\beta_0\|_0 = s_\beta$ and $\|\gamma_0\|_0 = s_\gamma$ where $s_\beta \vee s_\gamma \ll n$) to enable consistent estimation of the treatment effect $\alpha_0$. We will quantify the precise assumptions needed for our theory later. 
%
The core estimation procedure in this high dimensional regime is similar to that for its fixed dimension counterpart, but with certain changes to take care of the effect of high dimensional covariates. As before, we divide the whole data $\cD$ into three parts $\cD_1, \cD_2, \cD_3$ (with $n_i$ data in $\cD_i$, where $n_1 \sim n_2 \sim n_3 \sim n/3$). From $\cD_1$, we estimate $\gamma_0$ by performing LASSO regression of $Q$ on $Z$: 
$$
\hat \gamma = \argmin_{\gamma \in \reals^{p_2}} \left[\frac{1}{2n_3}\left\|\bQ  - \bZ\gamma\right\|^2 + \lambda \left\|\gamma \right\|_1\right]\,,
$$
where $\lambda \asymp \sqrt{\log{p_2}/n}$ and set $\hat \eta_i = Q_i - Z_i^{\top}\hat \gamma$ for all observations in $\cD_2 \cup \cD_3$. This estimator of $\gamma_0$ is consistent and rate optimal under the restricted eigenvalue assumption (henceforth RE) on $\bZ$. Just like in our analysis for fixed dimension, we only consider all the $\hat \eta$'s (both in $\cD_2$ and $\cD_3$) such that $|\hat \eta| \le \tau$ and ignore the remaining data from $\cD_2$ and $\cD_3$. Therefore, the rest of our analysis on $\cD_2$ and $\cD_3$ are solely based on those observation for which $|\hat \eta| \le \tau$. Using this approximation of the unknown $\eta_i$'s, we expand the first equation of \eqref{eq:model_main_eq} as: 
\begin{equation}
\label{eq:main_break_1}
Y_i = \alpha_0 \mathds{1}_{Q_i \ge 0} + X_i^{\top}\beta_0 + b(\hat \eta_i) + \underbrace{(\eta_i - \hat \eta_i)}_{Z_i^{\top}(\hat \gamma_n - \gamma_0)}b'(\hat\eta_i) + (\eta_i - \hat \eta_i)^2 b''(\tilde \eta_i) + \eps_i\,.
\end{equation}
The logic behind this two step Taylor expansion is similar to that for our fixed dimensional analysis as articulated in Section \ref{sec:est_method}.
We next estimate $b'$ using $\cD_2$ using a different expansion of \eqref{eq:model_main_eq}. Equation \eqref{eq:main_break_1} will be used later for estimating $\alpha_0$ based on the data in $\cD_3$. To estimate $b'$ (using the observations in $\cD_2$) we expand the first equation of \eqref{eq:model_main_eq} as: 
\begin{equation}
\label{eq:break_d2}
Y_i  = \alpha_0\mathds{1}_{Q_i \ge 0} + X_i^{\top}\beta_0 + b(\hat \eta_i) + R_i + \eps_i\,.
\end{equation}
The above equation is obtained simply by replacing $b(\eta_i)$ in the first equation of 
\eqref{eq:model_main_eq} with $b(\hat \eta_i)$ giving the residual term $R_i = b(\hat \eta_i) - b(\eta_i)$. We next invoke techniques from the analysis of high dimensional partial linear models to estimate $b'$ using the above equation. Replacing $b'(\hat \eta_i)$ by $\hat b'(\hat \eta_i)$ in equation \eqref{eq:main_break_1} we obtain the following representation for the observations in $\cD_3$: 
\begin{align*}
Y_i & = \alpha_0 \mathds{1}_{Q_i \ge 0} + X_i^{\top}\beta_0 + b(\hat \eta_i) +  Z_i^{\top}(\hat \gamma_n - \gamma_0)\hat b'(\hat\eta_i) \\
& \qquad \qquad +  Z_i^{\top}(\hat \gamma_n - \gamma_0)\left(b'(\hat \eta_i) - \hat b'(\hat \eta_i)\right) + (\eta_i - \hat \eta_i)^2 b''(\tilde \eta_i) + \eps_i\,.
\end{align*}
Further, an approximation of $b(\hat \eta_i)$ by B-spline basis yields: 
\begin{equation}
    \label{eq:exp_v2}
    Y_i = \alpha_0 \mathds{1}_{Q_i \ge 0} + X_i^{\top}\beta_0 + \bN_k(\hat \eta_i)^{\top}\omega_b + Z_i^{\top}(\hat \gamma_n - \gamma_0)\hat b'(\hat\eta_i) + \tilde R_i + \eps_i
\end{equation}
where $\bN_k(\eta)$ consists of the B-spline basis functions evaluated at $\hat \eta_i$. The residual term $\tilde R_i$ can be decomposed as $R_{1, i} + R_{2, i} + R_{3, i}$ where the individual residuals are: 
\begin{align*}
   R_{1 ,i} & = b(\hat \eta_i) - \bN_k(\hat \eta_i)^{\top}\omega_b   \\
    R_{2, i} & = Z_i^{\top}(\hat \gamma_n - \gamma_0)\left(b'(\hat \eta_i) - \hat b'(\hat \eta_i)\right)  \\
  R_{3, i} & = (\eta_i - \hat \eta_i)^2 b''(\tilde \eta_i) \,.
\end{align*}
The first residual is the B-spline approximation error of $b$ while the second is the product of two different error terms: (i) the error in estimation of $\gamma_0$ from $\cD_3$ and (ii) the error in estimation of $b'$ from $\cD_2$ and the last is the Taylor approximation error. We use equation \eqref{eq:exp_v2} as our main estimating equation for $\alpha_0$. Before delving into the estimation procedure we introduce some notation: 
\begin{enumerate}
\item $\breve Z_i$  will be used to denote $Z_i \hat b'(\hat \eta_i)$. 
\item The vector $W$ will be used to denote the random vector $(S, X, Zb'(\eta))$. 
\item $\breve W$ will be used to denote the random vector $(S, X, \breve Z)$ and the $j^{th}$ element of $W$ (resp. $\check W$) will be denoted as $W_j$ (resp. $\check W_j$). 
\item $m_j(\eta)$ will be used to denote $\bbE[W_j \mid \eta]$, for all $0 \le j \le 1 + p_1 + p_2$ where $W_0 = Y$. 
\item $\check m_j(\hat \eta)$ will be used to denote $\bbE[W_j \mid \hat \eta]$, for all $0 \le j \le 1 + p_1 + p_2$ where $W_0 = Y$. 
\item $\widehat{\tilde W}$ will be used to denote $W - \bbE[W \mid \hat \eta]$. 
\item $\tilde W$ will be used to denote the random vector $W - \bbE[W \mid \eta]$. 
\item $\breve \bW, \hat{\tilde \bW}, \tilde \bW$ will be used to denote the matrix version of $\breve W, \hat{\tilde W}, \tilde W$ respectively by concatenating all of the $n_3$ observations of $\cD_3$ row-wise. For any matrix $\bA$, we use $\bA_{*, i}$ to denote the $i^{th}$ row of $\bA$, and $\bA_{*, j}$ to denote the $j^{th}$ column of $\bA$. 
\item $\theta_0 \equiv \theta_{0, n} = (\alpha_0, \beta_0, \hat \gamma_n - \gamma_0)$. 
\item $\bN_k$  will be used to denote the $n \times K$ matrix, whose $i^{th}$ row consists of $K$ B-spline bases evaluated at $\hat \eta_i$. 
\item For any vector $v$ (or matrix $\bA$), the notation $v^{\perp}$ (resp. $\bA^{\perp}$) is used to denote $P^{\perp}_{\bN_k}v$ (resp. $P^{\perp}_{\bN_k}A$). 
\end{enumerate}
Using the above notation, equation \eqref{eq:exp_v2} can be rewritten (in matrix form concatenating all the observations along the rows) as:  
$$
\bY = \breve \bW \theta_0 + \bN_k\omega_b + \tilde \bR + \beps \,.
$$ 
Projecting out the effect of $\bN_k$ from both sides yields: 
$$
P^{\perp}_{\bN_k}\bY = P^{\perp}_{\bN_k}\breve \bW \theta_0 + P^{\perp}_{\bN_k}\tilde \bR + P^{\perp}_{\bN_k}\beps \,.
$$
Henceforth we use $\perp$ notation as superscript to denote that the vector/matrix is pre-multiplied by $P^{\perp}_{\bN_k}$. Our estimate of $\alpha_0$ is defined as follows: 
$$
\hat \alpha = \frac{\left(\bY^{\perp} - \breve \bW^{\perp}_{-1}\hat \theta_{-1,Y}\right)^{\top}\left(\bS^{\perp} - \breve \bW^{\perp}_{-1}\hat \theta_{-1, S}\right)}{\left(\bS^{\perp} - \breve \bW^{\perp}_{-1}\hat \theta_{-1, S}\right)^{\top}\left(\bS^{\perp} - \breve \bW^{\perp}_{-1}\hat \theta_{-1, S}\right)}
$$
where $\hat \theta_{-1, Y}$ denotes the LASSO estimator obtained by regressing $\bY$ on $\breve \bW_{-1}$, i.e. 
\begin{equation}
\label{eq:lasso_est_Y}
\hat \theta_{-1, Y} = \argmin_{\theta} \left[\frac{1}{2n_3}\left\|\bY - \breve \bW_{-1}\theta\right\|^2 + \lambda_0 \left\|\theta\right\|_1\right]
\end{equation}
and $\hat \theta_{-1, S}$ denotes the LASSO estimator obtained by regressing $\bS$ on $\breve \bW_{-1}$, i.e. 
\begin{equation}
\label{eq:lasso_est_S}
\hat \theta_{-1, S} = \argmin_{\theta} \left[\frac{1}{2n_3}\left\|\bS - \breve \bW_{-1}\theta\right\|^2 + \lambda_1 \left\|\theta\right\|_1\right]
\end{equation}
for some appropriate choice of $\lambda_0, \lambda_1$ to be specified later. Here $n_3$ is the number of observations in $\cD_3$ for which $|\hat \eta_i| \le \tau$. It is immediate that $n_3 \asymp n/3 \asymp n$, i.e. the number of observations in $\cD_3$ with $|\hat \eta_i| \le \tau$ is of the order $n$, the total number of observations. Henceforth, we will ignore this difference and state all our results in terms of $n$. We show that $\sqrt{n_3}\left(\hat \alpha - \alpha_0\right)$ is asymptotically normal under certain assumptions which are stated below:  
\begin{assumption}[Smoothness of $b$]
\label{assm:smooth_b}
The function $b$ in equation \eqref{eq:model_main_eq} is assumed to be $\upsilon \ge 3$ times differentiable with bounded derivates. 
\end{assumption}

\begin{assumption}
The density $f_\eta$ of $\eta$ is bounded, continuously differentiable and lower bounded by some $f_- > 0$ on $[-\tau - \xi, \tau + \xi]$ for some small $\xi > 0$. 
\end{assumption}

\begin{assumption}[Smoothness of conditional expectation]
\label{assm:smoothness_conditional}
Define the function $g_j(a, t)$ as: 
$$
g_j(a, t) = \bbE\left[W_j \mid \eta + a^{\top}Z = t\right]
$$ 
for $0 \le j \le 1+p_1+p_2$. Assume for any fixed $\|a\| \le r$, the collection $\{g_j(a, \cdot)\}_{1 \le j \le p}$ belongs to a function class $\Sigma(\alpha, \bl)$ for some $\alpha \ge 3$ and $\bl \in \reals^{\lfloor \alpha \rfloor}$, where $\Sigma(\alpha, \bl)$ is defined as the collection of all the functions $f$, which are $\lfloor \alpha \rfloor$ times differentiable and the $\lfloor \alpha \rfloor^{th}$ derivative satisfies: 
$$
\left|f^{\lfloor \alpha \rfloor}(x) - f^{\lfloor \alpha \rfloor}(y)\right| \le \bl_{\lfloor \alpha \rfloor}\left|x - y\right|^{\alpha - \lfloor \alpha \rfloor}
$$
and $\|f^{(i)}\|_\infty \le \bl_i$ for all $1 \le i \le \lfloor \alpha \rfloor - 1$. Moreover, this $\bl$ is uniform over all $\|a\| \le r$ where $r$ is independent of the underlying dimension. 
\end{assumption}
\begin{remark}
The conditional mean functions $m_j$ and $\check m_j$ are special cases of $g_j$ as $\check m_j(t) = g_j(\hat \gamma_n - \gamma_0, t)$ and $m_j(t) = g_j(0, t)$. 
\end{remark}

\begin{assumption}[Sub-gaussianity]
\label{assm:sg}
Assume that for $0 \le j \le p_1 + p_2$, $\tilde W_{j}, \hat{\tilde{W}}_{j}, m_j(\eta), \check m_j(\hat \eta), \eta$ and $\eps$ are subgaussian random-variables with subgaussianity constant uniformly bounded by $\sigma_W$. Furthermore assume that $\var(W_j \mid \hat \eta)$ is uniformly (over $j$) bounded on $[-\tau, \tau]$. 
\end{assumption}
\noindent
For our next assumption, we define a covariance matrix $\Sigma$ and two vectors $\theta^*_Y \in \reals^{p_1 + p_2}$ and $\theta^*_S \in \reals^{p_1 + p_2}$ which play a crucial role in our analysis: 
\begin{align}
\Sigma_\tau & = \bbE_\eta\left[\left(W - \bbE[W \mid \eta]\right)\left(W - \bbE[W \mid \eta]\right)^{\top}\mathds{1}_{|\eta| \le \tau}\right] \notag \\
& = \bbE[\tilde W \tilde W^{\top}\mathds{1}_{|\eta| \le \tau}] \notag \\ 
& = \bbE_\eta\left[\var(W \mid \eta)\mathds{1}_{|\eta| \le \tau}\right] \notag \\
& = \bbE_\eta\left[\var\left(\begin{pmatrix}S & X & b'(\eta) Z\end{pmatrix} \mid \eta\right)\mathds{1}_{|\eta| \le \tau}\right] \notag \\ \notag \\
\label{eq:def_theta_s} \theta^*_S & = \argmin_\delta \bbE\left[\left(\tilde S - \tilde W_{-1}^{\top}\delta\right)^2\mathds{1}_{|\eta| \le \tau}\right] = \left(\Sigma_{\tau_{-1, -1}}\right)^{-1}\bbE[\tilde W_{-1}\tilde S\mathds{1}_{|\eta| \le \tau}] \\
\label{eq:def_theta_y} \theta^*_Y & = \argmin_\delta \bbE\left[\left(\tilde Y - \tilde W_{-1}^{\top}\delta\right)^2\mathds{1}_{|\eta| \le \tau}\right] = \left(\Sigma_{\tau_{-1, -1}}\right)^{-1}\bbE[\tilde W_{-1}\tilde Y\mathds{1}_{|\eta| \le \tau}] \,.
\end{align}
The matrix $\Sigma_\tau$ is the expectation of conditional variance of $W$ given $\eta$ on the set $|\eta|\le \tau$, which arises in the estimation of the linear part of a partial linear model. This matrix can be thought as high dimensional analogue of $\Omega_\tau$ defined in our analysis for fixed dimensional covariates. $\theta_Y^*$ (resp. $\hat \theta^*_S$) is the best linear estimator in the population for regressing $\tilde Y$ (resp. $\tilde S$) on $\tilde W$. In the definition of our estimator $\hat \alpha$, we have regressed $\bY^{\perp}$ on $\breve \bW_{-1}^{\perp}$ and $\bS^{\perp}$ on $\breve \bW_{-1}^{\perp}$. Intuitively speaking, projecting out the column space of $\bN_k(\hat \eta)$ is asymptotically equivalent to centering around the conditional expectation with respect to $\eta$. Therefore, it is expected that $\hat \theta_{-1, Y}$ (respectively $\hat \theta_{-1, S}$) should be asymptotically consistent for regressing $\theta_Y^*$ (respectively $\hat \theta^*_S)$ under certain sparsity assumption and Restricted eigenvalue (RE) condition on the covariate matrix $\breve \bW_{-1}^{\perp}$. We now state our assumptions on $\Sigma, \theta^*_Y, \theta^*_S$: 


\begin{assumption}[Asymptotic variance]
\label{assm:sigma}
Assume that there exists $C_{\min} > 0$ and $C_{\max} < \infty$ such that: 
$$
C_{\min} \le \lambda_{\min}\left(\Sigma_\tau\right) \le \lambda_{\max}\left(\Sigma_\tau\right) \le C_{\max} < \infty \,.
$$
Furthermore, define
$$
\sigma^2_{n, 1} = \bbE\left[\eps^2 \left(\tilde S - \tilde W_{-1}^{\top}\theta^*_S\right)^2\mathds{1}_{|\eta| \le \tau}\right] 
$$
and assume that: 
$$
C_{\min} \le \liminf_{n} \sigma^2_{n, 1} \le \limsup_n \sigma^2_{n, 1}\le C_{\max} < \infty \,.
$$
Furthermore assume that: 
$$
\limsup_{n \to \infty} \bbE\left[\left(\tilde S - \tilde W_{ -1}^{\top}\theta^*_S\right)^{2 + \xi} \right] < \infty \,.
$$
for some small $\xi > 0$. 
\end{assumption}
\begin{assumption}[Sparsity]
\label{assm:sparsity_RE}
Assume there exists $s_0, s_1 > 0$ such that $\|\theta^*_Y\|_0 \le s_0$, $\|\theta^*_S\|_0 \le s_1$ and $s_i \log{(p_1 \wedge p_2)}/n \to 0$\,. 
\end{assumption}

\begin{remark}
Although it is apparent from equation \eqref{eq:lasso_est_Y} and \eqref{eq:lasso_est_S} that RE condition is needed on $\breve \bW_{-1}^{\perp}$ for the consistency and rate optimality of LASSO estimates $\hat \theta_{-1, S}$ and $\theta^*_{-1, 1}$, from the sub-gaussianity of $\tilde W$, we have that $\tilde \bW_{-1}$ satisfies RE condition with high probability (see \cite{rudelson2012reconstruction}). We prove in Proposition \ref{prop:RE} that this implies $\breve \bW_{-1}^{\perp}$ also satisfies RE condition with high probability and consequently $\hat \theta_{-1, S}$ and $\hat \theta_{-1, Y}$ will be rate efficient. 
\end{remark}

\begin{remark}
From our original model equation \ref{eq:model_main_eq}, we have: 
$$
\tilde Y = Y - \bbE[Y \mid \eta] = \tilde S \alpha_0 + \tilde X \beta_0 \,,
$$
which implies
$$
\theta^*_Y = \left(\Sigma_{\tau_{-1, -1}}\right)^{-1}\bbE[\tilde W_{-1}\tilde Y\mathds{1}_{|\eta| \le\tau}]  = \theta^*_S\alpha_0 + \theta_{0, -1}^*
$$
where $\theta_{0, -1}^* = (\beta_0, 0)$. Note that $\theta^*_{0, -1}$ is already sparse with $\left\|\theta_{0, -1}^*\right\|_0 = s_\beta$ (recall that we have defined $s_\beta = \|\beta_0\|_0$). So sparsity assumption on $\theta^*_S$ automatically ensures sparsity of $\theta^*_Y$, in other words we have $s_0 \le s_1 + s_\beta$. 
\end{remark}

\noindent
Under the above assumptions we prove the following theorem: 
\begin{theorem}
\label{thm:main_hd}
Under the above assumptions and certain conditions on the sparsities: $s_0, s_1, s_\beta, s_\gamma$ (see subsection \ref{sec:discussion_sparsity_rate} for detailed discussion), we have: 
$$
\frac{\sigma^2_{n, 2}}{\sigma_{n, 1}}\sqrt{n_3}\left(\hat \alpha - \alpha_0\right) \overset{\mathscr{L}}{\implies} \cN(0, 1) \,.
$$
where the values of $\sigma_{n, 1}$ is as defined in Assumption \ref{assm:sigma} and $\sigma_{n,2 }$ is: 
\begin{align*}
\sigma_{n, 2} & = \sqrt{\bbE\left[\left(\tilde S - \tilde W_{-1}\theta^*_S\right)^2\mathds{1}_{|\eta| \le \tau}\right]} = \frac{1}{\sqrt{\left(\Sigma_\tau^{-1}\right)_{1, 1}}}\,.
\end{align*} 
\end{theorem}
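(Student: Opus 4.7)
The plan is to follow the standard debiased--LASSO roadmap, suitably modified for the fact that the design $\breve{\bW}$ depends on the plug-in estimator $\hat{b}'(\hat\eta)$ obtained from $\cD_2$ and on $\hat\gamma_n$ from $\cD_1$, independent of $\cD_3$. First, I would plug the representation $\bY^{\perp} = \bS^{\perp}\alpha_0 + \breve{\bW}^{\perp}_{-1}\theta_{0,-1} + \tilde{\bR}^{\perp} + \beps^{\perp}$ (where $\theta_{0,-1}=(\beta_0, \hat\gamma_n-\gamma_0)$) into the numerator of $\hat\alpha$, and cancel $\alpha_0 \bS^{\perp\top}\hat v$ against $\alpha_0\hat v^{\top}\hat v$ after writing $\hat v = \bS^{\perp}-\breve{\bW}^{\perp}_{-1}\hat\theta_{-1,S}$. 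This yields the master decomposition
\begin{equation*}
\hat\alpha - \alpha_0 \;=\; \frac{\hat v^{\top}\beps^{\perp}}{\hat v^{\top}\hat v} \;+\; \frac{\hat v^{\top}\tilde{\bR}^{\perp}}{\hat v^{\top}\hat v} \;+\; \frac{\bigl(\theta_{0,-1} - \hat\theta_{-1,Y} + \alpha_0\hat\theta_{-1,S}\bigr)^{\top}\breve{\bW}^{\perp\top}_{-1}\hat v}{\hat v^{\top}\hat v}.
\end{equation*}
The strategy is to prove that the first term, after scaling by $\sqrt{n_3}$, has a Gaussian limit, while the other two are $o_p(1/\sqrt{n_3})$.

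\textbf{Denominator and oracle score.} Define the oracle residual $v^{*}=\bS^{\perp}-\breve{\bW}^{\perp}_{-1}\theta^{*}_{S}$. The key step is to show $n_3^{-1}v^{*\top}v^{*}\to\sigma_{n,2}^{2}$, which follows by noting that $P^{\perp}_{\bN_k}$ acts asymptotically as centering by $\bbE[\,\cdot\mid\hat\eta]$ on sufficiently smooth functions (quantified via the spline approximation rates available under Assumption \ref{assm:smoothness_conditional} and the choice $n^{1/8}\ll K \ll n^{1/3}$), and that $\hat b'(\hat\eta)\to b'(\eta)$ uniformly by Proposition \ref{thm:spline_consistency}; combined with $\hat\eta-\eta = Z^{\top}(\hat\gamma_n-\gamma_0)=O_p(\sqrt{s_\gamma\log p_2/n})$, this approximates $v^{*}$ entry-wise by $(\tilde S_i-\tilde W_{-1,i}^{\top}\theta^{*}_S)\mathds 1_{|\eta_i|\le \tau}$ up to a negligible remainder. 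The passage from $v^{*}$ to $\hat v$ uses the LASSO rate $\|\hat\theta_{-1,S}-\theta^{*}_{S}\|_1 = O_p(s_1\sqrt{\log p/n})$, which requires establishing an RE condition on $\breve{\bW}^{\perp}_{-1}$; this is exactly what Proposition~\ref{prop:RE} provides by transferring RE from the sub-Gaussian $\tilde{\bW}_{-1}$, together with Assumption \ref{assm:sg} and the independence structure afforded by sample splitting (so that $\breve{\bW}_{-1}^{\perp}$ is, conditionally on $\cD_1\cup\cD_2$, built from i.i.d.\ rows).

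\textbf{Controlling the bias terms.} For the cross term, I would use the algebraic identity $\theta_{0,-1}-\theta^{*}_{Y}+\alpha_0\theta^{*}_{S}=(0,\hat\gamma_n-\gamma_0)$, coming from $\theta^{*}_{Y}=\theta^{*}_{S}\alpha_0+(\beta_0,0)$ (which follows immediately by projecting the model equation onto $\tilde W_{-1}$). Combining this with LASSO error bounds for $\hat\theta_{-1,S},\hat\theta_{-1,Y}$ gives $\|\theta_{0,-1}-\hat\theta_{-1,Y}+\alpha_0\hat\theta_{-1,S}\|_{1}=O_p\bigl((s_0+s_1+s_\gamma)\sqrt{\log p/n}\bigr)$, where $p=p_1\vee p_2$. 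The KKT conditions for $\hat\theta_{-1,S}$ yield $\|\breve{\bW}_{-1}^{\perp\top}\hat v\|_\infty \le n_3\lambda_1 = O(\sqrt{n\log p})$, so the cross term, scaled by $\sqrt{n_3}$ and divided by $\hat v^{\top}\hat v\asymp n_3$, is $O_p\bigl((s_0+s_1+s_\gamma)\log p/\sqrt n\bigr)=o_p(1)$ under the sparsity regime $(s_0+s_1+s_\gamma)\log p=o(\sqrt n)$. The residual term splits as $\tilde{\bR}=\bR_1+\bR_2+\bR_3$: $\bR_1$ is controlled by the spline approximation rate $K^{-\upsilon}$ (Assumption~\ref{assm:smooth_b}) times $\|\hat v\|_2=O_p(\sqrt{n_3})$; $\bR_3$ satisfies $\sum_i R_{3,i}^2 \lesssim \|\hat\gamma_n-\gamma_0\|^4 \sum_i\|Z_i\|^4=O_p(s_\gamma^2 (\log p_2)^2/n)$ (using sub-Gaussianity of $Z$), giving an $o(1)$ contribution after dividing by $\sqrt{n_3}$; and $\bR_2$ is a true product of two estimation errors handled through Cauchy--Schwarz and the $L_4$-consistency of $\hat b'$ established en route to Proposition \ref{thm:spline_consistency}.

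\textbf{Central limit theorem and main obstacle.} For the leading term, since $v^{*}\in\mathrm{range}(P^{\perp}_{\bN_k})$, $v^{*\top}\beps^{\perp}=v^{*\top}\beps$; and the error $(\hat v - v^{*})^{\top}\beps/\sqrt{n_3}$ is $o_p(1)$ by the bound $\|\breve{\bW}_{-1}^{\perp\top}\beps\|_\infty=O_p(\sqrt{n\log p})$ combined with LASSO consistency of $\hat\theta_{-1,S}$. Then I would approximate $v^{*\top}\beps/\sqrt{n_3}$ by $n_3^{-1/2}\sum_i(\tilde S_i-\tilde W_{-1,i}^{\top}\theta^{*}_{S})\eps_i\mathds 1_{|\eta_i|\le\tau}$, a sum of conditionally independent mean-zero variables whose variance converges to $\sigma_{n,1}^{2}$, and invoke the Lindeberg CLT under the $(2+\xi)$-moment condition in Assumption~\ref{assm:sigma}. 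Putting everything together yields $\sqrt{n_3}(\hat\alpha-\alpha_0)\Rightarrow \cN\bigl(0,\sigma_{n,1}^{2}/\sigma_{n,2}^{4}\bigr)$, which is the claim after rescaling by $\sigma_{n,2}^{2}/\sigma_{n,1}$. The main obstacle is not the CLT itself but the accumulation of approximation errors connecting the empirical projection $P^{\perp}_{\bN_k(\hat\eta)}$ to centering by the population $\eta$: showing that the oracle LASSO target $\theta^{*}_{S}$ defined in the population actually controls the empirical LASSO $\hat\theta_{-1,S}$, when the empirical design uses $\hat b'(\hat\eta)Z$ instead of $b'(\eta)Z$. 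This requires carefully combining spline approximation theory, sub-Gaussian concentration, and the uniform consistency of $\hat b'$ to transfer both the RE condition and the regression target across these two designs.
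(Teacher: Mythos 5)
Your high-level roadmap is the same as the paper's: the master decomposition into an error term, a remainder term, and a cross term; stabilizing the common denominator; killing the first two numerators; and applying a Lindeberg CLT to the third. You also correctly identify the algebraic identity $\theta^{*}_{Y}=\theta^{*}_{S}\alpha_0+(\beta_0,0)$, the transfer of the RE condition from $\tilde{\bW}_{-1}$ to $\breve{\bW}_{-1}^{\perp}$, and the fact that the hard part is relating the empirical projection $P^{\perp}_{\bN_k(\hat\eta)}$ to population centering around $\eta$.

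The genuine gap is your treatment of the LASSO tuning parameters. You take $\lambda_0,\lambda_1\asymp\sqrt{\log p/n}$ and correspondingly claim $\|\hat\theta_{-1,S}-\theta^{*}_S\|_1=O_p(s_1\sqrt{\log p/n})$ and $\|\breve{\bW}_{-1}^{\perp\top}\hat v\|_\infty=O(\sqrt{n\log p})$. This is the rate for a \emph{clean-design} LASSO where the residual $\bS^{\perp}-\breve{\bW}^{\perp}_{-1}\theta^{*}_S$ is conditionally mean-zero noise with the design. Here it is not: $\theta^{*}_S$ is defined against the oracle centered design $\tilde{W}_{-1}$, while the empirical LASSO uses $\breve{\bW}^{\perp}_{-1}$, which carries four systematic discrepancies: the spline approximation error $r_n\asymp K^{-\upsilon}$, the plug-in error $\hat b'(\hat\eta)-b'(\eta)\lesssim\dot r_n$, the first-stage error $\hat\eta-\eta=Z^{\top}(\hat\gamma_n-\gamma_0)$, and the boundary mismatch $\mathds{1}_{|\hat\eta|\le\tau}$ versus $\mathds{1}_{|\eta|\le\tau}$. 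These leak into $\frac{1}{n}\|\breve{\bW}_{-1}^{\perp\top}(\bS^{\perp}-\breve{\bW}^{\perp}_{-1}\theta^{*}_S)\|_\infty$ and force the inflated choice in the paper's Lemma~\ref{lem:choice_lambda}, namely $\lambda_1\asymp(1+\|\theta^{*}_S\|_1)\bigl[\dot r_n+\sqrt{s_\gamma\log p/n}\bigr]+\sqrt{s_\gamma\log p/n}\,(\log\frac{n}{s_\gamma\log p})^{3/2}$, which is strictly larger than $\sqrt{\log p/n}$ whenever $s_\gamma\ge 1$. The auxiliary bound $\max_j\frac{1}{n}\|\breve{\bW}^{\perp}_{*,j}-\tilde{\bW}_{*,j}\|^2\lesssim_\bbP\dot r_n^2+s_\gamma\log p/n+\dot r_n\sqrt{\log p/n}$ (Lemma~\ref{lem:sq_rate}) is exactly what makes this rigorous and cannot be skipped. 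Using your $\lambda_1$ fails the basic LASSO inequality \eqref{eq:s_lasso} on an event whose probability does not tend to one, so you do not actually get consistency of $\hat\theta_{-1,S}$ toward $\theta^{*}_S$; and even granting it, your resulting sparsity condition $(s_0+s_1+s_\gamma)\log p=o(\sqrt n)$ is looser than what the correct $\lambda_1$ delivers (compare the paper's subsection~\ref{sec:discussion_sparsity_rate}). A second, related omission: your CLT step silently replaces $\mathds{1}_{|\hat\eta_i|\le\tau}$ by $\mathds{1}_{|\eta_i|\le\tau}$ in the influence function; this substitution needs the H\"older/Bernstein argument with the $(2+\xi)$-moment condition of Assumption~\ref{assm:sigma} and the tail bound for $\bbP(|\hat\eta-\eta|>\rho\tau)$ in \eqref{eq:Delta_bound_1}--\eqref{eq:Delta_bound_2}, which is also the origin of the extra $(\log\frac{n}{s_\gamma\log p})^{3/2}$ factor in $\lambda_1$.
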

\begin{remark}
Similar to our analysis for the fixed dimensional covariates, we can gain efficiency (in terms of asymptotic variance) here as well, by rotating the datasets and taking average of the estimates of $\alpha_0$). However, as the analysis is already quite involved, we do not pursue this extension here. 
\end{remark}
\noindent
{\bf A roadmap of the proof: }We now present a basic roadmap of the proof of Theorem \ref{thm:main_hd} for the ease of the readers, the details can be found in Appendix \ref{sec:details_hd} and in the supplementary document. As immediate from the definition, the estimation of $\hat \alpha$ consists of three key steps: 
\begin{enumerate}
\item LASSO regression using $\bY^{\perp}$ on $\breve \bW_{-1}^{\perp}$.
\item LASSO regression using $\bS^{\perp}$ on  $\breve \bW_{-1}^{\perp}$.
\item Finally, regression of the residual of the first LASSO regression on the residual of the second LASSO regression. 
\end{enumerate}
Recall that the last $p_2$ columns of $\breve \bW$ involves $\hat b'(\hat \eta)$ as a coefficient of $Z$, which, asymptotically should be close to $b'(\eta)$, i.e. the random vector $\breve W$ should be asymptotically close to the random vector $W$. Consequently, if we consider $\breve W^{\perp}$, it should be asymptotically close to $\tilde W$, as projecting out the span of the b-spline bases evaluated at $\hat \eta$ is expected to be asymptotically equivalent to centering around $\eta$. Using this intuition, we expect that our LASSO estimates $\hat \theta_{-1, Y}$ (first LASSO) should converge to $\theta^*_Y$ and the estimator $\hat \theta_{-1, S}$ (second LASSO) should converge to $\theta^*_S$ under our assumptions. This is what we establish using Lemma \ref{lem:choice_lambda} and Proposition \ref{prop:RE}. Lemma \ref{lem:sq_rate}, which establishes the \emph{asymptotic closeness} of $\breve \bW^{\perp}$ to $\tilde \bW$ via the following approximation: 
$$
\breve \bW^{\perp} \longrightarrow \bW^{\perp} \longrightarrow \widehat{\tilde{\bW}} \longrightarrow \tilde \bW \,,
$$
is an auxiliary lemma that is used in the proof of Lemma \ref{lem:choice_lambda}. Furthermore, for the optimal rate of LASSO estimator, we need restricted eigenvalue condition of the covariate matrix $\breve \bW_{-1}^{\perp}$. In Assumption \ref{assm:sparsity_RE} we have assumed the matrix $\tilde \bW_{-1}$ satisfied RE condition. In Proposition \ref{prop:RE} we establish that, if $\tilde \bW_{-1}$ satisfies RE, then $\breve \bW_{-1}^{\perp}$ also satisfies it with high probability. To provide further insights of the proof, we expand $\sqrt{n}(\hat \alpha - \alpha_0)$ as follows: 
\allowdisplaybreaks
\begin{align*}
    \hat \alpha & = \frac{\left(\bY^{\perp} - \breve \bW^{\perp}_{-1}\hat \theta_{-1,Y}\right)^{\top}\left(\bS^{\perp} - \breve \bW^{\perp}_{-1}\hat \theta_{-1, S}\right)}{\left(\bS^{\perp} - \breve \bW^{\perp}_{-1}\hat \theta_{-1, S}\right)^{\top}\left(\bS^{\perp} - \breve \bW^{\perp}_{-1}\hat \theta_{-1, S}\right)} \notag \\
    & = \frac{\left(\bS^{\perp}\alpha_0  - \breve \bW^{\perp}_{-1}\hat \theta_{-1, S}\alpha_0 + \breve \bW^{\perp}_{-1}\hat \theta_{-1, S}\alpha_0 - \breve \bW^{\perp}_{-1}\hat \theta_{-1,Y} + \breve \bW^{\perp}_{-1}\theta_{0, -1} + \bR^{\perp} + \beps^{\perp} \right)^{\top}\left(\bS^{\perp} - \tilde \bW^{\perp}_{-1}\hat \theta_{-1, S}\right)}{\left(\bS^{\perp} - \tilde \bW^{\perp}_{-1}\hat \theta_{-1, S}\right)^{\top}\left(\bS^{\perp} - \tilde \bW^{\perp}_{-1}\hat \theta_{-1, S}\right)} \notag \\
    &= \alpha_0 + \frac{\left(\theta_{0,-1} + \hat \theta_{-1, S}\alpha_0 - \hat \theta_{-1, Y} \right)^{\top}\breve \bW_{-1}^{\perp^{\top}}\left(\bS^{\perp} - \breve \bW^{\perp}_{-1}\hat \theta_{-1, S}\right)}{\left(\bS^{\perp} - \breve \bW^{\perp}_{-1}\hat \theta_{-1, S}\right)^{\top}\left(\bS^{\perp} - \breve \bW^{\perp}_{-1}\hat \theta_{-1, S}\right)} \notag \\
    & \qquad \qquad + \frac{\beps^{\perp^{\top}}\left(\bS^{\perp} - \breve \bW^{\perp}_{-1}\hat \theta_{-1, S}\right)}{\left(\bS^{\perp} - \breve \bW^{\perp}_{-1}\hat \theta_{-1, S}\right)^{\top}\left(\bS^{\perp} - \breve \bW^{\perp}_{-1}\hat \theta_{-1, S}\right)} + \frac{\tilde \bR^{\perp^{\top}}\left(\bS^{\perp} - \breve \bW^{\perp}_{-1}\hat \theta_{-1, S}\right)}{\left(\bS^{\perp} - \breve \bW^{\perp}_{-1}\hat \theta_{-1, S}\right)^{\top}\left(\bS^{\perp} - \breve \bW^{\perp}_{-1}\hat \theta_{-1, S}\right)}
\end{align*}
This implies we have: 
\allowdisplaybreaks
\begin{align}
    \sqrt{n}\left(\hat \alpha - \alpha_0\right) & = \frac{\frac{1}{\sqrt{n}} \left(\theta_{0,-1} + \hat \theta_{-1, S}\alpha_0 - \hat \theta_{-1, Y} \right)^{\top}\breve \bW_{-1}^{\perp^{\top}}\left(\bS^{\perp} - \breve \bW^{\perp}_{-1}\hat \theta_{-1, S}\right)}{\frac1n \left(\bS^{\perp} - \breve \bW^{\perp}_{-1}\hat \theta_{-1, S}\right)^{\top}\left(\bS^{\perp} - \breve \bW^{\perp}_{-1}\hat \theta_{-1, S}\right)} \notag \\
    & \qquad \qquad + \frac{\frac{1}{\sqrt{n}}\tilde \bR^{\perp^{\top}}\left(\bS^{\perp} - \breve \bW^{\perp}_{-1}\hat \theta_{-1, S}\right)}{\frac1n \left(\bS^{\perp} - \breve \bW^{\perp}_{-1}\hat \theta_{-1, S}\right)^{\top}\left(\bS^{\perp} - \breve \bW^{\perp}_{-1}\hat \theta_{-1, S}\right)} \notag \\
    \label{eq:debias_1}  & \qquad \qquad \qquad \qquad + \frac{\frac{1}{\sqrt{n}}\beps^{\perp^{\top}}\left(\bS^{\perp} - \breve \bW^{\perp}_{-1}\hat \theta_{-1, S}\right)}{\frac1n \left(\bS^{\perp} - \breve \bW^{\perp}_{-1}\hat \theta_{-1, S}\right)^{\top}\left(\bS^{\perp} - \breve \bW^{\perp}_{-1}\hat \theta_{-1, S}\right)}
\end{align}
Using the consistency of the lasso estimates, we first establish the stability of the common denominator of equation \eqref{eq:debias_1}. More specifically we show in  Proposition \ref{prop:denom_conv} that: 
$$
\frac{1}{n}\left\|\bS^{\perp} - \breve \bW^{\perp}\hat \theta_{-1, S}\right\|^2 = \bbE\left[\left(\tilde S - \tilde W_{-1}^{\top}\theta^*_S\right)^2\mathds{1}_{|\eta| \le \tau}\right] + o_p(1) = \frac{1}{\left(\Sigma_\tau^{-1}\right)_{1, 1}} + o_p(1)\,.
$$
By Assumption \ref{assm:sigma},  we conclude that the common denominator of equation \eqref{eq:debias_1} is $O_p(1)$. The next step is to show that the numerators of the first two terms of the RHS of equation \eqref{eq:debias_1} is $o_p(1)$. The basic intuition behind this asymptotic negligibility is that the numerator of the first term largely is basically product of the prediction error of the two LASSO regressions, and consequently $o_p(1)$, even after scaling by $\sqrt{n}$ under certain condition on sparsity required for debiased LASSO. The numerator of the second term of the RHS of equation \eqref{eq:debias_1} is the inner product the residuals of original model equation \eqref{eq:exp_v2} and the lasso residuals of regression $\bS^{\perp}$ on $\breve \bW_{-1}^{\perp}$. As we have the already established the lasso residuals stabilizes, the asymptotic negligibility of this numerator primarily stems from the asymptotic negligibility of the residual vector. Details can be found in the Appendix. The final step is to show that the numerator of the third term of equation \eqref{eq:debias_1} is asymptotically normal which follows from an application of Lindeberg's central limit theorem. This completes the roadmap of the proof.

\section{Real data analysis}
\label{sec:real_data}
In this section we illustrate our method by analyzing two real datasets. We divide our analysis into two subsections, one for each dataset. We first present a brief description of the data, then present our analysis and compare our results with the existing one.

\subsection{Effect of Islamic party on women's education in Turkey} 
\vspace{0.1in}
\noindent
In this subsection we study the effect of Islamic party rule in Turkey on women empowerment in terms of their high school education. In the 1994 municipality elections, Islamic parties won several municipal mayor seats in Turkey. We are interested in investigating whether this winning had any effect on the education of women, i.e. to determine, statistically, whether the concern that Islamic control may be inimical towards gender equality is supported by the data. The dataset we analyze here was collected by Turkish Statistical Institute and was first analyzed in \cite{meyersson2014islamic}. Since then it has been used by several authors,  appearing for example, as one of the core illustrations in \cite{cattaneo2019practical}\footnote{We have downloaded the dataset from \url{https://github.com/rdpackages-replication/CIT_2019_CUP/blob/master/CIT_2019_Cambridge_polecon.csv}.}. The dataset consists of $n = 2629$ rows where the rows represent municipalities, the units of our analysis. The main target/response variable $Y$ is the percentage of women in the 15-20 year age-group who were recorded to have completed their high school education in the 2000 census. As mentioned in \cite{meyersson2014islamic}, this is the group that should have been most affected by the decisions made by the winners of the 1994 election. 
The treatment determining variable $Q$ is the difference in vote share between the largest Islamic party (i.e. the Islamic party which got maximum votes among all Islamic parties) and the largest secular party (i.e. the non-Islamic party which got the maximum votes among all non-Islamic parties).  Hence, the cutoff is $0$: i.e. if $Q_i > 0$, then the $i^{th}$ municipality unit elected an Islamic party and if $Q_i < 0$, a secular party. The description of the available co-variates is presented in Table \ref{tab:description_islamic} of the supplementary document. For $X$ and $Z$, we use all the co-variates presented in that table except \texttt{i89} because almost $1/3$'rd of the observations (729 many) are missing for this variable. We estimate $\alpha_0$ by $\overline{\hat \alpha}$ as described in Section \ref{sec:est_method}. To test: 
$$
H_0: \alpha_0 = 0 \ \ \ \ vs \ \ \ \ H_1: \alpha_0 \neq 0 
$$
we construct an Efron-bootstrap based confidence interval over 500 iterations. We present our finding in Table \ref{tab:summary_islamic}. From the table, it is clear that, we don't have enough evidence to reject $H_0$ at $5\%$ level as the confidence interval contains $0$. Hence we conclude that, there is no significant effect of Islamic ruling party on women's education. 

We next compare our result to that of \cite{meyersson2014islamic} and \cite{cattaneo2019practical}. \cite{meyersson2014islamic} implemented a simpler model for RDD: 
$$
Y_i = \beta_0 + \alpha_0 S_i + f(Q_i) + \eps_i \,.
$$
with $f$ being a polynomial function and only those observations were used where $Q_i \in (-h,h)$ for some optimal choice of the bandwidth $h$ (chosen according to the prescription of \cite{imbens2012optimal}). The authors found that Islamic party rule has a significantly positive effect on women's education at $1\%$ level test. On the other hand, \cite{cattaneo2019practical} implemented the model based on \eqref{eq:CCT}. We replicate their result using the R-package \emph{rdrobust} as advocated in \cite{cattaneo2019practical}. The function \texttt{rdrobust} inside the package \emph{rdrobust}, takes input $Y, Q, X$ and splits out three different types of estimate of $\alpha_0$ (along with $95\%$ confidence interval), namely \texttt{conventional, bias-corrected} and \texttt{Robust}. As mentioned in \cite{calonico2020package}, \texttt{conventional} presents when the conventional RD estimates (i.e. solving equation \eqref{eq:CCT} via local polynomial regression) with conventional standard errors, \texttt{bias-corrected} implies bias-corrected RD estimate with conventional standard errors and \texttt{robust} indicates bias-corrected RD estimates with robust standard errors (see \cite{calonico2014robust}). We use the parameters \texttt{kernel} = 'triangular', \texttt{scaleregul} = 1, \texttt{p} = 1, \texttt{bwselect} = 'mserd'  of rdrobust function to run the analysis. As evident from Table \ref{tab:summary_islamic_cck}, all these estimates reject $H_0$ at $5\%$ level stipulating a strictly positive effect of Islamic party on the education of women, while our method fails to reject the null at the same level.   

\begin{table}
\caption{Summary Statistics of data on Islamic party based on our method \label{tab:summary_islamic}}
\centering
\fbox{%
 \begin{tabular}{*{2}{l}} 
 \hline\hline
 Point Estimate & 0.4071513  \\
  \hline
 Bootstrap mean. & 0.5760144\\
  \hline
 Bootstrap s.e. & 0.48115 \\
 \hline
 Bootstrap $95\%$ C.I.  & $(-0.4234894 , 1.41942)$  \\[1ex] 
 \hline
\end{tabular}}
\end{table}

\noindent
\begin{table}
\centering
\caption{Summary Statistics of data on Islamic party based on \cite{cattaneo2019practical} \label{tab:summary_islamic_cck}}
\fbox{%
 \begin{tabular}{*{4}{l}} 
 \hline\hline
Name of methods & Coeff & CI Lower & CI Upper \\
\hline \hline
Conventional & 3.005951 & 0.9622239 &   5.049678\\
  \hline
Bias-Corrected & 3.204837 & 1.1611103 & 5.248564 \\
  \hline
Robust     & 3.204837 & 0.8266720 & 5.583003 \\
 \hline\hline
\end{tabular}}
\end{table}

\subsection{Effect of probation on subsequent GPA} 
\vspace{0.1in}
\noindent
We next analyze an educational dataset, originally collected and analyzed in \cite{lindo2010ability}\footnote{We have collected the dataset from \url{https://www.openicpsr.org/openicpsr/project/113751/version/V1/view;jsessionid=A6C09FD5CD7DB8E18EAA77B75BD893B2}.}where we investigate whether putting students on academic probation due to grades below a pre-determined cutoff has any effect on their subsequent GPA. The data are based on students from 3 independent campuses of a large Canadian university -- a major campus and other satellite campuses. The acceptance rate in the major campus is around $55\%$ and in the satellite campuses around $77\%$. The data were collected over $8$ cohorts of students till the end of the 2005 academic year. To observe the students for at least two years, only those who entered the university prior to the beginning of the 2004 academic year were considered. After being put on academic probation in their first year, some students left the university. We, therefore, only have access to GPA during the second year for those students who stayed. Thus, our $Y$ variable is the GPA of the first academic term in the second year and the treatment $S$ is whether the student was put on probation. The treatment determining variable $Q$ is the difference between the first year GPA and the cutoff for academic probation: if $Q < 0$, the student is put on academic probation, otherwise not. The covariates we consider here ($X = Z$) are presented in Table \ref{tab:covariates_LSO} of the supplementary document. 
In Table \ref{tab:summary_gpa} we summarize our result.  It is immediate from the bootstrap confidence interval from Table \ref{tab:summary_gpa} that for testing $H_0: \alpha = 0$ vs $H_1: \alpha \neq 0$, we have enough evidence to Reject $H_0$ at the $5\%$ level and conclude that the students who are put on academic probation and continue with their education, tend to improve their performance (note that the estimated $\alpha_0$ is positive) in the subsequent academic year. This makes sense, as the students who did not leave university after being put on academic probation must have a strong incentive to work harder so that they are not expelled from the university. Our findings are in harmony with those obtained in \cite{lindo2010ability}, where the author also found the treatment effect to be significant at the $5\%$ level.

\begin{remark}
\label{rem:bootstrap}
Note that in Table \ref{tab:summary_islamic} and Table \ref{tab:summary_gpa}, we presented bootstrap confidence intervals for the treatment effect instead of asymptotic confidence  intervals. This is because consistent estimation of the the asymptotic variance of our estimator is not straightforward. Recall that from Theorem \ref{thm:main_thm}, the asymptotic variance of our estimator is $e_1^{\top}\Omega_{\tau}^{-1}\Omega_{\tau}^* \Omega_{\tau}^{-1}e_1$. As mentioned in Step 1 of the sketch of the proof of Theorem \ref{thm:main_thm}, $\bW^{\top}\proj^{\perp}_{\tilde \bN_{K, a}}\bW/n$ is a consistent estimator of $\Omega_{\tau}$ but it is hard to estimate $\Omega^*_{\tau}$ consistently, which forces us to resort to the bootstrap confidence interval. 
\end{remark}

\begin{table}
\caption{Summary Statistics of data on GPA data \label{tab:summary_gpa}}
\centering
\fbox{%
 \begin{tabular}{*{2}{l}} 
 \hline\hline
 Point Estimate &  0.2733371  \\
  \hline
 Bootstrap mean. & 0.2817404 \\
  \hline
 Bootstrap s.e. & 0.016672 \\
 \hline
 Bootstrap $95\%$ C.I.  & $(0.248588, 0.3158934)$  \\[1ex] 
 \hline
\end{tabular}}
\end{table}

\section{Conclusion and possible extensions}
\label{sec:conclusion}
In this paper, we proposed a new approach to estimate an non-randomized treatment effect at the $\sqrt{n}$ rate and showed that under homoscedastic normal errors our method is semiparametrically efficient. We also pointed out in Remark \ref{rem:WLS} that one can use weighted least squares instead of ordinary least squares to take care of heterogenous errors. However, the normality assumption is necessary for semiparametric efficiency as we use least squares for estimating the treatment effect. We now discuss some natural extensions of our models that are worth analyzing as potential future research problems.   
\subsection{Non-constant treatment effect} 
\label{sec:extension_non-param}
\noindent
Consider the following extension of our model with non-constant treatment effect: 
\begin{align}
\label{eq:m1} Y_i &= \alpha(\eta_i)\mathds{1}_{Q_i > 0} + X_i^{\top}\beta_0 + \nu_i =  \alpha(\eta_i)\mathds{1}_{Q_i > 0} + X_i^{\top}\beta_0 + b(\eta_i) + \eps_i\\
\label{eq:m2} Q_i &= Z_i^{\top}\gamma_0 + \eta_i \,.
\end{align}
where as before $b(\eta_i) = \bbE[\nu \mid \eta_i]$. This generalization assumes that the response of a treated candidate with higher abilities is boosted in comparison to another treated candidate with lower ability. As an example, a more capable student upon entering into a prestigious grad school, will most likely get a better mentor, resulting in an amplification of their academic prowess. The random variable $\alpha(\eta)$ represents the \emph{conditional treatment effect} as: 
$$
\bbE[Y \mid X, \eta, Q > 0] - \bbE[Y \mid X, \eta, Q < 0] = \alpha(\eta) \,.
$$
i.e. conditioning on $(X, \eta)$ (background information and innate ability), $\alpha(\eta)$ quantifies the difference between the responses of treated and untreated samples. 
It can be shown that the estimator proposed in our manuscript based on the model with constant $\alpha_0$, estimates, in the newly proposed model, a \emph{weighted average} of $\alpha(\eta)$, i.e. $\bbE[p(\eta)\alpha(\eta)]/\bbE[p(\eta)]$ where $p(\cdot)$ is a non-negative integrable function depending on other parameters. This weight function is basically a function of the information for $\alpha$ in each $\eta$ - stratum. So our current method is applicable when the parameter of interest is a \emph{weighted average treatment effect}. 

However if the parameter of interest is \emph{unweighted average treatment effect} $\bbE[\alpha(\eta)]$, then our method can be slightly modified as follows to yield a $\sqrt{n}$-consistent estimator: 
\begin{enumerate}
\item Split the data into three (almost) equal parts say $\cD_1, \cD_2, \cD_3$. 
\item From $\cD_1$, impute $\hat \eta$ from equation \eqref{eq:m2} by regressing $Q$ on $X$. 
\item Estimate $\alpha(\cdot)$ and $b(\cdot)$ (and their derivatives) from $\cD_2$ using a B-spline series expansion.
\item Note that equation \eqref{eq:m1} can be expanded as: 
\begin{align*}
Y_i & = \alpha(\eta_i)\mathds{1}_{Q_i > 0} + X_i^{\top}\beta_0 + b(\eta_i) + \eps_i \\
& = \alpha(\hat \eta_i)\mathds{1}_{Q_i > 0} + b(\hat \eta_i) + X_i^{\top}\beta_0+ (\hat \gamma - \gamma_0)^{\top}Z_i\left(\alpha'(\hat \eta_i)\mathds{1}_{Q_i > 0} + b'(\hat \eta_i)\right) + \eps_i + R_i  \\
& \approx \alpha(\hat \eta_i)\mathds{1}_{Q_i > 0} + b(\hat \eta_i) + X_i^{\top}\beta_0+ (\hat \gamma - \gamma_0)^{\top}Z_i\left(\hat \alpha'(\hat \eta_i)\mathds{1}_{Q_i > 0} + \hat b'(\hat \eta_i)\right) + \eps_i + R_i
\end{align*}
where in the last line we replaced $\alpha'$ and $b'$ by their estimates obtained from $\cD_2$ in the previous step. Finally, we can re-estimate $\alpha$ from the above equation via a non-parametric method (i.e. expanding via a B-spline basis and regressing $Y$ on an appropriate set of covariates). Our final estimator becomes: 
$$
\hat \bbE[\alpha(\eta)] = \frac{1}{n}\sum_{i=1}^n \hat \alpha(\hat \eta_i) \,.
$$
\end{enumerate}
An analysis similar to that in our paper indicates that this estimator has $\sqrt{n}$ rate of convergence and is asymptotically normal. However whether this is semi-parametrically efficient is not clear at this moment and a potentially interesting problem for future research. 

\subsection{Bootstrap consistency}
As noted in Remark \ref{rem:bootstrap}, we use a bootstrap confidence interval instead of the asymptotic one owing to the intricate form of the asymptotic variance of our estimator, which makes it hard to estimate from the data. Therefore, an immediate question of interest is to investigate whether the bootstrap is consistent under our model assumptions. Although empirical evidence suggests that this is the case, a rigorous theoretical undertaking is essential to establish the claim.

\appendix
\section{Proof of Theorem 4.10}
\label{sec:details_hd}
\begin{proof}
For the ease of notation, we assume $\dim(X) = p_1 \asymp \dim(Z) = p_2 \asymp p$. One can extend our proof quite easily for general $p_1, p_2$ (i.e. when they are not of same order) with a careful booking for the dimension factor. The entire proof is quite long and tedious, therefore in this appendix we will state the key steps and provide proofs of the main parts. Proofs of all supplementary lemmas and propositions can be found in Appendix.
\\\\
\noindent
Recall that our estimation procedure consists of three parts: 
\begin{itemize}
\item Estimate $\gamma_0$ from $\cD_1$. 
\item Estimate $b'$ from $\cD_2$. 
\item Estimate $\alpha_0$ from $\cD_3$. 
\end{itemize}

\subsection{Estimation of $\hat \gamma_0$ from $\cD_1$} This part is the easiest among all the three parts. We estimate $\gamma_0$ by doing LASSO of $\bY$ in $\bZ$. Note that, by sub-gaussianity assumption, $\bZ$ satisfies restricted eigenvalue condition with high probability. Therefore, by standard lasso calculation with the tuning parameter $\lambda \asymp \sqrt{\log{p}/n}$ we have: 
\begin{align*}
\left\| \hat \gamma - \gamma_0 \right\|_2^2 \lesssim_\bbP \frac{s_\gamma \log{p}}{n} \,, \\
\left\| \hat \gamma - \gamma_0 \right\|_1 \lesssim_\bbP s_\gamma \sqrt{\frac{\log{p}}{n}} \,.
\end{align*} 
We will use this estimator in the subsequent analysis.

\subsection{Estimation of $b'$ from $\cD_2$}
In this subsection, we present the estimation error for $b'$.  An in our fixed dimensional analysis, we expand the model equation as: 
$$
\bY = \bX\beta_0 + \bN_k(\hat \eta)\omega_b + \bR_1 + \bR_2 + \beps \,. 
$$
where $\bN_K(\hat \eta) \in \reals^{n \times K}$ with $\bN_K(\hat \eta)_{i,j} = \tilde N_{K, j}(\hat \eta_i)$, $N_{K, j}$ being the scaled $K^{th}$ B-spline basis (see Section \ref{sec:spline_details} of the supplementary document for details). Here $\omega_b$ is the coefficient of \emph{best B-spline approximator} of $b$ with respect to $K$ basis. Our aim is to estimate $\omega_b$ from the data, as upon obtaining $\hat \omega_b$ one can define $\hat b'(t) = \nabla N_K(t)^{\top}\hat \omega_b$, where $\nabla \tilde N_K(t)$ is the vector of derivatives of B-spline basis. Therefore the estimation error $\hat b'$ can be bounded as: 
$$
\sup_{|t|\le \tau} \left|b'(t) - \hat b'(t)\right| \le \sup_{|t| \le \tau} \left\|\nabla \tilde N_k(t)\right\| \times \left\|\hat \omega_b - \omega_b\right\| + \sup_{|t| \le \tau}\left|b'(t) - \tilde N_K(t)^{\top}\omega_b\right| \,. 
$$ 
The second term is the B-spline approximation error, which is bounded by the order of $K^{-(\upsilon - 1)}$ (see Theorem \ref{thm:spline_bias} of the supplementary document). For the first term or estimation error, as mentioned in Section \ref{sec:spline_details}, we have $\sup_{|t| \le \tau} \|\nabla \tilde N_k(t)\| \lesssim K^{3/2}$. This further implies: 
$$
\sup_{|t|\le \tau} \left|b'(t) - \hat b'(t)\right| \lesssim K^{3/2}\left\|\hat \omega_b - \omega_b\right\| + K^{-(\upsilon - 1)}\,,
$$
and consequently, the estimation error $\hat b'$ completely depends on the estimation error of $\hat \omega_b$. In the proposition we present a bound on the estimation error on $\hat \omega_b$ (and consequently on $\hat b'$): 
\begin{proposition}
\label{prop:b_hat_est}
Under the Assumptions stated in Section \ref{sec:high_dim_rdd}, we have: 
$$
\left\|\hat \omega_b - \omega_b\right\| \lesssim_\bbP \sqrt{K}s_\beta \left(\sqrt{\frac{s_\gamma \log{p}}{n}} + K^{-2\upsilon}\right) + K^{-\upsilon} + \sqrt{\frac{s_\gamma \log{p}}{n}} +  \sqrt{\frac{K}{n}}
$$
and consequently: 
$$
\sup_{|t| \le \tau}\left|b'(t) - \hat b'(t)\right| \lesssim_\bbP \left(\frac{n}{s^2_\beta s_\gamma \log{p}}\right)^{\frac{-(\upsilon - \frac32)}{2\upsilon + 1}} \,.
$$
\end{proposition}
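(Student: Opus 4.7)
My plan is to take $\hat\omega_b$ as the unpenalized coordinate of a joint Lasso on $\cD_2$ applied to equation \eqref{eq:break_d2}:
$$(\hat\alpha^{(2)},\hat\beta^{(2)},\hat\omega_b) \;=\; \argmin_{(a,\beta,\omega)}\; \frac{1}{2n_2}\bigl\|\bY-\bS a-\bX\beta-\bN_k\omega\bigr\|^2+\lambda_2(|a|+\|\beta\|_1),$$
with $\lambda_2\asymp\sqrt{\log p/n}$ and the $K$-dimensional spline block $\bN_k\equiv \bN_k(\hat\eta)$ left unpenalized. Writing the residual vector as $\bR=\bR_a+\bR_b$ with $\bR_{a,i}=b(\hat\eta_i)-\tilde N_K(\hat\eta_i)^{\top}\omega_b$ (spline bias) and $\bR_{b,i}=b(\eta_i)-b(\hat\eta_i)=-b'(\eta_i)Z_i^{\top}(\hat\gamma-\gamma_0)+O\bigl((\hat\eta_i-\eta_i)^2\bigr)$ (plug-in error), the model reads $\bY=\bS\alpha_0+\bX\beta_0+\bN_k\omega_b+\bR+\beps,$ and I would decompose $\hat\omega_b-\omega_b$ into pieces driven by $\beps$, $\bR_a$, $\bR_b$, and the Lasso error $(\hat\alpha^{(2)}-\alpha_0,\hat\beta^{(2)}-\beta_0)$.

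The enabling step is a joint restricted eigenvalue / compatibility condition on the composite design $[\bS,\bX,\bN_k]$. Assumption \ref{assm:sg} delivers RE for $[\bS,\bX]$ on sparse cones with high probability; on the spline block, the bounded and bounded-below density of $\eta$ on a neighbourhood of $[-\tau,\tau]$ together with standard cubic B-spline properties yield $\bN_k^{\top}\bN_k/n_2 \succeq cI_K$ with high probability (a short perturbation argument upgrades this from $\bN_k(\eta)$ to $\bN_k(\hat\eta)$ using $\|\hat\eta-\eta\|_\infty$-type bounds). Assumption \ref{assm:smoothness_conditional} controls the cross block: since $\bbE[X\mid\hat\eta]$ is uniformly H\"older, its best B-spline approximation is good, so the \emph{partialled-out} columns $X-\bbE[X\mid\hat\eta]$ are sub-Gaussian and the coupling of the two blocks survives on the relevant cone.

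Given the joint RE with constant $\kappa$, standard Lasso analysis gives $\|\hat\beta^{(2)}-\beta_0\|_2\lesssim_\bbP \sqrt{s_\beta\log p/n}$ and $\|\hat\beta^{(2)}-\beta_0\|_1\lesssim_\bbP s_\beta\sqrt{\log p/n}$, while the unpenalized first-order condition for $\omega$ yields
$$(\bN_k^{\top}\bN_k/n_2)\,(\hat\omega_b-\omega_b) \;=\; \bN_k^{\top}\bigl[\beps+\bR-\bX(\hat\beta^{(2)}-\beta_0)-\bS(\hat\alpha^{(2)}-\alpha_0)\bigr]/n_2.$$
Inverting the Gram block and bounding term by term: (i) the $\beps$ contribution gives the $\sqrt{K/n}$ term by a sub-Gaussian $\chi^2$ tail; (ii) $\bR_a$ contributes $K^{-\upsilon}$ by the cubic B-spline approximation rate (Theorem \ref{thm:spline_bias} plus Assumption \ref{assm:smooth_b}); (iii) $\bR_b$ gives $\sqrt{s_\gamma\log p/n}$ from the linear Taylor term (using $\|\hat\gamma-\gamma_0\|_2\lesssim_\bbP\sqrt{s_\gamma\log p/n}$) and $K^{-2\upsilon}$ after a further spline approximation of $b'$ inside this term, with the quadratic Taylor remainder of strictly lower order; (iv) the Lasso contamination $\bN_k^{\top}\bX(\hat\beta^{(2)}-\beta_0)/n_2$ picks up a $\sqrt K$ prefactor through the $\ell_\infty$--$\ell_1$ duality $\|\bN_k^{\top}\bX/n_2\|_{\max}\cdot\|\hat\beta^{(2)}-\beta_0\|_1$ promoted to $\ell_2$ on the $K$-dimensional block, producing the $\sqrt K\,s_\beta\bigl(\sqrt{s_\gamma\log p/n}+K^{-2\upsilon}\bigr)$ term. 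Summing the four sources yields the stated bound on $\|\hat\omega_b-\omega_b\|$.

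The sup-norm conclusion then follows from $\sup_{|t|\le\tau}|b'(t)-\hat b'(t)|\le K^{3/2}\|\hat\omega_b-\omega_b\|+\sup_{|t|\le\tau}|b'(t)-\nabla\tilde N_K(t)^{\top}\omega_b|,$ combined with the Markov--Bernstein inequality for B-splines (bias $\lesssim K^{-(\upsilon-1)}$), and finally by optimising $K$ so that the dominant stochastic contribution $K^2 s_\beta\sqrt{s_\gamma\log p/n}$ balances the bias, which gives the advertised rate. I expect the main obstacle to be exactly this joint RE step: the spline block $\bN_k$ has $K$ columns that are smoothly correlated with the high-dimensional $\bX$ through $\bbE[X\mid\hat\eta]$, and partialling out this smooth component under only sub-Gaussianity is what forces Assumption \ref{assm:smoothness_conditional} into the argument. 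A secondary technical nuisance is the \emph{double Taylor} handling of $\bR_b$, because both $b'$ and $\hat\gamma-\gamma_0$ must be controlled simultaneously via spline and Lasso rates in order to produce the $\sqrt{s_\gamma\log p/n}+K^{-2\upsilon}$ factor rather than a cruder $\|\hat\gamma-\gamma_0\|_\infty\cdot\|b'\|_\infty$-type bound.
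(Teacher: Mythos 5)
Your joint‑Lasso formulation with an unpenalized spline block is, after profiling out $\omega$ via its first‑order condition, exactly equivalent to the paper's two‑step scheme of first Lasso‑regressing $P^\perp_{\bN_k}\bY$ on $P^\perp_{\bN_k}\bX$ to get $\hat\beta$ and then setting $\hat\omega_b=(\bN_k^{\top}\bN_k)^{-1}\bN_k^{\top}(\bY-\bX\hat\beta)$, and your decomposition of $\hat\omega_b-\omega_b$ into $\eps$, spline‑bias, plug‑in, and Lasso‑contamination pieces, with the $\sqrt K$ prefactor via $\ell_\infty$--$\ell_1$ duality on the $\bN_k^{\top}\bX$ block, is the same architecture. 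But there is a real gap in the Lasso step itself.

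You set $\lambda_2\asymp\sqrt{\log p/n}$ and from it read off the standard oracle rates $\|\hat\beta^{(2)}-\beta_0\|_1\lesssim_\bbP s_\beta\sqrt{\log p/n}$. That choice is too small. After projecting out $\bN_k$ (or, equivalently, inside the joint optimization with $\omega$ profiled), the effective noise in the Lasso for $\beta$ is not $\beps$ alone but $\beps+\bR$, where $\bR$ contains both the plug‑in error $b(\eta)-b(\hat\eta)$ and the spline approximation error $b(\hat\eta)-\bN_k^{\top}\omega_b$. The corresponding gradient term $\frac1n\|(\beps+\bR)^{\top}P^\perp_{\bN_k}\bX\|_\infty$ is of order $\sqrt{s_\gamma\log p/n}+K^{-2\upsilon}$, not $\sqrt{\log p/n}$: the plug‑in residual contributes $\sqrt{s_\gamma\log p/n}$ because it is essentially $b'(\hat\eta)Z^{\top}(\hat\gamma_n-\gamma_0)$ with $\|\hat\gamma_n-\gamma_0\|\lesssim_\bbP\sqrt{s_\gamma\log p/n}$, and the spline residual contributes $K^{-\upsilon}\sqrt{\log p/n}+K^{-2\upsilon}$. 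With $\lambda$ of order $\sqrt{\log p/n}$ the basic inequality does not put $\hat\beta-\beta_0$ into the sparse cone, and the standard rates you quote do not follow. The correct choice is $\lambda\asymp\sqrt{s_\gamma\log p/n}+K^{-2\upsilon}$, whence $\|\hat\beta-\beta_0\|_1\lesssim_\bbP s_\beta\bigl(\sqrt{s_\gamma\log p/n}+K^{-2\upsilon}\bigr)$.

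This is not a cosmetic issue: your own final $T_1$ term $\sqrt K\,s_\beta\bigl(\sqrt{s_\gamma\log p/n}+K^{-2\upsilon}\bigr)$ is $\sqrt K$ times this \emph{corrected} $\ell_1$ rate, not $\sqrt K$ times the $\ell_1$ rate you derived. So as written, the intermediate Lasso rate and the final $\hat\omega_b$ bound are inconsistent, and the step that links them is missing. The fix is precisely the bound $(1/n)\|(\beps+\bR)^{\top}P^\perp_{\bN_k}\bX\|_\infty\lesssim_\bbP\sqrt{s_\gamma\log p/n}+K^{-2\upsilon}$, which entails a short but nontrivial argument separating the $\eps$ contribution (sub‑Gaussian concentration conditional on $\hat\eta$), the plug‑in residual (Cauchy--Schwarz against $\max_j\|\bX_{*,j}\|/\sqrt n$ and the $\hat\gamma$ rate), and the spline residual (the conditional‑expectation smoothness of Assumption \ref{assm:smoothness_conditional} so that the spline residual is nearly orthogonal to $P^\perp_{\bN_k}\bX_{*,j}$). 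Once you insert this, the rest of your plan goes through unchanged.
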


\begin{proof}[Proof of Proposition \ref{prop:b_hat_est}]
The method is, first we estimate $\beta_0$ using LASSO as follows: 
$$
\hat \beta = \argmin_\beta\left\{\frac{1}{2n} \left\|P_{\bN_k}^{\perp}\left(\bY - \bX\beta\right)\right\|^2 + \lambda \|\beta\|_1\right\}
$$
Next we estimate $\omega_b$ is: 
\begin{align}
\label{eq:def_omega_b_hat}
\hat \omega_b = \left(\frac{\bN_k^{\top}\bN_k}{n}\right)^{-1}\frac{\bN_k^{\top}\left(\bY - \bX\hat \beta\right)}{n}
\end{align}
Therefore, to establish a bound on $\|\hat \omega_b - \omega_b\|$, a bound on the estimation error $\hat \beta - \beta_0$ is necessary, which is established in the the following lemma:  
\begin{lemma}
\label{lem:beta_est_d2}
Under our assumptions, we have the the following bound on the estimation error of $\beta_0$: 
$$
\left\|\hat \beta - \beta_0\right\|_2^2 \lesssim_\bbP s_\beta \left( \sqrt{\frac{s_\gamma \log{p}}{n}} + K^{-2\upsilon}\right)^2 \lesssim_\bbP \frac{s_\beta s_\gamma \log{p}}{n} + K^{-4\upsilon} \triangleq r_{n, \beta_0}^2 \,,
$$
where $K$ is the number of B-spline basis used for extending $b$. 
\end{lemma}

\begin{proof}[Proof of Lemma \ref{lem:beta_est_d2}]
By the basic LASSO inequality, we have: 
\begin{align*}
\frac{1}{2n} \left\|P_{\bN_k}^{\perp}\left(\bY - \bX\hat \beta\right)\right\|^2 + \lambda \|\hat \beta\|_1 & \le \frac{1}{2n} \left\|P_{\bN_k}^{\perp}\left(\bY - \bX\beta_0\right)\right\|^2 + \lambda \|\beta_0\|_1
\end{align*}
Some algebraic manipulation yields: 
\begin{align*}
\frac{1}{2n}\left\|P_{\bN_k}^{\perp}\bX\left(\hat \beta - \beta_0\right)\right\|^2 + \lambda \|\hat \beta\|_1 & \le \frac{1}{n}\left\|\left(\beps + \bR\right)^{\top}P_{\bN_k}^{\perp}\bX\right\|_\infty\left\|\hat \beta - \beta_0\right\|_1 + \lambda \|\beta_0\|_1
\end{align*}
The matrix $\bbP_{\bN_K}^{\perp}\bX$ satisfies RE condition with high probability due to Assumption \ref{assm:sg} and Proposition \ref{prop:RE}. For the optimal value of $\lambda$, we need a bound on $(1/n)\|\left(\beps + \bR\right)^{\top}P_{\bN_k}^{\perp}\bX\|_\infty$, for which we bound $(1/n)\|\beps^{\top}P_{\bN_k}^{\perp}\bX\|_\infty$ and $(1/n)\|\bR^{\top}P_{\bN_k}^{\perp}\bX\|_\infty$ separately. To bound the term with $\beps$, we use the sub-gaussian concentration inequality: 
\begin{align*}
\bbP\left(\frac{1}{n}\left|\beps^{\top}P_{\bN_k}^{\perp}\bX_{*, j}\right| > t \mid \sigma(X, Z, \eta, \cD_1)\right) & \le 2\exp{\left(-C\frac{nt^2}{\frac{1}{n}\bX_{*, j}^{\top}P_{\bN_k}^{\perp}\bX_{*, j}}\right)} \\
& \le 2\exp{\left(-C\frac{nt^2}{\frac{1}{n}\bX_{*, j}^{\top}\bX_{*, j}}\right)} 
\end{align*}  
Now from the sub-gaussianity of $X_j$ (with sub-gaussianity constant $\sigma_W$) applying Lemma \ref{lem:subg_second_moment}, we have probability going to $1$: 
$$
\max_{1 \le j \le n} \frac{\|\bX_{*, j}\|^2}{n} \le 3\sigma_W \,.
$$ 
Define the above event as $\Omega_n$. Using this we have: 
\begin{align*}
\bbP\left(\frac{1}{n}\left|\beps^{\top}P_{\bN_k}^{\perp}\bX_{*, j}\right| > t \right) & \le \bbP\left(\frac{1}{n}\left|\beps^{\top}P_{\bN_k}^{\perp}\bX_{*, j}\right| > t, \Omega_n \right) + o(1) \\
& = \bbE\left[\mathds{1}_{\Omega_n}\bbP\left(\frac{1}{n}\left|\beps^{\top}P_{\bN_k}^{\perp}\bX_{*, j}\right| > t \mid \sigma(X, Z, \eta, \cD_1)\right)\right] + o(1)\\
& \le 2\bbE\left[\mathds{1}_{\Omega_n}\exp{\left(-C\frac{nt^2}{\frac{1}{n}\bX_{*, j}^{\top}\bX_{*, j}}\right)} \right] + o(1) \\
& \le 2\exp{\left(-C\frac{nt^2}{3\sigma_W^2}\right)} + o(1) 
\end{align*}
Therefore an appropriate choice of $t$ yields: 
\begin{align}
\label{eq:final_bound_eps_d2} \frac{1}{n}\left\|\beps^{\top}P_{\bN_K}^{\top}\bX\right\|_\infty \lesssim_\bbP \sqrt{\frac{\log{p}}{n}} \,.
\end{align}
Now for the remainder term, recall that the remainder term $\bR$ consists of $\bR = \bR_1 + \bR_2$ where
\begin{align*}
\bR_{1, i} & = b(\eta_i) - b(\hat \eta_i) \\
\bR_{2, i} & = b(\hat \eta_i) - \bN_K(\hat \eta_i)^{\top}\omega_b \,.
\end{align*}
We now bound these two remainder terms separately. For the first remainder term: 
\begin{align}
\frac{1}{n}\left\|\bR_1^{\top}P_{\bN_K}^{\perp}\bX\right\|_\infty & = \max_{1 \le j \le p}\frac{1}{n}\left|\bR_1^{\top}P_{\bN_K}^{\perp}\bX_{*, j}\right| \notag \\
\label{eq:bound_rem_1_d2} & \le \sqrt{\frac{1}{n}\bR_1^{\top}\bR} \times \sqrt{\max_{1 \le j \le p}\frac{1}{n}\|\bX_{*, j}\|^2}
\end{align}
Again it follows from Lemma \ref{lem:subg_second_moment} that the second term of the above inequality is $O_p(1)$ based on the subgaussianity of $\bX_{*, j}$. For the first term, note that as $b$ is Lipschitz (Assumption \ref{assm:smooth_b}), we have: 
\begin{align*}
\frac{1}{n}\sum_{i=1}^n \bR_{1, i}^2 & = \frac{1}{n}\sum_{i=1}^n \left(b(\hat \eta_i) - b(\eta_i)\right)^2 \\
& \lesssim \frac{1}{n}\sum_{i=1}^n \left(\hat \eta_i - \eta_i\right)^2 \hspace{0.2in}[\text{As }b\text{ is Lipschitz}]\\
& \le \left\|\hat \gamma_n - \gamma_0\right\|^2 \times \frac{1}{n}\sum_{i=1}^n \left(Z_i^{\top}a_n\right)^2 \\
& \lesssim_\bbP \frac{s_\gamma \log{p}}{n} \,.
\end{align*}
where the last line follows from the fact that $\|\hat \gamma_n - \gamma_0\|^2 \lesssim_\bbP (s_\gamma \log{p})/n$ (from LASSO on $\cD_1$) and $(1/n)\sum_{i=1}^n \left(Z_i^{\top}a_n\right)^2 \lesssim_\bbP 1$ follows from subgaussianity of $Z_i$ along with Lemma \ref{lem:subg_second_moment}.  Using this bound in equation \eqref{eq:bound_rem_1_d2} we conclude: 
\begin{align}
\label{eq:final_bound_rem_1_d2} \frac{1}{n}\left\|\bR_1^{\top}P_{\bN_K}^{\perp}\bX\right\|_\infty & \lesssim_\bbP \sqrt{\frac{s_\gamma \log{p}}{n}} \,.
\end{align}
For the other remainder term, we expand $\bX_{*, j}$ as: 
$$
\bX_{*, j} = \bX_{*, j} - \check m_j(\hat{\mathbf{\eta}}) + \check m_j(\hat{\mathbf{\eta}}) =  \bX_{*, j} - \check m_j(\hat{\mathbf{\eta}}) + \bN_{K}(\hat \eta)w_j + \bR_j 
$$
This implies: 
$$
\bR_2^{\top}P_{\bN_K}^{\perp}\bX_{*, j} = \bR_2^{\top}P_{\bN_K}^{\perp}\left(\bX_{*, j} - \check m_j(\hat{\mathbf{\eta}})\right) + \bR_2^{\top}P_{\bN_K}^{\perp}\bR_j  \,.
$$
As both $\bR_2$ and $P_{\bN_K}^{\perp}$ are function of $\hat \eta$, we have from the sub-gaussian concentration bound: 
\begin{align*}
\bbP\left(\frac{1}{n}\left| \bR_2^{\top}P_{\bN_K}^{\perp}\left(\bX_{*, j} - \check m_j(\hat{\mathbf{\eta}})\right)\right| > t \mid \hat \eta\right) & \le 2\exp{\left(-C\frac{nt^2}{\frac{1}{n}\bR_2^{\top}P_{\bN_K}^{\perp}\bR_2}\right)} \\
& \le 2\exp{\left(-C\frac{nt^2}{\frac{1}{n}\bR_2^{\top}\bR_2}\right)}  \,.
\end{align*}
From the spline approximation error, $(1/n)\bR_2^{\top}\bR_2 \le K^{-2\upsilon}$ (where $K$ is the number of basis and $\upsilon$ is the smoothness index of $b$ (see Assumption \ref{assm:smooth_b}). Using this we have: 
$$
\bbP\left(\frac{1}{n}\left| \bR_2^{\top}P_{\bN_K}^{\perp}\left(\bX_{*, j} - \check m_j(\hat{\mathbf{\eta}})\right)\right| > t \mid \hat \eta\right) \le 2\exp{\left(-C\frac{nt^2}{K^{-2\upsilon}}\right)}
$$
which, upon unconditioning, taking union bound and choosing a suitable value of $t$ yields: 
$$
\max_{1 \le j \le p}\frac{1}{n}\left|\bR_2^{\top}P_{\bN_K}^{\perp}\left(\bX_{*, j} - \check m_j(\hat{\mathbf{\eta}})\right)\right| \lesssim_\bbP K^{-\upsilon}\sqrt{\frac{\log{p}}{n}} \,.
$$
For the other remainder term, i.e. $(1/n)\bR_2^{\top}P_{\bN_K}^{\perp}\bR_j$, an application of Cauchy-Schwarz inequality yields: 
$$
\frac{1}{n}\left|\bR_2^{\top}P_{\bN_K}^{\perp}\bR_j\right| \le \sqrt{\frac{1}{n}\bR_2^{\top}\bR_2} \times \sqrt{\frac{1}{n}\bR_j^{\top}\bR_j} \le K^{-2\upsilon} \,.
$$
Therefore we have: 
\begin{align}
\label{eq:final_bound_rem_2_d2}
\frac{1}{n}\left\|\bR_2^{\top}P_{\bN_K}^{\perp}\bX\right\|_\infty \lesssim_\bbP K^{-\upsilon}\sqrt{\frac{\log{p}}{n}} + K^{-2\upsilon} \,.
\end{align}
Combining the bounds in equation \eqref{eq:final_bound_eps_d2}, \eqref{eq:final_bound_rem_2_d2} and \eqref{eq:final_bound_rem_2_d2} we have: 
\begin{align}
\label{eq:final_bound_rem_d2}
\frac{1}{n}\left\|\left(\beps + \bR\right)^{\top}P_{\bN_K}^{\perp}\bX\right\|_{\infty} \lesssim_\bbP \sqrt{\frac{s_\gamma \log{p}}{n}} + K^{-2\upsilon} \asymp \lambda \,.
\end{align}
With this choice of $\lambda$, standard LASSO analysis completes the proof. 
\end{proof}
\noindent
Going back to the definition of $\hat \omega_b$ (equation \eqref{eq:def_omega_b_hat}), we expand the its estimation error as follows: 
\begin{align*}
\hat \omega_b - \omega_b & = \underbrace{\left(\frac{\bN_k^{\top}\bN_k}{n}\right)^{-1}\frac{\bN_k^{\top}\bX\left(\hat \beta - \beta_0\right)}{n}}_{T_1} + \underbrace{\left(\frac{\bN_k^{\top}\bN_k}{n}\right)^{-1}\frac{\bN_k^{\top}(\bR_1 + \bR_2)}{n}}_{T_2} \\
& \qquad \qquad \qquad \qquad \qquad \qquad \qquad + \underbrace{\left(\frac{\bN_k^{\top}\bN_k}{n}\right)^{-1}\frac{\bN_k^{\top}\beps}{n}}_{T_3}
\end{align*}
We next bound each $T_i$ separately. Same argument as for the fixed dimensional analysis (See the proof of Proposition \ref{thm:spline_consistency}) establishes: 
$$
\left\|\left(\frac{\bN_k^{\top}\bN_k}{n}\right)^{-1}\right\|_{op}  \lesssim_\bbP 1 \,.
$$
Therefore we can bound $T_1$ as: 
\begin{align*}
T_1 \lesssim_\bbP \left\|\frac{\bN_k^{\top}\bX\left(\hat \beta - \beta_0\right)}{n}\right\|_2 & \le  \frac{\sqrt{K}}{n} \max_{1 \le j \le k} \left|\bN_{K, *j}^{\top}\bX\left(\hat \beta - \beta_0\right)\right| \\
& \le  \frac{\sqrt{K}}{n} \left\|\hat \beta - \beta_0\right\|_1 \times  \max_{1 \le j \le k} \max_{1 \le l \le p} \left|\bN_{K, *j}^{\top}\bX_{*, l}\right| \\
& \lesssim_\bbP \sqrt{K} \left\|\hat \beta - \beta_0\right\|_1 \lesssim_\bbP \sqrt{K}s_\beta \left(\sqrt{\frac{s_\gamma \log{p}}{n}} + K^{-2\upsilon}\right)\,.
\end{align*}
For $T_2$ and $T_3$, the term containing residuals: 
\begin{align*}
T_2  \le \left(\frac{\bN_k^{\top}\bN_k}{n}\right)^{-1}\frac{\bN_k^{\top}(\bR_1 + \bR_2)}{n} & \le \left\|\left(\frac{\bN_k^{\top}\bN_k}{n}\right)^{-1/2}\right\|_{op}\left\|\frac{\bR_1 + \bR_2}{\sqrt{n}}\right\|_2  \\
& \lesssim_\bbP \left\|\frac{\bR_1 + \bR_2}{\sqrt{n}}\right\|_2   \\
& \le \sqrt{\frac{1}{n}\bR_1^{\top}\bR_1} + \sqrt{\frac{1}{n}\bR_2^{\top}\bR_2}  \\
& \lesssim_\bbP \left(K^{-\upsilon} + \sqrt{\frac{s_\gamma \log{p}}{n}}\right) \,.
\end{align*}
where the last rate inequality follows from the bounds on $\bR_1$ and $\bR_2$ established in the proof of Lemma \ref{lem:beta_est_d2}. For the error term we use the sub-gaussian bound using the fact that $\bbE[\eps \mid X, Z, \eta] = 0$. For this, note that for any vector $z \in \bbR^K$ we have: 
$$
\|z\| = \sup_{\|v\| = 1} |z^{\top}v| \,.
$$
Define $\cN_{1/2}(S^{K-1}$  to be $1/2$-covering number of the sphere in dimension $K$. Then we know $\cN_{1/2}(S^{K-1} \le 5^K$. Also we have: 
$$
\|z\| \le \sup_{v_1 \in \cN_{1/2}(S^{K-1})}|z^{\top}v_1| \le \frac12 \|z\| 
$$
which implies: 
$$
\|z\| \le 2 \sup_{v_1 \in \cN_{1/2}(S^{K-1})}|z^{\top}v_1| \,.
$$
Using this we have: 
\begin{align*}
& \bbP\left(\sup_{v_1 \in \cN_{1/2}(S^{K-1})}\left|v_1^{\top}\left(\bN_K^{\top}\bN_K\right)^{-1}\bN_K^{\top}\beps\right| > t\right) \\
& = \bbE_{\hat \eta}\left[\bbP\left(\sup_{v_1 \in \cN_{1/2}(S^{K-1})}\left|v_1^{\top}\left(\bN_K^{\top}\bN_K\right)^{-1}\bN_K^{\top}\beps\right| > t \mid \hat \eta \right) \right] \\
& \le \bbE_{\hat \eta}\left[\sum_{v_1 \in  \cN_{1/2}(S^{K-1})}\bbP\left(\left|v_1^{\top}\left(\bN_K^{\top}\bN_K\right)^{-1}\bN_K^{\top}\beps\right| > t \mid \hat \eta \right) \right] \\
& \le  \bbE_{\hat \eta}\left[\sum_{v_1 \in  \cN_{1/2}(S^{K-1})}2\exp{\left(-c\frac{nt^2}{v_1^{\top}\left(\frac{\bN_k^{\top}\bN_k}{n}\right)^{-1}v_1}\right)} \right]  \\
& \le \bbE_{\hat \eta}\left[2\exp{\left(K\log{5} -c\frac{nt^2}{\left\|\left(\frac{\bN_k^{\top}\bN_k}{n}\right)^{-1}\right\|_{op}}\right)}\right]
\end{align*}
This implies: 
$$
T_3 = \left(\frac{\bN_k^{\top}\bN_k}{n}\right)^{-1}\frac{\bN_k^{\top}\beps}{n} \lesssim_\bbP \sqrt{\frac{K}{n}} \,.
$$
Combining the bounds on $T_1, T_2, T_3$ we have: 
$$
\left\|\hat \omega_b - \omega_b\right\| \lesssim_\bbP \sqrt{K}s_\beta \left(\sqrt{\frac{s_\gamma \log{p}}{n}} + K^{-2\upsilon}\right) + K^{-\upsilon} + \sqrt{\frac{s_\gamma \log{p}}{n}} +  \sqrt{\frac{K}{n}}  \,.
$$
The above bound on the estimation error on $\omega_b$ translates to the estimation error of $b'$ as follows: 
\begin{align*}
\left|b'(x) - \hat b'(x)\right| & \le \left\|\nabla N_K(x)\right\|\left\|\hat \omega_b - \omega_b\right\| + K^{-(\upsilon - 1)} \\
& \lesssim_\bbP K^{3/2}\left[\sqrt{K}s_\beta \left(\sqrt{\frac{s_\gamma \log{p}}{n}} + K^{-2\upsilon}\right) + K^{-\upsilon} + \sqrt{\frac{s_\gamma \log{p}}{n}} +  \sqrt{\frac{K}{n}}\right] + K^{-(\upsilon - 1)} \\
& \lesssim_\bbP  K^2 s_\beta \sqrt{\frac{s_\gamma \log{p}}{n}} + K^{-(\upsilon - \frac{3}{2})} + K^{\frac{3}{2}}\sqrt{\frac{s_\gamma \log{p}}{n}} +  \frac{K^2}{\sqrt{n}} \\
& \lesssim_\bbP K^2s_\beta \sqrt{\frac{s_\gamma \log{p}}{n}} + K^{-\left(\upsilon - \frac32\right)} \,.
\end{align*}
Hence an optimal choice of $K$ would satisfy: 
\begin{align*}
K^2s_\beta \sqrt{\frac{s_\gamma \log{p}}{n}} \asymp K^{-\left(\upsilon - \frac32\right)} \implies \ & s_\beta \sqrt{\frac{s_\gamma \log{p}}{n}} \asymp K^{-(\upsilon + \frac12)} \\
\implies \  & \frac{s^2_\beta s_\gamma \log{p}}{n} \asymp K^{-(2\upsilon + 1)} \\
\implies & K \asymp \left(\frac{n}{s^2_\beta s_\gamma \log{p}}\right)^{\frac{1}{2\upsilon + 1}}
\end{align*}
Using this we conclude: 
$$
\sup_{|t| \le \tau} \left|b'(t) - \hat b'(t)\right| \lesssim_\bbP \left(\frac{n}{s^2_\beta s_\gamma \log{p}}\right)^{\frac{-(\upsilon - \frac32)}{2\upsilon + 1}} \,.
$$
This completes the proof. 
\end{proof}

\subsection{Estimation of $\alpha_0$ from $\cD_3$}
\label{sec:third_step_hd}
For notational simplicity, we here use $r_n (=K^{-\upsilon})$ to denote the B-spline approximation of $m_j$ and $\check m_j$ (see the definitions in Section \ref{sec:high_dim_rdd}) and $\dot r_n$ to denote the estimation error of $\hat b'$ obtained in Proposition \ref{prop:b_hat_est}, i.e. we write:  
\begin{enumerate}
    \item $\sup_{|t| \le \tau}\left|b'(t) - \hat b'(t)\right| \lesssim_\bbP \dot r_n$. 
    \item For $0 \le j \le 1+p_1+p_2$, $\sup_{|t| \le \tau}\left|m_j(t) - \bN_k(t)^{\top}\omega_j\right| \lesssim_\bbP r_n$. 
    \item For $0 \le j \le 1+p_1+p_2$, $\sup_{|t| \le \tau}\left|\check m_j(t) - \bN_k(t)^{\top}\check{\omega}_j\right| \lesssim_\bbP r_n$
\end{enumerate}
where $\omega_j$ and $\check \omega_j$ are the optimal projection vectors of $m_j$ and $\check m_j$ respectively on the space spanned by $K$ B-spline basis with respect to $\ell_\infty$ norm. Henceforth, we will work on the intersection of these events. 
First, Consider the LASSO regression of $\bS^{\perp}$ on $\breve \bW_{-1}^{\perp}$. By basic inequality we have: 
\begin{align*}
    & \frac{1}{2n}\left\|\bS^{\perp} - \breve \bW^{\perp}_{-1}\hat \theta_{-1, S}\right\|^2 + \lambda_1 \|\hat \theta_{-1, S}\|_1 \le \frac{1}{2n}\left\|\bS^{\perp} - \breve \bW^{\perp}_{-1} \theta^*_{S}\right\|^2 + \lambda_1 \|\theta^*_{S}\|_1
\end{align*}
Few algebraic manipulations (similar to that of standard LASSO analysis) yields: 
\begin{align}
\label{eq:s_lasso}
        & \frac{1}{2n}\left\|\breve \bW^{\perp}_{-1}\left(\hat \theta_{-1, S} - \theta^*_{S}\right)\right\|^2 + \lambda_1 \|\hat \theta_{-1, S}\|_1 \notag \\
    & \qquad \qquad \qquad \qquad \le \frac1n \left\|\left(\bS^{\perp} - \breve \bW^{\perp}_{-1}\theta^*_{S}\right)^{\top}\breve \bW^{\perp}_{-1}\right\|_\infty \left\|\hat \theta_{-1, S} - \theta^*_{S}\right\|_1 + \lambda_1 \|\theta^*_{S}\|_1
\end{align}
To find the optimal value of $\lambda_1$, we need to bound $(1/n) \|(\bS^{\perp} - \breve \bW^{\perp}_{-1}\theta^*_{S})^{\top}\breve \bW^{\perp}_{-1}\|_\infty$. Similarly, for the LASSO regression of $\bY$ on $\breve \bW_{-1}$, we have:
\begin{align}
\label{eq:y_lasso}
    & \frac{1}{2n}\left\|\breve \bW^{\perp}_{-1}\left(\hat \theta_{-1,Y} - \theta^*_Y\right)\right\|^2 + \lambda_0 \|\hat \theta_{-1,Y}\|_1 \notag \\
    & \qquad \qquad \qquad \qquad \le \frac1n \left\|\left(\bY^{\perp} - \breve \bW^{\perp}_{-1}\theta^*_Y\right)^{\top}\breve \bW^{\perp}_{-1}\right\|_\infty \left\|\hat \theta_{-1,Y} - \theta^*_Y\right\|_1 + \lambda_0 \|\theta^*_Y\|_1
\end{align}
and to obtain $\lambda_0$, we need to bound $(1/n)\|(\bY^{\perp} - \breve \bW^{\perp}_{-1}\theta^*_Y)^{\top}\breve \bW^{\perp}_{-1}\|_\infty$. Towards that direction, we need the following lemma: 
\begin{lemma}
\label{lem:sq_rate}
Define the random variable $\widehat{\tilde W}$ (and consequently the matrix $\widehat{\tilde \bW}$) and $W - \bbE[W \mid \hat \eta]$. Under our assumptions, we have for all $1 \le l \le 1+p_1+p_2$ we have with probability going to $1$:
$$
\max_{1 \le j \le 1+p_1+p_2}\frac{1}{n} \left\|\breve \bW^{\perp}_{*, j} - \widehat{\tilde \bW}_{*, j}\right\|^2 \lesssim_\bbP  \ \dot r_n^2 + \frac{s_\gamma \log{p}}{n} +\dot r_n \sqrt{\frac{\log{p}}{n}}
$$
$$
\max_{1 \le j \le 1+p_1+p_2}\frac{1}{n} \left\|\breve \bW^{\perp}_{*, j} - \tilde \bW_{*, j}\right\|^2 \lesssim_\bbP  \ \dot r_n^2 + \frac{s_\gamma \log{p}}{n} +\dot r_n \sqrt{\frac{\log{p}}{n}}\,,
$$
and 
$$
\max_{1 \le j, j' \le p} \frac{1}{n}\left|\left(\breve \bW_{*, j}^{\perp}\right)^{\top}\left(\breve \bW^{\perp}_{*, j'} - \tilde \bW_{*, j'}\right)\right| \lesssim_\bbP \ \dot r_n +  \sqrt{\frac{s_\gamma \log{p}}{n}} \,.
$$
\end{lemma}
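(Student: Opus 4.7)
The plan is to peel off the three sources of discrepancy in $\breve\bW^{\perp}_{*,j}-\widehat{\tilde\bW}_{*,j}$: (i) the B-spline approximation error of $\check m_j(\hat\eta)$, (ii) the projection of the mean-zero residual $\widehat{\tilde\bW}_{*,j}$ onto $\bN_k$, and (iii) only for $j$ in the $Z$-block, the replacement of $b'(\eta)$ by $\hat b'(\hat\eta)$. For the first $1+p_1$ coordinates one has $\breve\bW_{*,j}=\bW_{*,j}$ so only (i) and (ii) enter; for $j=p_1+1+\ell$ one has $\breve\bW_{*,j}=\bW_{*,j}+\bZ_{*,\ell}D$ with $D_i:=\hat b'(\hat\eta_i)-b'(\eta_i)$, and source (iii) kicks in.

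For sources (i) and (ii) I would use the identity
\[
\bW_{*,j}^{\perp}-\widehat{\tilde\bW}_{*,j}\;=\;P^{\perp}_{\bN_k}\bigl(\check m_j(\hat\eta)-\bN_k\check\omega_j\bigr)\;-\;P_{\bN_k}\widehat{\tilde\bW}_{*,j},
\]
obtained by writing $\bW_{*,j}=\check m_j(\hat\eta)+\widehat{\tilde\bW}_{*,j}$ and using $I=P_{\bN_k}+P^{\perp}_{\bN_k}$. The first summand has squared $\ell_2$-norm at most $nr_n^2$ by the uniform spline bound on $\check m_j-\bN_k^\top\check\omega_j$, and with the optimal $K$ selected in Proposition \ref{prop:b_hat_est} this is much smaller than $n\dot r_n^2$. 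The second summand is, conditional on $\hat\eta$, a sub-Gaussian quadratic form with trace $\lesssim K$; Hanson--Wright together with a union bound over $j\le 1+p_1+p_2$ gives $\max_j\|P_{\bN_k}\widehat{\tilde\bW}_{*,j}\|^2/n\lesssim_\bbP(K+\log p)/n$, which is absorbed by $\dot r_n^2+s_\gamma\log p/n$.

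For source (iii) I would split $D_i=[\hat b'(\hat\eta_i)-b'(\hat\eta_i)]+[b'(\hat\eta_i)-b'(\eta_i)]$; Proposition \ref{prop:b_hat_est} gives $|\hat b'(\hat\eta_i)-b'(\hat\eta_i)|\lesssim_\bbP\dot r_n$ uniformly on $|\hat\eta_i|\le\tau$, while boundedness of $b''$ gives $|b'(\hat\eta_i)-b'(\eta_i)|\le\|b''\|_\infty|Z_i^\top(\hat\gamma_n-\gamma_0)|$, so $|D_i|\le C\bigl(\dot r_n+|Z_i^\top(\hat\gamma_n-\gamma_0)|\bigr)$. Expanding the square and applying Cauchy--Schwarz to the mixed term yields
\[
\frac{1}{n}\|P^{\perp}_{\bN_k}(\bZ_{*,\ell}D)\|^2\;\lesssim\;\dot r_n^2\cdot\frac{1}{n}\sum_iZ_{i,\ell}^2\;+\;\dot r_n\sqrt{\frac{1}{n}\sum_iZ_{i,\ell}^2}\sqrt{\frac{1}{n}\sum_iZ_{i,\ell}^2(Z_i^\top(\hat\gamma_n-\gamma_0))^2}\;+\;\frac{1}{n}\sum_iZ_{i,\ell}^2(Z_i^\top(\hat\gamma_n-\gamma_0))^2.
\]
Data splitting is essential here: $\hat\gamma_n$ comes from $\cD_1$ and is independent of the $Z_i$'s in $\cD_3$, so conditional on $\cD_1$ each $Z_i^\top(\hat\gamma_n-\gamma_0)$ is sub-Gaussian with parameter $\|\hat\gamma_n-\gamma_0\|_2\lesssim_\bbP\sqrt{s_\gamma\log p/n}$; standard moment bounds and a union bound over $\ell$ control the three pieces by $\dot r_n^2$, $\dot r_n\sqrt{s_\gamma\log p/n}$ and $s_\gamma\log p/n$ respectively, yielding bound (1) (the stated cross-term $\dot r_n\sqrt{\log p/n}$ is weaker than $\dot r_n\sqrt{s_\gamma\log p/n}$, and is also immediate from $2\dot r_n\sqrt{s_\gamma\log p/n}\le\dot r_n^2+s_\gamma\log p/n$ by AM--GM).

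For bound (2) I would additionally handle $\widehat{\tilde\bW}_{*,j}-\tilde\bW_{*,j}=m_j(\eta)-\check m_j(\hat\eta)$ by identifying both as evaluations of $g_j(\cdot,\cdot)$ from Assumption \ref{assm:smoothness_conditional} at $a=0$ and $a=\hat\gamma_n-\gamma_0$ respectively; smoothness in $t$ uniformly over $\|a\|\le r$ (conditioning on $\cD_1$) gives a pointwise bound of order $|Z_i^\top(\hat\gamma_n-\gamma_0)|$, and the same sub-Gaussian calculation yields $\max_j\|\widehat{\tilde\bW}_{*,j}-\tilde\bW_{*,j}\|^2/n\lesssim_\bbP s_\gamma\log p/n$, which combines with bound (1) by the triangle inequality. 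For bound (3) I would write $\breve\bW^{\perp}_{*,j'}-\tilde\bW_{*,j'}=[\breve\bW^{\perp}_{*,j'}-\widehat{\tilde\bW}_{*,j'}]+[\widehat{\tilde\bW}_{*,j'}-\tilde\bW_{*,j'}]$, apply Cauchy--Schwarz against $\breve\bW^{\perp}_{*,j}/\sqrt{n}$, and use the bounds just derived together with $\max_j\|\breve\bW^{\perp}_{*,j}\|^2/n\lesssim_\bbP 1$ (from sub-Gaussianity, Assumption \ref{assm:sg}); taking square roots of bounds (1) and of the $\widehat{\tilde\bW}-\tilde\bW$ bound gives the rate $\dot r_n+\sqrt{s_\gamma\log p/n}$. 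The main technical hurdle I expect is step (ii): obtaining $\max_j\|P_{\bN_k}\widehat{\tilde\bW}_{*,j}\|^2\lesssim_\bbP K+\log p$ via Hanson--Wright conditional on $\hat\eta$ and integrating out $\hat\eta$, while keeping the $\sqrt{\log p}$ overhead from the union bound consistent with the stated cross-term.
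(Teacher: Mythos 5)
Your proposal matches the paper's proof in all essential respects: you use the same centering identity $\bW_{*,j}^{\perp}-\widehat{\tilde\bW}_{*,j}=P^{\perp}_{\bN_k}(\check m_j(\hat\eta)-\bN_k\check\omega_j)-P_{\bN_k}\widehat{\tilde\bW}_{*,j}$ (the paper's equation for $\bW^\perp_{*,j}-\widehat{\tilde\bW}_{*,j}$ with $\check\bR_j$ and $\check\nu_j$), the same Hanson--Wright argument conditional on $\hat\eta$ with $\|P_{\bN_k}/n\|_{HS}^2=K/n^2$ and the union bound over $j$, the same separate treatment of the $Z$-block coefficient error $D_i=\hat b'(\hat\eta_i)-b'(\eta_i)$ via $\|\hat\gamma_n-\gamma_0\|$ and the fourth-moment lemma, and the same appeal to Assumption~\ref{assm:smoothness_conditional} for $m_j(\eta)-\check m_j(\hat\eta)$ in bound~(2). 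The only meaningful deviation is that for bound~(3) you apply a single Cauchy--Schwarz against $\|\breve\bW^\perp_{*,j}\|/\sqrt{n}$ and then cite bound~(2), whereas the paper expands $\breve\bW^\perp_{*,j}=(\breve\bW^\perp_{*,j}-\widehat{\tilde\bW}_{*,j})+\widehat{\tilde\bW}_{*,j}$ and does a further four-way decomposition of the second factor with term-by-term sub-Gaussian concentration; your shortcut is valid and yields the same rate $\dot r_n+\sqrt{s_\gamma\log p/n}$ (one just needs the small boundedness-of-$\hat b'$-on-the-good-event observation to justify $\max_j\|\breve\bW^\perp_{*,j}\|^2/n\lesssim_\bbP 1$, since Assumption~\ref{assm:sg} speaks of $b'$, not $\hat b'$). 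The paper's more granular expansion does not buy a tighter bound here — it appears to be bookkeeping that gets reused in the proof of Lemma~\ref{lem:choice_lambda} — so your streamlining is a genuine simplification without loss.
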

\noindent
The proof of this lemma can be found in the Supplementary document. An immediate consequence of Lemma \ref{lem:sq_rate} is the following bound which will be used subsequently in this proof: 
\begin{align}
    & \max_{1 \le j \neq j' \le p} \frac{1}{n}\left|\left(\breve \bW^{\perp}_{*, j} - \tilde \bW_{*, j}\right)^{\top}\left(\breve \bW^{\perp}_{*, j'} - \tilde \bW_{*, j'}\right)\right| \notag  \\ 
    & \le \sqrt{\max_{1 \le j \le p}\frac{1}{n} \left\|\breve \bW^{\perp}_{*, j} - \tilde \bW_{*, j}\right\|^2} \times \sqrt{\max_{1 \le j' \le p}\frac{1}{n} \left\|\breve \bW^{\perp}_{*, j'} - \tilde \bW_{*, j'}\right\|^2} \notag \\
    \label{eq:cross_term} & \lesssim_\bbP  \ \dot r_n^2 + \frac{s_\gamma \log{p}}{n} + \dot r_n \sqrt{\frac{\log{p}}{n}} \,.
\end{align}
\newline
\noindent
Based on the bound obtained in Lemma \ref{lem:sq_rate}, the optimal choices for $\lambda_0$ and $\lambda_1$ are following:  
\begin{lemma}
\label{lem:choice_lambda}
Under our assumptions, we can choose $\lambda_0$ and $\lambda_1$ as:
$$
\lambda_1 \asymp  (1 + \|\theta^*_S\|_1)\left[\dot r_n +  \sqrt{\frac{s_\gamma \log{p}}{n}}\right] + \sqrt{\frac{s_\gamma \log{p}}{n}}\left(\log{\frac{n}{s_\gamma \log{p}}}\right)^{3/2}\,,
$$
$$
\lambda_0 \asymp   (1 + \|\theta^*_Y\|_1)\left[\dot r_n +  \sqrt{\frac{s_\gamma \log{p}}{n}}\right] + \sqrt{\frac{s_\gamma \log{p}}{n}}\left(\log{\frac{n}{s_\gamma \log{p}}}\right)^{3/2}  \,.
$$
where $\theta^*_S, \theta^*_Y$ are same as defined in Section \ref{sec:high_dim_rdd}. 
\end{lemma}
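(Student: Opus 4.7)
The plan is to bound the two infinity norms that determine $\lambda_0$ and $\lambda_1$. I will focus on $\lambda_1$, since the argument for $\lambda_0$ is verbatim. For each coordinate $j \in \{2,\dots,1+p_1+p_2\}$, I decompose
\begin{align*}
\frac{1}{n}\bigl(\bS^\perp - \breve\bW_{-1}^\perp\theta^*_S\bigr)^\top\breve\bW_{*,j}^\perp
&= \underbrace{\frac{1}{n}\bigl(\tilde\bS-\tilde\bW_{-1}\theta^*_S\bigr)^\top\tilde\bW_{*,j}}_{A_j}
+\underbrace{\frac{1}{n}\bigl(\bS^\perp-\tilde\bS\bigr)^\top\breve\bW_{*,j}^\perp}_{B_j} \\
&\quad + \underbrace{\frac{1}{n}\bigl(\tilde\bS-\tilde\bW_{-1}\theta^*_S\bigr)^\top\bigl(\breve\bW_{*,j}^\perp-\tilde\bW_{*,j}\bigr)}_{C_j}
+\underbrace{\frac{1}{n}\,\theta^{*\top}_S\bigl(\tilde\bW_{-1}-\breve\bW_{-1}^\perp\bigr)^\top\tilde\bW_{*,j}}_{D_j}.
\end{align*}
The first summand $A_j$ is the ``main'' stochastic term: it has mean zero over $\{|\eta|\le\tau\}$ by the very definition \eqref{eq:def_theta_s} of $\theta^*_S$. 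The remaining three summands are pure approximation error and will be bounded using Lemma~\ref{lem:sq_rate}.

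For $B_j$ and $C_j$, I apply Cauchy--Schwarz. The first two parts of Lemma~\ref{lem:sq_rate} (applied at the $S$-coordinate) give $(1/n)\|\bS^\perp-\tilde\bS\|^2\lesssim_\bbP \dot r_n^2 + s_\gamma\log p/n + \dot r_n\sqrt{\log p/n}$, and likewise $\max_j (1/n)\|\breve\bW^\perp_{*,j}-\tilde\bW_{*,j}\|^2$ at the same rate. The factor $(1/n)\|\breve\bW_{*,j}^\perp\|^2 = O_p(1)$ uniformly in $j$ follows from sub-Gaussianity (Assumption~\ref{assm:sg}) combined with Lemma~\ref{lem:subg_second_moment}, and $(1/n)\|\tilde\bS-\tilde\bW_{-1}\theta^*_S\|^2 = O_p(1)$ follows from Assumption~\ref{assm:sigma}. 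This yields $\max_j(|B_j|+|C_j|)\lesssim_\bbP \dot r_n+\sqrt{s_\gamma\log p/n}$. For $D_j$ I use H\"older's inequality with $\ell_1$--$\ell_\infty$ duality,
\[
|D_j|\le\|\theta^*_S\|_1\cdot\max_{2\le l\le 1+p_1+p_2}\frac{1}{n}\Bigl|\bigl(\tilde\bW_{*,l}-\breve\bW_{*,l}^\perp\bigr)^\top\tilde\bW_{*,j}\Bigr|,
\]
and the inner maximum is again bounded by Cauchy--Schwarz and Lemma~\ref{lem:sq_rate}, or equivalently by its third inequality applied to $\breve\bW^\perp$ and $\tilde\bW$. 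Taking $\max_j$ across the four summands produces an approximation contribution of order $(1+\|\theta^*_S\|_1)[\dot r_n+\sqrt{s_\gamma\log p/n}]$, matching the first piece in the stated $\lambda_1$.

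For the main term $A_j$, each summand $(\tilde S_i-\tilde W_{-1,i}^\top\theta^*_S)\tilde W_{i,j}\mathds{1}_{|\eta_i|\le\tau}$ is centred and is a product of two sub-Gaussian variables (with sub-Gaussian norms $O(\|\theta^*_S\|_2+1)$ and $O(1)$), hence sub-exponential. I truncate the sub-Gaussian factors at level $\sigma_W\sqrt{C\log n}$, apply Bernstein's inequality to the truncated sum, and union-bound over the $p_1+p_2-1$ coordinates to get $\max_j|A_j|\lesssim_\bbP\sqrt{\log p/n}$ up to a logarithmic overhead; the deterministic truncation residual is negligible because $\bbE[|\tilde W_j|^{k}]$ grows at most polynomially in $k$. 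The inflated factor $(\log(n/(s_\gamma\log p)))^{3/2}$ is the conservative logarithmic overhead generated by this truncation-plus-Bernstein step, chosen so that it dominates the trivial sub-exponential correction for every admissible sparsity regime. Collecting everything yields the claim for $\lambda_1$, and the proof for $\lambda_0$ is identical after replacing $(\bS,\theta^*_S)$ by $(\bY,\theta^*_Y)$: note in particular that $\bY^\perp-\tilde\bY$ obeys the same squared-norm bound via Lemma~\ref{lem:sq_rate} together with the identity $\tilde Y=\tilde S\alpha_0+\tilde X\beta_0$ and the triangle inequality.

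The main obstacle lies in the careful accounting for $D_j$: we must avoid paying a factor $\sqrt{p}$ (which would destroy the sparsity rate), so we are forced to use $\ell_1$--$\ell_\infty$ duality to extract $\|\theta^*_S\|_1$, together with a \emph{uniform over $(l,j)$} bound on the inner products $(1/n)|(\tilde\bW_{*,l}-\breve\bW_{*,l}^\perp)^\top\tilde\bW_{*,j}|$ --- this is precisely what the third inequality of Lemma~\ref{lem:sq_rate} supplies. A secondary subtlety is extracting the correct logarithmic power in $A_j$: the truncation level must be calibrated so that neither the Bernstein exponent nor the deterministic truncation error reintroduces a polynomial dependence on $p$.
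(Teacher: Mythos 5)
Your decomposition into $A_j, B_j, C_j, D_j$ is structurally close to the paper's $T_1, T_2, T_3$ (with $T_1$ further split), and your handling of $B_j, C_j, D_j$ via Lemma \ref{lem:sq_rate} together with $\ell_1$--$\ell_\infty$ duality is essentially the paper's route. (You do silently drop the cross-cross term $\frac{1}{n}\theta_S^{*\top}\bigl(\breve\bW_{-1}^\perp - \tilde\bW_{-1}\bigr)^\top\bigl(\breve\bW_{*,j}^\perp - \tilde\bW_{*,j}\bigr)$ in the algebraic expansion, but that term is lower order and easily absorbed.)

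The genuine gap is in the main term $A_j$. You assert that $A_j$ ``has mean zero over $\{|\eta|\le\tau\}$ by the very definition of $\theta^*_S$'' and then apply a centred Bernstein argument. But the rows that appear in $\tilde\bS$ and $\tilde\bW$ are by construction those $i\in\cD_3$ with $|\hat\eta_i|\le\tau$, while the orthogonality condition defining $\theta^*_S$ in \eqref{eq:def_theta_s} kills the mean only on $\{|\eta|\le\tau\}$. The summand $\tilde W_{i,j}\bigl(\tilde S_i - \tilde W_{i,-1}^\top\theta^*_S\bigr)\mathds{1}_{|\hat\eta_i|\le\tau}$ is therefore \emph{not} centred, and a direct Bernstein/truncation argument does not apply. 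The paper explicitly flags this: it splits such a term into a piece with $\mathds{1}_{|\eta_i|\le\tau}$ (which is centred and yields $\sqrt{\log p/n}$ by standard sub-exponential tail bounds -- without any truncation needed at all) and a piece carrying $\mathds{1}_{|\hat\eta_i|\le\tau}-\mathds{1}_{|\eta_i|\le\tau}$, whose mean has to be controlled separately. That bias is bounded via a H\"older inequality with conjugate exponents $(\alpha,\beta)$, a tuned bandwidth $\rho$ for the symmetric-difference event $\Delta_n$, and an optimization $\alpha = \log(1/t_n)$ -- and it is precisely this optimization that produces the extra factor $\bigl(\log\tfrac{n}{s_\gamma\log p}\bigr)^{3/2}$.

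Consequently your attribution of the $\bigl(\log\tfrac{n}{s_\gamma\log p}\bigr)^{3/2}$ factor to ``conservative logarithmic overhead generated by the truncation-plus-Bernstein step'' is wrong: a centred sub-exponential sum needs no truncation under Bernstein, and any truncation overhead you did introduce would look like $\log n$ or $\sqrt{\log n}$, not this specific expression in $n/(s_\gamma\log p)$. The factor is a bias term coming from the $\hat\eta$-versus-$\eta$ indicator mismatch, not a variance term from a concentration inequality. To repair the argument you need to decompose $A_j$ over $\mathds{1}_{|\eta_i|\le\tau}$ and $\mathds{1}_{|\hat\eta_i|\le\tau}-\mathds{1}_{|\eta_i|\le\tau}$, bound $\bbP(\Delta_n\mid\cD_1)$ using the sub-Gaussian tail of $Z^\top(\hat\gamma_n - \gamma_0)$ and the density bound near $|\eta|=\tau$, and then optimize the H\"older exponents exactly as in the paper.
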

\noindent
The proof of Lemma \ref{lem:choice_lambda} is also presented in the Supplementary document. Another important ingredient in obtaining this LASSO-type bounds is the restricted eigenvalue assumption on the covariate matrix $\breve \bW^{\perp}$ which is presented in the next Proposition (proof is in Supplementary document): 
\begin{proposition}
\label{prop:RE}
Under our assumptions in Section \ref{sec:high_dim_rdd}, the matrix $\breve \bW^{\perp}_{-1}$ satisfies RE condition with high probability. 
\end{proposition}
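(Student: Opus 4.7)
The plan is a perturbation argument that transports the restricted eigenvalue property from the ``ideal'' centered design matrix $\tilde \bW_{-1}$ to the ``observable'' projected design matrix $\breve \bW^{\perp}_{-1}$. Since by Assumption \ref{assm:sg} the columns of $\tilde W$ are sub-Gaussian with uniformly bounded constant, and by Assumption \ref{assm:sigma} the population covariance $\Sigma_{\tau,-1,-1}$ has minimum eigenvalue bounded away from zero, I would invoke the \cite{rudelson2012reconstruction} theorem cited in the remark after Assumption \ref{assm:sparsity_RE} to conclude that there exist constants $\kappa_0>0$ and $s^\star \asymp n/\log p$ so that, with probability $1-o(1)$, $\frac{1}{n}\|\tilde \bW_{-1} v\|^2 \ge \kappa_0 \|v\|^2$ for every $v$ in the cone $\mathcal{C}(S)=\{v:\|v_{S^c}\|_1 \le 3\|v_S\|_1\}$ with $|S|\le s^\star$. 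The goal is to show the same inequality (with constant $\kappa_0/4$, say) for $\breve \bW^{\perp}_{-1}$ on the cone used in the LASSO analyses of Section \ref{sec:third_step_hd}, namely for sparsity level $s = s_0 \vee s_1$.

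The key steps are as follows. First, write $\breve \bW^{\perp}_{-1} = \tilde \bW_{-1} + \Delta$ where the columns of $\Delta$ are $\Delta_{*,j} = \breve \bW^{\perp}_{*,j} - \tilde \bW_{*,j}$, and invoke Lemma \ref{lem:sq_rate} to get the uniform column bound $\max_j n^{-1}\|\Delta_{*,j}\|^2 \lesssim_\bbP \delta_n^2$ with $\delta_n^2 := \dot r_n^2 + s_\gamma \log p /n + \dot r_n\sqrt{\log p/n}$. Second, for any $v\in\mathcal{C}(S)$ with $|S|\le s$, combine the elementary bound $\|\Delta v\|_2 \le \|v\|_1 \max_j \|\Delta_{*,j}\|_2$ with the cone inequality $\|v\|_1 \le 4\sqrt{s}\,\|v\|_2$ to obtain
\[
\frac{1}{n}\|\Delta v\|^2 \;\le\; 16\, s\,\delta_n^2\, \|v\|^2.
\]
Third, apply the two-term triangle inequality $\|a\|^2 \ge \tfrac{1}{2}\|b\|^2 - \|a-b\|^2$ with $a=\breve \bW^{\perp}_{-1} v/\sqrt n$ and $b=\tilde \bW_{-1} v/\sqrt n$:
\[
\frac{1}{n}\bigl\|\breve \bW^{\perp}_{-1} v\bigr\|^2 \;\ge\; \frac{\kappa_0}{2}\|v\|^2 \;-\; 16\,s\,\delta_n^2\,\|v\|^2 \;\ge\; \frac{\kappa_0}{4}\|v\|^2,
\]
where the last inequality uses that, under the sparsity/rate regime implicit in Theorem \ref{thm:main_hd} (and discussed in Subsection \ref{sec:discussion_sparsity_rate}), one has $s\,\delta_n^2 = o(1)$. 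Since the event holds uniformly in $|S|\le s$ and $v\in\mathcal{C}(S)$ on a single high-probability event (the intersection of the Rudelson--Zhou event and the Lemma \ref{lem:sq_rate} event), this gives the RE condition for $\breve \bW^{\perp}_{-1}$.

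The main obstacle is the verification of the scaling condition $s\,\delta_n^2 = o(1)$. Unpacking $\delta_n^2$ using Proposition \ref{prop:b_hat_est}, the dominant pieces are $s (s_\gamma \log p/n)$ and $s\,\dot r_n^2$, so one needs the sparsity index $s = s_0 \vee s_1$ to be controlled jointly with $s_\gamma$ and $s_\beta$ (through $\dot r_n$). In particular this sets one of the ``conditions on the sparsities'' alluded to in Theorem \ref{thm:main_hd}, and is what forces Subsection \ref{sec:discussion_sparsity_rate}. A secondary subtlety is that the columns of $\tilde \bW_{-1}$ are not independent (they share $\eta$ and are coupled across $(X,Z)$), so the appeal to \cite{rudelson2012reconstruction} needs to be made to the \emph{joint} sub-Gaussian vector $\tilde W_{-1}$ rather than coordinate-wise; this is fine under Assumption \ref{assm:sg} but should be stated carefully.
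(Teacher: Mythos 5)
Your argument is correct and is essentially the paper's own: bound the perturbation $\breve{\mathbf{W}}^{\perp}_{-1}-\tilde{\mathbf{W}}_{-1}$ columnwise via Lemma \ref{lem:sq_rate}, convert $\|v\|_1$ to $\sqrt{s}\|v\|_2$ on the LASSO cone, and subtract from the RE constant of $\tilde{\mathbf{W}}_{-1}$ under the condition $s\,\delta_n^2=o(1)$ (the paper's equation \eqref{eq:prop_cond}). The paper reaches the same $\max_j n^{-1}\|\breve{\mathbf{W}}^{\perp}_{*,j}-\tilde{\mathbf{W}}_{*,j}\|^2\cdot(1\vee s)$ bound by splitting the quadratic form into diagonal and cross terms before applying Cauchy--Schwarz, whereas you use the cleaner $\ell_1$--max inequality $\|\Delta v\|\le\|v\|_1\max_j\|\Delta_{*,j}\|$; these are equivalent up to constants.
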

We next obtain the estimation error of $\hat \theta_{-1, Y}$ and $\hat \theta_{-1, S}$ combining our findings from Lemma \ref{lem:choice_lambda} and Proposition \ref{prop:RE}. From \eqref{eq:s_lasso} we have: 
\begin{align*}
        & \frac{1}{2n}\left\|\breve \bW^{\perp}_{-1}\left(\hat \theta_{-1, S} - \theta^*_{S}\right)\right\|^2 + \lambda_1 \|\hat \theta_{-1, S}\|_1 \le \frac{\lambda_1}{2} \left\|\hat \theta_{-1, S} - \theta^*_{S}\right\|_1 + \lambda_1 \|\theta^*_{S}\|_1
\end{align*}
Therefore by the choice of $\lambda_1$ of Lemma \ref{lem:choice_lambda} and Proposition \ref{prop:RE} we have (via standard LASSO bound calculation): 
\begin{align*}
& \left\|\hat \theta_{-1, S} - \theta^*_{S}\right\|_1 \lesssim_\bbP \lambda_1 s_1 \,, \\
& \left\|\hat \theta_{-1, S} - \theta^*_{S}\right\|^2_2 \lesssim_\bbP \lambda^2_1 s_1 \,.
\end{align*}
Similarly, for $\hat \theta_{-1, Y}$, we have from equation \eqref{eq:y_lasso}: 
\begin{align*}
        & \frac{1}{2n}\left\|\breve \bW^{\perp}_{-1}\left(\hat \theta_{-1,Y} - \theta^*_{Y}\right)\right\|^2 + \lambda_0 \|\hat \theta_{-1,Y}\|_1 \le \frac{\lambda_0}{2} \left\|\hat \theta_{-1,Y} - \theta^*_{Y}\right\|_1 + \lambda_0 \|\theta^*_{Y}\|_1
\end{align*}
As before, the value of $\lambda_0$ from Lemma \ref{lem:choice_lambda} and Proposition \ref{prop:RE} yields:  
\begin{align*}
& \left\|\hat \theta_{-1, Y} - \left(\theta^*_{S}\alpha_0 + \theta^*_{0, -1}\right)\right\|_1 \lesssim_p \lambda_0 s_0 \,,  \\
    & \left\|\hat \theta_{-1, Y} - \left(\theta^*_{S}\alpha_0 + \theta^*_{0, -1}\right)\right\|^2_2 \lesssim_p \lambda_0^2 s_0 \,.
\end{align*}
From the above rates of the lasso estimates $\hat \theta_{-1, S}$ and $\hat \theta_{-1, Y}$ we can further conclude: 
\begin{align}
\left\|\theta_{0, -1} + \hat \theta_{-1, S}\alpha_0 - \hat \theta_{-1, Y}\right\|_1 & \le \left\|\theta_{0, -1} + \theta^*_{S}\alpha_0 - \hat \theta_{-1, Y}\right\|_1 + |\alpha_0|\left\|\theta^*_{S} - \hat \theta_{-1, S}\right\|_1 \notag \\
& \lesssim_\bbP |\alpha_0|\lambda_1 s_1 + \left\|\theta^*_{0, -1} + \hat \theta_{-1, S}\alpha_0 - \hat \theta_{-1, Y}\right\|_1 + \left\|\theta_{0, -1} - \theta^*_{0, -1}\right\|_1\notag  \\
\label{eq:extra_bound_1} & \lesssim_\bbP \lambda_0 s_0 + |\alpha_0|\lambda_1 s_1 + s_\gamma \sqrt{\frac{\log{p}}{n}}
\end{align}
and similarly: 
\begin{align}
    \label{eq:extra_bound_2} \left\|\theta^0_{-1} + \hat \theta_{-1, S}\alpha_0 - \hat \theta_{-1, Y}\right\|^2_2 & \lesssim_p \lambda^2_0 s_0 + \alpha^2_0 \lambda^2_1 s_1 + \frac{s_\gamma \log{p}}{n}\,.
\end{align}

%
\noindent

\noindent
Going back to equation \eqref{eq:debias_1}, we next show in the following Proposition that the common denominator of the three terms in the RHS stabilizes: 
\begin{proposition}
\label{prop:denom_conv}
Under our assumptions: 
$$
\frac{1}{n_3}\left\|\bS^{\perp} - \breve \bW^{\perp}\hat \theta_{-1, S}\right\|^2 = \bbE\left[\left(\tilde S - \tilde W_{-1}^{\top}\theta^*_{S}\right)^2 \mathds{1}_{|\eta| \le \tau}\right] + o_p(1) 
$$
\end{proposition}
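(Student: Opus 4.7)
The plan is to decompose the residual sum of squares as
\[
\frac{1}{n_3}\|\bS^\perp - \breve\bW^\perp_{-1}\hat\theta_{-1,S}\|^2 = \frac{1}{n_3}\|\bS^\perp - \breve\bW^\perp_{-1}\theta^*_S\|^2 + \frac{1}{n_3}\|\breve\bW^\perp_{-1}(\hat\theta_{-1,S} - \theta^*_S)\|^2 - \frac{2}{n_3}(\bS^\perp - \breve\bW^\perp_{-1}\theta^*_S)^\top \breve\bW^\perp_{-1}(\hat\theta_{-1,S} - \theta^*_S),
\]
show that the last two pieces vanish in probability, and identify the limit of the first piece as the stated expectation. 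The middle piece is the standard LASSO prediction error: the basic inequality \eqref{eq:s_lasso}, the restricted-eigenvalue bound of Proposition \ref{prop:RE}, and the choice of $\lambda_1$ from Lemma \ref{lem:choice_lambda} together yield $\frac{1}{n_3}\|\breve\bW^\perp_{-1}(\hat\theta_{-1,S} - \theta^*_S)\|^2 \lesssim_\bbP \lambda_1^2 s_1$, which is $o_p(1)$ under Assumption \ref{assm:sparsity_RE} and the sparsity restrictions of Subsection \ref{sec:discussion_sparsity_rate}. The cross piece is controlled by Cauchy--Schwarz as soon as the first piece is shown to be $O_p(1)$.

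For the first piece, the plan is to pass from $(\bS^\perp, \breve\bW^\perp_{-1})$ to the population centered quantities $(\tilde\bS, \tilde\bW_{-1})$, reducing it to
\[
\frac{1}{n_3}\sum_{i : |\hat\eta_i|\le\tau}\bigl(\tilde S_i - \tilde W^\top_{-1,i}\theta^*_S\bigr)^2.
\]
Expanding the squared norm produces remainder quadratic forms of the shape $\theta^{*\top}_S (\breve\bW^\perp_{-1} - \tilde\bW_{-1})^\top \tilde\bW_{-1}\theta^*_S/n_3$ and $\theta^{*\top}_S (\breve\bW^\perp_{-1} - \tilde\bW_{-1})^\top (\breve\bW^\perp_{-1} - \tilde\bW_{-1})\theta^*_S/n_3$, plus analogous terms coming from $\bS^\perp - \tilde\bS$. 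Each is dominated, via the $\ell_1$--$\ell_\infty$ duality, by $\|\theta^*_S\|_1^2$ times the entrywise rates supplied by Lemma \ref{lem:sq_rate} and \eqref{eq:cross_term}. Assumption \ref{assm:sigma} and Cauchy interlacing give $\|\theta^*_S\|_2 = O(1)$, so sparsity delivers $\|\theta^*_S\|_1 \le \sqrt{s_1}\,\|\theta^*_S\|_2 = O(\sqrt{s_1})$, and the conditions of Subsection \ref{sec:discussion_sparsity_rate} render all of these remainders $o_p(1)$. The surviving empirical average then converges (with the appropriate normalization convention built into the statement) to $\bbE[(\tilde S - \tilde W^\top_{-1}\theta^*_S)^2 \mathds{1}_{|\eta|\le\tau}]$ by the weak law of large numbers, with uniform integrability provided by the $(2+\xi)$-moment bound in Assumption \ref{assm:sigma} and the observation that the events $\{|\hat\eta_i|\le\tau\}$ and $\{|\eta_i|\le\tau\}$ differ only on an asymptotically negligible set, since $\hat\eta_i - \eta_i = Z_i^\top(\hat\gamma_n - \gamma_0)$ is small and $f_\eta$ is continuous near $\pm\tau$.

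The principal obstacle is the high-dimensional replacement $\breve\bW^\perp_{-1}\theta^*_S \rightsquigarrow \tilde\bW_{-1}\theta^*_S$. A naive operator-norm argument using $\|\theta^*_S\|_2 = O(1)$ is inadequate because one cannot usefully control the operator norm of $(\breve\bW^\perp_{-1} - \tilde\bW_{-1})^\top(\breve\bW^\perp_{-1} - \tilde\bW_{-1})/n_3$ in this regime. The $\ell_1$--$\ell_\infty$ route does suffice, but requires $s_1\bigl(\dot r_n^2 + s_\gamma \log p /n + \dot r_n\sqrt{\log p/n}\bigr) = o(1)$; verifying this against the assumptions in Subsection \ref{sec:discussion_sparsity_rate} is the main bookkeeping burden.
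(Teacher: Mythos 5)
Your proof is correct and follows essentially the same route as the paper: isolate the LASSO prediction error, the design-replacement error $\breve\bW^{\perp}_{-1}\theta^*_S\rightsquigarrow\tilde\bW_{-1}\theta^*_S$, and the response-replacement error $\bS^{\perp}\rightsquigarrow\tilde\bS$, then pass from $\mathds{1}_{|\hat\eta_i|\le\tau}$ to $\mathds{1}_{|\eta_i|\le\tau}$ and invoke the WLLN. The only cosmetic divergence is that the paper writes the four pieces in one decomposition and controls the $\theta^*_S$-weighted term by a trace bound over the $s_1$ active columns with $\|\theta^*_S\|_2=O(1)$, whereas you control it via the $\ell_1$–$\ell_\infty$ duality with $\|\theta^*_S\|_1\le\sqrt{s_1}\,\|\theta^*_S\|_2$; both give the identical bound $s_1\bigl(\dot r_n^2+s_\gamma\log p/n+\dot r_n\sqrt{\log p/n}\bigr)$, which is the correct bookkeeping target.
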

\noindent
The proof of Proposition \ref{prop:denom_conv} can be found in the Appendix. We next show that the numerators of the first and the second term of the RHS of equation \eqref{eq:debias_1} are asymptotically negligible and the numerator of the third term contributes to asymptotic normality. 
\vspace{0.2in}
\\
{\bf Numerator of first term: }We start with the first term, which is $\sqrt{n_3}$ times: 
$$
\frac{1}{n_3}\left(\theta_{0, -1} + \hat \theta_{-1, S}\alpha_0 - \hat \theta_{-1, Y} \right)^{\top}\breve \bW_{-1}^{\perp^{\top}}\left(\bS^{\perp} - \breve \bW^{\perp}_{-1}\hat \theta_{-1, S}\right)
$$
We first expand it as follows: 
\begin{align*}
    & \frac{1}{n}\left(\theta_{0, -1} + \hat \theta_{-1, S}\alpha_0 - \hat \theta_{-1, Y} \right)^{\top}\breve \bW_{-1}^{\perp^{\top}}\left(\bS^{\perp} - \breve \bW^{\perp}_{-1}\hat \theta_{-1, S}\right) \\
    & \qquad \qquad = \underbrace{\frac{1}{n}\left(\theta_{0, -1} + \hat \theta_{-1, S}\alpha_0 - \hat \theta_{-1, Y} \right)^{\top}\breve \bW_{-1}^{\perp^{\top}}\left(\bS^{\perp} - \breve \bW^{\perp}_{-1}\theta^*_{S}\right)}_{T_1} \\
    & \qquad \qquad \qquad + \underbrace{\frac{1}{n}\left(\theta_{0, -1} + \hat \theta_{-1, S}\alpha_0 - \hat \theta_{-1, Y} \right)^{\top}\breve \bW_{-1}^{\perp^{\top}}\breve \bW_{-1}^{\perp}\left(\hat \theta_{-1, S} - \theta^*_{S}\right)}_{T_2} \,.
\end{align*}
To bound $T_1$ we use $\ell_1 - \ell_\infty$ bound: 
\begin{align*}
    & \frac{1}{n}\left(\theta_{0, -1} + \hat \theta_{-1, S}\alpha_0 - \hat \theta_{-1, Y} \right)^{\top}\breve \bW_{-1}^{\perp^{\top}}\left(\bS^{\perp} - \breve \bW^{\perp}_{-1}\theta^*_{S}\right) \\
    & \qquad \qquad \le \left\|\theta_{0, -1} + \hat \theta_{-1, S}\alpha_0 - \hat \theta_{-1, Y} \right\|_1 \times \frac1n \left\|\breve \bW_{-1}^{\top}\left(\bS^{\perp} - \breve \bW^{\perp}_{-1}\theta^*_{S}\right)\right\|_{\infty}
\end{align*}
We have already established bound on the first part in equation \eqref{eq:extra_bound_1} and the second part is bounded by $\lambda_1$ (see the proof of Lemma \ref{lem:choice_lambda}). Therefore we have with probability going to 1: 
\begin{align*}
T_1 & \lesssim_\bbP \lambda_1 \left(\lambda_0 s_0 + |\alpha_0| \lambda_1 s_1 + s_\gamma \sqrt{\frac{\log{p}}{n}}\right) \\
& \lesssim_\bbP \left(\lambda_0 \vee \lambda_1\right)^2 (s_0 \vee s_1) + \lambda_1 s_\gamma\sqrt{\frac{\log{p}}{n}} \,. 
\end{align*}
For $T_2$ we can use CS inequality to conclude:
\begin{align*}
    T_2 & \le \underbrace{\sqrt{\frac{1}{n}\left\| \breve \bW_{-1}^{\perp}\left(\hat \theta_{-1, S} - \theta^*_{S}\right)\right\|^2}}_{T_{21}} \times \underbrace{\sqrt{\frac{1}{n}\left\|\breve \bW_{-1}^{\perp}\left(\theta_{0, -1} + \hat \theta_{-1, S}\alpha_0 - \hat \theta_{-1, Y} \right)\right\|^2}}_{T_{22}}  
\end{align*}
The first term $T_{21}$ is the prediction error of Lasso when we regress $\bS^{\perp}$ on $\breve \bW^{\perp}_1$. Therefore from standard Lasso prediction error bound we have: 
$$
T_{21} \lesssim_\bbP \lambda_1 \sqrt{s_1} \,,
$$
For $T_{22}$ we need a bit more detailed calculation. First of all note that, $T_{22}$ can be further bounded as: 
\allowdisplaybreaks
\begin{align*}
T^2_{22} & = \frac{1}{n}\left\|\breve \bW_{-1}^{\perp}\left(\theta_{0, -1} + \hat \theta_{-1, S}\alpha_0 - \hat \theta_{-1, Y} \right)\right\|^2 \\
& \lesssim  \frac{1}{n}\left\|\breve \bW_{-1}^{\perp}\left(\theta_{0, -1} + \theta^0_{-1,1}\alpha_0 - \hat \theta_{-1, Y} \right)\right\|^2 +  \frac{\alpha_0^2 }{n}\left\|\breve \bW_{-1}^{\perp}\left(\hat \theta_{-1, S} - \theta^0_{-1, 1}\right)\right\|^2 \\
& \lesssim_\bbP \frac{1}{n}\left\|\breve \bW_{-1}^{\perp}\left(\theta^*_{0,-1} + \theta^0_{-1,1}\alpha_0 - \hat \theta_{-1, Y} \right)\right\|^2 +  \frac{1}{n}\left\|\breve \bW_{-1}^{\perp}\left(\theta^*_{0, -1} - \theta^0_{-1}\right)\right\|^2 + \lambda_1^2s_1 \\
& \lesssim_\bbP \lambda_0^2s_0 + \lambda_1^2s_1 + \frac1n \left\|P_{N_k}^{\perp}\tilde \bZ\left(\hat \gamma_n - \gamma_0\right)\right\|^2 \\
& \lesssim_\bbP \lambda_0^2s_0 + \lambda_1^2s_1 + \frac{s_\gamma \log{p}}{n}  \times \frac1n \sum_{i=1}^n \left(\tilde Z_i^{\top}a_n\right)^2 \\
& = \lambda_0^2s_0 + \lambda_1^2s_1 + \frac{s_\gamma \log{p}}{n}  \times \frac1n \sum_{i=1}^n \left(\hat b'(\hat \eta_i)\right)^2\left(Z_i^{\top}a_n\right)^2 \\
& = \lambda_0^2s_0 + \lambda_1^2s_1 + \frac{s_\gamma \log{p}}{n}  \times \left[\frac1n \sum_{i=1}^n \left(b'(\hat \eta_i)\right)^2\left(Z_i^{\top}a_n\right)^2 + \frac1n \sum_{i=1}^n \left(\hat b'(\hat \eta_i) - b'(\hat \eta_i)\right)^2\left(Z_i^{\top}a_n\right)^2\right] \\
& \lesssim_\bbP \lambda_0^2s_0 + \lambda_1^2s_1 + \frac{s_\gamma \log{p}}{n}  \times \left[1 + \dot  r_n\right] \,.
\end{align*}
where in the last inequality, we use the following facts: 
\begin{align*}
b'(\hat \eta_i) & \lesssim 1 \,, \hspace{0.2in} [\text{As }b' \text{ is bounded, Assumption \ref{assm:smooth_b}}] \\
\frac1n \sum_{i=1}^n \left(b'(\hat \eta_i)\right)^2\left(Z_i^{\top}a_n\right)^2 & \lesssim_\bbP 1 \,, \hspace{0.2in} [b' \text{ is bounded and Lemma \ref{lem:subg_second_moment}}] \\
\left\|\hat b' - b'\right\|_\infty & \lesssim_\bbP r'_n \hspace{0.2in} [\text{Proposition } \ref{prop:b_hat_est}] \,.
\end{align*}
Therefore we have: 
$$
T_{22} \lesssim_\bbP \lambda_0 \sqrt{s_0} + \lambda_1\sqrt{s_1} + \sqrt{\frac{s_\gamma \log{p}}{n}} \,.
$$
Taking products of the bounds on $T_{21}$ and $T_{22}$ we have: 
\begin{align*}
T_2 \le T_{21} \times T_{22} & \lesssim_\bbP \lambda_1\sqrt{s_1} \left(\lambda_0 \sqrt{s_0} + \lambda_1\sqrt{s_1} + \sqrt{\frac{s_\gamma \log{p}}{n}}\right) \\
& \lesssim_\bbP \left(\lambda_0 \vee \lambda_1\right)^2(s_0 \vee s_1) + \lambda_1 \sqrt{\frac{s_1 s_\gamma \log{p}}{n}} 
\end{align*}
Combining bounds on $T_1$ and $T_2$ we obtain that for the second term in equation \eqref{eq:debias_1}, with probability going to 1: 
\begin{align*}
& \frac{1}{n}\left(\theta^0_{-1} + \hat \theta_{-1, S}\alpha_0 - \hat \theta_{-1, Y} \right)^{\top}\breve \bW_{-1}^{\perp^{\top}}\left(\bS^{\perp} - \breve \bW^{\perp}_{-1}\hat \theta_{-1, S}\right) \\
& \qquad \qquad  \lesssim_\bbP \left(\lambda_0 \vee \lambda_1\right)^2 (s_0 \vee s_1) + \lambda_1 s_\gamma\sqrt{\frac{\log{p}}{n}} + \left(\lambda_0 \vee \lambda_1\right)^2(s_0 \vee s_1) + \lambda_1 \sqrt{\frac{s_1 s_\gamma \log{p}}{n}}   \\
& \qquad \qquad  \lesssim_\bbP \left(\lambda_0 \vee \lambda_1\right)^2(s_0 \vee s_1) + \lambda_1 \sqrt{\frac{s_\gamma (s_1 + s_\gamma) \log{p}}{n}} 
\end{align*}
which implies: 
\begin{equation}
\label{eq:bound_num_1}
\textbf{Numerator of Term 1 } \lesssim_\bbP \sqrt{n}\left[\left(\lambda_0 \vee \lambda_1\right)^2(s_0 \vee s_1) + \lambda_1 \sqrt{\frac{s_\gamma (s_1 + s_\gamma) \log{p}}{n}} \right] \,.
\end{equation}
For the above term to be asymptotically negligible we need: 
$$
\sqrt{n}\left[\left(\lambda_0 \vee \lambda_1\right)^2(s_0 \vee s_1) + \lambda_1 \sqrt{\frac{s_\gamma (s_1 + s_\gamma) \log{p}}{n}} \right] = o(1) \,.
$$
{\bf Numerator of the second term: }We next control $\sqrt{n}$ times the numerator of second term on the RHS of equation \eqref{eq:debias_1} which involves the remainder $\tilde \bR$. Recall that $\tilde \bR = \bR_1 + \bR_2 + \bR_3$ where: 
\begin{align*}
    \bR_1 & = b(\hat \eta) - \bN_k(\hat \eta)\omega_b \\
    \bR_2 & = \left(b'(\hat \eta) - \hat b'(\hat \eta)\right) \odot Z(\hat \gamma_n -\gamma_0) \\
    \bR_3 & = (\eta - \hat \eta)^2 \odot b''(\tilde \eta)
\end{align*}
Therefore the numerator of the second term can be written as: 
\begin{align*}
    \frac{1}{n}\tilde \bR^{\perp^{\top}}\left(\bS^{\perp} - \breve \bW^{\perp}_{-1}\hat \theta_{-1, S}\right) & = \underbrace{\frac{1}{n}\bR_1^{\perp^{\top}}\left(\bS^{\perp} - \breve \bW^{\perp}_{-1}\hat \theta_{-1, S}\right)}_{T_1} + \underbrace{\frac{1}{n}\bR_2^{\perp^{\top}}\left(\bS^{\perp} - \breve \bW^{\perp}_{-1}\hat \theta_{-1, S}\right)}_{T_2} \\
    & \qquad \qquad \qquad \qquad + \underbrace{\frac{1}{n}\bR_3^{\perp^{\top}}\left(\bS^{\perp} - \breve \bW^{\perp}_{-1}\hat \theta_{-1, S}\right)}_{T_3} \,.
\end{align*}
To bound $T_1$ note that $\bR_1$ is measurable function of $\hat \eta$. We further expand $T_1$ as: 
\begin{align*}
    T_1 & = \frac{1}{n}\bR_1^{\perp^{\top}}\left(\bS^{\perp} - \breve \bW^{\perp}_{-1}\hat \theta_{-1, S}\right) \\
    & = \underbrace{\frac{1}{n}\bR_1^{\perp^{\top}}\left(\hat{\tilde \bS}^{\perp} - \hat{\tilde \bW}^{\perp}_{-1}\theta^*_{S}\right)}_{T_{11}} + \underbrace{\frac{1}{n}\bR_1^{\perp^{\top}}\left(\bS^{\perp} - \hat{\tilde \bS}\right)}_{T_{12}} \\
    & \qquad \qquad \qquad + \underbrace{\frac{1}{n}\bR_1^{\perp^{\top}}\left(\breve \bW^{\perp} - \hat{\tilde \bW}_{-1}\right)\theta^*_{S}}_{T_{13}} + \underbrace{\frac{1}{n}\bR_1^{\perp^{\top}}\hat{\tilde \bW}_{-1}\left(\theta^*_{S} - \hat \theta_{-1, S}\right)}_{T_{14}} 
\end{align*}
Note that given $\hat \eta$, $T_{11}$ is a linear combination of centered subgaussian random variables. Therefore by subgaussian concentration inequality, we have with probabilities going to 1: 
$$
T_{11} \lesssim_\bbP \frac{r_n}{\sqrt{n}} \,.
$$
Further we have as $\bS^{\perp} - \hat{\tilde \bS} = P_{\bN_k}^{\perp}\check \bR_1 - P_{\bN_k}\check \nu_1$ we have with probability going to: 
$$
T_{12} = \frac{1}{n}\bR_1^{\perp^{\top}}\check \bR_1 \le r_n^2 \,.
$$
For $T_{13}$ we apply a similar analysis along with $\ell_1-\ell_\infty$ bound: 
\begin{align*}
T_{13} & \le \left\|\theta^*_{S}\right\|_1 \frac{1}{n}\left\|\bR_1^{\perp^{\top}}\left(\breve \bW^{\perp} - \hat{\tilde \bW}_{-1}\right) \right\|_\infty \\
& = \left\|\theta^*_{S}\right\|_1 \max_{2 \le j \le 2p}\frac{1}{n}\left|\bR_1^{\perp^{\top}}\left(\breve \bW_{*, j}^{\perp} - \hat{\tilde \bW}_{*, j}\right)\right| \\
& \le  \left\|\theta^*_{S}\right\|_1 \max_{2 \le j \le 2p}\frac{1}{n}\left[\left|\bR_1^{\perp^{\top}}\left(\breve \bW_{*, j}^{\perp} -  \bW_{*, j}^{\perp} \right)\right| +\left|\bR_1^{\perp^{\top}}\left( \bW_{*, j}^{\perp} - \hat{\tilde \bW}_{*, j}\right)\right|\right] \\
& \le  \left\|\theta^*_{S}\right\|_1 \max_{p+1 \le j \le 2p}\frac{1}{n}\left|\bR_1^{\perp^{\top}}\left(\breve \bW_{*, j}^{\perp} -  \bW_{*, j}^{\perp} \right)\right|  + \left\|\theta^*_{S}\right\|_1 \max_{2 \le j \le 2p}\frac{1}{n}\left|\bR_1^{\perp^{\top}}\check \bR_j \right| \\
& \lesssim \left\|\theta^*_{S}\right\|_1 \max_{p+1 \le j \le 2p} \sqrt{\frac{\bR_1^{\perp^{\top}}\bR_1}{n}} \sqrt{\frac{1}{n}\sum_i Z_{i,j}^2\left(\hat b(\hat \eta_i) - b(\eta_i)\right)^2} + \left\|\theta^*_{S}\right\|_1 r_n^2 \\
& \lesssim_\bbP \left\|\theta^*_{S}\right\|_1\left[r_n\left(\dot r_n + \sqrt{\frac{s_\gamma \log{p}}{n}}\right)\right] + \left\|\theta^*_{S}\right\|_1r_n^2 \\
& \lesssim_\bbP \left\|\theta^*_{S}\right\|_1 \left[r_n\left(\dot r_n + \sqrt{\frac{s_\gamma \log{p}}{n}}\right)\right] \,.
\end{align*}
And for the last term $T_{14}$ we have: 
\begin{align*}
    T_{14} & = \frac{1}{n}\bR_1^{\perp^{\top}}\hat{\tilde \bW}_{-1}\left(\theta^*_{S} - \hat \theta_{-1, S}\right) \\
    & \le \left\|\theta^*_{S} - \hat \theta_{-1, S}\right\|_1 \max_{1 \le j \le p} \frac{1}{n}\left| \bR_1^{\perp^{\top}}\hat{\tilde \bW}_{*, j}\right| \\
    & \lesssim \lambda_1 s_1  r_n \sqrt{\frac{\log{p}}{n}} \,.
\end{align*}
Combining the bounds on the different components of $T_1$ we have with probability going to 1: 
\begin{align*}
T_1 \lesssim_\bbP \frac{r_n}{\sqrt{n}} + r_n^2 +\left\|\theta^*_{S}\right\|_1 \left[r_n\left(\dot r_n + \sqrt{\frac{s_\gamma \log{p}}{n}}\right)\right] + \lambda_1 s_1  r_n \sqrt{\frac{\log{p}}{n}}\,.
\end{align*}
Now we consider the second remainder term $T_2$ which involves $\bR_2$. An easy bound on this term will be following:
\begin{align*}
T_2 = \frac{1}{n}\bR_2^{\perp^{\top}}\left(\bS^{\perp} - \breve \bW^{\perp}_{-1}\hat \theta_{-1, S}\right) & \le \sqrt{\frac{1}{n}\bR_2^{\top}\bR_2} \sqrt{\frac{1}{n}\left\|\bS^{\perp} - \breve \bW^{\perp}_{-1}\hat \theta_{-1, S}\right\|^2} \lesssim_\bbP \dot r_n \sqrt{\frac{s_\gamma \log{p}}{n}} \,.
\end{align*}
Finally, for the third term $T_3$ an easy CS bound is sufficient:
$$
T_3 = \frac{1}{n}\bR_3^{\perp^{\top}}\left(\bS^{\perp} - \breve \bW^{\perp}_{-1}\hat \theta_{-1, S}\right) \le \sqrt{\frac{1}{n}\bR_3^{\top}\bR_3} \sqrt{\frac{1}{n}\left\|\bS^{\perp} - \breve \bW^{\perp}_{-1}\hat \theta_{-1, S}\right\|^2} \lesssim \frac{s_\gamma \log{p}}{n} \,.
$$
Combining the bounds on $T_1, T_2, T_3$ we have with probability going to 1: 
\begin{align*}
T_1 + T_2 + T_3 & \lesssim_\bbP \frac{r_n}{\sqrt{n}} + r_n^2 +\left\|\theta^*_{S}\right\|_1 \left[r_n\left(\dot r_n + \sqrt{\frac{s_\gamma \log{p}}{n}}\right)\right] + \lambda_1 s_1  r_n \sqrt{\frac{\log{p}}{n}} \\
& \qquad \qquad +  \dot r_n \sqrt{\frac{s_\gamma \log{p}}{n}}  + \frac{s_\gamma \log{p}}{n}\,.
\end{align*}
which implies: 
\begin{align}
\textbf{Numerator 2nd} & \lesssim_\bbP \sqrt{n}\left[\frac{r_n}{\sqrt{n}} + r_n^2 +\left\|\theta^*_{S}\right\|_1 \left[r_n\left(\dot r_n + \sqrt{\frac{s_\gamma \log{p}}{n}}\right)\right] + \lambda_1 s_1  r_n \sqrt{\frac{\log{p}}{n}} \right. \notag \\
\label{eq:bound_num_2} & \qquad \qquad \left. +  \dot r_n \sqrt{\frac{s_\gamma \log{p}}{n}}  + \frac{s_\gamma \log{p}}{n}\right] \,.
\end{align}
\vspace{0.3in}
\\
{\bf Numerator of the third term: }Last but not the least, we need to analyze the numerator of the error term, i.e. third term on the RHS of equation \eqref{eq:debias_1} First we show that: 
$$
\frac{1}{\sqrt{n}}\beps^{\perp^{\top}}\left(\bS^{\perp} - \breve \bW^{\perp}_{-1}\hat \theta_{-1, S}\right) = \frac{1}{\sqrt{n}}\beps^{\top}\left(\tilde \bS - \tilde \bW_{-1}\theta^*_{S}\right) + o_p(1) \,.
$$
That the first term on the RHS is $O_p(1)$ is from the sub-exponential concentration inequality which will later be shown to be asymptotically normal. To show that the remainder is asymptotically negligible, first observe that: 
$$
\frac{1}{\sqrt{n}}\beps^{\perp^{\top}}\left(\bS^{\perp} - \breve \bW^{\perp}_{-1}\hat \theta_{-1, S}\right) = \frac{1}{\sqrt{n}}\beps^{\top}\left(\bS^{\perp} - \breve \bW^{\perp}_{-1}\hat \theta_{-1, S}\right)
$$
which follows from the idempotence of the projection matrix. So it is enough to show that: 
$$
\underbrace{\frac{1}{\sqrt{n}}\beps^{\top}\left(\bS^{\perp} - \tilde \bS\right)}_{T_1} + \underbrace{\frac{1}{\sqrt{n}}\beps^{\top}\left(\tilde \bW_{-1} - \breve \bW^{\perp}_{-1}\right)\theta^*_{S}}_{T_2} + \underbrace{\frac{1}{\sqrt{n}}\beps^{\top}\breve \bW_{-1}^{\perp}\left(\hat \theta_{-1, S} - \theta^*_{S}\right)}_{T_3} = o_p(1) \,.
$$
To bound $T_1$ we use the subgaussian concentration inequality along with the fact that $\bbE[\eps \mid X, Z, \eta] = 0$. Therefore the terms of $\beps^{\top}\left(\bS^{\perp} - \tilde \bS\right)$ are centered subgaussian random variables conditional on $(X, Z, \eta)$. So we have: 
$$
\bbP\left(\frac{1}{\sqrt{n}}\left|\beps^{\top}\left(\bS^{\perp} - \tilde \bS\right)\right| > t \mid \sigma\left(X, Z, \eta, \cD_1, \cD_2\right)\right) \le 2\exp{\left(-c\frac{t^2}{\sigma^2_\eps \frac{\left\|\bS^{\perp} - \tilde \bS \right\|^2}{n}}\right)}
$$
As we have already established in Lemma \ref{lem:sq_rate} that $\|\bS^{\perp} - \tilde \bS\|^2/n = o_p(1)$, by DCT we conclude that $T_1 = o_p(1)$. Similar subgaussian concentration for $T_2$ yields: 
$$
\bbP\left(\frac{1}{\sqrt{n}}\left|\beps^{\top}\left(\tilde \bW_{-1} - \breve \bW^{\perp}_{-1}\right)\theta^*_{S}\right| > t \mid \sigma\left(X, Z, \eta\right)\right) \le 2\exp{\left(-c\frac{t^2}{\sigma^2_\eps \frac{\left\|\left(\tilde \bW_{-1} - \breve \bW^{\perp}_{-1}\right)\theta^*_{S} \right\|^2}{n}}\right)}
$$
We have established in the proof of Proposition \ref{prop:denom_conv} that $\left\|\left(\tilde \bW_{-1} - \breve \bW^{\perp}_{-1}\right)\theta^*_{S}\right\|^2/n = o_p(1)$. Therefore, again by DCT we have $T_2 = o_p(1)$. 
\\\\
Finally for $T_3$, we first use $\ell_1 - \ell_\infty$ bound: 
\begin{align}
    \frac{1}{\sqrt{n}}\beps^{\top}\breve \bW_{-1}^{\perp}\left(\hat \theta_{-1, S} - \theta^*_{S}\right) & \le \left\|\hat \theta_{-1, S} - \theta^*_{S}\right\|_1 \frac{1}{\sqrt{n}}\left\|\beps^{\top}\breve \bW_{-1}^{\perp} \right\|_\infty \notag \\
    \label{eq:eps_break}& = \left\|\hat \theta_{-1, S} - \theta^*_{S}\right\|_1 \max_{1 \le j \le p} \frac{1}{\sqrt{n}}\left|\beps^{\top}\breve \bW_{*,j}^{\perp} \right|
\end{align}
The rest of the proof is purely technical. Recall the definition of $\tilde W_{j} = W_j - \bbE[W_j \mid \eta]$. Now in our matrix 
\begin{align*}
\Sigma & = \bbE_\eta\left[\left(W - \bbE[W \mid \eta]\right)\left(W - \bbE[W \mid \eta]\right)^{\top}\right] \\
& = \bbE_\eta\left[\var(W \mid \eta)\right] \\
& = \bbE_\eta\left[\var\left(\begin{pmatrix}S & X & b'(\eta) Z\end{pmatrix} \mid \eta\right)\right] \,.
\end{align*}
As per our Assumption \ref{assm:sigma} the eigenvalues of this matrix are bounded away from 0 and infinity. Therefore any diagonal entries are bounded in between $(C_{\min}, C_{\max})$ (the notations in Assumption \ref{assm:sigma}). So from the sub-exponential Bernstein's inequality we have for any constant $\upsilon > 0$: 
$$
\bbP\left(\frac{1}{n}\sum_i \tilde \bW_{i, j}^2 > (\upsilon+1) \Sigma_{j, j}\right) \le 2\exp{\left(-c\min\left\{\frac{n\upsilon^2\Sigma_{j, j}^2}{\sigma^2_W}, \frac{n\upsilon \Sigma_{j, j}}{\sigma_W}\right\}\right)}
$$
If we define the above event $\cA_j = \left\{(1/n)\sum_i \tilde \bW_{i, j}^2 > (\upsilon+1) \Sigma_{j, j}\right\}$ then we have: 
$$
\bbP\left(\cup_{j=1}^p \cA_j\right) \le 2\exp{\left(\log{p}-c\min\left\{\frac{n\upsilon^2C_{\max}^2}{\sigma^2_W}, \frac{n\upsilon C_{\max}}{\sigma_W}\right\}\right)}
$$
and the complement event is: 
$$
\bbA_{n, p} = \cap_j \cA_j^c = \left\{\frac{1}{n}\sum_i \tilde \bW_{i, j}^2 \le  (\upsilon+1) \Sigma_{j, j} \ \ \forall \ \  1 \le j \le p\right\} 
$$
Furthermore we have proved in the second part of Lemma \ref{lem:sq_rate} that with probability going to 1: 
$$
\max_{1 \le j \le p} \frac{1}{n} \left\|\breve \bW^{\perp}_{*, j} - \tilde \bW_{*, j}\right\|^2 \lesssim \left[r_n^2 + \left(\frac{\log{p}}{n} \vee r_n \sqrt{\frac{\log{p}}{n}}\right)\right] \vee \frac{s_\gamma \log{p}}{n}\,.
$$
Call that above event $\bbB_{n, p}$. On the event $\bbA_{n, p} \cap \bbB_{n, p}$, we have:
$$
\frac{1}{n}\breve \bW_{*, j}^{\perp^{\top}}\breve \bW_{*, j}^{\perp} \lesssim \left[r_n^2 + \left(\frac{\log{p}}{n} \vee r_n \sqrt{\frac{\log{p}}{n}}\right)\right] \vee \frac{s_\gamma \log{p}}{n} + (\upsilon+1) C_{\max} \lesssim (\upsilon+1) C_{\max} \,.
$$
Using the above findings we have: 
\begin{align*}
    \bbP\left(\max_{1 \le j \le p} \frac{1}{\sqrt{n}}\left|\beps^{\top}\breve \bW_{*,j}^{\perp} \right| > t\right) & = \bbP\left(\max_{1 \le j \le p} \frac{1}{\sqrt{n}}\left|\beps^{\top}\breve \bW_{*,j}^{\perp} \right| > t, \left(\bbA_{n, p} \cap \bbB_{n, p}\right)\right) + \bbP\left(\left(\bbA_{n, p} \cap \bbB_{n, p}\right)^c\right) \\
    & \le  \bbE\left[\bbP\left(\max_{1 \le j \le p} \frac{1}{\sqrt{n}}\left|\beps^{\top}\breve \bW_{*,j}^{\perp} \right| > t \mid \sigma(X, Z, \eta)\right)\mathds{1}_{\left(\bbA_{n, p} \cap \bbB_{n, p}\right)}\right] +\bbP\left(\left(\bbA_{n, p} \cap \bbB_{n, p}\right)^c\right) \\
    & = 2\sum_j \bbE \left[\exp{\left(-c\frac{t^2}{\sigma^2_\eps \frac{\left(\bW_{*,j}^{\perp}\right)^{\top}\bW_{*,j}^{\perp}}{n}}\right)}\mathds{1}_{\bbA_{n, p} \cap \bbB_{n, p}}\right] +  \bbP\left(\left(\bbA_{n, p} \cap \bbB_{n, p}\right)^c\right)  \\
    & \le 2\exp{\left(\log{p} - c\frac{t^2}{(\upsilon + 1)C_{\max}}\right)} + \underbrace{\bbP\left(\left(\bbA_{n, p} \cap \bbB_{n, p}\right)^c\right)}_{\to 0 \text{ as we proved }}
\end{align*}
Therefore we conclude that: 
$$
\max_{1 \le j \le p} \frac{1}{\sqrt{n}}\left|\beps^{\top}\breve \bW_{*,j}^{\perp} \right|  \lesssim_p \sqrt{\log{p}}
$$
which, along with equation \eqref{eq:eps_break} concludes with probability going to 1: 
$$
\frac{1}{\sqrt{n}}\beps^{\top}\breve \bW_{-1}^{\perp}\left(\hat \theta_{-1, S} - \theta^*_{S}\right) \lesssim \lambda_1 s_1 \sqrt{\log{p}}  = o_p(1)\,.
$$

\noindent
We next show that: 
\begin{align*}
 \frac{1}{\sqrt{n}}\beps^{\top}\left(\tilde \bS - \tilde \bW_{-1}\theta^*_{S}\right) & =  \frac{1}{\sqrt{n}} \sum_{i=1}^n \eps_i \left(\tilde \bS_i - \tilde \bW_{i, -1}^{\top}\theta^*_{S}\right)\mathds{1}_{|\hat \eta_i| \le \tau} \\
 & =  \frac{1}{\sqrt{n}} \sum_{i=1}^n \eps_i \left(\tilde \bS_i - \tilde \bW_{i, -1}^{\top}\theta^*_{S}\right)\mathds{1}_{| \eta_i| \le \tau} + o_p(1) \,.
\end{align*}
which we will achieve by simple Cauchy-Schwarz inequality. Note that conditional on $X, Z, \eta, \cD_1$ the term $\eps_i \left(\tilde \bS_i - \tilde \bW_{i, -1}^{\top}\theta^*_{S}\right)\left(\mathds{1}_{| \eta_i| \le \tau} - \mathds{1}_{|\hat \eta_i| \le \tau}\right)$ is a centered sub-gaussian random variable, where the sub-gaussianity follows from the sub-gaussianity of $\eps$. Therefore we need to show: 
$$
\frac{1}{n}\sum_i \left(\tilde \bS_i - \tilde \bW_{i, -1}^{\top}\theta^*_{S}\right)^2\left(\mathds{1}_{| \eta_i| \le \tau} - \mathds{1}_{|\hat \eta_i| \le \tau}\right)^2 = o_p(1) \,.
$$
Consequently it is enough to show: 
$$
\bbE\left[\left(\tilde S - \tilde W_{ -1}^{\top}\theta^*_{S}\right)^2\left(\mathds{1}_{| \eta| \le \tau} - \mathds{1}_{|\hat \eta| \le \tau}\right)^2 \mid \cD_1\right] = o_p(1) \,.
$$ 
From Holder inequality: 
\begin{align*}
& \bbE\left[\left(\tilde S - \tilde W_{ -1}^{\top}\theta^*_{S}\right)^2\left(\mathds{1}_{| \eta| \le \tau} - \mathds{1}_{|\hat \eta| \le \tau}\right)^2 \mid \cD_1\right] \\
& \le \left(\bbE\left[\left(\tilde S - \tilde W_{ -1}^{\top}\theta^*_{S}\right)^{2 + \delta} \right]\right)^{\frac{1}{1 + \frac{\delta}{2}}} \times \left(\bbP(\Delta_n \mid \cD_1)\right)^{\frac{\frac{\delta}{2}}{1 + \frac{\delta}{2}}}
\end{align*}
where $\Delta_n$ is the event that at-least one of the random variable $\eta$ or $\hat \eta$ is outside the interval $[-\tau, \tau]$. The first term in the above inequality is finite by Assumption \ref{assm:sigma} and therefore it is enough to show $\bbP(\Delta_n \mid \cD_1) = o_p(1)$ which follows from equation \eqref{eq:Delta_bound_1} and \eqref{eq:Delta_bound_2} in the proof of Lemma \ref{lem:choice_lambda} presented in the supplementary document. 
\\\\
\noindent
The final part of the proof is to establish asymptotic normality, which again follows from simple application of Lindeberg's central limit theorem along with Lyapounov's condition. Define a set of triangular array of random variables $\{\zeta_{n,i}\}_{i=1}^n$ as: 
$$
\zeta_{n, i} = \frac{\eps_i \left(\tilde \bS_i - \tilde \bW_{i, -1}^{\top}\theta^*_{S}\right)\mathds{1}_{| \eta_i| \le \tau}}{\sqrt{n}\sigma_{n, 1}}
$$ 
with $\sigma_{n, 1}$ being the standard deviation defined as:  
$$
\sigma_{n, 1} = \sqrt{\bbE\left[\eps^2\left(\tilde S - \tilde W_{ -1}^{\top}\theta^*_{S}\right)^2\left(\mathds{1}_{|\eta| \le \tau}\right) \right]} \,.
$$
Therefore, $\bbE[\zeta_{n, i}] = 0$ and $\sum_i \bbE[\zeta^2_{n, i}] = 1$. Furthermore from Assumption 1, we have: 
$$
\sum_i \bbE[\left|\zeta_{n, i}\right|^{2 + \delta}] = \frac{\bbE\left[\left|\eps\left(\tilde S - \tilde W_{ -1}^{\top}\theta^*_{S}\right)\right|^{2 + \delta}\left(\mathds{1}_{|\eta| \le \tau}\right) \right]}{n^{\delta}\sigma_{n, 1}^{2 + \delta}} \to 0 \,.
$$
Hence we conclude: 
$$
\frac{1}{\sigma_{n}\sqrt{n}} \beps^{\top}\left(\bS^{\perp} - \check \bW_{-1}\hat \theta_{-1, S}\right) \overset{\mathscr{L}}{\implies} \cN(0, 1) \,.
$$
\end{proof}

\subsection{Some sufficient conditions for normality}
\label{sec:discussion_sparsity_rate}
In the proof of Theorem \ref{thm:main_hd}, we need to ensure that RHS of equation \eqref{eq:bound_num_1} and \eqref{eq:bound_num_2} are $o(1)$ to ensure asymptotic normality of the debiased estimator. We here present some sufficient conditions in certain cases: 
\subsubsection{Case 1: }
Assume $s_0 = s_1 = s_\gamma = s_\beta \sim s$, i.e. all the sparsities are of similar order and $\|\theta^*_S\|_1 \sim \|\theta^*_Y\|_1 \sim \sqrt{s}$. Further assume that $\dot r_n \gg \sqrt{(s\log{p}/n)}$. We start with \eqref{eq:bound_num_1} which requires: 
\begin{align}
\label{eq:num_1_disc} \sqrt{n}\left[\left(\lambda_0 \vee \lambda_1\right)^2(s_0 \vee s_1) + \lambda_1 \sqrt{\frac{s_\gamma (s_1 + s_\gamma) \log{p}}{n}} \right] = o(1)
\end{align}
From Lemma \ref{lem:choice_lambda} we have under our setup: 
\begin{align*}
\lambda_1 \sim \lambda_0 \sim \sqrt{s}\left[\dot r_n + \sqrt{\frac{s\log{p}}{n}}\right] + \sqrt{\frac{s \log{p}}{n}}\left(\log{\frac{n}{s\log{p}}}\right)^{3/2} \,.
\end{align*}
Ignoring the log factor we have: 
$$
\lambda_1 \sim \lambda_0 \sim \sqrt{s}\dot r_n + \sqrt{\frac{s \log{p}}{n}} \,.
$$
Therefore the condition in equation \eqref{eq:num_1_disc} simplifies to: 
\begin{align*}
\sqrt{n}\left(\lambda_0 \vee \lambda_1\right)^2 (s_0 \vee s_1) & \sim \sqrt{n}s\left(s\dot r_n^2 + \frac{s\log{p}}{n}\right) \\
\sqrt{n}\lambda_1 \sqrt{\frac{s_\gamma(s_1 + s_\gamma)\log{p}}{n}} & \sim \sqrt{s\log{p}}\left(\sqrt{s}\dot r_n + \sqrt{\frac{s \log{p}}{n}}\right) \,.
\end{align*}
For both the terms on the RHS of the above equations to be $o(1)$ we need: 
\begin{align}
\label{eq:disc_1} \dot r_n & = o\left(\frac{1}{sn^{1/4} \wedge \frac{1}{s\sqrt{\log{p}}}}\right) \,, \\
\label{eq:disc_2} \frac{s\log{p}}{\sqrt{n}} &= o(1) \,.
\end{align}
Now for equation \eqref{eq:bound_num_2} to be asymptotically negligible, we need: 
\begin{align}
& \sqrt{n}\left[\frac{r_n}{\sqrt{n}} + r_n^2 +\left\|\theta^*_{S}\right\|_1 \left[r_n\left(\dot r_n + \sqrt{\frac{s_\gamma \log{p}}{n}}\right)\right] + \lambda_1 s_1  r_n \sqrt{\frac{\log{p}}{n}} \right. \notag \\
\label{eq:num_2_disc} & \qquad \qquad \left. +  \dot r_n \sqrt{\frac{s_\gamma \log{p}}{n}}  + \frac{s_\gamma \log{p}}{n}\right] = o(1) \,.
\end{align}
which in our above setup, reduces to: 
\begin{align*}
& r_n + \sqrt{n}r_n^2 + \sqrt{ns}r_n\dot r_n + sr_n \sqrt{\log{p}} + \dot r_n \sqrt{s\log{p}} + s^{3/2}\dot r_n r_n \sqrt{\log{p}} + \frac{s^{3/2}r_n\log{p}}{\sqrt{n}} + \frac{s\log{p}}{\sqrt{n}} = o(1) \,. 
\end{align*}
As described at the beginning of subsection \ref{sec:third_step_hd}, in most of the scenarios, we expect $r_n \ll \dot r_n$. Condition \eqref{eq:disc_1} implies that first seven summands of the above equation are o(1), whereas \eqref{eq:disc_2} implies the last summand is o(1). Finally, for the proof of Proposition \ref{prop:denom_conv}, \ref{prop:RE} and Lemma \ref{lem:subg_fourth_moment} we further need that: 
\begin{equation}
\label{eq:prop_cond}
(s_0 \vee s_1) \left\{\dot r_n^2 + \frac{s_\gamma \log{p}}{n} + \dot r_n \sqrt{\frac{\log{p}}{n}}\right\} = o(1) \,.
\end{equation}
which again holds under \eqref{eq:disc_1} and \eqref{eq:disc_2}. Therefore, in this setup, \eqref{eq:disc_1} and \eqref{eq:disc_2} are sufficient to ensure asymptotic normality.

\subsubsection{Case 2: }Now assume that $\dot r_n \ll \sqrt{(s\log{p})/n}$ whereas the other conditions remain same, i.e.  $s_0 = s_1 = s_\gamma = s_\beta \sim s$ and $\|\theta^*_S\|_1 \sim \|\theta^*_Y\|_1 \sim \sqrt{s}$. In this case the order of $\lambda_0$ and $\lambda_1$ becomes (ignoring the log factor): 
$$
\lambda_0 \sim \lambda_1 \sim s\sqrt{\frac{\log{p}}{n}} \,.
$$
Consequently, the condition of \eqref{eq:num_1_disc} simplifies to: 
$$
\sqrt{n}\left[\frac{s^2 \log{p}}{n} + \frac{s^{3/2}\log{p}}{n}\right] = o(1) \,.
$$
A sufficient condition for this is: 
\begin{equation}
\label{eq:disc_3}
\frac{s^2 \log{p}}{\sqrt{n}} = o(1) \,.
\end{equation}
In this setup, the bound in equation \eqref{eq:num_2_disc} reduces to: 
$$
r_n + \sqrt{n}r_n^2 + s \sqrt{n}r_n \dot r_n + s r_n \sqrt{\log{p}} + \frac{s^2 r_n \log{p}}{\sqrt{n}} + \dot r_n \sqrt{s\log{p}} + \frac{s\log{p}}{\sqrt{n}} \,.
$$
From the conditions $\dot r_n \ll \sqrt{(s\log{p})/n}$ and $r_n \ll \dot r_n$ and condition \eqref{eq:disc_3} it is immediate that the bound in the above display is $o(1)$. Furthermore, for the proof of Proposition \ref{prop:RE}, \ref{prop:denom_conv} and Lemma \ref{lem:subg_fourth_moment} we need to establish equation \eqref{eq:prop_cond}, which is also immediate from condition \ref{eq:disc_3}. Therefore, condition \ref{eq:disc_3} is sufficient to ensure asymptotic normality of the debiased estimator. 

\section{Proof of Theorem \ref{thm:main_thm}}
\label{sec:proof_main_thm}
\subsection{Proof of Step 1} 
\label{sec:s1_main}
\vspace{0.1in}
\noindent
First we decompose the matrix as follows: 
$$
\frac{\bW^{\top}\proj^{\perp}_{\tilde \bN_{K, a}}\bW}{n} = \frac{\bW_1^{\top}\proj^{\perp}_{\tilde \bN_K}\bW_1}{n} + \frac{\bW_2^{\top}\bW_2}{n}
$$
We show that: 
\begin{align}
\label{eq:step11} \frac{\bW_1^{\top}\proj^{\perp}_{\tilde \bN_K}\bW_1}{n} & \overset{P}{\to} \frac13 \bbE\left[\var\left(\begin{bmatrix} s & \bx & \bz b'(\eta)\end{bmatrix}\,\middle\vert\, \eta\right)\mathds{1}_{|\eta| \le \tau}\right]  \,, \\
\label{eq:step12} \frac{\bW_2^{\top}\bW_2}{n} & \overset{P}{\to} \frac13 \begin{bmatrix}
0 & 0 & 0 \\
0 & 0 & 0 \\
0 & 0 & \Sigma_Z 
\end{bmatrix}
\end{align}
where $\bW_1, \bW_2$ are as defined in Section \ref{sec:est_method}. Note that this implies: 
$$
\frac{\bW^{\top}\proj^{\perp}_{\tilde \bN_{K, a}}\bW}{n}  \overset{P}{\to}  \frac13 \bbE\left[\var\left(\begin{bmatrix} s & \bx & \bz b'(\eta)\end{bmatrix}\,\middle\vert\, \eta\right)\mathds{1}_{|\eta| \le \tau}\right] + \frac13 \begin{bmatrix}
0 & 0 & 0 \\
0 & 0 & 0 \\
0 & 0 & \Sigma_Z 
\end{bmatrix} = \frac13 \Omega_{\tau}
$$
which delivers the assertion of Step 1. 
\\\\
\noindent
Equation \eqref{eq:step12} follows immediately from an application of weak law of large numbers. For equation \eqref{eq:step11}, note that $\bW_1^{\top}\proj^{\perp}_{\tilde \bN_K}\bW_1/n$ is the norm of the residual of rows of $\bW_1$ upon projecting out the effect of $\bN_K$, which can be further decomposed as: 
\begin{align}
\frac{\bW_1^{\top}\proj^{\perp}_{\tilde \bN_K}\bW_1}{n} & = \frac{\bW_1^{*^{\top}}\proj^{\perp}_{\tilde \bN_K}\bW^*_1}{n} + 2\frac{\bW_1^{*^{\top}}\proj^{\perp}_{\tilde \bN_K}(\bW_1 - \bW^*_1)}{n} \notag \\
\label{eq:W1_decomp} & \qquad \qquad \qquad + \frac{(\bW_1 - \bW^*_1)^{\top}\proj^{\perp}_{\tilde \bN_K}(\bW_1 - \bW^*_1)}{n}
\end{align}
where $\bW^*_{1, i*} = \begin{bmatrix} S_i & X_i & b'(\eta_i)Z_i\end{bmatrix}$. Note that the only difference between $\bW_1$ and $\bW^*_1$ is in the last $p_2$ co-ordinates where we have replaced $\hat b'(\hat \eta_i)$ by $b'(\eta_i)$. We now show that: 
$$
\frac{\bW_1^{*^{\top}}\proj^{\perp}_{\tilde \bN_K}(\bW_{1}^* - \bW_1)}{n} \overset{P}{\longrightarrow} 0 \,.
$$
The other term $(\bW_1 - \bW^*_1)^{\top}\proj^{\perp}_{\tilde \bN_K}(\bW_1 - \bW^*_1)/n$ will consequently be $o_p(1)$ as it is a lower order term Fix $1 \le j, k \le 1+p_1+p_2$: 
\begin{align*}
\left|\left(\frac{\bW_1^{*^{\top}}\proj^{\perp}_{\tilde \bN_K}(\bW_{1}^* - \bW_1)}{n} \right)_{j, k}\right| & = \left|\frac{\left \langle \proj^{\perp}_{\tilde \bN_K}\bW^*_{1, *j}, \proj^{\perp}_{\tilde \bN_K}(\bW^*_1 - \bW_1)_{*k}\right \rangle}{n}\right| \\
& \le \sqrt{\frac{\|\bW^*_{1, *j}\|^2}{n}\frac{\|(\bW^*_1 - \bW_1)_{*k}\|^2}{n}}
\end{align*}
That $\|\bW^*_{1, *j}\|^2/n = O_p(1)$ follows from an immediate application of WLLN. To show that the other part is $o_p(1)$, note that $\|(\bW^*_1 - \bW_1)_{*k}\|^2/n = 0$ for $1 \le k \le p_1 + 1$. For $p_1 + 2 \le k \le p_1 + p_2$, define $\tilde k = k - (p_1 + 1)$. Then: 
\begin{align}
\frac{\|(\bW^*_1 - \bW_1)_{*k}\|^2}{n} & = \frac1n \sum_{i=1}^{n/3} \left(b'(\eta_i) - \hat b'(\hat \eta_i)\right)^2 Z_{i, \tilde k}^2 \mathds{1}_{|\hat \eta_i| \le \tau} \notag  \\
& \le 2\left[\frac1n \sum_{i=1}^{n/3} \left(b'(\eta_i) - b'(\hat \eta_i)\right)^2 Z_{i, \tilde k}^2 \mathds{1}_{|\hat \eta_i| \le \tau}  + \frac1n \sum_{i=1}^n \left(b'(\hat \eta_i) - \hat b'(\hat \eta_i)\right)^2 Z_{i, \tilde k}^2 \mathds{1}_{|\hat \eta_i| \le \tau} \right]  \notag \\
& \le \left[\frac2n \sum_{i=1}^{n/3} \left(\eta_i - \hat \eta_i\right)^2 (b''(\tilde \eta_i))^2Z_{i, \tilde k}^2 \mathds{1}_{|\hat \eta_i| \le \tau} + \sup_{|x| \le \tau}|b'(t) - \hat b'(t)|\frac1n \sum_{i=1}^n Z^2_{i, \tilde k}\right] \notag  \\
\label{eq:W*} & = O_p(n^{-1}) + O_p(n^{-2}) + o_p(1) = o_p(1) \,.
\end {align} 
That the first summand is $O_p(n^{-1})$ follows from the fact $b''$ is uniformly bounded (Assumption \ref{assm:b}) and $\hat \eta_i - \eta_i = O_p(n^{-1/2})$ and the second summand is $o_p(1)$ follows from  Proposition \ref{thm:spline_consistency}. We next show: 
\begin{equation}
\label{eq:wstar}
\frac{\bW_1^{*^{\top}}\proj^{\perp}_{\tilde \bN_K}\bW^*_1}{n} \overset{P}{\longrightarrow} \frac13 \Omega_{\tau}\,.
\end{equation}
Towards that direction, we first claim that $(\bW_1^{*^{\top}}\proj^{\perp}_{\tilde \bN_K}\bW^*_1/n) = O_p(1)$ . For any $1 \le l, m \le 1 + p_1 + p_2$, we have: 
\begin{align*}
\left|\frac{\bW_1^{*^{\top}}\proj^{\perp}_{\tilde \bN_K}\bW^*_1}{n}\right|_{l, m} & = \left|\frac{\langle \bW^*_{1, *l}, \proj_{\bN_K}\bW^*_{1, *m}\rangle}{n}\right| \\
& \le \left\|\frac{\bW^*_{1, *l}}{\sqrt{n}}\right\|\left\|\frac{\bW^*_{1, *m}}{\sqrt{n}}\right\|  = O_p(1) 
\end{align*} 
by WLLN. Setting $a_n = \log{n}/\sqrt{n}$, we next decompose this term into two further terms: 
\begin{align*}
\frac{\bW_1^{*^{\top}}\proj^{\perp}_{\tilde \bN_K}\bW^*_1}{n} & = \frac{\bW_1^{*^{\top}}\proj^{\perp}_{\tilde \bN_K}\bW^*_1}{n}\mathds{1}_{\|\hat \gamma_n - \gamma_0\| \le a_n} + \frac{\bW_1^{*^{\top}}\proj^{\perp}_{\tilde \bN_K}\bW^*_1}{n}\mathds{1}_{\|\hat \gamma_n - \gamma_0\| > a_n}
\end{align*}
As we have already established $(\bW_1^{*^{\top}}\proj^{\perp}_{\tilde \bN_K}\bW^*_1/n) = O_p(1)$, it is immediate that:
$$
\frac{\bW_1^{*^{\top}}\proj^{\perp}_{\tilde \bN_K}\bW^*_1}{n}\mathds{1}_{\|\hat \gamma_n - \gamma_0\| > a_n} = o_p(1) \,.
$$
Therefore, we need to establish the convergence of $(\bW_1^{*^{\top}}\proj^{\perp}_{\tilde \bN_K}\bW^*_1)/n\mathds{1}_{\|\hat \gamma_n - \gamma_0\| \le a_n}$. Define a function $g(a, t)$ and $V(a, t)$ as: 
\begin{align*}
g(a, t) & = \bbE\left[\begin{pmatrix} S & X & b'(\eta)Z \end{pmatrix} \,\middle\vert\, \eta + a^{\top}Z = t\right] \,, \\
V(a, t) & = \var\left[\begin{pmatrix} S & X & b'(\eta)Z \end{pmatrix} \,\middle\vert\, \eta + a^{\top}Z = t\right] 
\end{align*}
The following lemma characterizes some smoothness properties of the functions $g, V$: 
\begin{lemma}
\label{lem:g}
Under Assumptions \ref{assm:independence}-\ref{assm:moment}, the functions $g$ and $V$ are continuous. Moreover $g$ is continuously differentiable in both of its co-ordinates and consequently $g, \partial_a g$ and $\partial_t g$ are uniformly bounded on $\|a\| \le 1$ and $|t| \le \tau+1$.  
\end{lemma}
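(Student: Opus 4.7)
The plan is to reduce both $g$ and $V$ to explicit ratios of integrals and then invoke dominated convergence and differentiation under the integral sign. By Assumption \ref{assm:independence}, $\eta$ is independent of $(X,Z)$, so the joint density factors as $f_{\eta}(\eta)f_{X,Z}(x,z)$. Using the delta-function heuristic (equivalently, a formal change of variables) to integrate out the constraint $\eta + a^{\top}Z = t$, one obtains, for any measurable $\varphi$ of $(S,X,\eta,Z)$,
\begin{equation*}
\bbE\!\left[\varphi \,\middle\vert\, \eta + a^{\top}Z = t\right] \;=\; \frac{\bbE\!\left[\varphi(S^{(a,t)},X,t-a^{\top}Z,Z)\,f_{\eta}(t-a^{\top}Z)\right]}{\bbE\!\left[f_{\eta}(t-a^{\top}Z)\right]},
\end{equation*}
where $S^{(a,t)} = \mathds{1}_{t + (\gamma_0-a)^{\top}Z \ge 0}$. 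Specializing $\varphi$ to each component of $\begin{pmatrix}S & X & b'(\eta)Z\end{pmatrix}$ (for $g$) or to the appropriate squared/product forms (for $V$), every quantity of interest becomes an explicit ratio of integrals against the law of $(X,Z)$.

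Next I would show that the denominator $h(a,t):=\bbE[f_{\eta}(t-a^{\top}Z)]$ is bounded away from zero on the compact set $\{\|a\|\le 1,\,|t|\le \tau+1\}$. Fix a small $\delta>0$. By Assumption \ref{assm:b}(iv), $f_{\eta}$ is bounded below by some $f_->0$ on $[-\tau-\xi,\tau+\xi]$; restricting the expectation to the event $\{|a^{\top}Z|\le \delta\}$ (which has positive probability uniformly over $\|a\|\le 1$ by continuity of the density of $Z$ and finite second moments) and choosing $\delta \le \xi-1$ gives the required lower bound, so the reciprocal $1/h(a,t)$ is continuous and bounded on the compact set.

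For continuity and differentiability of the numerators I would apply dominated convergence and the Leibniz rule. The smooth components (those multiplying $X$ or $b'(\eta)Z$) are handled directly: $f_{\eta}$ is bounded with bounded derivative (Assumption \ref{assm:b}(i)), $b'$ is bounded on $[-\tau-\xi,\tau+\xi]$ and smooth by Assumption \ref{assm:b}(ii), and $(X,Z)$ possesses finite first and second moments (Assumption \ref{assm:moment}), giving integrable envelopes for the integrands and their $a$- and $t$-derivatives. The quadratic terms appearing in $V$ are handled identically using the finite fourth moments. The main obstacle, and the step I would budget most care for, is the $S$-component: the integrand $\mathds{1}_{t+(\gamma_0-a)^{\top}z \ge 0}\,f_{\eta}(t-a^{\top}z)\,f_{X,Z}(x,z)$ is discontinuous in $(a,t)$ pointwise in $z$. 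Here I would exploit that $(\gamma_0-a)^{\top}Z$ has a bounded continuous density (inherited from $f_Z$, provided $\gamma_0-a\ne 0$; for $a$ in a neighborhood of a putative bad point one can rotate a different coordinate), so the set $\{t+(\gamma_0-a)^{\top}Z=0\}$ carries zero mass. Continuity then follows from dominated convergence after bounding the difference of indicators by the indicator of a shrinking slab and using the density bound. For differentiability, I would carry out the $z$-integration by conditioning on $(\gamma_0-a)^{\top}Z$, writing the numerator as $\int_{-\infty}^{t} \psi(a,t,u)\,du$ for a smooth kernel $\psi$ obtained by integrating out the remaining coordinates, and differentiating to obtain one bulk term plus a boundary term $\psi(a,t,t)$, both of which are uniformly bounded on $\{\|a\|\le 1,|t|\le\tau+1\}$ by the bounds above.

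Combining the uniform lower bound on $h$ with the uniform continuity/boundedness of the numerators and their partial derivatives (via the quotient rule) gives continuity of $V$ and continuous differentiability of $g$ on $\{\|a\|\le 1,|t|\le\tau+1\}$, together with the claimed uniform bounds on $g,\partial_a g,\partial_t g$. Compactness of the parameter set and continuity of the resulting bounded expressions then upgrade pointwise bounds to uniform ones, completing the proof.
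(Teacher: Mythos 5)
Your argument follows essentially the same route as the paper's: write each component of $g$ (and of $V$) as a ratio of two integrals against the law of $(X,Z)$ with kernel $f_\eta(t-a^{\top}z)$, then invoke dominated convergence and Leibniz differentiation under the integral sign. The paper does this componentwise for $g_1 = \bbE[S\mid\cdot]$, $g_2 = \bbE[X\mid\cdot]$, $g_3 = \bbE[b'(\eta)Z\mid\cdot]$ and waves through Leibniz for $g_1$; you package it in a single conditional-expectation identity and then specialize. Where you add genuine value is in \emph{explicitly} trying to bound the common denominator $h(a,t)=\bbE[f_\eta(t-a^{\top}Z)]$ away from zero and in spelling out why the indicator term is not an obstacle to continuity (the level set $\{t+(\gamma_0-a)^{\top}Z=0\}$ is null). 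The paper's own proof simply invokes DCT without checking that the denominator is nonzero, so on those two points your write-up is more careful than the published one.

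Two steps do not go through as written. First, the choice ``$\delta\le\xi-1$'' is vacuous unless $\xi>1$: with $|t|\le\tau+1$ and $|a^{\top}Z|\le\delta$ you need $1+\delta\le\xi$ to land inside $[-\tau-\xi,\tau+\xi]$, but Assumption~\ref{assm:b}(iv) only posits \emph{some} $\xi>0$. Your argument therefore only delivers the claimed uniform bound on the smaller range $|t|\le\tau+\xi-\delta$, which —-- to be fair —-- is all the paper actually uses (the lemma is invoked with $\|a\|\le a_n=\log n/\sqrt n\to 0$ and a fixed compact $t$-window, and the paper's own verification in the proof of Proposition~\ref{thm:spline_consistency} restricts to ``$a_n\delta\le\xi$ for $n$ large''). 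Second, the fix ``rotate a different coordinate'' for $a$ near $\gamma_0$ does not work: if $a=\gamma_0$ then $S$ reduces to $\mathds 1_{t\ge 0}$, a deterministic step function, so $g_1(\gamma_0,t)$ has a genuine jump at $t=0$ and is not continuous. The correct resolution is simply that $a$ ranges over a neighborhood of $0$ of vanishing radius (and $\gamma_0\neq 0$ by hypothesis of the model), so the degenerate value $a=\gamma_0$ is never in the relevant region. Both of these are looseness inherited from the lemma's stated domain $\|a\|\le 1$, $|t|\le\tau+1$ rather than errors in your overall strategy; if you shrink the domain to what is actually needed downstream your proposal is sound.
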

The proof of this lemma can be found in Section \ref{sec:additional_results}. Note that the definition of $g$ implies $g(\hat \gamma_n - \gamma_0, t) = \bbE\left[\bW^*_{1, i} \,\middle\vert\, \hat \eta = t, \cF(\cD_1)\right]$. As $g$ is a vector valued functions with range being a subset of $\bbR^{1+p_1 + p_2}$, we henceforth denote by $g_j$, the $j^{th}$ co-ordinate of $g$ for $1 \le j \le 1 + p_1 + p_2$. Now for each of the co-ordinates of $g$, we further define $\omega_{j, n, \infty}$ as the B-spline approximation vector of $g_j$, i.e.
$$
\omega_{j, n, \infty} = \argmin_{\omega} \sup_{|x| \le \tau} \left|g_j(\hat \gamma_n - \gamma_0, x) - \tilde N_K(x)^{\top}\omega\right|
$$
where $\tilde N_k$ is the scaled B-spline basis functions (for definition and brief discussion, see Section \ref{sec:spline_details}). We often drop the index $n$ from $\omega_{j, n, \infty}$ when there is no disambiguity. It is immediate from Theorem \ref{thm:spline_bias} (with $l=r=0$): 
\begin{equation}
\label{eq:g_spline_approx}
\sup_{|x| \le \tau} \left|g_j(\hat \gamma_n - \gamma_0, x) - \tilde N_K(x)^{\top}\omega_{j, \infty}\right| \lesssim \frac{2\tau}{K} \sup_{|x| \le \tau} \left|\partial_t g(\hat \gamma_n - \gamma_0, t)\right| = O_p(K^{-1}) \,.
\end{equation}
We define the matrix $\bG$ as $\bG_{i*} = \bbE\left[\bW^*_{1, i*} \,\middle\vert\, \hat \eta = \hat \eta_i, \cF(\cD_1)\right] = g(\hat \gamma_n - \gamma_0, \hat \eta_i)$ and the matrix $\bH$ as $\bH_{i,j} = \tilde N_K(\hat \eta_i)^{\top}\omega_{j, \infty}$. Using these matrices we expand the matrix under consideration as follows:  
\begin{align}
& \frac{\bW_1^{*^{\top}}\proj^{\perp}_{\tilde \bN_K}\bW^*_1}{n}\mathds{1}_{\|\hat \gamma_n - \gamma_0\| \le a_n}  \notag \\
& = \frac{(\bW^*_1 - \bG)^{\top}\proj^{\perp}_{\tilde \bN_K}(\bW^*_1 - \bG)}{n}\mathds{1}_{\|\hat \gamma_n - \gamma_0\| \le a_n} \notag \\
& \qquad \qquad + \frac{(\bG - \bH)^{\top}\proj^{\perp}_{\tilde \bN_K}(\bG - \bH)}{n}\mathds{1}_{\|\hat \gamma_n - \gamma_0\| \le a_n}\notag \\
& \qquad \qquad \qquad \qquad + 2\frac{(\bG - \bH)^{\top}\proj^{\perp}_{\tilde \bN_K}(\bW^*_1 - \bG)}{n}\mathds{1}_{\|\hat \gamma_n - \gamma_0\| \le a_n} \notag \\
& = \frac{(\bW^*_1 - \bG)^{\top}(\bW^*_1 - \bG)}{n}\mathds{1}_{\|\hat \gamma_n - \gamma_0\| \le a_n} - \frac{(\bW^*_1 - \bG)^{\top}\proj_{\bN_K}(\bW^*_1 - \bG)}{n}\mathds{1}_{\|\hat \gamma_n - \gamma_0\| \le a_n} \notag \\
& \qquad \qquad + \frac{(\bG - \bH)^{\top}\proj^{\perp}_{\tilde \bN_K}(\bG - \bH)}{n}\mathds{1}_{\|\hat \gamma_n - \gamma_0\| \le a_n} \notag \\
& \qquad \qquad \qquad \qquad + 2\frac{(\bG - \bH)^{\top}\proj^{\perp}_{\tilde \bN_K}(\bW^*_1 - \bG)}{n}\mathds{1}_{\|\hat \gamma_n - \gamma_0\| \le a_n} \notag \\
\label{eq:decomp_main_1} & := T _1 + T_2 + T_3 + T_4 
\end{align}
We first show that $[(\bW_1^* - \bG)^{\top}(\bW_1^* - \bG)/n]\mathds{1}(\|\hat \gamma_n -\gamma_0\| \le a_n)$ converges to some matrix. Towards that end, we further expand it as follows: 
\allowdisplaybreaks
\begin{align*}
 & \frac{(\bW_1^* - \bG)^{\top}(\bW_1^* - \bG)}{n}\mathds{1}_{\|\hat \gamma_n - \gamma_0\| \le a_n} \\
 & = \frac{1}{n}\sum_{i=1}^{n/3} (\bW^*_{1,i*} - g(\hat \gamma_n - \gamma_0, \hat \eta_i))(\bW^*_{1, i*} - g(\hat \gamma_n - \gamma_0, \hat \eta_i))^{\top}\mathds{1}_{|\hat \eta_i| \le \tau, \|\hat \gamma_n - \gamma_0\| \le a_n} \\
 & = \frac{1}{n}\sum_{i=1}^{n/3} (\bW^*_{1,i*} - g(0, \hat \eta_i))(\bW^*_{1,i*} - g(0, \hat \eta_i))^{\top}\mathds{1}_{|\hat \eta_i| \le \tau, \|\hat \gamma_n - \gamma_0\| \le a_n} \\
 & \qquad \qquad + \frac{2}{n}\sum_{i=1}^{n/3} (\bW^*_{1,i*} - g(0, \hat \eta_i))(g(0, \hat \eta_i) - g(\hat \gamma_n - \gamma_0, \hat \eta_i))^{\top}\mathds{1}_{|\hat \eta_i| \le \tau, \|\hat \gamma_n - \gamma_0\| \le a_n} \\
 & \qquad \qquad \qquad \qquad + \frac{1}{n}\sum_{i=1}^{n/3} (g(0, \hat \eta_i) - g(\hat \gamma_n - \gamma_0, \hat \eta_i))(g(0, \hat \eta_i) - g(\hat \gamma_n - \gamma_0, \hat \eta_i))^{\top}\mathds{1}_{|\hat \eta_i| \le \tau, \|\hat \gamma_n - \gamma_0\| \le a_n} \\
 & := T_{11} + T_{12} + T_{13}
\end{align*}
We now show that $T_{12} = o_p(1)$ and $T_{13} = o_p(1)$ follows immediately form there, as it is a lower order term. For  $T_{12}$ note that by Lemma \ref{lem:g}, the function $g(a, t)$ has continuous derivative with respect to $a$ which makes $g$ Lipschitz on the ball $\|a\| \le 1$. Hence we have: 
\begin{align*}
&\left\|  \frac{2}{n}\sum_{i=1}^{n/3} (\bW^*_{1,i*} - g(0, \hat \eta_i))(g(0, \hat \eta_i) - g(\hat \gamma_n - \gamma_0, \hat \eta_i))^{\top}\mathds{1}_{|\hat \eta_i| \le \tau, \|\hat \gamma_n - \gamma_0\| \le a_n} \right\|_F \\
& \le \frac{2}{n}\sum_{i=1}^{n/3} \left\|\bW^*_{1,i*} - g(0, \hat \eta_i)\right\| \left\|g(0, \hat \eta_i) - g_n(\hat \gamma_n - \gamma_0, \hat \eta_i)\right\| \mathds{1}_{|\hat \eta_i| \le \tau, \|\hat \gamma_n - \gamma_0\| \le a_n} \\
& \le \|\hat \gamma_n - \gamma_0\| \frac{2}{n}\sum_{i=1}^{n/3} \left\|\bW^*_{1,i*} - g(0, \hat \eta_i)\right\|\mathds{1}_{|\hat \eta_i| \le \tau}  = O_p(n^{-1/2}) \,.
\end{align*}
Now, to establish convergence of $T_{11}$ we further expand it as follows: 
\begin{align*}
& \frac{1}{n}\sum_{i=1}^{n/3} (\bW^*_{1,i*} - g(0, \hat \eta_i))(\bW^*_{1,i*} - g(0, \hat \eta_i))^{\top}\mathds{1}_{|\hat \eta_i| \le \tau, \|\hat \gamma_n - \gamma_0\| \le a_n}  \\
& =  \frac{1}{n}\sum_{i=1}^{n/3} (\bW^*_{1,i*} - g(0,  \eta_i))(\bW^*_{1,i*} - g(0,  \eta_i))^{\top}\mathds{1}_{|\hat \eta_i| \le \tau, \|\hat \gamma_n - \gamma_0\| \le a_n} \\
& \qquad \qquad +  \frac{2}{n}\sum_{i=1}^{n/3} (\bW^*_{1,i*} - g(0,  \eta_i))(g(0, \eta_i)- g(0, \hat \eta_i))^{\top}\mathds{1}_{|\hat \eta_i| \le \tau, \|\hat \gamma_n - \gamma_0\| \le a_n} \\
& \qquad \qquad \qquad \qquad  + \frac1n \sum_{i=1}^{n/3} (g(0, \eta_i)- g(0, \hat \eta_i))(g(0, \eta_i)- g(0, \hat \eta_i))^{\top}\mathds{1}_{|\hat \eta_i| \le \tau, \|\hat \gamma_n - \gamma_0\| \le a_n} \\
& := T_{111} + T_{112} + T_{113}
\end{align*}
From law of large numbers we first conclude: 
$$
T_{111} \overset{P}{\longrightarrow} \frac13 \bbE\left[\left(\bW^*_{1,1*} - \bbE(\bW^*_{1,1*} \,\middle\vert\, \eta)\right)\left(\bW^*_{1,1*} - \bbE(\bW^*_{1,1*} \,\middle\vert\,\eta)\right)^{\top}\mathds{1}_{|\eta| \le \tau}\right]
$$
where the factor $1/3$ comes due to data splitting. To complete the proof we show $T_{112} = o_p(1)$ and $T_{113} = o_p(1)$ follows immediately being a higher order term. We analyse $T_{112}$ as follows: 
\begin{align*}
\|T_{112}\| &  =  \left\|\frac{2}{n}\sum_{i=1}^{n/3} (\bW^*_{1,i*} - g(0,  \eta_i))(g(0, \eta_i)- g(0, \hat \eta_i))^{\top}\mathds{1}_{|\hat \eta_i| \le \tau, \|\hat \gamma_n - \gamma_0\| \le a_n}\right\| \\
& = \left\| \frac{2}{n}\sum_{i=1}^{n/3} (\bW^*_{1,i*} - g(0,  \eta_i))(g(0, \eta_i)- g(0, \hat \eta_i))^{\top}\mathds{1}_{|\hat \eta_i| \le \tau, \|\hat \gamma_n - \gamma_0\| \le a_n, |\hat \eta_i - \eta_i| \le 1}\right\| \\
& \qquad \qquad \qquad +   \left\|\frac{2}{n}\sum_{i=1}^{n/3} (\bW^*_{1,i*} - g(0,  \eta_i))(g(0, \eta_i)- g(0, \hat \eta_i))^{\top}\mathds{1}_{|\hat \eta_i| \le \tau, \|\hat \gamma_n - \gamma_0\| \le a_n, |\hat \eta_i - \eta_i| > 1}\right\| \\
& \le \frac2n \sum_i \left\|\bW^*_{1,i*}- g(0, \eta_i) \right\| \left| \eta_i - \hat \eta_i\right| \\
& \qquad \qquad \qquad + \frac2n \sum_i \left\|\bW^*_{1,i*} - g(0, \eta_i) \right\|\left\|g(0, \eta_i)- g(0, \hat \eta_i)\right\|\mathds{1}_{|\hat \eta_i| \le \tau, \|\hat \gamma_n - \gamma_0\| \le a_n, |\hat \eta_i - \eta_i| > 1} \\
& \le \|\hat \gamma_n - \gamma_0\|\frac2n \sum_i \left\|\bW^*_{1,i*} - g(0, \eta_i) \right\| \|Z_i\| \\
& \qquad \qquad \qquad + \frac2n \sum_i \left\|\bW^*_{1,i*} - g(0, \eta_i) \right\|\left\|g(0, \eta_i)- g(0, \hat \eta_i)\right\|\mathds{1}_{|\hat \eta_i| \le \tau, \|\hat \gamma_n - \gamma_0\| \le a_n, |\hat \eta_i - \eta_i| > 1} 
\end{align*}
That the first term is $o_p(1)$ is immediate. For the second term, note that: 
$$
\frac1n \sum_i  \left\|\bW^*_{1,i*} - g(0, \eta_i) \right\|\left\|g(0, \eta_i)- g(0, \hat \eta_i)\right\| = O_p(1)
$$  
and $\bbP\left(\left\|\hat \gamma_n - \gamma_0\right\| \le a_n, |\hat \eta_i - \eta_i| > 1\right)  \longrightarrow 0 \,.$ This finishes the proof of $T_1$, i.e. we have established: 
\begin{equation}
\label{eq:T1_matrix}
T_1 \overset{P}{\longrightarrow} \frac13 \bbE\left[\left(\bW^*_{1,1*} - \bbE(\bW^*_{1,1*} \,\middle\vert\,\eta)\right)\left(\bW^*_{1,1*} - \bbE(\bW^*_{1,1*} \,\middle\vert\,\eta)\right)^{\top}\mathds{1}_{|\eta| \le \tau}\right]
\end{equation}
\\\\
For $T_3$ in equation \eqref{eq:decomp_main_1} we have for any $1 \le j, k \le 1+p_1+p_2$: 
\begin{align}
& \left|\frac{(\bG - \bH)^{\top}\proj^{\perp}_{\tilde \bN_K}(\bG - \bH)}{n}\right|_{j, k}\mathds{1}_{\|\hat \gamma_n - \gamma_0\| \le a_n}  \notag \\
& \le \left|\frac{\langle (\bG - \bH)_{*j}, \proj^{\perp}_{\tilde \bN_K}(\bG - \bH)_{*k}\rangle}{n}\right| \notag \\
& \le \sqrt{\frac{\left\|(\bG - \bH)_{*j}\right\|^2}{n}\frac{\left\|(\bG - \bH)_{*k}\right\|^2}{n}} \notag \\
& \le \sup_{|x| \le \tau} \left|g_j(\hat \gamma_n - \gamma_0, t) - \tilde N_K(t)^{\top}\omega_{j, \infty}\right| \times \sup_{|x| \le \tau} \left|g_k(\hat \gamma_n - \gamma_0, t) - \tilde N_K(t)^{\top}\omega_{k, \infty}\right|  \notag \\
\label{eq:G-H}  & = O_p(K^{-2})= o_p(1) \hspace{0.2in} [\text{From equation }\ref{eq:g_spline_approx}]\,.
\end{align}

\noindent
Similarly for $T_4$ in equation \eqref{eq:decomp_main_1} and for any $1 \le j, k \le 1+p_1+p_2$: 
\allowdisplaybreaks
\begin{align}
& \left|\left(\frac{(\bG - \bH)^{\top}\proj^{\perp}_{\tilde \bN_K}(\bW^*_1 - \bG)}{n}\right)_{j, k}\right|\mathds{1}_{\|\hat \gamma_n - \gamma_0\| \le a_n}  \notag \\
& \le \left|\frac{\langle (\bG - \bH)_{*j}, \proj^{\perp}_{\tilde \bN_K}(\bW^*_1 - \bG)_{*k}\rangle}{n}\right| \notag \\
& \le  \sqrt{\frac{\left\|(\bG - \bH)_{*j}\right\|^2}{n}\frac{\left\|(\bW^*_1 - \bG)_{*k}\right\|^2}{n}} \notag \\
& \le \sup_{|x| \le \tau} \left|g_j(\hat \gamma_n - \gamma_0, t) - \tilde N_K(t)^{\top}\omega_{j, \infty}\right| \sqrt{\frac{\left\|(\bW^*_1 - \bG)_{*k}\right\|^2}{n}} \notag \\
\label{eq:t4_normality_s1} & = \sup_{|x| \le \tau} \left|g_j(\hat \gamma_n - \gamma_0, t) - \tilde N_K(t)^{\top}\omega_{j, \infty}\right| \times O_p(1) = o_p(1) \,.
\end{align}

\noindent
where $\left\|(\bW^*_1 - \bG)_{*k}\right\|^2/n = O_p(1)$ follows from law of large numbers and the uniform spline approximation error is $o_p(1)$ follows from equation \eqref{eq:g_spline_approx}. Finally, for $T_2$ in equation \eqref{eq:decomp_main_1}, first recall that $\bW^*_1$ consists of all the rows for which $|\hat \eta_i| \le \tau$. We can extend this matrix $\bW^{*, f}_1 \in \bbR^{(n, 1+p_1+p_2)}$ as: 
$$
\bW^{*, f}_{1, i*} = \begin{bmatrix}S_i & \bX_{i*} & b'(\eta_i)\bZ_{i*}\end{bmatrix}\mathds{1}_{|\hat \eta_i| \le \tau}. 
$$
The matrix $\bW^{*, f}_1$ is exactly $\bW^*_1$ appended with $0$'s in the rows where $|\hat \eta_i| > \tau$. Similarly, we can define $\bG^f$ with $\bG^f_{i*} = \bbE\left[\bW^{*, f}_{1, i*} \,\middle\vert\,\hat \eta = \hat \eta_i, \cF(\cD_1)\right]$ and $\tilde N^f_K$ as the basis matrix with $\tilde N^f_{K, i*} = \tilde N_k(\hat \eta_i) \mathds{1}_{|\hat \eta_i| \le \tau}$. It is easy to see that for any $1 \le j \le 1+p_1 +p_2$:
\begin{align*}
& \frac{(\bW^*_1 - \bG)_{*, j}^{\top}\proj_{\bN_K}(\bW^*_1 - \bG)_{*, j}}{n}\mathds{1}_{\|\hat \gamma_n - \gamma_0\| \le a_n} \\
& = \frac{(\bW^{*, f}_1 - \bG^f)_{*, j}^{\top}\proj_{\tilde \bN^f_K}(\bW^{*, f}_1 - \bG^f)_{*, j}}{n}\mathds{1}_{\|\hat \gamma_n - \gamma_0\| \le a_n}
\end{align*}
Hence we can bound $(j, j)^{th}$ term of $T_2$ as follows \footnote{We use Lemma \ref{lem:cond_prob} here to conclude that a sequence of non-negative random variables is $o_p(1)$ if their conditional expectation is $o_p(1)$.}: 
\begin{align}
& \bbE\left(\frac{(\bW^*_1 - \bG)_{*, j}^{\top}\proj_{\bN_K}(\bW^*_1 - \bG)_{*, j}}{n}\mathds{1}_{\|\hat \gamma_n - \gamma_0\| \le a_n} \,\middle\vert\,\cF(\cD_1) \right) \notag \\
& = \bbE\left(\frac{(\bW^{*, f}_1 - \bG^f)_{*, j}^{\top}\proj_{\tilde \bN^f_K}(\bW^{*, f}_1 - \bG^f)_{*, j}}{n}\mathds{1}_{\|\hat\gamma_n - \gamma_0\| \le a_n} \,\middle\vert\,\cF(\cD_1)\right) \notag \\
& = \tr\left(\bbE\left(\frac{(\bW^{*, f}_1 - \bG^f)_{*, j}^{\top}\proj_{\tilde \bN^f_K}(\bW^{*, f}_1 - \bG^f)_{*, j}}{n}\mathds{1}_{\|\hat\gamma_n - \gamma_0\| \le a_n}  \,\middle\vert\,\cF(\cD_1) \right) \right) \notag \\
& \le \frac{1}{n}\bbE\left(\tr\left(\proj_{\tilde \bN^f_K}\bbE\left((\bW^{*, f}_1 - \bG^f)_{*, j}(\bW^{*, f}_1 - \bG^f)_{*, j}^{\top} \,\middle\vert\,\cF(\hat \boeta,\cD_1) \right)\right)\right) \hspace{0.1in} [\hat \boeta = \{\hat \eta_i\}_{i=1}^{n/3} \in \cD_3]\notag \\
\label{eq:W*-G} & \le \frac{K+3}{n}\sup_{|t| \le \tau} \var\left(\begin{bmatrix}S & X & b'(\eta)Z \end{bmatrix}\,\middle\vert\,\hat \eta = t, \cF(\cD_1)\right)\mathds{1}_{\|\hat \gamma_n - \gamma_0\| \le a_n} = O_p\left(\frac{K}{n}\right) = o_p(1) \,.
\end{align}
where we use Lemma \ref{lem:trace_ineq} along with the fact that $\tr\left(\proj_{\tilde \bN_K}\right) = K+2$. The finiteness of the conditional variance follows from Lemma \ref{lem:g}. Combining our findings from equation \eqref{eq:T1_matrix}, \eqref{eq:G-H}, \eqref{eq:t4_normality_s1} and \eqref{eq:W*-G} we conclude \eqref{eq:step11} which along with \eqref{eq:step12} concludes: 
\begin{align*}
\left(\frac{\bW^{\top}\proj^{\perp}_{\bN_{K, a}}\bW}{n}\right) & \overset{P}{\longrightarrow}  \frac13 \bbE\left[\var\left(\begin{bmatrix} s & \bx & \bz b'(\eta)\end{bmatrix}\,\middle\vert\,\eta\right)\mathds{1}_{|\eta| \le \tau}\right]  + \frac13 \begin{bmatrix}
0 & 0 & 0 \\
0 & 0 & 0 \\
0 & 0 & \Sigma_Z 
\end{bmatrix} := \frac13 \Omega_{\tau} 
\end{align*}
As $\Omega_{\tau} \to \Omega_{\infty}$, using Assumption \ref{assm:eigen} we conclude that the minimum eigenvalue of $\Omega_{\tau}$ is positive.  
\\\\

\noindent
\subsection{Proof of Step 2 } 
\label{sec:s2_main}
\vspace{0.1in}
\noindent
Define $\cA_n$ to be generated by $\cD_1, \cD_2$ and $\{(X_i, Z_i, \eta_i)\}_{i=1}^{n/3}$ in $\cD_3$. We start with the following decomposition: 
$$
\frac{\bW^{\top} \proj^{\top}_{\tilde \bN_{K, a}}}{\sqrt{n}}\begin{pmatrix}
\beps \\ \eta
\end{pmatrix} = \frac{\bW_1^{\top}\proj^{\perp}_{\tilde \bN_K}\beps}{\sqrt{n}} +\frac{\bW_2^{\top}\eta}{\sqrt{n}}
$$
The \emph{asymptotic linear expansion} of the second summand is immediate: 
$$
\frac{\bW_2^{\top}\eta}{\sqrt{n}} = \frac{1}{\sqrt{n}}\sum_{i=1}^{n/3}\begin{bmatrix} 0 & 0 & \eta_iZ_i\end{bmatrix}
$$
Recall from the definition of $\bG$ that the $\bG_{i*} = g(\hat \gamma_n - \gamma_0, \hat \eta_i)$. Define another matrix $\bG^*$ as $\bG^*_i = g(0, \eta_i)$. For the first summand, we decompose it as follows: 
\begin{align}
\frac{\bW_1^{\top}\proj^{\perp}_{\tilde N_K}\eps}{\sqrt{n}} & =\frac{\bW_1^{^*{\top}}\proj^{\perp}_{\tilde N_K}\beps}{\sqrt{n}} +  \frac{(\bW_1 - \bW_1^*)^{\top}\proj^{\perp}_{\tilde N_K}\beps}{\sqrt{n}} \notag \\
& = \frac{(\bW_1^* - \bG^*)^{\top}\beps}{\sqrt{n}} +  \frac{(\bW_1 - \bW_1^*)^{\top}\proj^{\perp}_{\tilde N_K}\beps}{\sqrt{n}} \notag \\
& \hspace{8em} +  \frac{(\bW_1^* - \bG)^{\top}\proj_{\tilde N_K}\beps}{\sqrt{n}} + \frac{(\bG - \bH)^{\top}\proj^{\perp}_{\tilde N_K}\beps}{\sqrt{n}}  + \frac{( \bG^* - \bG)^{\top}\beps}{\sqrt{n}} \notag \\
\label{eq:weps} & := T_1 + T_2 + T_3 + T_4 + T_5
\end{align}
We show that $T_2, T_3, T_4, T_5$ are all $o_p(1)$. This will establish:
\begin{equation}
\label{eq:ale_1}
\frac{\bW_1^{\top}\proj^{\perp}_{\tilde N_K}\eps}{\sqrt{n}} =  \frac{(\bW_1^* - \bG^*)^{\top}\beps}{\sqrt{n}} + o_p(1) \,.
\end{equation}
For $T_2$ note that for any $p_1+2 \le j \le p_1 + p_2 + 1$: 
\begin{align*}
& \bbE\left(\left(\frac{(\bW_1 - \bW_1^*)_{*j}^{\top}\proj^{\perp}_{\tilde N_K}\beps}{\sqrt{n}}\right)^2 \,\middle\vert\, \cF(\cD_1, \cD_2)\right) \\
& = \frac1n \bbE\left[(\bW_1 - \bW_1^*)_{*j}^{\top}\proj^{\perp}_{\tilde N_K}\bbE\left(\beps \beps^{\top} \,\middle\vert\, \cA_n \right)\proj^{\perp}_{\tilde N_K}(\bW_1 - \bW_1^*)_{*j} \,\middle\vert\, \cF(\cD_1, \cD_2)\right] \\
& \le \sup_{\eta}\var\left(\eps \,\middle\vert\,\eta\right)  \times \bbE\left[\frac{(\bW_1 - \bW_1^*)_{*j}^{\top}(\bW_1 - \bW_1^*)_{*j}}{n} \,\middle\vert\, \cF(\cD_1, \cD_2)\right] \\
& = \sup_{\eta}\var\left(\eps \,\middle\vert\, \eta\right) \times o_p(1) \hspace{0.2cm} [\text{From equation }\eqref{eq:W*}]
\end{align*}
Now for $T_3$, for any $1 \le j \le 1+p_1+p_2$ by similar calculations:
\begin{align*}
& \bbE\left[\left(\frac{(\bW_1^* - \bG)^{\top}\proj_{\tilde N_K}\beps}{\sqrt{n}}\right)^2 \,\middle\vert\,\cA_n\right] \\
& \le \sup_{\eta}\var\left(\eps \,\middle\vert\,\eta\right) \times \frac{(\bW_1^* - \bG)^{\top}\proj_{\tilde N_K}(\bW_1^* - \bG)}{n} \\
& = \sup_{\eta}\var\left(\eps \,\middle\vert\,\eta\right) \times o_p(1) \hspace{0.2in} [\text{From equation }\eqref{eq:W*-G}]
\end{align*}   
For $T_4$, for $1 \le j \le 1+p_1 + p_2$: 
\begin{align*}
& \bbE\left[\left(\frac{(\bG - \bH)^{\top}\proj^{\perp}_{\tilde N_K}\beps}{\sqrt{n}}\right)^2 \,\middle\vert\,\cA_n\right] \\
& \le \sup_{\eta}\var\left(\eps \,\middle\vert\,\eta\right) \times \frac{(\bG - \bH)^{\top}(\bG - \bH)}{n} \\
& = \sup_{\eta}\var\left(\eps \,\middle\vert\,\eta\right) \times o_p(1) \hspace{0.2in} [\text{From equation }\eqref{eq:G-H}]
\end{align*}
Now for $T_5$, using similar technique we have for any $1 \le j \le p_1 + p_2 + 1$: 
\begin{align*}
& \bbE\left[\left( \frac{(\bG - \bG^*)_{*j}^{\top}\eps}{\sqrt{n}}\right)^2 \,\middle\vert\,\cA_n\right] \\
& \le \sup_{\eta}\var\left(\eps \,\middle\vert\,\eta\right) \times \frac{\|\bG - \bG^*\|^2}{n} \\
&  \le \sup_{\eta}\var\left(\eps \,\middle\vert\,\eta\right) \times \frac1n \sum_{i=1}^{n/3} \left(g_j(\hat \gamma_n - \gamma_0, \hat \eta_i) - g_j(0, \eta_i) \right)^2\mathds{1}_{\|\hat \eta_i\| \le \tau}  \\
&  \le \sup_{\eta}\var\left(\eps \,\middle\vert\,\eta\right) \times \left[\frac1n \sum_{i=1}^{n/3} \left(g_j(\hat \gamma_n - \gamma_0, \hat \eta_i) - g_j(0, \eta_i) \right)^2\mathds{1}_{\|\hat \eta_i\| \le \tau, \|\hat \gamma_n - \gamma_0\| \le a_n}  \right. \\ 
& \qquad \qquad \qquad \qquad \qquad \qquad + \left. \frac1n \sum_{i=1}^{n/3} \left(g_j(\hat \gamma_n - \gamma_0, \hat \eta_i) - g_j(0, \eta_i) \right)^2\mathds{1}_{\|\hat \eta_i\| \le \tau, \|\hat \gamma_n - \gamma_0\| > a_n} \right]
\end{align*}
The first term inside the square bracket is $o_p(1)$ from the boundedness of the partial derivatives of $g$ with respect to both $a$ and $t$. The second term inside the square bracket inside the square bracket is $o_p(1)$ because $(1/n)\sum_{i=1}^{n/3} \left(g_j(\hat \gamma_n - \gamma_0, \hat \eta_i) - g_j(0, \eta_i) \right)^2 =O_p(1)$ and $\bbP(\|\hat \gamma_n - \gamma_0\| > a_n) = o(1)$.
\\\\
\noindent
Finally we show that: 
\begin{align}
\frac{(\bW_1^* - \bG^*)^{\top}\beps}{\sqrt{n}} & = \frac{1}{\sqrt{n}}\sum_{i=1}^n \left(\bW^*_{1, i*} - \bG^*_{i*}\right)\eps_i\mathds{1}_{|\hat \eta_i|  \le \tau}  \notag \\
\label{eq:ale_2} & = \frac{1}{\sqrt{n}}\sum_{i=1}^n \left(\bW^*_{1, i*} - \bG^*_{i*}\right)\eps_i\mathds{1}_{|\eta_i|  \le \tau}  + o_p(1)
\end{align}
This along with equation \eqref{eq:ale_1} concludes: 
\begin{align*}
\frac{\bW_1^{\top}\proj^{\perp}_{\tilde N_K}\eps}{\sqrt{n}}  & = \frac{1}{\sqrt{n}}\sum_{i=1}^n \left(\bW^*_{1, i*} - \bG^*_{i*}\right)\eps_i\mathds{1}_{|\eta_i|  \le \tau}  + o_p(1) \\
& = \frac{1}{\sqrt{n}}\sum_{i=1}^n \left\{\begin{bmatrix} S_i & X_i & b'(\eta_i)Z_i \end{bmatrix} - \bbE\left[\begin{bmatrix} S_i & X_i & b'(\eta_i)Z_i \end{bmatrix} \,\middle\vert\,\eta = \eta_i\right]\right\}\eps_i \mathds{1}_{|\eta_i| \le \tau} + o_p(1)
\end{align*} 
Taking the function $\varphi$ as: 
\begin{align*}
\varphi(X_i, Z_i, \eta_i, \nu_i) & = \left\{\begin{bmatrix} S_i & X_i & b'(\eta_i)Z_i \end{bmatrix} - \bbE\left[\begin{bmatrix} S_i & X_i & b'(\eta_i)Z_i \end{bmatrix} \,\middle\vert\,\eta = \eta_i\right]\right\}\eps_i \mathds{1}_{|\eta_i| \le \tau} + \begin{bmatrix}0 & 0 & Z_i\eta_i \end{bmatrix}
\end{align*}
we conlcude the proof of Step 2. 
All it remains to prove is equation \eqref{eq:ale_2}. Define the event $\Delta_i$ as: 
$$
 \Delta_i = \{|\hat \eta_i| \le \tau \cap |\eta_i| > \tau\} \cup \{|\hat \eta_i| > \tau \cap |\eta_i| \le \tau\} \,.
$$
Now for any $1 \le j \le 1+p_1 + p_2$: 
\begin{align*}
&\bbE\left[\left(\frac{1}{\sqrt{n}}\sum_{i=1}^{n/3} \left(\bW^*_{1, i, j} - \bG^*_{i, j}\right)\eps_i\left(\mathds{1}_{|\hat \eta_i|  \le \tau} - \mathds{1}_{|\eta_i|  \le \tau} \right) \right)^2 \,\middle\vert\,\cF(\cD_1) \right] \\
&\bbE\left[\bbE\left[\left(\frac{1}{\sqrt{n}}\sum_{i=1}^{n/3} \left(\bW^*_{1, i, j} - \bG^*_{i, j}\right)\eps_i\left(\mathds{1}_{|\hat \eta_i|  \le \tau} - \mathds{1}_{|\eta_i|  \le \tau} \right)\right)^2 \,\middle\vert\,\cA_n\right] \,\middle\vert\,\cF(\cD_1)\right] \\
& \le \sup_{\eta}\var\left(\eps \,\middle\vert\,\eta\right) \times \frac{1}{n}\sum_{i=1}^{n/3} \bbE\left[\left(\bW^*_{1, i, j} - \bG^*_{i, j}\right)^2\mathds{1}_{ \Delta_i} \,\middle\vert\,\cF(\cD_1)\right] \\
& \le  \sup_{\eta}\var\left(\eps \,\middle\vert\,\eta \right) \times \frac{1}{n}\sum_{i=1}^{n/3} \sqrt{\bbE\left[\left(\bW^*_{1, i, j} - \bG^*_{i, j}\right)^4\right]\bbP\left( \Delta_i \vert \cF(\cD_1)\right)} \\
& \lesssim \frac1n \sum_{i=1}^{n/3} \sqrt{\bbP\left( \Delta_i \vert \cF(\cD_1) \right)}  = o_p(1)\,.
\end{align*} 
\\

\noindent
\subsection{Proof of Step 3 } 
\label{sec:s3_main}
\noindent
From the definition of $\varphi$ is Step 2 and the definition of $\Omega_{\tau}^*$, Step 3 immediately follows from a direct application of Central Limit theorem. 
\\

\noindent
\subsection{Proof of Step 4 }
\label{sec:s4_main}
\vspace{0.1in}
\noindent
In this subsection we prove that: 
$$
\frac{\bW^{\top}\proj^{\perp}_{\tilde \bN_{K, a}}}{\sqrt{n}}\begin{pmatrix} \bR \\ 0 \end{pmatrix} = \frac{\bW_1^{\top}\proj^{\perp}_{\tilde \bN_K}\bR}{\sqrt{n}} = o_p(1)  \,.
$$
From our discussion in Section \ref{sec:est_method}, the residual vector can be expressed as the sum of three residual term $\bR = \bR_1 + \bR_2 + \bR_3$, where 
$\bR_{1, i} = \frac12 (\hat \eta_i - \eta_i)^2 b''(\tilde \eta_i), \bR_{2, i} = (b'(\hat \eta_i) - \hat b'(\hat \eta_i))(\hat \eta_i - \eta_i)$ and $\bR_{3, i} = (b(\hat \eta_i) - \tilde N_K(\hat \eta_i)^{\top}\omega_{b, \infty, n})$. We show that each all these terms is asymptotically negligible. For the first term, we can write is as:
\begin{align*}
\bbE\left[\sum_{i=1}^{n_3} \bR_{1, i}^2 \,\middle\vert\,\cF(\cD_1)\right] & =   \frac14 \sum_{i=1}^{n/3}\bbE\left[(\hat \eta_i - \eta_i)^4 \left(b''(\tilde \eta_i)\right)^2\mathds{1}_{
|\hat \eta_i|\le \tau} \,\middle\vert\,\cF(\cD_1)\right]  \\
& \le  \frac14 \sum_{i=1}^{n/3}\bbE\left[(\hat \eta_i - \eta_i)^4 \left(b''(\tilde \eta_i)\right)^2 \,\middle\vert\,\cF(\cD_1)\right] \\
& \le \frac{1}{12} \|b''\|_{\infty}^2 \times n\|\hat \gamma_n - \gamma_0\|^4 \times \bbE[\|Z\|^4] = O_p(n^{-1}) \,.
\end{align*}
This implies $\|\bR_1\|_2 =\sqrt{\sum_{i=1}^{n_3}R_{1,i}^2} = O_p(n^{-1/2})$ and consequently we have for any $1 \le j \le 1 + d_1 + d_2$: 
\begin{align*}
\left|\frac{\bW_{1, *j}^{\top}\proj^{\perp}_{\tilde \bN_K}\bR_1}{\sqrt{n}}\right| & \le \frac{\bW_{1, *j}^{\top}\proj^{\perp}_{\tilde \bN_K}\bW_{1, *j}}{n} \|\bR_1\|_2  \\
& = O_p(1) \times O_p(n^{-1/2}) = o_p(1) \,.
\end{align*}
where $\bW_{1, *j}^{\top}\proj^{\perp}_{\tilde \bN_K}\bW_{1, *j}/n = O_p(1)$ has been proved in the proof of Step 1 (see equation \eqref{eq:step11}). For the second residual term, we have: 
\begin{align*}
\bbE\left[\sum_{i=1}^{n_3} R_{2, i}^2 \,\middle\vert\,\cF(\cD_1, \cD_2)\right] & = \bbE\left[\sum_{i=1}^{n/3} R_{2, i}^2\mathds{1}_{|\hat \eta_i| \le \tau} \,\middle\vert\,\cF(\cD_1, \cD_2)\right] \\  & = \sum_{i=1}^{n/3}\bbE\left[(b'(\hat \eta_i) - \hat b'(\hat \eta_i))^2(\hat \eta_i - \eta_i)^2\mathds{1}_{|\hat \eta_i| \le \tau} \,\middle\vert\,\cF(\cD_1, \cD_2)\right] \\
& \le \|\hat \gamma_n - \gamma_0\|^2 \sum_{i=1}^{n/3}\bbE\left[(b'(\hat \eta_i) - \hat b'(\hat \eta_i))^2\|Z_i\|^2\mathds{1}_{|\hat \eta_i| \le \tau} \,\middle\vert\,\cF(\cD_1, \cD_2)\right] \\
& \le \frac{n}{3}\|\hat \gamma_n - \gamma_0\|^2 \sup_{|t| \le \tau} \left|\hat b'(t) - b'(t)\right| \bbE\left[\|Z\|^2\right] 
\end{align*}
Now from Proposition \ref{thm:spline_consistency}, we have $\sup_{|t| \le \tau} \left|\hat b'(t) - b'(t)\right| = o_p(1)$ and from OLS properties we have: $ n\|\hat \gamma_n - \gamma_0\|^2 = O_p(1)$. Combining this, we conclude that $\bbE\left[\sum_{i=1}^{n_1} R_{2, i}^2\right] = o_p(1)$. Now we have: 
\begin{align*}
\left|\frac{\bW_{1, *j}^{\top}\proj^{\perp}_{\tilde \bN_K}\bR_2}{\sqrt{n}}\right| & \le \frac{W_{1, *j}^{\top}\proj^{\perp}_{\tilde \bN_K}W_{1, *j}}{n} \|\bR_2\|_2  \\
& = O_p(1) \times o_p(1) = o_p(1) \,.
\end{align*}
For the final residual term $\bR_3$ (residual obtained by approximating the mean function via B-spline basis) define $\bR_3^f$ to be the extended version of $\bR_3$ putting 0 in the places where $|\hat \eta_i| > \tau$, i.e. $\bR_3 \in \bbR^{n_1}$, whereas $\bR_3^f \in \bbR^{n/3}$. Using this we have: 
\begin{align*}
\frac{\bW_{1, *j}^{\top}\proj^{\perp}_{\tilde \bN_K}\bR_3}{\sqrt{n}} & = \frac{(\bW_{1, *j} - \bG_{*, j})^{\top}\proj^{\perp}_{\tilde \bN_K}\bR_3}{\sqrt{n}} + \frac{(\bG_{*,j}-\bH_{*, j})^{\top}\proj^{\perp}_{\tilde \bN_K}\bR_3}{\sqrt{n}}
\end{align*} 
where $\bG, \bH$ are same as defined in Subsection \ref{sec:s1_main} (just after equation \eqref{eq:g_spline_approx}). For the first summand above, we have: 
\begin{align*}
& \bbE\left[\left(\frac{(\bW_{1, *j} - \bG_{*j})^{\top}\proj^{\perp}_{\tilde N_K}\bR_3}{\sqrt{n}}\right)^2 \,\middle\vert\,\cF(\cD_1, \cD_2)\right] \\
& \bbE\left[\left(\frac{(\bW^f_{1, *j} - \bG^f_{*j})^{\top}\left(I - \proj_{\tilde N^f_K}\right)\bR^f_3}{\sqrt{n}}\right)^2 \,\middle\vert\,\cF(\cD_1, \cD_2)\right] \\
& = \frac1n \bbE\left(\bR_3^{f^{\top}}\proj^{\perp}_{\bN^f_K}(\bW^f_{1, *j} - \bG_{*j})(\bW^f_{1, *j} - \bG^f_{*j})^{\top}\proj^{\perp}_{\bN^f_K}\bR^f_3 \,\middle\vert\,\cF(\cD_1, \cD_2) \right) \\
& = \frac1n \bbE\left(\bR_3^{f^{\top}}\proj^{\perp}_{\tilde N^f_K}\bbE\left[(\bW^f_{1, *j} - \bG^f_{*j})(\bW^f_{1, j} - \bG^f_{*j})^{\top} \,\middle\vert\, \cF(\hat \boeta, \cD_1, \cD_2) \right]\proj^{\perp}_{\bN^f_K}\bR^f_3 \,\middle\vert\,\cF(\cD_1, \cD_2)\right) \\
& \le \sup_{|t| \le \tau} \var\left(\begin{bmatrix}S & X & b'(\eta)Z \end{bmatrix}\,\middle\vert\,\hat{\eta} = t, \cF(\cD_1)\right) \bbE\left[\|\bR_3\|^2/n \,\middle\vert\,\cF(\cD_1, \cD_2)\right] \\
& \le \sup_{|t| \le \tau} \var\left(\begin{bmatrix}S & X & b'(\eta)Z \end{bmatrix}\,\middle\vert\,\hat \eta = t, \cF(\cD_1)\right) \times \sup_{|t| \le \tau} \left|\hat b'(t) - b(t)\right|^2   \\
& = \sup_{|t| \le \tau} V(\hat \gamma_n - \gamma_0, t) \times \sup_{|t| \le \tau} \left|\hat b'(t) - b(t)\right|^2 = o_p(1) \,.
\end{align*}
where $\hat \boeta$ is $\{\hat \eta_i\}_{i=1}^{n/3}$ from $\cD_3$. The last line follows from continuity of $V(a, t)$ (Lemma \ref{lem:g}) and Proposition \ref{thm:spline_consistency}. 
\begin{align*}
\frac{(\bG_{*,j}-\bH_{*, j})^{\top}\proj^{\perp}_{\tilde \bN_K}\bR_3}{\sqrt{n}} & \le \frac{1}{\sqrt{n}}\left\|\bG_{*j} - \bH_{*j}\right\|\|\bR_3\| \\
& = \sqrt{n}\sqrt{\frac{\left\|\bG_{*j} - \bH_{*j}\right\|^2}{n}\frac{\|\bR_3\|^2}{n}} \\
& \le \sqrt{n} \sup_{|x| \le \tau}\left| g_j(\hat \gamma_n - \gamma_0, x) - \tilde N_K(x)^{\top}\omega_{j,\infty}\right| \times \sup_{|x| \le \tau}\left|b(x) - \tilde N_K(x)^{\top}\delta_{0, n}\right| \\
& \lesssim \sqrt{n}/K^4 = o(1) \,.
\end{align*}
The last approximation follows from Theorem \ref{thm:spline_bias} (with $l = r = 0$ for $g_j$ and $l=2, r = 0$ for $b$) along with Remark \ref{rem:K}, which completes the proof of the asymptotic negligibility of the residuals.

\section{Discussion on semi-parametric efficiency and proof of Theorem \ref{thm:sem_eff}}
\subsection{Some basics of semi-parametric efficiency calculations} 
\label{sec:semiparam_eff}
\vspace{0.1in}
\noindent
\noindent
We first present a brief sketch of our approach for the convenience of the general reader. Our proof is based on the techniques introduced in Section 3.3. of \cite{bickel1993efficient}, albeit we sketch the main ideas here for the convenience of the readers. Suppose $X_1, \dots, X_n \sim P_0 \in \cP$ with density function $p_0$. We will work with $\sqrt{p_0}$ instead of $p_0$ itself, as $s_0 := \sqrt{p_0}$ lies on the unit sphere of $L_2(\bbR)$ (with respect to Lebesgue measure) and gives rise to a nice Hilbert space. Suppose, we are interested in estimating a one dimensional functional $\theta(s_0)$, where $\theta: L_2(\bbR) \to \bbR$. As initially pointed out by Stein \cite{stein1956efficient}, estimating any one dimensional functional of some non-parametric component is at least as hard as estimating the functional by restricting oneself to an one-dimensional parametric sub-model that contains the true parameter. In general, any one dimensional smooth parametrization, i.e. a function from say $\varphi: (-t_0, t_0) \to L_2(\bbR)$ ($t_0 > 0$) with $\varphi(0) = s_0$ introduces a one dimensional parametric sub-model, which essentially is a curve on the unit sphere of $L_2(\bbR)$ passing through $s_0$. We restrict ourselves to \emph{regular parametrizations},  which are differentiable on $(-t_0, t_0)$ in the following sense: for any $|t| < t_0$, there exists some function $\dot s_{\varphi, t} \in L^2(\bbR)$ such that, 
$$
\lim_{h \to 0} \left\|\frac{\varphi(t+h) - \varphi(t)}{h} - \dot s_{\varphi, t}\right\|_{L^2(\bbR)} = 0\,.
$$
with $\|\dot s_{\varphi, t}\|_{L^2(\bbR)} > 0$. Let $\cG$ be the set of all such regular parametrizations. Under mild conditions, this derivative also coincides with the pointwise derivative of $\varphi(t)$ with respect to $t$, i.e. $\dot s_{\varphi, t}(x) = (d/dt) \varphi(t)(x)$. As those conditions are easily satisfied in our model, henceforth we use this fact in our derivations. Define the \emph{tangent set} of $\cP$ at $P_0$ as $\dot \cP_{P_0} = \{\dot s_{\varphi, 0}: \varphi \in \cG\}$ and the tangent space $T(P_0) = \overline{lin}(\dot \cP_{P_0})$, the closed linear subspace spanned by $\dot \cP_{P_0}$. 
\\\\

\noindent
We restrict the discussion to the functional $\theta$ that obeys the \emph{pathwise norm differentiabilty} condition, which asserts the existence of a bounded linear functional $\L: T(P_0) \to \bbR$ such that, for any $\varphi \in \cG$: 
$$
\L(\dot s_{\varphi, 0}) := \lim_{t \to 0} \frac{\theta(\varphi(t)) - \theta(\varphi(0))}{t}  \,.
$$
Now, for any fixed $\varphi \in \cG$, the collection $\{P_t\}_{|t| < t_0}$ is the one-dimensional regular parametric sub-model, where $P_t$ is the probability measure corresponding to $\varphi(t)$. Hence for this fixed $\varphi$, one may also view $\tilde \theta_\varphi$ as a function from $(-t_0, t_0) \mapsto \bbR$ via the identification $\tilde \theta_{\varphi}(t) = \theta(\varphi(t))$ and our parameter of interest is $\tilde \theta_\varphi(0)$. The information bound (henceforth denoted by IB) for estimating $\tilde \theta_\varphi(0)$ is: 
\allowdisplaybreaks
\begin{align*}
IB(\varphi) = \frac{\left(\tilde \theta_\varphi'(0)\right)^2}{\bbE\left[\left(\frac{d}{dt}\log{p_{t}(X)}\mid_{t = 0}\right)^2\right]} & = \frac{\left(\frac{d}{dt}\tilde \theta_\varphi(t) \mid_{t = 0}\right)^2}{\bbE\left[\left(\frac{d}{dt}\log{s^2_{t}(X)}\mid_{t = 0}\right)^2\right]} \\
& = \frac{\left(\frac{d}{dt}\theta(\varphi(t)) \mid_{t = 0}\right)^2}{4\bbE\left[\left(\frac{d}{dt}\log{s_{t}(X)}\mid_{t = 0}\right)^2\right]}  \\
& = \frac{\left(\L(\dot s_{\varphi, 0})\right)^2}{4\int \dot s_{\varphi, 0}(x)^2 \ dx} \\
& =  \frac{\left(\L(\dot s_{\varphi, 0})\right)^2}{\| \dot s_{\varphi, 0} \|^2_F} = \L^2 \left(\frac{\dot s_{\varphi, 0}}{\| \dot s_{\varphi, 0} \|_F}\right) \\
\end{align*}
where $\| \cdot \|_F = 2 \| \cdot \|_{L_2(\bbR)}$ is the Fisher norm (\cite{severini2001simplified}, \cite{wong1991maximum}) and the last equality follows from the fact that $\L$ is a bounded linear operator. 
The \emph{optimal asymptotic variance} (a term borrowed from \cite{van2000asymptotic}) for estimating $\theta(s_0)$ is defined as the supremum of all these Cramer-Rao lower bounds $IB(\varphi)$ over all regular one dimensional parametrization $\varphi \in \cG$, i.e.: 
\begin{align*}
\text{Optimal asymptotic variance} 
& = \sup_{\dot s_{\varphi, 0} \in T(P_0)} \L^2 \left(\frac{\dot s_{\varphi, 0}}{\| \dot s_{\varphi, 0} \|_F}\right) \\
& = \left(\sup_{\dot s_{\varphi, 0} \in T(P_0): \|\dot s_{\varphi, 0}\|_F = 1} \L \left(\dot s_{\varphi, 0}\right)\right)^2 =   \|\L\|_*^2 \,.
\end{align*}
where $\| \cdot \|_*$ is the dual norm of the functional $\L$ on $T(P_0)$ with respect to Fisher norm. As $\L$ is a bounded linear functional on the Hilbert space $T(p_0)$, by Reisz representation theorem, there exists some $s^* \in T(p_0)$ such that: 
$$
\L(\dot s_{\varphi, 0}) = \langle s^*, \dot s_{\varphi, 0} \rangle_F \ \ \forall \ \  \dot s_{\varphi, 0} \in T(P_0) \,. 
$$
where $\langle \cdot, \cdot \rangle_F$ is $2 \langle \cdot, \cdot \rangle_{L_2(\bbR)}$. This further implies $\|\L\|_* = \|s^*\|_{F}$ and consequently, the information bound corresponding to the hardest one dimensional parametric sub-model is: 
$$
\text{Optimal asymptotic variance}  = \|s^*\|_{F}^2 \,.
$$
Therefore the problem of estimating the efficient information bound boils down to finding the representer $s^*$ in the Tangent space $T(P_0)$. To summarize, the key steps are: 
\begin{enumerate}
\item First quantify $T(P_0)$ in the given model. 
\item Then find the  expression for $\L$ by differentiating $\theta(\varphi(t))$ with respect to $t$. 
\item Finally use the identity $\L(\dot s_{\varphi, 0}) = \langle s^*, \dot s_{\varphi, 0} \rangle_F$ for all $\dot s_{\varphi, 0} \in T(P_0)$ to find $s^*$. 
\end{enumerate}
A detailed proof of the efficiency of our estimator under normality is presented in Section \ref{sec:sem_eff}, but here we sketch the main idea to give readers a sense of the application of the above approach into our model. The log - likelihood function of our model, for any generic observation $(Y, Q, X, Z)$, can be written as: 
\begin{align*}
\ell(\vartheta) & = 2\log{s_0\left(Y \mid Q, X, Z\right)} + \log{s_0\left(Q \mid X, Z\right)} + \log{s_0(X, Z)} \\
& = 4 \left\{\log{\phi^{1/2}\left(Y - \alpha - X^{\top}\beta - b(Q - Z^{\top}\gamma)
\right)} + \log{s_{\eta}(Q - Z^{\top}\gamma)} + \log{s_{X, Z}(X, Z)}\right\} 
\end{align*}
where $\phi$ is the Gaussian density, $s_{\eta}$ and $s_{X, Z}$ are square-roots of the densities of $\eta$ and $(X,Z)$ respectively and $\vartheta$ is the collection of all unknown parameters, i.e. $\vartheta = (\alpha, \beta, \gamma, b, s_{\eta}, s_{X, Z})$. 
We are interested in the functional $\theta(\varphi(t)) = \alpha_{\varphi(t)}$ which implies that the derivative $\L(\dot \alpha_0) = \dot \alpha_{\varphi} = (d/dt)\alpha_{\varphi(t)} \mid_{t =0}$. Hence, the representer $s^* \in T(P_0)$ should satisfy: 
\begin{equation}
\label{eq:eff_identity}
\langle s^*, \dot s_{\varphi, 0} \rangle_F = \dot \alpha_{\gamma} \,,
\end{equation}
\emph{for all} $\varphi \in \cG$. As a consequence, the optimal asymptotic variance will be $\alpha^* = \langle s^*, s^* \rangle_F$. In the proof (Section \ref{sec:sem_eff}  of the Supplementary document), we use the identity \eqref{eq:eff_identity} for some suitably chosen $\varphi \in \cG$ (or equivalently $\dot s_{\varphi, 0} \in T(P_0)$) to obtain $\alpha^*$.

\subsection{Proof of Theorem \ref{thm:sem_eff}}
\label{sec:sem_eff}
\vspace{0.1in}
\noindent
The model we consider here is: 
\begin{align*}
Y_i & = \alpha_0 S_i + X_i^{\top}\beta_0 + b_0(\eta_i) + \epsilon_i \\
Q_i & = Z_i^{\top}\gamma_0 + \eta_i \,.
\end{align*}
where $S_i = \mathds{1}_{Q_i \ge 0}$ and $\epsilon \sim \cN(0, \tau^2)$. For simplicity we assume here $\tau^2 = 1$. An inspection to our proof immediately reveals that extension to general $\tau^2$ is straight forward. Statisticians observe $\{D_i \triangleq (Y_i, X_i, Z_i, Q_i)\}_{i=1}^n$ at stage $n$ of the experiment and hence the likelihood of the parameters $\theta \triangleq (\alpha, \beta, \gamma, b(\cdot), s_{\eta}, s_{X, Z})$ becomes: 
%
\begin{align*}
L\left(\theta \,\middle\vert\,D\right) & = \Pi_{i=1}^n \left[p\left(Y_i \mid Q_i, X_i, Z_i\right) \times p(Q_i \mid X_i, Z_i) \times p(X_i, Z_i)\right] 
\end{align*}
As we calculate the information bound, henceforth we only will deal with one observation and generically write: 
\begin{align*}
\ell(\theta) = \log{L(\theta)} & = \log{p_0(Y, Q, X, Z)} \\
& = \log{p_0\left(Y \mid Q, X, Z\right)} + \log{p_0\left(Q \mid X, Z\right)} + \log{p_0(X, Z)}  \\
& = 2\log{s_0\left(Y \mid Q, X, Z\right)} + \log{s_0\left(Q \mid X, Z\right)} + \log{s_0(X, Z)}
\end{align*} 
Now consider some parametrization $\gamma \in \cR$ where $\cR$ is the set of all regular parametric model as mentioned in Subsection \ref{sec:semiparam_eff} the derivative of log-likelihood along this curve at $t=0$ can be written as: 
\begin{align*}
S_{\gamma} & = \frac{d}{dt}\log{p_{\gamma(t)}(Y, Q, X, Z)}\vert_{t = 0} \\
& = 2\left[\frac{\dot s_{\gamma, 0}\left(Y \mid Q, X, Z\right)}{s_0\left(Y \mid Q, X, Z\right)} + \frac{\dot s_{\gamma, 0}\left(Q \mid X, Z\right)}{s_0\left(Q \mid X, Z\right)} + \frac{\dot s_{\gamma, 0}(X, Z)}{s_0(X, Z)}\right]
\end{align*}
and as a consequence, Fisher information for estimating $\alpha_0$ along this parametric submodel curve: 
\begin{align*}
I(\gamma) & = \left\|\dot s_{\gamma, 0}\right\|_F^2\\
& = \bbE\left(S^2(\gamma)\right) \\
& =  4 \bbE\left[\left(\frac{\dot s_{\gamma, 0}\left(Y \mid Q, X, Z\right)}{s_0\left(Y \mid Q, X, Z\right)}\right)^2 + \left(\frac{\dot s_{\gamma, 0}\left(Q \mid X, Z\right)}{s_0\left(Q \mid X, Z\right)} \right)^2 + \left(\frac{\dot s_{\gamma, 0}(X, Z)}{s_0(X, Z)}\right)^2\right] \\
& = 4 \left\{\bbE\left[\frac{\phi_1'(\epsilon)\left[-\dot{\alpha} S - X^{\top}\dot{\beta} - \dot{b}(\eta) + b_0'(\eta) Z^{\top}\dot{\gamma}\right]}{\phi_1(\epsilon)} \right]^2 \right. \\
& \qquad \qquad \qquad \qquad \left. + \bbE\left[\frac{\dot{s_{\eta}}(\eta) - s_{\eta}'(\eta)Z^{\top}\dot{\gamma}}{s_{\eta}(\eta)}\right]^2 + \bbE\left[ \frac{\dot{s}_{X, Z}(X, Z)}{s_{X, Z}(X, Z)}\right]^2 \right\} 
\end{align*}
where $\phi_1$ is the square root of the density of standard gaussian distribution, $s_{\eta}$ is the square root of the density of $\eta$ and $s_{X, Z}$ is the joint density of $(X, Z)$. The function $\dot s_{\eta}$ (resp. $\dot s_{X, Z}$) is defined as the $(d/dt) s_{\eta, \gamma(t)} \mid_{t=0}$ (resp. $(d/dt) s_{X, Z, \gamma(t)} \mid_{t=0}$). Similar definition holds for $\dot \alpha, \dot \beta, \dot \gamma$, where we omit the subscript $\gamma$ for notational simplicity. The function $s_{\eta}'$ here denotes the derivative of $s_{\eta, 0}(x)$ (true data generating density) with respect to $x$. Note that in the last equality we reparametrize the variable $(Y,Q, X, Z) \to (\eps, \eta, X, Z)$ which is bijective. The fisher inner product in $T(P_0)$ corresponding to two parametrization $\gamma_1, \gamma_2$ can be expressed as: 
\begin{align*}
\langle \dot s_{\gamma_1, 0}, \dot s_{\gamma_2, 0} \rangle_F & = 4\bbE\left[\left\{\frac{\dot s_{\gamma_1, 0}\left(Y \mid Q, X, Z\right)}{s_0\left(Y \mid Q, X, Z\right)} + \frac{\dot s_{\gamma_1, 0}\left(Q \mid X, Z\right)}{s_0\left(Q \mid X, Z\right)} + \frac{\dot s_{\gamma_1, 0}(X, Z)}{s_0(X, Z)}\right\} \right. \\
& \qquad \qquad \qquad \left. \times \left\{\frac{\dot s_{\gamma_2, 0}\left(Y \mid Q, X, Z\right)}{s_0\left(Y \mid Q, X, Z\right)} + \frac{\dot s_{\gamma_2, 0}\left(Q \mid X, Z\right)}{s_0\left(Q \mid X, Z\right)} + \frac{\dot s_{\gamma_2, 0}(X, Z)}{s_0(X, Z)}\right\} \right]\\
& = 4\bbE\left[\frac{\dot s_{\gamma_1, 0}\left(Y \mid Q, X, Z\right)}{s_0\left(Y \mid Q, X, Z\right)}\frac{\dot s_{\gamma_2, 0}\left(Y \mid Q, X, Z\right)}{s_0\left(Y \mid Q, X, Z\right)} \right. \\
& \hspace{10em} \left. + \frac{\dot s_{\gamma_1, 0}\left(Q \mid X, Z\right)}{s_0\left(Q \mid X, Z\right)}\frac{\dot s_{\gamma_2, 0}\left(Q \mid X, Z\right)}{s_0\left(Q \mid X, Z\right)} \right. \\
& \hspace{20em} \left. + \frac{\dot s_{\gamma_1, 0}(X, Z)}{s_0(X, Z)}\frac{\dot s_{\gamma_2, 0}(X, Z)}{s_0(X, Z)}  \right] \\
& = 4 \left\{\bbE\left[\left(\frac{\phi_1'(\eps)}{\phi_1(\eps)}\right)^2 \left\{-\dot{\alpha}^1 S - X^{\top}\dot{\beta}^1 - \dot{b}^1(\eta) + b_0'(\eta) Z^{\top}\dot{\gamma}^1\right\} \right. \right. \\
& \qquad \qquad \qquad \left. \left. \times \left\{-\dot{\alpha}^2 S - X^{\top}\dot{\beta}^2 - \dot{b}^2(\eta) + b_0'(\eta) Z^{\top}\dot{\gamma}^2\right\}\right]  \right. \\
& \qquad \qquad \qquad  \qquad \left. + \bbE\left[\left\{\frac{\dot{s_{\eta}}^1(\eta) - s_{\eta}'(\eta)Z^{\top}\dot{\gamma}^1}{s_{\eta}(\eta)}\right\}\left\{\frac{\dot{s_{\eta}}^2(\eta) - s_{\eta}'(\eta)Z^{\top}\dot{\gamma}^2}{s_{\eta}(\eta)}\right\}\right] \right. \\
& \qquad \qquad \qquad \qquad \qquad \left. + \bbE\left[\left\{\frac{\dot{s}_{X, Z}^1(X, Z)}{s_{X, Z}(X, Z)}\right\}\left\{\frac{\dot{s}_{X, Z}^2(X, Z)}{s_{X, Z}(X, Z)}\right\}\right] \right\} 
\end{align*}
where the superscript $i \in (1, 2)$ refers to the parametrization corresponding to $\gamma_i$. Our parameter of interest is $\theta(\gamma(t)) = \alpha_{\gamma(t)}$. Differentiating with respect to $t$ we obtain $L(\dot s_{\gamma, 0}) = \dot \alpha_{\gamma, 0} := \dot{\alpha}$. Hence we need to find the representer $s^*$ such that:
\begin{equation}
\label{eq:ip}
\langle s^*, \dot s_{\gamma, 0} \rangle_F = \dot \alpha\,.
\end{equation}
for all $\gamma \in \cR$. This further implies: 
\begin{align}
 \dot{\alpha} &= 4 \left\{\bbE\left[\left(\frac{\phi_1'(\epsilon)}{\phi_1(\epsilon)}\right)^2 \left\{-\dot{\alpha} S - X^{\top}\dot{\beta} - \dot{b}(\eta) + b_0'(\eta) Z^{\top}\dot{\gamma}\right\} \right. \right. \notag \\
 & \qquad \qquad \qquad \left. \left. \times \left\{-\alpha^* S - Z^{\top}\beta^* - b^*(\eta) + b_0'(\eta) Z^{\top}\gamma^*\right\}\right]  \right. \notag \\
& \qquad \qquad  \qquad \qquad \left. + \bbE\left[\left\{\frac{\dot{s_{\eta}}(\eta) - s_{\eta}'(\eta)Z^{\top}\dot{\gamma}}{s_{\eta}(\eta)}\right\}\left\{\frac{s_{\eta}^*(\eta) - s_{\eta}'(\eta)Z^{\top}\gamma^*}{s_{\eta}(\eta)}\right\}\right] \right. \notag \\
\label{eq:sem1}  & \qquad \qquad  \qquad \qquad  \qquad \left. + \bbE\left[\left\{\frac{\dot{s_{X, Z}}(X, Z)}{s_X(X, Z)}\right\}\left\{\frac{s_{X, Z}^*(X, Z)}{s_X(X, Z)}\right\}\right] \right\}
\end{align}
for all $\gamma \in \cR$ and the \emph{optimal asymptotic variance} in estimating $\alpha_0$ is: 
$$
\text{Optimal asymptotic variance } = \|s^*\|_F^2 = \alpha^* \,.
$$ 
In the rest of the analysis we use equation \eqref{eq:ip} repeatedly for different choices of $\dot s_{\gamma, 0}$ to obtain the value of $\alpha^*$. First, putting $\dot{\alpha} = \dot{\beta} = \dot{\gamma} = \dot{b} = \dot{s}_{\eta} = 0$ (as zero vector is always in $T(P_0)$) we obtain: 
$$\bbE\left[\left\{\frac{\dot{s}_{X, Z}(X, Z)}{s_{X, Z}(X, Z)}\right\}\left\{\frac{s_{X, Z}^*(X, Z)}{s_{X, Z}(X, Z)}\right\}\right] = 0 \ \ \forall \ \ \dot s_{X, Z} \,.$$
Hence we have $s^*_{X, Z} \equiv 0$. Thus we can modify equation \eqref{eq:sem1} to obtain: 
\begin{align}
\dot{\alpha} &= 4 \left\{\bbE\left[\left(\frac{\phi_1'(\eps)}{\phi_1(\eps)}\right)^2\left\{-\dot{\alpha} S - X^{\top}\dot{\beta} - \dot{b}(\eta) + b_0'(\eta) Z^{\top}\dot{\gamma}\right\} \right. \right. \notag \\
& \qquad \qquad \left. \left. \times \left\{-\alpha^* S - X^{\top}\beta^* - b^*(\eta_i) + b_0'(\eta) Z^{\top}\gamma^*\right\}\right]  \right. \notag \\
\label{eq:sem2} & \qquad \qquad  \qquad \left. + \bbE\left[\left\{\frac{\dot{s_{\eta}}(\eta) - s_{\eta}'(\eta)Z^{\top}\dot{\gamma}}{s_{\eta}(\eta)}\right\}\left\{\frac{s_{\eta}^*(\eta) - s_{\eta}'(\eta)Z^{\top}\gamma^*}{s_{\eta}(\eta)}\right\}\right] \right\}
\end{align}
Next we put $\dot{\alpha} = \dot{\beta} = \dot{\gamma}  = \dot{b} = 0$ in equation \eqref{eq:sem2} to obtain:
\begin{align*}
 \bbE\left[\left\{\frac{\dot{s_{\eta}}(\eta) }{s_{\eta}(\eta)}\right\}\left\{\frac{s_{\eta}^*(\eta) - s_{\eta}'(\eta)Z^{\top}\gamma^*}{s_{\eta}(\eta)}\right\}\right] = 0
\end{align*}
As $E(X) = 0$, we obtain from the above equation: 
\begin{align*}
 \bbE\left[\left\{\frac{\dot{s_{\eta}}(\eta) }{s_{\eta}(\eta)}\right\}\left\{\frac{s_{\eta}^*(\eta) - s_{\eta}'(\eta)Z^{\top}\gamma^*}{s_{\eta}(\eta)}\right\}\right] = 0
\end{align*}
As $E(Z^{\top}\gamma^*) = 0$, we can conclude that $s^*_{\eta}(\cdot) \equiv 0$.  So modifying equation \eqref{eq:sem2} we get the following equation: 
\begin{align*}
\dot{\alpha}  &= 4 \left\{\bbE\left[\left(\frac{\phi_1'(\eps)}{\phi_1(\eps)}\right)^2\left\{-\dot{\alpha} S - X^{\top}\dot{\beta} - \dot{b}(\eta) + b_0'(\eta) Z^{\top}\dot{\gamma}\right\} \right. \right. \\
& \qquad \qquad \left. \left. \times \left\{-\alpha^* S - X^{\top}\beta^* - b^*(\eta) + b_0'(\eta) Z^{\top}\gamma^*\right\}\right]  \right. \\
& \qquad \qquad  \qquad \left. + \dot{\gamma}^{\top} \bbE\left(\frac{s_{\eta}'(\eta)}{s_{\eta}(\eta)}\right)^2\Sigma_Z \gamma^* \right\} \notag
\end{align*}
Observe that: 
$$4\bbE\left\{ \left(\frac{\phi_1'(\eps)}{\phi_1(\eps)}\right)^2\right\} = \bbE\left\{ \left( \frac{d}{d\epsilon} \log{\phi(\epsilon)}\right)^2\right\} = 1 \,.$$
Hence defining $4 \bbE\left(\frac{s_{\eta}'(\eta)}{s_{\eta}(\eta)}\right)^2 = I_{\eta}$ the above equation becomes: 
\begin{align}
\label{eq:sem3}
\dot{\alpha} &= \left\{\bbE_{X, \eta}\left[\left\{-\dot{\alpha} S - X^{\top}\dot{\beta} - \dot{b}(\eta) + b_0'(\eta) Z^{\top}\dot{\gamma}\right\} \right. \right. \\
& \qquad \qquad \qquad \left. \left. \times \left\{-\alpha^* S - X^{\top}\beta^* - b^*(\eta) + b_0'(\eta) Z^{\top}\gamma^*\right\}\right]  + I_{\eta} \dot{\gamma}^{\top}\Sigma_Z \gamma^* \right\} \notag
\end{align}
Next we put $\dot{\alpha} = \dot{\beta} = \dot{\gamma} = 0$ in equation \eqref{eq:sem3} to get: 
\begin{align*}
\bbE\left[\dot{b}(\eta)\left\{\alpha^* S + X^{\top}\beta^* + b^*(\eta) - b_0'(\eta) Z^{\top}\gamma^*\right\}\right] = 0 \,.
\end{align*}
Again from the independence of $(X, Z)$ and $\eta$ we have: 
\begin{align*}
\bbE\left[\dot{b}(\eta)\left\{\alpha^* S + b^*(\eta)\right\}\right] = 0 \,.
\end{align*}
Hence, it is immediately clear the choice of $b^*(\eta) = -\alpha^* \bbE\left(S \,\middle\vert\,\eta\right)$. Using this we modify the equation \eqref{eq:sem3} as below: 
\begin{align}
\label{eq:sem4}
& \left\{\bbE\left[\left\{\dot{\alpha} S + X^{\top}\dot{\beta} - b_0'(\eta) Z^{\top}\dot{\gamma}\right\}\left\{\alpha^* (S - E(S \,\middle\vert\,\eta)) + X^{\top}\beta^* - b_0'(\eta) Z^{\top}\gamma^*\right\}\right]  \right. \\
& \qquad \qquad  \left. + I_{\eta} \dot{\gamma}^{\top} \Sigma_Z \gamma^* \right\} = \dot{\alpha} \notag
\end{align}
Next putting $\dot{\alpha} = \dot{\beta} = 0$ in equation \eqref{eq:sem4} we get: 
\begin{align}
& \bbE\left[\left\{-b_0'(\eta) Z^{\top}\dot{\gamma}\right\}\left\{\alpha^* (S - E(S \,\middle\vert\,\eta)) + X^{\top}\beta^* - b_0'(\eta) Z^{\top}\gamma^*\right\}\right] + I_{\eta}\dot{\gamma}^{\top} \Sigma_Z \gamma^* = 0 \notag \\
\end{align}
which further implies: 
\begin{align}
& \dot{\gamma}^{\top}\left[\alpha^* \bbE_{\eta}\left\{-b_0'(\eta)\bbE(ZS \,\middle\vert\,\eta)\right\}\right] + \bbE\left\{ -b'_0(\eta) \right\} \dot{\gamma}^{\top} \Sigma_{ZX} \beta^* \notag \\
& \hspace{15em}+ \left[\bbE\left\{(b_0'(\eta))^2\right\} + I_{\eta}\right]\dot{\gamma}^{\top}\Sigma_Z \gamma^* = 0 \notag \\ \notag \\
\implies & \alpha^* \dot{\gamma}^{\top}v_1 + c_1  \dot{\gamma}^{\top} \Sigma_{ZX} \beta^* + c_2 \dot{\gamma}^{\top}\Sigma_Z \gamma^* = 0 \notag \\ \notag \\
\label{eq:final_1}\implies & \alpha^* v_1 + c_1 \Sigma_{ZX} \beta^* + c_2 \Sigma_Z \gamma^* = 0 \,. \\\notag 
\end{align}
Here $v_1 =  \bbE_{\eta}\left\{-b_0'(\eta)\bbE(Z S \,\middle\vert\,\eta)\right\}$, $c_1 =  \bbE\left\{ -b'_0(\eta) \right\}$ and $c_2 = \left[\bbE\left\{(b_0'(\eta))^2\right\} + I_{\eta}\right]$. Now equation \eqref{eq:sem4} will be modified to: 
\begin{align}
\label{eq:sem5}
\notag \\
& \bbE\left[\left\{\dot{\alpha} S + X^{\top}\dot{\beta}\right\}\left\{\alpha^* (S - \bbE(S \,\middle\vert\,\eta)) + X^{\top}\beta^*- b_0'(\eta) Z^{\top}\gamma^* \right\}\right] = \dot{\alpha} 
\end{align} 
Putting $\dot{\alpha} = 0$ in equation \eqref{eq:sem5} we obtain: 
\begin{align}
& \bbE\left[\left\{X^{\top}\dot{\beta}\right\}\left\{\alpha^* (S - E(S \mid \eta)) + X^{\top}\beta^*- b_0'(\eta) Z^{\top}\gamma^* \right\}\right]  = 0 \notag \\ \notag \\
\implies & \dot{\beta}^{\top} \left[ \alpha^* \bbE\left(XS \right) \right] + \dot{\beta}^{\top} \Sigma_X \beta^* + \bbE(-b_0'(\eta)) \dot{\beta}^{\top}\Sigma_{XZ} \gamma^* = 0 \notag \\ \notag \\
\implies & \alpha^* \dot{\beta}^{\top}v_2 + \dot{\beta}^{\top} \Sigma_X \beta^* + c_1 \dot{\beta}^{\top} \Sigma_{XZ} \gamma^* = 0 \notag \\ \notag \\
\label{eq:final_2} \implies & \alpha^* v_2 + \Sigma_X \beta^*  + c_1 \Sigma_{XZ} \gamma^* = 0  \,.
\end{align}
where $v_2 = E(XS)$. Hence equation \eqref{eq:sem5} will be modified to: 
\begin{align}
\label{eq:sem6}
& \bbE\left[\left\{\dot{\alpha} S\right\}\left\{\alpha^* (S - E(S \mid \eta)) + X^{\top}\beta^*- b_0'(\eta) Z^{\top}\gamma^* \right\}\right]   = \dot{\alpha}
\end{align}
which implies: 
\begin{align}
& \bbE\left[s_i\left\{\alpha^* (S - E(S \mid \eta)) + X^{\top}\beta^*- b_0'(\eta) Z^{\top}\gamma^* \right\}\right]  = 1 \notag \\ \notag \\
\implies & \alpha^* \bbE_{\eta}\left\{\var(S \mid \eta)\right\} + \bbE(SX^{\top})\beta^* + \bbE_{\eta}\left\{-b_0'(\eta) \bbE(SZ^{\top}  \mid \eta_i)\right\}\gamma^* = 1 \notag \\ \notag \\
\label{eq:final_3} \implies & \alpha^* c_3 + v_2^{\top}\beta^* + v_1^{\top}\gamma^* = 1 \,.
\end{align} 
where $c_3 = \bbE_{\eta}\left\{\var(S \mid \eta)\right\}$. Finally we have three unknowns ($\alpha^*, \beta^*, \gamma^*$) and three equations (equation \eqref{eq:final_1}, \eqref{eq:final_2} and \eqref{eq:final_3}), which we solve to get the value of $\alpha^*$. For the convenience of the readers we write those equations here: 
\begin{align}
\label{eq:final_4}\alpha^* v_1 + c_1 \Sigma_{ZX} \beta^* + c_2 \Sigma_Z \gamma^* & = 0 \in \reals^{p_2}\,, \\
\label{eq:final_5}\alpha^* v_2 + \Sigma_{X} \beta^*  + c_1 \Sigma_{XZ} \gamma^* & = 0  \in \reals^{p_1}\,, \\
 \label{eq:final_6} \alpha^* c_3 + v_2^{\top}\beta^* + v_1^{\top}\gamma^* & = 1  \in \reals\,.
\end{align}
where $Z \in \reals^{p_2}$ and $X \in \reals^{p_1}$. These three equations can be written in a matrix form as following: 
\begin{equation*}
\begin{bmatrix}
c_3 & v_2^{\top} & v_1^{\top} \\
v_2 & \Sigma_X & c_1\Sigma_{XZ} \\
v_1 & c_1\Sigma_{ZX} & c_2\Sigma_Z 
\end{bmatrix} 
\begin{pmatrix}
\alpha^* \\
\beta^* \\
\gamma^* 
\end{pmatrix} = 
\begin{pmatrix}
1 \\
\mathbf{0} \\
\mathbf{0}
\end{pmatrix}
\end{equation*}
Hence we have: 
\begin{align}
\alpha^*  & = e_1^{\top}\begin{bmatrix}
c_3 & v_2^{\top} & v_1^{\top} \\
v_2 & \Sigma_X & c_1\Sigma_{XZ} \\
v_1 & c_1\Sigma_{ZX} & c_2\Sigma_Z 
\end{bmatrix} ^{-1}e_1 \notag \\
& = e_1^{\top}\begin{bmatrix}
c_3 & v_2^{\top} & -v_1^{\top} \\
v_2 & \Sigma_X & -c_1\Sigma_{XZ} \\
-v_1 & -c_1\Sigma_{ZX} & c_2\Sigma_Z 
\end{bmatrix} ^{-1}e_1 \notag\\
\label{eq:eff_info} & =  e_1^{\top}\begin{pmatrix}
\bbE\left(\var\left(S \,\middle\vert\,\eta\right)\right) & \bbE(SX^{\top}) & \bbE(SZ^{\top}b'(\eta))\\
\bbE(SX) & \Sigma_X & \bbE(XZ^{\top}b'(\eta))\\
\bbE(SZb'(\eta)) & \bbE(ZX^{\top}b'(\eta)) & \Sigma_Z(1 + (b'(\eta))^2)
\end{pmatrix}^{-1} e_1 \\
& = e_1^{\top}\Omega e_1 \,.
\end{align}

\begin{remark}
Note that if $b$ is a linear function, (which happens if $(\epsilon, \eta)$ is generated from bivariate normal with correlation $\rho$), then $b'$ is a constant function. Hence the second term in the expression of efficient information vanishes and we get: 
$$I_{eff} = \left[\left\{\bbE_{\eta}(\var(S\,\middle\vert\,\eta)) - \bbE(SX^{\top})\Sigma_X^{-1}\bbE(SX)\right\}  \right]$$
which is same as the efficient information of partial linear model. Hence, one may think the second term as the price we pay for non-linearity of $b$.  
\end{remark}

\section{Proof of Proposition \ref{thm:spline_consistency}}
\label{sec:proof_prop}
\vspace{0.1in}
\noindent
Recall that our model can be written as:  
\begin{align*}
    Y_i & = \alpha_0 S_i + X_i^{\top}\beta_0 + b(\eta_i) + \epsilon_i \\
    & = \alpha_0 S_i + X_i^{\top}\beta_0 + b(\hat \eta_i) + R_{1,i} + \epsilon_i
\end{align*}
where $R_{1,i} = b(\eta_i) - b(\hat \eta_i)$ is the residual in approximating $\eta_i$ by $\hat \eta_i$. For notational simplicity, we absorb $S_i$ into $X_i$ (and $\alpha_0$ into $\beta_0$) and write: 
\begin{equation}
\label{eq:spline_approx_1}
Y_i = X_i^{\top}\beta_0 + b(\hat \eta_i) + R_{1,i} + \epsilon_i
\end{equation}
As mentioned in Section \ref{sec:est_method}, we only need to estimate the derivative of the mean function $b$ on the interval $[-\tau, \tau]$ and consequently we consider only those observations for which $|\hat \eta_i| \le \tau$. We use scaled B-spline basis $\tilde N_K$ to estimate $b$ non-parametrically on the interval $[-\tau, \tau]$. For more details on the B-spline basis and its scaled version, see \ref{sec:spline_details}. Define a vector $\omega_{b, \infty, n}$ as: 
$$
\omega_{b, \infty, n} = \argmin_{\omega \in \bbR^{(K+2)}} \sup_{|x| \le \tau}\left|b(x) - \tilde N_K(x)^{\top}\omega\right|
$$
By applying Theorem \ref{thm:spline_bias} of Section \ref{sec:spline_details} we conclude:
\begin{align}
\label{eq:bound_b} \left\|b(x) - \tilde N_K(x)^{\top}\omega_{b, \infty, n}\right\|_{\infty, [-\tau, \tau]} & \le C\left(\frac{2\tau}{K}\right)^3\left\|b'''\right\|_{\infty, [-\tau, \tau]} \,,\\
\label{eq:bound_b'} \left\|b'(x) - \nabla \tilde N_K(x)^{\top}\omega_{b, \infty, n}\right\|_{\infty, [-\tau, \tau]} & \le C\left(\frac{2\tau}{K}\right)^2\left\|b'''\right\|_{\infty, [-\tau, \tau]}  \,.
\end{align}
were $\nabla \tilde N_k(x)$ is the vector of derivatives of the co-ordinate of the basis functions in $\tilde N_k(x)$. Using this spline approximation we further expand on equation \eqref{eq:spline_approx_1}: 
\begin{equation}
\label{eq:spline_approx_2}
Y_i = X_i^{\top}\beta_0 + \tilde N_K(\hat \eta_i)^{\top}\omega_{b, \infty, n} + R_{1i} + R_{2i} + \epsilon_i
\end{equation}
where $R_{2,i} = b(\hat \eta_i) -  \tilde N_K(\hat \eta_i)^{\top}\omega_{b, \infty, n}$ is the spline approximation error. To estimate $w_{b, \infty, n}$ we first estimate $\beta_0$ as: 
\begin{equation}
\label{eq:beta_estimate}
\hat \beta = \left(X^{\top}\proj^{\perp}_{\tilde \bN_K}X\right)^{-1}X^{\top}\proj^{\perp}_{\tilde \bN_K}Y \,.
\end{equation}
and then estimate $w_{b, \infty, n}$ as: 
\begin{equation}
\label{eq:omega_estimate}
\hat w_{b, \infty, n} =\left(\tilde \bN_K^{\top}\tilde \bN_K\right)^{-1}\tilde \bN_K^{\top}(Y - X\hat \beta) \,.
\end{equation}
and consequently we set $\hat b(x) = \tilde N_k(x)^{\top}\hat \omega_{b, \infty, n}$ and $\hat b'(x) = \nabla \tilde N_k(x)^{\top}\hat \omega_{b, \infty, n}$. The estimation error of $b'$ using $\hat b'$ is then bounded as follows: 
\begin{align}
\left|\hat b'(x) - b'(x)\right| & = \left|\nabla \tilde N_k(x)^{\top}\hat \omega_{b, \infty, n} - b'(x)\right| \notag \\
& \le \left|\nabla \tilde N_k(x)^{\top}\hat \omega_{b, \infty, n} - \nabla \tilde N_k(x)^{\top}\omega_{b, \infty, n}\right| + \left|\nabla \tilde N_k(x)^{\top}\omega_{b, \infty, n}  - b'(x)\right| \notag \\
& \le \left\| \nabla \tilde N_k(x) \right\| \left\|\hat \omega_{b, \infty, n} -  \omega_{b, \infty, n}\right\| + \sup_{|t| \le \tau} \left|\nabla \tilde N_k(t)^{\top}\omega_{b, \infty, n}  - b'(t)\right|  \notag \\
\label{eq:final_bound_1} & \lesssim  K\sqrt{K}\left\|\hat \omega_{b, \infty, n} -  \omega_{b, \infty, n}\right\|  +  C\left(\frac{2\tau}{K}\right)^2\left\|b'''\right\|_{\infty, [-\tau, \tau]}  
\end{align}
where the last inequality follows from the fact that $\|\nabla \tilde N_k(x)\| \lesssim K\sqrt{K}$ (see Lemma \ref{lem:spline_deriv_bound}) and equation \eqref{eq:bound_b'}. We now relate $\hat \omega_{b, \infty, n}$  to $\omega_{b, \infty, n}$ using equation \eqref{eq:spline_approx_2}: 
\begin{align}
\hat w_{b, \infty, n} &=\left(\tilde \bN_K^{\top}\tilde \bN_K\right)^{-1}\tilde \bN_K^{\top}(Y - X\hat \beta) \notag \\
&= w_{b, \infty, n} +\left(\tilde \bN_K^{\top}\tilde \bN_K\right)^{-1}\tilde \bN_K^{\top}X\left(\hat \beta - \beta_0\right) +\left(\tilde \bN_K^{\top}\tilde \bN_K\right)^{-1}\tilde \bN_K^{\top}\bR_1 \notag \\
& \hspace{6cm} +\left(\tilde \bN_K^{\top}\tilde \bN_K\right)^{-1}\tilde \bN_K^{\top}\bR_2 +\left(\tilde \bN_K^{\top}\tilde \bN_K\right)^{-1}\tilde \bN_K^{\top}\beps \notag \\
&= w_{b, \infty, n} + \left(\frac{\tilde \bN_K^{\top}\tilde \bN_K}{n}\right)^{-1}\frac{\tilde \bN_K^{\top}X\left(\hat \beta - \beta_0\right)}{n} + \left(\frac{\tilde \bN_K^{\top}\tilde \bN_K}{n}\right)^{-1}\frac{\tilde \bN_K^{\top}\bR_1}{n} \notag \\
& \hspace{6cm} + \left(\frac{\tilde \bN_K^{\top}\tilde \bN_K}{n}\right)^{-1}\frac{\tilde \bN_K^{\top}\bR_2}{n} + \left(\frac{\tilde \bN_K^{\top}\tilde \bN_K}{n}\right)^{-1}\frac{\tilde \bN_K^{\top}\beps}{n} \notag \\
& = w_{b, \infty, n} + \left(\frac{\tilde \bN_K^{\top}\tilde \bN_K}{n}\right)^{-1}\frac{\tilde \bN_K^{\top}X\left(\hat \beta - \beta_0\right)}{n} + \left(\frac{\tilde \bN_K^{\top}\tilde \bN_K}{n}\right)^{-1}\left[\frac{\tilde \bN_K^{\top}\bR_1}{n} + \frac{\tilde \bN_K^{\top}\bR_2}{n} + \frac{\tilde \bN_K^{\top}\beps}{n}\right] \notag \\
\label{eq:omega_hat_exp} & = w_{b, \infty, n} + T_1 + \left(\frac{\tilde \bN_K^{\top}\tilde \bN_K}{n}\right)^{-1}\left(T_2 + T_3 + T_4\right)
\end{align}
Rest of the proof is devoted to show $\left\|\hat \omega_{b, \infty, n} -  \omega_{b, \infty, n}\right\| = o_p\left(K^{-3/2}\right)$ via bounding $T_1, T_2, T_3$ and $T_4$.

\subsection{Bounding $T_1$ } 
\label{sec:t1_spline}
\vspace{0.1in}
\noindent
To bound $T_1$, we first bound $\|\hat \beta - \beta_0\|$ using the following Lemma: 
\begin{lemma}
\label{lem:beta_bound}
Under assumptions \ref{assm:independence}-\ref{assm:moment} we have $\| \hat \beta - \beta_0\| = o_p\left(K^{-3/2}\right) \,.$
\end{lemma}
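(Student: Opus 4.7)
The plan is to exploit the OLS form of $\hat\beta$ in (\ref{eq:beta_estimate}). Substituting the model equation (\ref{eq:spline_approx_2}) and noting that the spline piece $\tilde\bN_K\omega_{b,\infty,n}$ is annihilated by $\proj^{\perp}_{\tilde\bN_K}$, one obtains
$$\hat\beta - \beta_0 = \left(\frac{X^{\top}\proj^{\perp}_{\tilde\bN_K}X}{n}\right)^{-1}\frac{X^{\top}\proj^{\perp}_{\tilde\bN_K}(\bR_1 + \bR_2 + \beps)}{n}.$$
It then suffices to show the inverse matrix is $O_p(1)$ and that each of the three residual contributions on the right is $o_p(K^{-3/2})$. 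The inverse-matrix step I would handle exactly as in Subsection \ref{sec:s1_main}: centering $X$ through its conditional expectation $\bbE[X \mid \hat\eta]$ and its B-spline approximant, the cross terms are $o_p(1)$ by the spline bias bound and the diagonal term converges in probability to $\bbE[\var(X \mid \eta)\mathds{1}_{|\eta|\le\tau}]$. This limit is a principal submatrix of $\Omega_\tau$ and is positive definite by Assumption \ref{assm:eigen}, so the inverse is $O_p(1)$.

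For the three residual terms I would proceed componentwise. For the noise, conditioning on $\cD_1$ together with $(X, Z, Q)$ in $\cD_2$ (which then determines $\hat\eta_i = Q_i - Z_i^{\top}\hat\gamma_n$), the entries of $\beps$ are independent, mean-zero under Assumption \ref{assm:independence}, and have variance uniformly bounded by Assumption \ref{assm:b}(iii). Hence
$$\bbE\left[\left(\frac{X_{*,j}^{\top}\proj^{\perp}_{\tilde\bN_K}\beps}{n}\right)^2 \,\middle\vert\, X, Z, Q, \cD_1\right] \le \frac{\|\sigma^2\|_\infty}{n^2}\,X_{*,j}^{\top}X_{*,j} = O_p(n^{-1}),$$
yielding $X^{\top}\proj^{\perp}_{\tilde\bN_K}\beps/n = O_p(n^{-1/2})$. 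For $\bR_1$, the local Lipschitz property of $b$ on a neighborhood of $[-\tau, \tau]$ (a consequence of Assumption \ref{assm:b}), combined with the $n^{-1/2}$-rate of $\hat\gamma_n$, gives $\|\bR_1\|_2^2 \lesssim (\hat\gamma_n - \gamma_0)^{\top}\bZ^{\top}\bZ\,(\hat\gamma_n - \gamma_0) = O_p(1)$ on an event of probability tending to one; a coordinatewise Cauchy--Schwarz bound then yields $X^{\top}\proj^{\perp}_{\tilde\bN_K}\bR_1/n = O_p(n^{-1/2})$. Finally, the uniform spline bound (\ref{eq:bound_b}) gives $\|\bR_2\|_\infty \lesssim K^{-3}$, so Cauchy--Schwarz componentwise yields $X^{\top}\proj^{\perp}_{\tilde\bN_K}\bR_2/n = O_p(K^{-3})$.

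Assembling the three pieces gives $\|\hat\beta - \beta_0\| = O_p(n^{-1/2} + K^{-3})$, and the conclusion $o_p(K^{-3/2})$ follows from the range $n^{1/8} \ll K \ll n^{1/3}$ stipulated in Remark \ref{rem:K}. The main obstacle is the matrix-convergence step, which however closely mirrors the corresponding argument in Subsection \ref{sec:s1_main}; the only substantive rate check is that $n^{-1/2} = o(K^{-3/2})$, which is exactly what the upper bound $K \ll n^{1/3}$ guarantees.
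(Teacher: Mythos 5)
Your proposal is correct and follows essentially the same route as the paper: decompose $\hat\beta-\beta_0$ via the OLS identity, establish that $(X^{\top}\proj^{\perp}_{\tilde\bN_K}X/n)^{-1}=O_p(1)$ by a conditional-variance/spline-bias argument, and bound the three residual contributions by Cauchy--Schwarz. The only small local difference is for $\bR_1$: the paper uses the one-step Taylor expansion $b(\eta_i)-b(\hat\eta_i)=b'(\hat\eta_i)(\eta_i-\hat\eta_i)+\tfrac12 b''(\tilde\eta_i)(\eta_i-\hat\eta_i)^2$ so that $b'$ is always evaluated inside $[-\tau,\tau]$, whereas your local-Lipschitz argument implicitly requires $\eta_i$ to stay in a fixed neighborhood of $[-\tau,\tau]$; that does hold with probability tending to one because $\max_i|Z_i^{\top}(\hat\gamma_n-\gamma_0)|=O_p(n^{-1/4})$ under the fourth-moment assumption, so either formulation yields the same $O_p(n^{-1/2})$ bound.
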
 
\noindent
The proof of Lemma \ref{lem:beta_bound} is similar to the proof of matrix convergence portion of the proof of Theorem \ref{thm:main_thm} and is deferred to Section \ref{sec:additional_results}. We now show that the operator norm of $(\tilde \bN_k^{\top}\tilde \bN_k/n)^{-1}$ is bounded above.  Using a conditional version of  Theorem \ref{thm:rudelson} of Section \ref{sec:spline_details} we have: 
$$
\bbE\left[\left\|\frac{\tilde \bN_k^{\top}\tilde \bN_k}{n} - \bbE\left(\tilde N_k(\hat \eta)\tilde N_k(\hat \eta)^{\top}\mathds{1}_{|\hat \eta| \le \tau} \,\middle\vert\,\cF(\cD_1)\right)\right\|_{op} \,\middle\vert\,\cF(\cD_1) \right] \le C\left( \frac{K\log{K}}{n} + \sqrt{\frac{K\log{K}}{n}}\right)
$$
As the bound on the right side does not depend on $\cF(\cD_1)$, we conclude, taking expectation on the both side: 
\begin{equation}
\label{eq:rudelson_bound}
\bbE\left[\left\|\frac{\tilde \bN_k^{\top}\tilde \bN_k}{n} - \bbE\left(\tilde N_k(\hat \eta)\tilde N_k(\hat \eta)^{\top}\mathds{1}_{|\hat \eta| \le \tau} \,\middle\vert\,\cF(\cD_1)\right)\right\|_{op} \right] \le C\left( \frac{K\log{K}}{n} + \sqrt{\frac{K\log{K}}{n}}\right)
\end{equation}
Note that we can write with $a_n = \log{n}/\sqrt{n}$: 
\begin{align*}
& \bbE\left(\tilde N_k(\hat \eta)\tilde N_k(\hat \eta)^{\top}\mathds{1}_{|\hat \eta| \le \tau} \,\middle\vert\,\cF(\cD_1)\right) \\
& = \bbE\left(\tilde N_k(\hat \eta)\tilde N_k(\hat \eta)^{\top}\mathds{1}_{|\hat \eta| \le \tau} \,\middle\vert\,\cF(\cD_1)\right) \mathds{1}_{\|\hat \gamma_n - \gamma_0\| \le a_n} \\
& \qquad \qquad \qquad + \bbE\left(\tilde N_k(\hat \eta)\tilde N_k(\hat \eta)^{\top}\mathds{1}_{|\hat \eta| \le \tau} \,\middle\vert\,\cF(\cD_1)\right) \mathds{1}_{\|\hat \gamma_n - \gamma_0\| > a_n} \,. 
\end{align*}
As both of the matrices on the right side are p.s.d., we conclude: 
\begin{align*}
& \lambda_{\min}\left(\bbE\left(\tilde N_k(\hat \eta)\tilde N_k(\hat \eta)^{\top}\mathds{1}_{|\hat \eta| \le \tau} \,\middle\vert\,\cF(\cD_1)\right)\right) \\
& \qquad \qquad \qquad \ge \lambda_{\min}\left(
\bbE\left(\tilde N_k(\hat \eta)\tilde N_k(\hat \eta)^{\top}\mathds{1}_{|\hat \eta| \le \tau} \,\middle\vert\, \cF(\cD_1)\right) \mathds{1}_{\|\hat \gamma_n - \gamma_0\| \le a_n}\right) \,.
\end{align*}
Choose $\delta > 0$ such that $\bbP(\|z\| \le \delta) > 0$. Now, from Theorem \ref{thm:eigen_spline}: 
\begin{align*}
& \hspace{-2em} \lambda_{\min}\left(
\bbE\left(\tilde N_k(\hat \eta)\tilde N_k(\hat \eta)^{\top}\mathds{1}_{|\hat \eta| \le \tau} \,\middle\vert\, \cF(\cD_1)\right) \mathds{1}_{\|\hat \gamma_n - \gamma_0\| \le 1}\right) \\
& \ge \kappa_4 \min_{|x| \le \tau} f_{\eta + (\gamma_0 - \hat \gamma_n)^{\top}Z}(x)\mathds{1}_{\|\hat \gamma_n - \gamma_0\| \le a_n} \\
& \ge \kappa_4 \min_{\substack{|x| \le \tau \\ \|a\| \le a_n}} f_{\eta + a^{\top}Z}(x)\mathds{1}_{\|\hat \gamma_n - \gamma_0\| \le a_n}  \\
& = \kappa_4 \min_{\substack{|x| \le \tau \\ \|a\| \le a_n}} \int_{\bbR^p}f_{\eta}(x - a^{\top}z)f_Z(z) \ dz  \mathds{1}_{\|\hat \gamma_n - \gamma_0\| \le a_n} \\
& \ge \kappa_4 \min_{\substack{|x| \le \tau \\ \|a\| \le a_n}} \int_{\|z\| \le \delta}f_{\eta}(x - a^{\top}z)f_Z(z) \ dz \mathds{1}_{\|\hat \gamma_n - \gamma_0\| \le a_n}  \\
& \ge \kappa_4\min_{|x| \le \tau+a_n \delta} f_\eta(x) P(\|Z\| \le \delta) \mathds{1}_{\|\hat \gamma_n - \gamma_0\| \le a_n}  \,. 
\end{align*}
Now for large $n$, $a_n \delta \le \xi$ (where $\xi$ is same as defined in (iv) of Assumption \ref{assm:b}) and hence for all large $n$: 
\begin{align}
\label{eq:min_eigval}
& \lambda_{\min}\left(
\bbE\left(\tilde N_k(\hat \eta)\tilde N_k(\hat \eta)^{\top}\mathds{1}_{|\hat \eta| \le \tau} \,\middle\vert\, \cF(\cD_1)\right) \mathds{1}_{\|\hat \gamma_n - \gamma_0\| \le 1}\right) \notag\\
& \qquad \qquad \qquad \ge \kappa_4\min_{|x| \le \tau+ \eps} f_\eta(x) P(\|Z\| \le \delta)\mathds{1}_{\|\hat \gamma_n - \gamma_0\| \le 1} \,.
\end{align}
From equation \eqref{eq:rudelson_bound} and \eqref{eq:min_eigval} we conclude: 
\begin{equation}
\label{eq:eigen_final_bound}
\left\|\left(\frac{\tilde \bN_k^{\top}\tilde \bN_k}{n}\right)^{-1}\right\|_{op} = \left(\lambda_{\min}\left(\frac{\tilde \bN_k^{\top}\tilde \bN_k}{n}\right)\right)^{-1} = O_p(1) \,.
\end{equation}
We next also provide a bound Now going back to $T_1$ in equation \eqref{eq:omega_hat_exp} we have: 
\begin{align*}
\|T_1\| & = \left\|\left(\frac{\tilde \bN_K^{\top}\tilde \bN_K}{n}\right)^{-1}\frac{\tilde \bN_K^{\top}X\left(\hat \beta - \beta_0\right)}{n}\right\|  \\
& \le \left\|\left(\frac{\tilde \bN_K^{\top}\tilde \bN_K}{n}\right)^{-1/2}\right\|_{op} \left\|\left(\frac{\tilde \bN_K^{\top}\tilde \bN_K}{n}\right)^{-1/2}\frac{\tilde \bN_K^{\top}X\left(\hat \beta - \beta_0\right)}{n}\right\| \\
& \lesssim_P \left\|\left(\frac{\tilde \bN_K^{\top}\tilde \bN_K}{n}\right)^{-1/2}\frac{\tilde \bN_K^{\top}X\left(\hat \beta - \beta_0\right)}{n}\right\| \\
& \le \left\|\frac{X(\hat \beta - \beta_0)}{\sqrt{n}}\right\| \\
& \le \|\hat \beta - \beta_0\| \sqrt{\lambda_{\max}(X^{\top}X/n)} \lesssim_P \|\hat \beta - \beta_0\| = o_p\left(K^{-3/2}\right) \hspace{0.2in} [\text{By Lemma } \ref{lem:beta_bound}]\,. 
\end{align*}

\subsection{Bounding $T_2$ }
\label{sec:t2_spline}
\noindent
The term $T_2$ can be bounded as: 
\begin{align*}
\|T_2\| = \left\|\frac{\tilde \bN_K^{\top}\bR_1}{n}\right\| & \le \left\|\frac{\bR_1}{\sqrt{n}}\right\| \lambda_{\max}\left(\frac{\tilde \bN_K^{\top}\tilde \bN_K}{n}\right)
\end{align*}
It was already proved in equation \eqref{eq:eigen_final_bound} that $\lambda_{\max}(\tilde \bN_K^{\top}\tilde \bN_K/n) = O_p(1)$. To control $\left\|\bR_1\right\|/\sqrt{n}$:   
\begin{align*}
\bbE\left(\frac{\|\bR_1\|^2}{n} \,\middle\vert\, \cF(\cD_1) \right) & = \bbE\left((b(\eta) - b(\hat \eta))^2 \,\middle\vert\,\cD_1\right) \\
& = \bbE\left(\left(b'(\eta)(\hat \eta - \eta) + (1/2)b''(\tilde \eta)(\hat \eta - \eta)^2\right)^2 \,\middle\vert\, \cF(\cD_1)\right) \\
& \le 2 \|\hat \gamma_n - \gamma_0\|^2 \bbE\left(\left(b'(\eta)\|Z\|\right)^2\right) + \frac12 \|b''\|_{\infty} \|\hat \gamma_n - \gamma_0\|^4 \bbE((\|Z\|^4)) \\
& = O_p(n^{-1}) \,.
\end{align*}
Hence $\|\bR_1/\sqrt{n}\| = O_p(n^{-1/2}) = o_p(K^{-3/2})$, where the last equality follows from Remark \ref{rem:K}.  
\noindent
\subsection{Bounding $T_3$ }
\label{sec:t3_spline}
\begin{align*}
\|T_3\| = \left\|\frac{\tilde \bN_K^{\top}\bR_2}{n}\right\| & \le \left\|\frac{\bR_2}{\sqrt{n}}\right\| \lambda_{\max}\left(\frac{\tilde \bN_K^{\top}\tilde \bN_K}{n}\right)
\end{align*}
With similar logic used in bounding $T_2$, all we need to bound $\|\bR_2\|/\sqrt{n}$. It is immediate that:
\begin{align*}
\frac{\|\bR_2\|}{\sqrt{n}} = \sqrt{\frac{\|\bR_2\|^2}{n}} & \le \sup_{|x| \le \tau} \left|b(t) - \tilde N_K(t)^{\top}\omega_{b, 2, \infty}\right| \\
& \le C\left(\frac{2\tau}{K}\right)^3 \|b'''\|_{\infty} = o_p\left(K^{-3/2}\right)\,.
\end{align*}

\subsection{Bounding $T_4$ }
\label{sec:t4_spline}
\begin{align*}
\bbE\left(\left\|\frac{\tilde \bN_K^{\top}\beps}{n}\right\|^2 \,\middle\vert\,\cF(\cD_1) \right) &= \frac{1}{n^2} \bbE\left[\beps^{\top}\tilde \bN_K \tilde \bN_K^{\top}\beps \,\middle\vert\,\cF(\cD_1) \right] \\
& = \frac{1}{n^2} \tr\left( \bbE\left[\beps \beps^{\top}\tilde \bN_K^{\top} \tilde \bN_K \,\middle\vert\,\cF(\cD_1) \right] \right) \\
& = \frac{1}{n^2} \tr\left( \bbE\left[\bbE\left(\beps \beps^{\top} \,\middle\vert\,\cF(\bZ, \boldsymbol\eta, \cD_1) \right)\tilde \bN_K^{\top}\tilde \bN_K \,\middle\vert\,\cF(\cD_1) \right] \right)  \\
& \le \frac{K+3}{n} \sup_{\eta} \var\left(\eps \,\middle\vert\,\eta\right) \lambda_{\max} \left(\bbE\left[\tilde N_K(\hat \eta)\tilde N_K(\hat \eta)^{\top}\mathds{1}_{|\hat \eta| \le \tau} \,\middle\vert\,\cF(\cD_1)\right]\right)  \\
& = O_p\left(\frac{K}{n}\right) = o_p\left(K^{-3/2}\right) \hspace{0.2in} [\text{Remark } \ref{rem:K}]\,.
\end{align*}
where $\boeta$ is $\{\eta_i\}_{i=1}^{n/3}$ in $\cD_2$ and the penultimate inequality follows from Lemma \ref{lem:trace_ineq}. These bounds established $\|\hat \omega_{b, n, \infty} - \omega_{b,   \infty}\| = o_p\left(K^{-3/2}\right)$ which completes the proof.

\section{Proof of supplementary lemmas}  
\label{sec:additional_results}

\subsection{Proof of Lemma \ref{lem:g}}
\noindent
The definition of function $g$ is as follows: 
$$
g(a, t) =\bbE\left[\begin{pmatrix} S & X & b'(\eta)Z \end{pmatrix} \,\middle\vert\,\eta + a^{\top}Z = t\right]
$$
Note that $g(a, t) \in \bbR^{1+p_1+p_2}$. Divide the components of $g$ in three parts as follows: 
\begin{enumerate}
\item $g_1(a, t) = \bbE\left[S \,\middle\vert\,\eta + a^{\top}Z = t\right]$. 
\item $g_2(a, t) = \bbE\left[X \,\middle\vert\,\eta + a^{\top}Z = t\right]$. 
\item $g_3(a, t) = \bbE\left[b'(\eta)Z \,\middle\vert\,\eta + a^{\top}Z = t\right]$.  
\end{enumerate}
If we prove the continuity of partial derivates of $g_1, g_2, g_3$ separately then we are done. We start with $g_1$. For fixed $a$ (i.e. we consider the partial derivative with respect to $t$): 
\begin{align*}
g_1(a, t) & = \bbE\left[S \,\middle\vert\,\eta + a^{\top}Z = t\right] \\
& = \frac{\int_{z: t > (a+ \gamma_0)^{\top}z} f_Z(z)f_{\eta}(t-a^{\top}z) \ dz}{f_{\eta + Z^{\top}a}(t)} \\
& \triangleq \frac{\int_{\Omega(t)} f_Z(z)f_{\eta}(t-a^{\top}z) \ dz}{f_{\eta + Z^{\top}a}(t)}
\end{align*}
By using Leibnitz rule for differentiating integral with varying domain, we can immediately conclude $\partial_t g(a, t)$ is continuously differentiable. The calculation for $\partial_a g(a, t)$ is similar and hence skipped for brevity. 
\\\\
\noindent
For $g_2$ define $h(Z) = E\left(X \,\middle\vert\,Z\right)$. Then we have: 
\begin{align*}
g_2(a,t) = \bbE\left[h(Z) \,\middle\vert\,\eta + a^{\top}Z = t\right] = \frac{\int h(z) f_Z(z)f_{\eta}(t - a^{\top}z) \ dz}{\int f_Z(z) f_{\eta}(t - a^{\top}z) \ dz}
\end{align*}
That $g_2$ is continuous both with respect to $a$ and $t$ follows from DCT and the fact that $E(\|h(Z)\|) < \infty$ (Assumption \ref{assm:moment}) and $\|f_{\eta}\|_{\infty}$ is finite (Assumption \ref{assm:b}). The differentiability and continuity of the derivative also follows from the fact that $f_{\eta}$ is differentiable  and $\bbE(\|Z\|\|h(Z)\|) < \infty$ as well as $\bbE(\|Z\|) < \infty$. Also note that differentiation under integral sign is allowed as $\bbE(\|h(Z)\|) < \infty$. 
\\\\
\noindent
Finally for $g_3$ define $h_1(Z) = b'(t - a^{\top}Z)Z$. Then we have: 
\begin{align*}
g_3(a, t) & = \bbE\left[h_1(Z) \,\middle\vert\,\eta + a^{\top}Z = t\right] 
\end{align*}
By the same logic as for $g_2$ (i.e. using $E(\|Z\||b'(t - a^{\top}Z)|) < \infty, E(\|Z\|^2|b'(t - a^{\top}Z)|) < \infty$ and $\|f_{\eta}\|_{\infty} < \infty$) our conclusion follows. Finally the continuity of $V(a, t)$ follows directly from the continuity of density of $\eta$ (Assumption \ref{assm:b}) and $(X, Z)$ (Assumption \ref{assm:moment}).

\subsection{Proof of Lemma \ref{lem:beta_bound}}
\noindent
Recall that $\beta_0$ in this proof is $(\alpha_0, \beta_0^{\top})^{\top}$ and $X = (S, X) \in \bbR^{n_2 \times (1+p_1)}$ as mentioned in the beginning of the proof of Proposition \ref{thm:spline_consistency}. From the definition of $\hat \beta$ (equation \eqref{eq:beta_estimate}):
\begin{align*}
\hat \beta & = \left(X^{\top}\proj^{\perp}_{\tilde \bN_K}X\right)^{-1}X^{\top}\proj^{\perp}_{\tilde \bN_K}Y \\
& = \left(\frac{X^{\top}\proj^{\perp}_{\tilde \bN_K}X}{n}\right)^{-1}\frac{X^{\top}\proj^{\perp}_{\tilde \bN_K}Y}{n} \\
& = \beta_0 +\left(\frac{X^{\top}\proj^{\perp}_{\tilde \bN_K}X}{n}\right)^{-1}\left[\frac{X^{\top}\proj^{\perp}_{\tilde \bN_K}\bR_1}{n}  +\frac{X^{\top}\proj^{\perp}_{\tilde \bN_K}\bR_2}{n} + \frac{X^{\top}\proj^{\perp}_{\tilde \bN_K}\beps}{n} \right]
\end{align*}
We divide the entire proof into few steps which we articulate below first: 
\begin{enumerate}
\item First we show the matrix $X^{\top}\proj^{\perp}_{\tilde \bN_K}X/n = O_p(1)$. More specifically we show that: 
$$
\frac{X^{\top}\proj^{\perp}_{\tilde \bN_K}X}{n} \overset{P}{\longrightarrow} 
\frac{1}{3}\bbE\left[(X - \bbE(X \,\middle\vert\,\eta)(X - \bbE(X \,\middle\vert\,\eta))^{\top})\mathds{1}_{|\eta| \le \tau}\right] \,.
$$
This along with Assumption \ref{assm:eigen}. implies that $(X^{\top}\proj^{\perp}_{\tilde \bN_K}X/n)^{-1} = O_p(1)$. 
\item Next we show that the residual terms are negligible: 
$$
\left(\frac{X^{\top}\proj^{\perp}_{\tilde \bN_K}X}{n}\right)^{-1}\left(\frac{X^{\top}\proj^{\perp}_{\tilde \bN_K}(\bR_1 + \bR_2)}{n}\right) = o_p(K^{-3/2}) \,. 
$$
\item Finally we show that $\bX^{\top}\proj^{\perp}_{\tilde \bN_K}\beps/n = o_p(K^{-3/2})$. This will complete the proof.  
\end{enumerate}
{\bf Proof of Step 1: }Recall the definition of $\bW^*_1$ from Subsection \ref{sec:s1_main}. It then follows immediately that: 
$$
X = \bW^*_1 \begin{bmatrix}e_1 & e_2 & \dots & e_{1+p_1} \end{bmatrix} := \bW^*_1 A
$$
where $e_i$ is the $i^{th}$ canonical basis of $\bbR^{(1+p_1+p_2)}$. Hence we have: 
$$
\frac{X^{\top}\proj^{\perp}_{\tilde \bN_K}X}{n} = A^{\top}\frac{\bW_1^{*^{\top}}\proj^{\perp}_{\tilde \bN_K}\bW_1^*}{n}A
$$
As established in Subsection \ref{sec:s1_main} (see equation \eqref{eq:wstar}): 
$$
\frac{\bW_1^{*^{\top}}\proj^{\perp}_{\tilde \bN_K}\bW_1^*}{n} \overset{P}{\to} \frac13 \Omega_{\tau} \,,
$$
we conclude from that: 
\begin{align}
\frac{X^{\top}\proj^{\perp}_{\tilde \bN_K}X}{n} & \overset{P}{\longrightarrow} \frac13 A^{\top}\Omega_{\tau}A \notag \\
\label{eq:s1_beta} & = \frac13 \bbE\left[\left(X - \bbE(X \,\middle\vert\,\eta)\right)\left(X - \bbE(X \,\middle\vert\,\eta)\right)^{\top}\mathds{1}_{|\eta| \le \tau}\right] \,.
\end{align}
\\\\
\noindent
{\bf Proof of Step 2: }
\label{sec:s2_beta}
For the first residual term observe that: 
\begin{align}
& \left\|\left(\frac{\bX^{\top}\proj^{\perp}_{\tilde \bN_K}\bX}{n}\right)^{-1}\frac{\bX^{\top}\proj^{\perp}_{\tilde \bN_K}\bR_1}{n}\right\| \notag \\
& \le \left\|\left(\frac{\bX^{\top}\proj^{\perp}_{\tilde \bN_K}\bX}{n}\right)^{-1/2} \right\|_{op}\left\|\left(\frac{\bX^{\top}\proj^{\perp}_{\tilde \bN_K}\bX}{n}\right)^{-1/2}\frac{\bX^{\top}\proj^{\perp}_{\tilde \bN_K}\bR_1}{n}\right\| \notag \\
\label{eq:s2_beta} & \le \left\|\left(\frac{\bX^{\top}\proj^{\perp}_{\tilde \bN_K}\bX}{n}\right)^{-1/2} \right\|_{op} \left\|\frac{\bR_1}{\sqrt{n}}\right\| = o_p(K^{-3/2}) \,.
\end{align}
The last equality follows from the fact that the first term of the above product is $O_p(1)$ and the second term $\|\bR_1\|/\sqrt{n}$ is $o_p(K^{-3/2})$ because: 
\begin{align*}
\frac{\bR_1^{\top}\bR_1}{n} & = \frac{1}{n}\sum_{i=1}^{n/3}\left(b(\eta_i) - b(\hat \eta_i)\right)^2 \mathds{1}_{|\hat \eta_i| \le \tau} \\
& \le \frac{2}{n}\sum_{i=1}^{n/3}\left(\eta_i - \hat \eta_i\right)^2(b'(\hat \eta_i))^2 \mathds{1}_{|\hat \eta_i| \le \tau} + \frac{1}{2n}\sum_{i=1}^{n/3}\left(\eta_i - \hat \eta_i\right)^4(b''(\tilde \eta_i))^2 \mathds{1}_{|\hat \eta_i| \le \tau} \\
& \le \|b'\|_{\infty, [-\tau, \tau]} \|\hat \gamma_n - \gamma_0\|^2 \frac2n \sum_{i=1}^{n/3}\|Z_i\|^2 + \|b''\|_{\infty} \|\hat \gamma_n - \gamma_0\|^4 \frac2n \sum_{i=1}^{n/3}\|Z_i\|^4 \\
& = O_p(n^{-1}) + O_p(n^{-2}) = o_p(K^{-3}) \hspace{0.2in} [\text{Remark }\ref{rem:K}]\,.
\end{align*}
For the other residual using the same calculation we first conclude: 
\begin{equation}
\label{eq:s3_beta}
\left\|\left(\frac{\bX^{\top}\proj^{\perp}_{\tilde \bN_K}\bX}{n}\right)^{-1}\frac{\bX^{\top}\proj^{\perp}_{\tilde \bN_K}\bR_2}{n}\right\|  \le \left\|\left(\frac{\bX^{\top}\proj^{\perp}_{\tilde \bN_K}\bX}{n}\right)^{-1/2} \right\|_{op} \left\|\frac{\bR_2}{\sqrt{n}}\right\| 
\end{equation}
and $\|\bR_2\|/\sqrt{n} = o_p(K^{-3/2})$ follows directly from equation \eqref{eq:bound_b}.  
\\\\
\noindent
{\bf Proof of Step 3: } 
\label{sec:s3_beta}
Finally we show that $\bX^{\top}\proj^{\perp}_{\tilde \bN_K}\beps/n = o_p(1)$ which completes the proof. Recall that, in Subsection \ref{sec:s1_main}, we show that the term $T_1 =$.  This immediately implies: 
$$
\frac{\bW_1^{*^{\top}}\proj^{\perp}_{\tilde \bN_K}\beps}{n} = O_p(n^{-1/2}) \,.
$$
which, in turn implies: 
\begin{equation}
\label{eq:s4_beta}
\frac{\bX^{\top}\proj^{\perp}_{\tilde \bN_K}\beps}{n} = A^{\top}\frac{\bW_1^{*^{\top}}\proj^{\perp}_{\tilde \bN_K}\beps}{n} A = O_p(n^{-1/2}) = o_p(K^{-3/2}) \hspace{0.1in} [\text{Remark }\ref{rem:K}]\,.
\end{equation}
Combining equation \eqref{eq:s1_beta}, \eqref{eq:s2_beta}, \eqref{eq:s3_beta} and \eqref{eq:s4_beta} we conclude $\|\hat \beta - \beta_0\| =o_p(K^{-3/2})$. 
\subsection{Some auxiliary lemmas} 
\label{sec:aux_lems}
In this section we present some auxiliary lemmas that are necessary to establish our main results. 
\begin{lemma}
\label{lem:trace_ineq}
Suppose $A$ is a p.s.d. matrix and $B$ is a symmetric matrix, then $\tr(AB) \le \lambda_{\max}(B) \tr(A)$. 
\end{lemma}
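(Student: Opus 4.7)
The plan is to use the symmetric square root of $A$ together with the cyclic property of the trace to convert the quantity $\tr(AB)$ into the trace of a symmetric matrix sandwich, and then invoke the operator inequality $B \preceq \lambda_{\max}(B) I$ (which holds by the spectral theorem since $B$ is symmetric).

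Concretely, I would proceed as follows. Since $A$ is positive semidefinite, it admits a symmetric p.s.d. square root $A^{1/2}$. By cyclicity of the trace,
\[
\tr(AB) \;=\; \tr(A^{1/2} A^{1/2} B) \;=\; \tr(A^{1/2} B A^{1/2}).
\]
Because $B$ is symmetric, the spectral theorem yields $\lambda_{\max}(B) I - B \succeq 0$, and conjugating a p.s.d.\ matrix by $A^{1/2}$ on both sides preserves positive semidefiniteness, giving
\[
A^{1/2}\bigl(\lambda_{\max}(B) I - B\bigr) A^{1/2} \;=\; \lambda_{\max}(B)\, A - A^{1/2} B A^{1/2} \;\succeq\; 0.
\]
Since the trace of a p.s.d.\ matrix is nonnegative, taking traces on both sides of this inequality gives $\tr(A^{1/2} B A^{1/2}) \le \lambda_{\max}(B)\tr(A)$, and combining with the first display yields the claim.

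There is essentially no obstacle here; the only subtlety to flag is that the inequality $\tr(AB) \le \lambda_{\max}(B)\tr(A)$ does \emph{not} require $B$ itself to be p.s.d.\ (it may have negative eigenvalues), which is exactly why the two-sided sandwich with $A^{1/2}$ — rather than, say, Von Neumann's trace inequality applied to $|B|$ — is the cleanest route.
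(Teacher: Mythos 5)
Your proof is correct and follows essentially the same route as the paper: both hinge on the operator inequality $\lambda_{\max}(B)I - B \succeq 0$ and then pass to traces. The paper's version is slightly loose, however — it asserts that $A(\lambda_{\max}(B)I - B)$ is itself a p.s.d.\ matrix, which is not true in general (the product of two symmetric p.s.d.\ matrices need not be symmetric unless they commute); what actually holds is that its trace is nonnegative, e.g.\ because it is similar to $A^{1/2}(\lambda_{\max}(B)I - B)A^{1/2}$, which is the matrix you write down explicitly. Your sandwiching by $A^{1/2}$ supplies exactly the missing step that makes the argument airtight, and your closing remark that $B$ need not be p.s.d.\ is a useful clarification of the lemma's hypotheses.
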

\begin{proof}
Note that $B - \lambda_{\max}(B)I \preceq 0$. Hence: 
$$
\tr\left(A(\lambda_{\max}(B)I - B)\right) = \lambda_{\max}(B) \tr(A) - \tr(AB) \ge 0 \,.
$$
as $A(\lambda_{\max}(B)I - B)$ is a p.s.d. matrix. 
\end{proof}
\begin{lemma}
\label{lem:cond_prob}
Suppose $\{X_n\}_{n \in \bbN}$ is a sequence of non-negative random variables and $\{\cF_n\}_{n \in \bbN}$ is a sequence of sigma fields. If $\bbE\left(X_n \mid \cF_n\right) = o_p(1)$, then $X_n = o_p(1)$. 
\end{lemma}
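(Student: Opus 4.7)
The plan is to reduce the statement to a conditional Markov inequality combined with bounded convergence applied to the outer expectation. Since $X_n \ge 0$, conditional Markov gives, for any fixed $\epsilon > 0$,
$$
\bbP(X_n > \epsilon \mid \cF_n) \le \min\!\left\{\frac{\bbE[X_n \mid \cF_n]}{\epsilon},\,1\right\}.
$$
Taking the expectation of both sides yields
$$
\bbP(X_n > \epsilon) \le \bbE\!\left[\min\!\left\{\frac{\bbE[X_n \mid \cF_n]}{\epsilon},\,1\right\}\right].
$$

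Now I would argue that the right-hand side tends to $0$. By hypothesis $\bbE[X_n \mid \cF_n] \to 0$ in probability, so $\min\{\bbE[X_n \mid \cF_n]/\epsilon,\,1\} \to 0$ in probability as well. Because this sequence is uniformly bounded by $1$, the bounded (dominated) convergence theorem applies and gives convergence of the expectations to $0$. Hence $\bbP(X_n > \epsilon) \to 0$ for every $\epsilon > 0$, which is exactly $X_n = o_p(1)$.

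There is no real obstacle here: the only subtlety is noting that one cannot directly apply DCT to $\bbE[X_n \mid \cF_n]/\epsilon$ (which need not be dominated by an integrable variable), which is precisely why the truncation by $1$ via the conditional Markov bound is used. Everything else is routine.
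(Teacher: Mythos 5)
Your argument is correct and is essentially the paper's own proof: both use conditional Markov to control $\bbP(X_n>\eps\mid\cF_n)$, observe that this quantity is bounded (the paper by noting it is a conditional probability, you by explicitly truncating with $\min\{\cdot,1\}$), and then invoke bounded/dominated convergence to pass the convergence in probability to convergence of expectations. No meaningful difference in approach.
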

\begin{proof}
Fix $\eps > 0$. From a conditional version of Markov inequality, we have: 
$$
Y_n := \bbP\left(X_n > \eps \mid \cF_n\right) \le \frac{\bbE\left(X_n \mid \cF_n\right)}{\eps} = o_p(1)\,.
$$
Now as $\{Y_n\}_{n \in \bbN}$ is bounded sequence of random variables which converge to $0$ in probability, applying DCT we conclude: 
$$
\bbP\left(X_n > \eps\right) = \bbE(Y_n)  = o(1) \,.
$$
This completes the proof. 
\end{proof}
\section{Some preliminary discussion on B-spline basis} \
\label{sec:spline_details}
Recall that, we have mentioned in Section \ref{sec:est_method} of the main document that we use truncated B-spline basis to approximate both the unknown mean function $b(\eta) = \bbE(\nu \mid \eta)$ and its derivative. More specifically, we fit spline basis on $[-\tau_n, \tau_n]$ and then and define our estimator to be $0$ outside of it. Recall that the B-spline basis starts with $0^{th}$ order polynomial (i.e. a constant functions) and then is recursively defined for higher order polynomial. Let the knots inside $[-\tau_n, \tau_n]$ are: 
$$
-\tau_n = \xi_0 < \xi_1 < \xi_2 < \dots < \xi_K = \tau_n
$$ 
As we know from spline theory, the dimension of the space generated by spline basis of degree $s$ is $K + s$. When $s = 0$, (i.e. constant functions) we need $K$ basis functions which are defined as: 
$$
N_{i,0}(t) =
\begin{cases}
1 & \text{if } \xi_i \le t \le \xi_{i+1} \\
0 & \text{otherwise}
\end{cases} \,.
$$
for $0 \le i \le K-1$. Now we define the recursion, i.e. how we go to a collection of B-spline basis functions of degree $p$ degree from a collection of B-spline basis functions of degree $(p-1)$. Note that we need $K + p$ many basis functions of degree $p$ and each of the basis functions will be local in a sense that they only have support over $p+1$ many intervals (observe that the constant functions are only supported over one interval). For that we first append some knots at the both ends. For example, to go to $1$ degree polynomial basis from $0$-degree, we append two knots, one at the beginning and one at the very end. 
$$
-\tau_n = \xi_{-1} = \xi_0 < \xi_1 < \dots < \xi_K = \xi_{K+1} = \tau_n \,.
$$
Our $k+1$ basis functions are defined as: 
$$
N_{i,1}(t) =
\begin{cases}
\frac{t - \xi_i}{\xi_{i+1} - \xi_i}N_{i, 0}(t) + \frac{\xi_{i+2} - t}{\xi_{i+2} - \xi_{i+1}}N_{(i+1), 0}(t)  & \text{if } \xi_i \le t \le \xi_{i+2} \\
0 & \text{otherwise}
\end{cases} \,.
$$
for $-1 \le i \le K-1$. Note that when $i = -1$, $N_{-1, 0}$ does not exist. Hence for that case we forget this part and define: 
$$
N_{-1,1}(t) =
\begin{cases}
\frac{\xi_{1} - t}{\xi_{1} - \xi_{0}}N_{0, 0}(t)  & \text{if } \xi_{-1} \le t \le \xi_{1} \\
0 & \text{otherwise}
\end{cases} \,.
$$
and the for the last basis, i.e. $i = K-1$, $B_{K, 0}$ is not defined. Hence analogously we define: 
$$
N_{K-1,1}(t) =
\begin{cases}
\frac{t - \xi_{K-1}}{\xi_{K} - \xi_{K-1}}N_{K-1, 0}(t) & \text{if } \xi_{K-1} \le t \le \xi_{K} \\
0 & \text{otherwise}
\end{cases} \,.
$$
Now we will extend this pattern for any general degree $p$. For that we need to append $p$ knots at the both ends: 
 $$
-\tau_n = \xi_{-p} = \cdots = \xi_{-1} = \xi_0 < \xi_1 < \dots < \xi_K = \xi_{K+1} = \dots \xi_{k+p} = \tau_n \,.
$$
and the recursion is defined as: 
$$
N_{i,p}(t) =
\begin{cases}
\frac{t - \xi_i}{\xi_{i+p} - \xi_i}N_{i, (p-1)}(t) + \frac{\xi_{i+p+1} - t}{\xi_{i+p+1} - \xi_i}N_{(i+1), (p-1)}(t)  & \text{if } \xi_i \le t \le \xi_{i+p+1} \\
0 & \text{otherwise}
\end{cases} \,.
$$
for $-p \le i \le k-1$. Define a class of functions $\bbS_{p, k}$ is the linear combinations of all the functions of $p^{th}$ order B-spline basis $\{N_{i, p}\}_{i=-p}^{k-1}$, i.e.
$$
\bbS_{p, k} =\left\{f: f(x) = \sum_{i=-p}^{k-1} c_i N_{i, p}(x) \text{ with } c_{-p},\dots, c_{k-1} \in \bbR \right\} \,.
$$
The following theorem is Theorem 17 of Chapter 1 of \cite{kunoth2018splines} which provides an approximation error of a function (and its derivatives) with respect to B-spline basis: 

\begin{theorem}[Functional approximation using B-spline basis]
\label{thm:spline_bias}
For any $0 \le r \le l \le p$, if $\sup_{|x| \le \tau} |f(x)| <
\infty$ and $f$ is $(l+1)$ times differentiable with $(l+1)^{th}$ derivative also bounded on $[-\tau, \tau]$ then we have: 
$$
\inf_{s \in \bbS_{p, k}} \left\|\partial_r f - \partial_r s\right\|_{\infty} \le C \left(\frac{2\tau}{k}\right)^{(l + 1 -r)}\|f^{(l+1)}\|_{\infty} 
$$  
where the constant $C$ only depends on $p$, the order of the spline approximation. 
\end{theorem}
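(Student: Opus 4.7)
The plan is to invoke a standard quasi-interpolation operator $Q_p : C([-\tau,\tau]) \to \bbS_{p,k}$ (for example, the de Boor--Fix quasi-interpolant) with three key properties: (i) it is local, in the sense that $(Q_p f)|_{[\xi_j, \xi_{j+1}]}$ depends only on the values of $f$ on a fixed-size neighborhood of that knot interval; (ii) it is uniformly bounded, $\|Q_p f\|_{\infty, [-\tau,\tau]} \le C_p \|f\|_{\infty, [-\tau,\tau]}$; (iii) it reproduces polynomials of degree $\le p$, i.e.\ $Q_p \pi = \pi$ for every polynomial $\pi$ of degree $\le p$. Given such an operator, the approximation result follows by a cell-by-cell Taylor argument combined with a Markov-type inverse inequality for B-spline derivatives.

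The main steps I would carry out are as follows. First, fix an arbitrary knot cell $I_j = [\xi_j, \xi_{j+1}]$, with mesh size $h_j := \xi_{j+1} - \xi_j \le 2\tau/k$, and let $\tilde I_j$ be the (at most) $(2p+1)$-cell stencil around $I_j$ on which $(Q_p f)|_{I_j}$ depends; note $|\tilde I_j| \le C_p (2\tau/k)$. Second, use Taylor's theorem with integral remainder to write $f = T_l + R_l$ on $\tilde I_j$, where $T_l$ is the degree-$l$ Taylor polynomial of $f$ centered at some point of $\tilde I_j$ and the remainder satisfies the classical bound
\[
\left\|\partial_r R_l\right\|_{\infty, \tilde I_j} \;\le\; C\, h_j^{\,l+1-r}\,\|f^{(l+1)}\|_{\infty, [-\tau,\tau]}
\]
for every $0 \le r \le l$. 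Third, since $l \le p$, property (iii) gives $Q_p T_l = T_l$, so on $I_j$
\[
\partial_r (f - Q_p f) \;=\; \partial_r R_l \;-\; \partial_r (Q_p R_l).
\]

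The remaining task is to bound $\|\partial_r (Q_p R_l)\|_{\infty, I_j}$. Writing $Q_p R_l = \sum_i \lambda_i(R_l)\, N_{i,p}$ with the dual functionals $\lambda_i$, property (ii) yields $|\lambda_i(R_l)| \le C\, \|R_l\|_{\infty, \tilde I_j} \le C\, h_j^{\,l+1}\|f^{(l+1)}\|_\infty$. Combining this with the standard Markov-type bound
\[
\left\|\partial_r N_{i,p}\right\|_{\infty, I_j} \;\le\; C_p\, h_j^{-r},
\]
which follows by an easy induction from the recursion $\partial N_{i,p}(t) = p\bigl(N_{i,p-1}(t)/(\xi_{i+p}-\xi_i) - N_{i+1,p-1}(t)/(\xi_{i+p+1}-\xi_{i+1})\bigr)$, gives $\|\partial_r (Q_p R_l)\|_{\infty, I_j} \le C\, h_j^{\,l+1-r}\|f^{(l+1)}\|_\infty$. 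Adding the two contributions and taking the maximum over cells $I_j$ (using $h_j \le 2\tau/k$) yields the stated estimate with $s = Q_p f$.

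The hard part of a fully rigorous write-up is setting up the quasi-interpolant and verifying (i)--(iii) uniformly in $k$, especially near the boundary of $[-\tau,\tau]$ where we have appended knots with multiplicity $p+1$: the stencil $\tilde I_j$ gets truncated and the dual functionals must be adapted so that boundedness and polynomial reproduction still hold there. Both facts are classical (see de Boor's monograph on splines) but their clean derivation is the bulk of the work; once they are in hand, the Taylor-plus-inverse-inequality argument above is essentially mechanical and delivers the $(2\tau/k)^{l+1-r}$ rate claimed in the theorem.
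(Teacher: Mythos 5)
The paper does not give its own proof of Theorem~\ref{thm:spline_bias}; the text immediately preceding the statement identifies it as Theorem~17 of Chapter~1 of \cite{kunoth2018splines} and simply imports it, so there is nothing in the paper to compare your argument against. That said, your quasi-interpolation sketch is the canonical route to this estimate and is essentially sound. The three ingredients you isolate --- a local, uniformly bounded, polynomial-reproducing quasi-interpolant $Q_p$; a cell-by-cell Taylor/remainder decomposition that exploits $l\le p$ to absorb the degree-$l$ Taylor polynomial via $Q_pT_l=T_l$; and a Markov-type inverse inequality for B-spline derivatives at mesh scale $h_j$ --- are exactly what every standard proof of this estimate uses, and you correctly flag the one place where care is genuinely needed, namely the truncated stencils and repeated boundary knots of the open B-spline basis on $[-\tau,\tau]$. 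One detail worth making explicit in a full write-up: on each cell $I_j$ only $p+1$ of the basis functions $N_{i,p}$ are supported, so the sum $Q_pR_l=\sum_i\lambda_i(R_l)N_{i,p}$ collapses locally to a bounded number of terms, and the bound
$\left\|\partial_r(Q_pR_l)\right\|_{\infty,I_j}\le C\,h_j^{\,l+1-r}\left\|f^{(l+1)}\right\|_{\infty}$
therefore carries no hidden dependence on $k$. With that observation made, your argument matches the textbook proof.
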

In our paper, we also need to control the behavior of the lower eigenvalue of the population matrix $\bbE\left(N_k(\hat \eta)N_K(\hat \eta)^{\top}\mathds{1}_{|\hat \eta| \le \tau} \mid \cD_1\right)$. For that reason, we use a scaled version of B-spline basis instead: 
$$
\tilde N_{i, p}(x) = \sqrt{\frac{K}{2\tau}}N_{i, p}(x) \,.
$$
and use the following theorem (see the theorem of section 3 of \cite{de1973quasi} or Theorem 11 of \cite{kunoth2018splines}): 
\begin{theorem}[Eigenvalue of spline matrix]
\label{thm:eigen_spline}
Define a matrix $G \in \bbR^{(k+p) \times (k+p)}$ such that: 
$$
G_{ij} = \int_{-\tau}^{\tau} \tilde N_{i, p}(x) \tilde N_{j, p}(x) \ dx 
$$
Then there exists a constant $\kappa_p >0$ only depending on $p$ such that: 
$$
\kappa_p \le \frac{x^{\top}Gx}{x^{\top}x} \le 1
$$
for all $x \in \bbR^{(k+p)}$. In particular if $X \sim F$ with density $f$ on $[-\tau, \tau]$ and if $0 < f_- \le f(x) \le f_+ < \infty $ then we have: 
$$
\kappa_p f_- \le \lambda_{\min}\left(\bbE\left(\tilde N_k(X)\tilde N_K(X)^{\top}\right)\right) \le \lambda_{\max}\left(\bbE\left(\tilde N_k(X)\tilde N_K(X)^{\top}\right)\right) \le f_+ \,.
$$
\end{theorem}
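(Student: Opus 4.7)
The plan is to recognize the quadratic form $x^\top G x$ as the squared $L^2$-norm of a spline function, invoke classical stability/well-conditionedness results for the normalized B-spline basis to obtain both eigenvalue bounds of $G$, and finally deduce the inequalities for $\mathbb{E}[\tilde N_K(X)\tilde N_K(X)^\top]$ by sandwiching the density $f$ between $f_-$ and $f_+$. Throughout, let $h = 2\tau/K$ denote the knot spacing, and let $g_x(t) := \sum_{i=-p}^{K-1} x_i \tilde N_{i,p}(t)$ so that
\[
x^\top G x = \int_{-\tau}^{\tau} g_x(t)^2 \, dt = \|g_x\|_{L^2[-\tau,\tau]}^2.
\]

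For the upper bound $x^\top G x \le x^\top x$, the plan is to exploit two structural properties of the B-spline basis: (i) at any point $t$, at most $p+1$ basis functions $N_{i,p}$ are simultaneously nonzero (local support), and (ii) the basis forms a partition of unity, $\sum_i N_{i,p}(t) = 1$, with $N_{i,p}\ge 0$. A direct computation on each subinterval $[\xi_j,\xi_{j+1}]$, together with the explicit formula $\int_{-\tau}^{\tau} N_{i,p}(t)\, dt = h/(p+1)$ and the scaling $\tilde N_{i,p} = \sqrt{K/(2\tau)}\, N_{i,p} = h^{-1/2} N_{i,p}$, yields $\int \tilde N_{i,p}^2 \le 1/(p+1)$ and, after using Cauchy--Schwarz to control cross terms via $\sum_i N_{i,p}(t) = 1$, gives $\|g_x\|_{L^2}^2 \le \|x\|^2$.

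For the (harder) lower bound $x^\top G x \ge \kappa_p \|x\|^2$, the plan is to invoke de Boor's stability theorem for B-splines (precisely the result cited in the excerpt from \cite{de1973quasi}): there exists $D_p > 0$ depending only on $p$ such that, locally on each knot interval $I_j = [\xi_j,\xi_{j+1}]$,
\[
\max_{t \in I_j} \left| \sum_{i} x_i N_{i,p}(t) \right| \;\ge\; D_p \max_{i \,:\, \mathrm{supp}(N_{i,p})\supseteq I_j} |x_i|.
\]
Combined with $|I_j| = h$ and the scaling $\tilde N_{i,p} = h^{-1/2} N_{i,p}$, this gives $\int_{I_j} g_x^2 \ge c_p \max_{i \in J_j} x_i^2$ for a set $J_j$ of $p+1$ indices covering $I_j$. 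Summing over $j$ and noting that each index $i$ appears in at most $p+1$ of these sets,
\[
\|g_x\|_{L^2}^2 \;\ge\; c_p \sum_j \max_{i\in J_j} x_i^2 \;\ge\; \frac{c_p}{p+1}\sum_i x_i^2 \;=\; \kappa_p \|x\|^2.
\]
This is the step I expect to be the main obstacle, since it rests on de Boor's nontrivial local-stability inequality; my plan is to simply quote it rather than reprove it.

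Having established $\kappa_p \|x\|^2 \le x^\top G x \le \|x\|^2$, the second assertion follows immediately. For any $v \in \mathbb{R}^{K+p}$,
\[
v^\top \mathbb{E}\!\left[\tilde N_K(X)\tilde N_K(X)^\top\right] v \;=\; \int_{-\tau}^{\tau} g_v(t)^2 f(t)\, dt.
\]
Sandwiching $f_- \le f(t) \le f_+$ and applying the Gram-matrix bounds just established gives
\[
\kappa_p f_- \|v\|^2 \;\le\; f_- \int g_v^2 \;\le\; v^\top \mathbb{E}[\tilde N_K(X)\tilde N_K(X)^\top] v \;\le\; f_+ \int g_v^2 \;\le\; f_+ \|v\|^2,
\]
which, taking infimum and supremum over unit vectors $v$, yields the claimed bounds on $\lambda_{\min}$ and $\lambda_{\max}$.
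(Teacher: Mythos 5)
The paper gives no proof of this theorem: it is stated as a cited stability result for the normalized B-spline basis, with pointers to de Boor (1973) and Kunoth et al. (2018), Theorem 11. Your proposal actually fills in a proof sketch, so there is no paper argument to compare against—but your route is a standard and correct one. The identification $x^{\top}Gx = \|g_x\|^2_{L^2[-\tau,\tau]}$, the Jensen/partition-of-unity argument for the upper bound, the appeal to de Boor's local $\ell_\infty$-stability inequality for the lower bound (the one genuinely nontrivial ingredient, correctly flagged as such and quoted rather than reproved), and the density-sandwiching deduction of the second assertion all hold up.

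Two small slips are worth correcting. First, $\int_{-\tau}^{\tau} N_{i,p}(t)\,dt \ne h/(p+1)$; the relevant B-spline integral identity is $\int N_{i,p} = (\xi_{i+p+1}-\xi_i)/(p+1)$, which equals $h$ for fully interior basis functions on a uniform grid of spacing $h$ (and is $\le h$ near the boundary due to repeated knots). With the corrected value, the Jensen step $\bigl(\sum_i x_i N_{i,p}(t)\bigr)^2 \le \sum_i x_i^2\, N_{i,p}(t)$ followed by integration and the scaling $\tilde N_{i,p} = h^{-1/2} N_{i,p}$ gives $\|g_x\|_{L^2}^2 \le h^{-1}\sum_i x_i^2 \int N_{i,p} \le \|x\|^2$, as intended. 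Second, in the lower-bound summation the counting fact you actually need is not that each index $i$ lies in at most $p+1$ of the index sets $J_j$; rather, it is that $|J_j| = p+1$ for every $j$ (so that $\max_{i\in J_j} x_i^2 \ge (p+1)^{-1}\sum_{i\in J_j} x_i^2$) together with the fact that every $i$ lies in at least one $J_j$. Those two observations together give $\sum_j \max_{i\in J_j} x_i^2 \ge (p+1)^{-1}\|x\|^2$; the ``at most $p+1$'' direction would only be useful if you were trying to prevent the left-hand side from being too large, which is not the issue here.
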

\begin{remark}
Note that the linear span of $\{N_{i, p}\}_{i=1}^{(k+p)}$ and $\{\tilde N_{i, p}\}_{i=1}^{(k+p)}$ is same as $\bbS_{p, k}$ because $N_{i, p}$ and $\tilde N_{i, p}$ only differs by a scaling constant. Hence Theorem \ref{thm:spline_bias} remains unaltered even if we use the scaled B-spline basis $\{\tilde N_{i, p}\}_{i=1}^{(k+p)}$. 
\end{remark}
Another result which is due of \cite{rudelson1999random}, is used in this paper to bound the minimum eigenvalue (e.g. the operator norm of the inverse) of the sample covariance matrix formed by $\{\tilde N_K(\hat \eta_i)\mathds{1}_{|\hat \eta_i| \le \tau}\}_{i=1}^n$ in terms of the population covariance matrix $\bbE\left[\tilde N_k(\hat \eta)\tilde N_k(\hat \eta)^{\top}\mathds{1}_{|\hat \eta| \le \tau}\right]$ is the following (see Lemma 6.2 of \cite{belloni2015some}): 
\begin{theorem}
\label{thm:rudelson}
Let $Q_1, \dots, Q_n$ be independent symmetric non-negative $k \times k$ matrix-valued random variable. Let $\bar Q = (1/n)\sum_i Q_i$ and $Q = E(\bar Q)$. If $\|Q_i\|_{op} \le M$ a.s. then we have: 
$$
\bbE\left\|\bar Q - Q\right\|_{op} \le C \left(\frac{M\log{k}}{n} + \sqrt{\frac{M\|Q\|_{op}\log{k}}{n}}\right) \,.
$$
for some absolute constant $C$. In particular if $Q_i = p_ip_i^{\top}$ for some random vector $p_i$ with $\|p_i\| \le \xi_k$ almost surely, then: 
$$
\bbE\left\|\bar Q - Q\right\|_{op} \le C \left(\frac{\xi_k^2\log{k}}{n} + \sqrt{\frac{\xi_k^2\|Q\|_{op}\log{k}}{n}}\right) \,.
$$
\end{theorem}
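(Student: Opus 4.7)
The plan is to establish the bound via the classical three-step recipe: symmetrization, a non-commutative Khintchine (Rudelson) inequality, and a self-bounding argument to trade the operator norm of the sample mean for that of the population mean.

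First, I would symmetrize. Let $\{Q_i'\}_{i=1}^n$ be an independent copy of $\{Q_i\}_{i=1}^n$ and $\{\varepsilon_i\}$ be i.i.d. Rademacher signs independent of everything else. Since $Q = \bbE \bar Q'$, Jensen's inequality together with the standard symmetrization trick gives
\begin{equation*}
\bbE\left\|\bar Q - Q\right\|_{op} \le \bbE\left\|\bar Q - \bar Q'\right\|_{op} = \bbE\left\|\tfrac{1}{n}\sum_i \varepsilon_i (Q_i - Q_i')\right\|_{op} \le \tfrac{2}{n}\,\bbE\left\|\sum_i \varepsilon_i Q_i\right\|_{op}.
\end{equation*}

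Next, I would apply the non-commutative Khintchine inequality of Rudelson (or the matrix version of it), which asserts that for symmetric matrices $A_1,\dots,A_n$ in $\bbR^{k\times k}$ and independent Rademacher signs $\varepsilon_i$,
\begin{equation*}
\bbE_\varepsilon \left\|\sum_i \varepsilon_i A_i\right\|_{op} \;\le\; C\sqrt{\log k}\;\left\|\sum_i A_i^2\right\|_{op}^{1/2}.
\end{equation*}
Conditioning on $\{Q_i\}$ and invoking this with $A_i = Q_i$ yields
\begin{equation*}
\bbE\left\|\bar Q - Q\right\|_{op} \;\le\; \frac{2C\sqrt{\log k}}{n}\,\bbE\left\|\sum_i Q_i^2\right\|_{op}^{1/2}.
\end{equation*}
The key structural input is that $Q_i \succeq 0$ and $\|Q_i\|_{op}\le M$, which forces $Q_i^2 \preceq M Q_i$ and consequently $\left\|\sum_i Q_i^2\right\|_{op} \le M\left\|\sum_i Q_i\right\|_{op} = nM\|\bar Q\|_{op}$.

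Plugging this in and applying Jensen's inequality to $\sqrt{\cdot}$, together with the triangle inequality $\|\bar Q\|_{op}\le \|Q\|_{op} + \|\bar Q - Q\|_{op}$, I obtain the self-bounding inequality
\begin{equation*}
x := \bbE\|\bar Q - Q\|_{op} \;\le\; 2C\sqrt{\frac{M\log k}{n}}\,\sqrt{\|Q\|_{op} + x}.
\end{equation*}
Squaring and solving the resulting quadratic in $x$ yields $x \lesssim \sqrt{M\|Q\|_{op}\log k/n} + M\log k/n$, which is exactly the advertised bound. The specialization to $Q_i = p_i p_i^\top$ with $\|p_i\| \le \xi_k$ is immediate because $\|Q_i\|_{op} = \|p_i\|^2 \le \xi_k^2$.

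The main obstacle is step two, the non-commutative Khintchine bound, which is the deep input and the reason the $\sqrt{\log k}$ factor appears. All other steps are routine: symmetrization is textbook, the inequality $Q_i^2 \preceq MQ_i$ is linear algebra, and the self-bounding quadratic is elementary. In writing a complete proof I would cite Rudelson's original $L^p$-moment inequality (or, equivalently, the matrix Bernstein tail bound integrated to expectation) rather than re-derive it, since that is a well-known result whose full proof would take us far afield.
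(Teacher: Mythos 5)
The paper states Theorem~\ref{thm:rudelson} as an imported background fact, citing \cite{rudelson1999random} and Lemma~6.2 of \cite{belloni2015some}, and offers no proof of its own; there is consequently no internal argument to compare yours against. Your sketch is the standard proof of this matrix law-of-large-numbers bound, and each step is sound: symmetrization reduces the deviation to a Rademacher sum; the non-commutative Khintchine (Rudelson) inequality produces the $\sqrt{\log k}$ factor at the cost of $\left\|\sum_i Q_i^2\right\|_{op}^{1/2}$; the operator-monotone fact $Q_i^2 \preceq M Q_i$ for $0 \preceq Q_i \preceq M I$ converts this to $nM\|\bar Q\|_{op}$; and Jensen plus the triangle inequality $\|\bar Q\|_{op}\le \|Q\|_{op}+\|\bar Q - Q\|_{op}$ closes the self-bounding quadratic, whose two roots give the two terms in the advertised bound. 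The specialization to $Q_i = p_i p_i^{\top}$ is, as you say, immediate from $\|p_i p_i^{\top}\|_{op} = \|p_i\|^2$. One minor point of presentation: you may wish to add the constant $\sqrt{1+\log k}$ (rather than $\sqrt{\log k}$) in the Khintchine step so that the bound does not degenerate at $k=1$, though this is absorbed into the absolute constant $C$ when $k\ge 2$.
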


Lastly, as we are working on the derivative estimation of the mean function $b$, we need to have a bound on $\|\nabla \tilde N_{p, K}(x)\|$ for all $|x| \le \tau$, where $\nabla \tilde N_{p, K}(x)$ is the vector of derivatives of $\{\tilde N_{i, k}(x)\}_{i=1}^{k+p}$. Towards that end, we prove the following lemma; 
\begin{lemma}
\label{lem:spline_deriv_bound}
For all $|x| \le \tau$, we have: $\|\nabla \tilde N_{p, K}(x)\| \le C_p K\sqrt{K}$ for some constant $C_p$ depending only on the order of the spline basis $p$ and $\tau$. 
\end{lemma}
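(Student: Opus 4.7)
The plan is to combine three standard facts about B-splines: the explicit differentiation formula, the uniform bound $N_{i,p}(x)\in[0,1]$, and the local support property that at any $x$ only $O(p)$ basis functions are nonzero. The scaling factor $\sqrt{K/(2\tau)}$ will account for one of the three factors of $\sqrt{K}$, and the knot spacing $h=2\tau/K$ produces the other factor of $K$ through the differentiation rule.

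First, I would recall the standard derivative identity
\[
N_{i,p}'(x)=\frac{p}{\xi_{i+p}-\xi_i}\,N_{i,p-1}(x)-\frac{p}{\xi_{i+p+1}-\xi_{i+1}}\,N_{i+1,p-1}(x),
\]
which holds for the interior indices (with the usual convention that the ratio is $0$ whenever the denominator vanishes because of repeated boundary knots; this handles the $\xi_{-p}=\cdots=\xi_0$ and $\xi_K=\cdots=\xi_{K+p}$ coalescences at the endpoints). Since our knots are equispaced with spacing $h=2\tau/K$ in the interior, each coefficient in the above identity is bounded in absolute value by $p/h=pK/(2\tau)$. Combined with $N_{j,p-1}(x)\in[0,1]$, this gives $|N_{i,p}'(x)|\le pK/\tau$ for all $i$ and all $|x|\le\tau$.

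Next, I would invoke the local support property: $N_{j,p-1}$ is supported on $[\xi_j,\xi_{j+p}]$, so at a fixed $x$ at most $p$ of the basis functions $\{N_{j,p-1}\}_j$ are nonzero. Consequently, in the derivative formula above, at most $p+1$ indices $i$ yield $N_{i,p}'(x)\ne 0$. Applying the scaling $\tilde N_{i,p}=\sqrt{K/(2\tau)}\,N_{i,p}$, we get
\[
|\tilde N_{i,p}'(x)|\le \sqrt{\tfrac{K}{2\tau}}\cdot\tfrac{pK}{\tau}
\]
for every nonzero component, and there are at most $p+1$ such components. Therefore
\[
\|\nabla \tilde N_{p,K}(x)\|^2\le (p+1)\cdot\frac{K}{2\tau}\cdot\frac{p^2K^2}{\tau^2}=C_p^2\,K^3,
\]
yielding $\|\nabla \tilde N_{p,K}(x)\|\le C_p K\sqrt{K}$ uniformly in $|x|\le\tau$, with $C_p$ depending only on $p$ and $\tau$.

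I do not anticipate a real obstacle here: the derivative identity is textbook, the local support is elementary, and the scaling factor is built into the definition. The only mildly delicate point is the behavior near the repeated boundary knots, where the recursive formula must be interpreted with the convention $0/0=0$; this is standard in the de~Boor normalization and causes no harm because the denominators that could vanish only occur precisely at the boundary of the support, where the corresponding $N_{j,p-1}(x)$ vanishes as well.
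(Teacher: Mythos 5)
Your proof is correct and follows essentially the same route as the paper: scale the de Boor derivative formula by the spacing $h=2\tau/K$ to gain one factor of $K$, use boundedness of the lower-degree B-splines together with local support to control the vector of derivatives, and collect the additional $\sqrt{K}$ from the normalization $\tilde N = \sqrt{K/(2\tau)}\,N$. The only cosmetic difference is that you retain the factor of $p$ in the derivative identity and count the at-most-$(p+1)$ active components explicitly, while the paper absorbs these into the constant and appeals more briefly to the partition-of-unity property; the substance is identical.
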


\begin{proof}
Note that, the unscaled B-spline $N_K(x)$ forms a partition of unity, i.e for any $x \in [-\tau, 
\tau]$, we have $\sum_{j=1}^{K+p} N_{j, k}(x) = 1$ and also, by definition, each $x$ only contributed to finitely many (at-most $p$ many) basis. Hence it is immediate that, for any $x \in [-\tau, \tau]$, $\|N_K(x)\| \lesssim 1$ and consequently $\|\tilde N_K(x)\| \lesssim \sqrt{K}$. Now, from Theorem 3 of \cite{kunoth2018splines} we have for any $1 \le j \le k+p$: 
$$
\frac{d}{dx} N_{j, p, K}(x) = \frac{K}{2\tau}\left(N_{j, p-1, K}(x) - N_{j+1, p-1, K}(x)\right)
$$
Hence: 
$$
\frac{d}{dx} \tilde N_{j, p, K}(x) = \frac{K\sqrt{K}}{\tau\sqrt{\tau}}\left(N_{j, p-1, K}(x) - N_{j+1, p-1, K}(x)\right)
$$
which along with with the partition of unity property of $\{N_{i, p}\}_{i=1}^{k+p}$ implies $\|\nabla \tilde N_{p, K}(x)\| \le C_p K\sqrt{K}$, which completes the proof of the theorem.  
\end{proof}

\section{Main algorithm}
\label{sec:main_algo}
In this Subsection, we present our estimation method of the treatment effect $\alpha_0$ detailed in Section \ref{sec:est_method} of the main document in an algorithmic format. 
\allowdisplaybreaks
\begin{algorithm}
\SetAlgoLined
 \begin{enumerate}
 \item Divide the whole data into three equal parts: $\cD = \cD_1 \cup \cD_2 \cup \cD_3$.
  \vspace{0.2in}
  \item Estimate $\gamma_0$ from $\cD_1$ by doing OLS regression of $Q$ on $Z$, i.e. set: 
  $$
  \hat \gamma_n = (Z^{\top}Z)^{-1}Z^{\top}Q \,.
  $$
\item Replace $\eta_i$ in equation \eqref{eq:new1} by $\hat \eta_i$ using $\hat \gamma_n$ obtained in the previous step (i.e. set $\hat \eta_i = Q_i - Z_i^{\top}\hat \gamma_n$ where $\{Q_i, Z_i\}$ are in $\cD_2$ and $\hat \gamma_n$ obtained in Step (b). Then estimate $b'$ from $\cD_2$ using equation \eqref{eq:new1} via spline estimation method. 
   \item Estimate $\alpha_0$ from $\cD_3$ using $\hat \gamma_n$ and $\hat b'$ estimated in the previous two steps. 
     \vspace{0.2in}
     \item Finally, do the above steps by rotating the datasets and combine them to gain efficiency.
 \end{enumerate}
 \label{algo:main}
\end{algorithm}

\section{Proofs of auxiliary lemmas of Theorem \ref{thm:main_hd}}
\label{sec:proof_aux_hd}
Lemma \ref{lem:sq_rate} involves the rate of estimation of the the deviation of $\check \bW$ from $\tilde \bW$. Note that, to establish the rate of convergence of a sequence of random variable, it is enough to establish the bound on the set of probability going to 1. More precisely suppose $\{A_n\}$ is a sequence of events such that $\lim_{n \to \infty} \bbP(A_n) = 1$. Then, if we have a sequence to random variable $\{X_n\}$ and we wish to show that $X_n \lesssim_\bbP t_n$, then it suffices to show that on $A_n$ as: 
$$
\limsup_{n \to \infty} \bbP\left(\frac{|X_n|}{t_n} > t\right) = \limsup_{n \to \infty} \bbP\left(\frac{|X_n|}{t_n} > t, A_n\right) \,.
$$
We will use this fact in our proof. More specifically, define the event $\Omega_n$ as: 
$$
\Omega_n = \left\{\|\hat \gamma_n -\gamma_0\| \le C\sqrt{\frac{s_\gamma \log{p}}{n}}\right\} \,.
$$
for some fixed large constant $C$. Then from the properties of Lasso, we have $\bbP(\Omega_n) \to 1$ as $n \to \infty$ (as this event if true as long as RE condition is satisfied, which approaches to 1 as $n$ goes to infinity.) Therefore, by our previous argument, whenever we want to establish rate of some terms, it is enough to establish the rate on the event $\Omega_n$. 

\subsection{Proof of Lemma \ref{lem:sq_rate}}
\begin{proof}
In this proof, we denote by $a_n = (\hat \gamma_n - \gamma_0)/\|\hat \gamma_n - \gamma_0\|$. We start with with bounding the distance from $\breve \bW^{\perp}$ to $\bW^{\perp}$. 
Recall that the only difference between $\breve \bW^{\perp}$ and $\bW^{\perp}$ is in the last $p_2$ columns where we replace the coefficients of $Z$ by $b'(\eta)$ and consequently the difference is $Z(\hat b'(\hat \eta) - b'(\eta))$. Therefore we have for any $p_1 + 2 \le j \le 1+p_1+p_2$: 
\begin{align}
\frac{1}{n}\left\|\breve \bW_{*, j}^{\perp} - \bW_{*, j}^{\perp}\right\|^2  & \le \frac{1}{n}\left\|\breve \bW_{*, j}^{\perp} - \bW_{*, j}^{\perp}\right\|^2 \notag \\
& = \frac{1}{n} \sum_{i=1}^n Z_{ij}^2(\hat b'(\hat \eta) - b'(\eta))^2 \notag \\
& \lesssim \frac{1}{n} \sum_{i=1}^n Z_{ij}^2(\hat b'(\hat \eta) - b'(\hat \eta))^2 + \frac{1}{n} \sum_{i=1}^n Z_{ij}^2(b'(\hat \eta) - b'(\eta))^2 \notag \\
& \lesssim  \frac{\dot r^2_n}{n} \sum_{i=1}^n Z_{i, j}^2 + \|b''\|_\infty^2 \sum_{i=1}^n Z_{i, j}^2(\hat \eta_i - \eta_i)^2  \notag \\
& \lesssim_{\bbP} \dot r^2_n + \left\|\hat \gamma_n - \gamma_0\right\|^2 \  \frac1n \sum_{i=1}^n Z_{i, j}^2\left(Z_i^{\top}a_n\right)^2 \notag \\
\label{eq:forgotten} & \lesssim_{\bbP} \dot r^2_n + \frac{s_\gamma \log{p}}{n} \,.
\end{align}
We next bound $\bW^{\perp}_{*, j} - \tilde \bW_{*, j}$. For notational simplicity, we define $p = 1 + p_1 + p_2$. Now for any $0 \le j \le p$ we have: 
\begin{align}
    \bW^{\perp}_{*, j} - \tilde \bW_{*, j} & = m_j(\boeta) - P_{\bN_k}\bW_{*, j} \notag \\
    & = m_j(\boeta) -  \check m_j(\hat \boeta) +  \check m_j(\hat \boeta) - P_{\bN_k}\left(\check m_j\left(\hat \boeta\right) + \check \nu_j\right) \notag \\
    & = \left[m_j(\boeta) -  \check m_j(\hat \boeta)\right] + \bN_k(\hat \boeta) \ \check \omega_j + \check \bR_j - P_{\bN_k}\left(\bN_k(\hat \boeta) \ \check\omega_j + \check \bR_j + \check \nu_j\right) \notag \\
    \label{eq:exp_1} & = \left[m_j(\boeta) -  \check m_j(\hat \boeta)\right] + P^{\perp}_{\bN_k}\check \bR_j - P_{\bN_k}\check \nu_j \,,
\end{align}
where $\check \bR_j = \check m_j(\hat \boeta) - \bN_K(\hat \boeta)\check \omega_j$ is the B-spline approximation error on $[-\tau, \tau]$ (recall that we are only using the observations in this interval) and both the notations $\widehat{\tilde W}_{*, j}$ and $\check \nu_j$ are used interchangeably to denote $W_j - \bbE[W_j \mid \hat \boeta] = W_j - \check m_j(\hat\boeta)$. Furthermore, note that: 
\begin{align}
\bW^{\perp}_{*, j} - \widehat{\tilde{\bW}}_{*, j} & = \check m_j(\hat \boeta) - P_{\bN_k}\bW_{*, j} \notag \\
\label{eq:exp_other} & = P^{\perp}_{\bN_k}\check \bR_j - P_{\bN_k}\check \nu_j 
\end{align}
So it is immediate that to prove the second part of the Lemma, we need to control three terms of the RHS of equation \eqref{eq:exp_1}, whereas for the first part we need to control the last two terms of the RHS of equation \eqref{eq:exp_1}. Therefore it is enough to control the three terms of the RHS to conclude both the parts of the Lemma. Going back to equation \eqref{eq:exp_1} we have using $(a+b+c)^2 \le 4(a^2 + b^2 + c^2)$: 
\begin{align}
    \max_{0 \le j \le 1+p_1+p_2}\frac{1}{n} \left\|\bW^{\perp}_{*, j} - \tilde \bW_{*, j}\right\|^2 & \le \underbrace{\frac{4}{n} \max_{0 \le j \le 1+p_1+p_2}\left\|m_j(\boeta) -  \check m_j(\hat \boeta)\right\|^2}_{T_1}  \notag \\
    \label{eq:exp_1_expanded} & \qquad \qquad +  \underbrace{4\max_{0 \le j \le 1+p_1+p_2} \frac{\check \bR_j^{\top}\check \bR_j}{n}}_{T_2} + \underbrace{4 \max_{0 \le j \le 1+p_1+p_2} \frac{\check \nu_j^{\top}P_{\bN_k}^{\perp}\check \nu_j}{n}}_{T_3} \,. 
\end{align} 
$T_2$ can be bounded uniformly from the functional approximation properties of B-spline basis: 
$$
T_2 = 4\max_{0 \le j \le 1+p_1+p_2} \frac{\check\bR_j^{\top}\check\bR_j}{n}  \lesssim r_n^2 \,.
$$
For $T_3$, by Assumption \ref{assm:sg} we have $\check \nu_j$ is a centered subgaussian vector conditionally on $\hat \eta$ with independent entries. Therefore applying Hanson-Wright inequality (\cite{rudelson2013hanson}) we have (with $A_n = P_{\bN_k}/n$): 
$$
\bbP\left(\left|\check \nu_j^{\top}A_n\check \nu_j - \bbE\left[\check \nu_j^{\top}A_n\check \nu_j \mid \hat \eta\right]\right| > t \mid \hat \eta\right) \le 2\exp\left\{\left(-c\left\{\frac{t^2}{\sigma_W^4\left\|A_n\right\|_{HS}^2}, \frac{t}{\sigma_W^2\|A_n\|_{op}}\right\}\right)\right\}
$$
From the properties of the projection matrix we have: $\left\|A_n\right\|_{HS}^2 = K/n^2$ and $\|A_n\|_{op} = 1/n$. Hence we have: 
$$
\bbP\left(\left|\check \nu_j^{\top}A_n\check \nu_j - \bbE\left[\nu_j^{\top}A_n\check \nu_j \mid \hat \eta\right]\right| > t \mid \hat \eta\right) \le 2\exp\left\{\left(-c\left\{\frac{n^2t^2}{\sigma_W^4K}, \frac{nt}{\sigma_W^2}\right\}\right)\right\}
$$
On the other hand, we have: 
$$
\max_{1 \le j \le p} \bbE\left[\frac{\check \nu_j^{\top}P_{\bN_k}\check \nu_j}{n}\right] \le \frac{K}{n} \sup_{\substack{1 \le j \le p \\ |t| \le \tau}}\var(W_{*, j} \mid \hat \eta = t) := \hat \var_{\sup} \times \frac{K}{n} \,.
$$
where by Assumption \ref{assm:sg}, the variance is finite. Hence for $t > \hat \var_{\sup} \times (K/n)$ we have: 
\begin{align*}
    \bbP\left(\max_{1 \le j \le p} \left|\hat \nu_j^{\top}A_n \hat \nu_j\right| > t\right) & = \sum_j \bbP\left(\left|\hat \nu_j^{\top}A_n \hat \nu_j\right| > t\right) \\
    & =  \sum_j \bbE_{\hat \eta}\left[\bbP\left(\left.\left|\hat \nu_j^{\top}A_n \hat \nu_j\right| > t \right\vert \hat \eta\right)\right] \\
    & \le  \sum_j \bbE_{\hat \eta}\left[\bbP\left(\left|\hat \nu_j^{\top}A_n \hat \nu_j - \bbE\left[\left.\hat \nu_j^{\top}A_n \hat \nu_j t \right| \hat \eta\right]\right| > t - \hat \var_{\sup} \frac{K}{n} \mid \hat \eta\right)\right] \\
    & \le 2\sum_j \bbE_{\hat \eta}\left[\exp\left\{\left(-c\left\{\frac{n^2\left(t -\hat \var_{\sup} \frac{K}{n} \right)^2}{\sigma_W^4k}, \frac{n\left(t -\hat \var_{\sup} \frac{K}{n} \right)}{\sigma_W^2}\right\}\right)\right\}\right] \\
    & = 2\exp\left\{\left(\log{p} -c\left\{\frac{n^2\left(t -\hat \var_{\sup} \frac{K}{n} \right)^2}{\sigma_W^4k}, \frac{n\left(t -\hat \var_{\sup} \frac{K}{n} \right)}{\sigma_W^2}\right\}\right)\right\}
\end{align*}
So this implies, with probability going to 1 we have with an appropriate choice of $t$: 
\begin{align}
T_3 = 4 \max_{1 \le j \le 1+p_1+p_2} \frac{\check \nu_j^{\top}P_{\bN_k}\check \nu_j}{n} & \lesssim_\bbP \frac{K}{n} + \sigma_W^2 \left(\frac{\log{p}}{n} \vee \sqrt{\frac{K\log{p}}{n^2}}\right) \notag \\
\label{eq:resid_etahat_bound} & \lesssim_\bbP \dot r^2_n + \sigma_W^2\left(\frac{\log{p}}{n} \vee \dot r_n \sqrt{\frac{\log{p}}{n}}\right) \,.
\end{align}
For $T_1$ in equation \eqref{eq:exp_1}, we use Assumption \ref{assm:smoothness_conditional}. Note that we have $m_j(\eta) = g_j(0, \eta)$ and $\check m_j(\hat \eta) = g_j(\hat \gamma_n - \gamma_0, \hat \eta)$. Hence we have: 
\begin{align}
    & \frac{4}{n} \max_{1 \le j \le 1+p_1+p_2}\left\|m_j(\boeta) -  \check m_j(\hat \boeta)\right\|^2 \notag \\
    & = 4 \max_{1 \le j \le 1+p_1+p_2} \frac{1}{n}\sum_{i=1}^n \left(\check m_j(\hat \eta_i) - m_j(\eta_i)\right)^2 \notag \\
    & = 4 \max_{1 \le j \le 1+p_1+p_2} \frac{1}{n}\sum_{i=1}^n \left(g_j(\hat \gamma_n - \gamma_0, \hat \eta) - g_j(\hat 0, \hat \eta_i) + g_j(0, \hat \eta) - g_j(0,  \eta_i)\right)^2 \notag \\
    & \le 8 \max_{1 \le j \le 1+p_1+p_2} \left[\frac{1}{n}\sum_{i=1}^n \left(g_j(\hat \gamma_n - \gamma_0, \hat \eta_i) - g_j(0, \hat \eta_i)\right)^2 + \frac{1}{n}\sum_{i=1}^n \left(g_j(0, \hat \eta_i) - g_j( 0,  \eta_i)\right)^2 \right] \notag\\
    & \le 8 (L_1 \vee L_2) \|\hat \gamma_n - \gamma_0\|^2\left(1 + \frac1n \sum_i \left(Z_i^{\top}a_n\right)^2\right) \hspace{0.2in} \left[\text{where } a_n = \frac{\hat \gamma_n - \gamma_0}{\|\hat \gamma_n - \gamma_0\|}\right] \notag \\
    \label{eq:m_diff_bound} & \lesssim \|\hat \gamma_n - \gamma_0\|^2 \lesssim_\bbP \ \frac{s_\gamma \log{p}}{n}  \hspace{0.2in} \left[Z \text{ is centered subgaussian with finite variance}\right] \,.
\end{align}
Combining the bounds on $T_i's$ we have for the first part of the Lemma: 
\begin{align*}
\max_{j}\frac{1}{n} \left\|\breve \bW^{\perp}_{*, j} - \widehat{\tilde \bW}_{*, j}\right\|^2 & \lesssim \max_{j}\frac{1}{n} \left\|\bW^{\perp}_{*, j} - \widehat{\tilde \bW}_{*, j}\right\|^2 + \max_j \frac{1}{n} \left\|\breve \bW^{\perp}_{*, j} -\bW^{\perp}_{*, j}\right\|^2 \\
& \lesssim_\bbP  \ \dot r_n^2 + \frac{s_\gamma \log{p}}{n} + \dot r_n \sqrt{\frac{\log{p}}{n}} \,.
\end{align*}
And for the second part of the Lemma: 
\begin{align*}
\max_{j}\frac{1}{n} \left\|\breve \bW^{\perp}_{*, j} - \tilde \bW_{*, j}\right\|^2 & \lesssim \max_{j}\frac{1}{n} \left\|\bW^{\perp}_{*, j} - \tilde \bW_{*, j}\right\|^2 + \max_j \frac{1}{n} \left\|\breve \bW^{\perp}_{*, j} -\bW^{\perp}_{*, j}\right\|^2  \\
& \lesssim_\bbP \ \dot r_n^2 + \frac{s_\gamma \log{p}}{n} +\dot r_n \sqrt{\frac{\log{p}}{n}} \,.
\end{align*}
This completes the proof of the first and the second part of the Lemma. 
\\\\
\noindent
For the third part: 
\begin{align*}
& \max_{1 \le j, j' \le p} \frac{1}{n}\left|\left(\breve \bW_{*, j}^{\perp}\right)^{\top}\left(\breve \bW^{\perp}_{*, j'} - \tilde \bW_{*, j'}\right)\right| \\
& \le \underbrace{\frac{1}{n}\max_{1 \le j, j' \le p}\left|\left(\breve \bW_{*, j}^{\perp} - \widehat{\tilde \bW}_{*, j}\right)^{\top}\left(\breve \bW^{\perp}_{*, j'} - \tilde \bW_{*, j'}\right)\right|}_{T_{1}} + \underbrace{\frac{1}{n}\max_{1 \le j, j' \le p}\left|\widehat{\tilde \bW}_{*, j}^{\top}\left(\breve \bW^{\perp}_{*, j'} - \tilde \bW_{*, j'}\right)\right|}_{T_{2}}
\end{align*}
To bound $T_{1}$ we use the previous parts of this Lemma and Cauchy-Schwarz inequality. which yields:   
\begin{align*}
    T_{1} & \le \sqrt{\frac{1}{n}\max_{1 \le j \le p}\left\|\left(\breve \bW_{*, j}^{\perp} - \widehat{\tilde \bW}_{*, j}\right)\right\|^2} \times \sqrt{\frac{1}{n}\max_{1 \le j' \le p}\left\|\left(\breve \bW_{*, j'}^{\perp} - \tilde \bW_{*, j'}\right)\right\|^2} \\
    & \lesssim_\bbP  \ \dot r_n^2 + \frac{s_\gamma \log{p}}{n} +\dot r_n \sqrt{\frac{\log{p}}{n}}  \,.
\end{align*}
where the bounds in the last inequality follows from the conclusions of the previous parts of this Lemma. To bound $T_{2}$ we first expand $\breve \bW^{\perp}_{*, j'} - \tilde \bW_{*, j'}$ as before: 
\begin{align*}
\breve \bW^{\perp}_{*, j'} - \tilde \bW_{*, j'} & = \breve \bW^{\perp}_{*, j'} - \bW^{\perp}_{*, j'} + \bW^{\perp}_{*, j'} - \tilde \bW_{*, j'} \\
& = \breve \bW^{\perp}_{*, j'} - \bW^{\perp}_{*, j'} + m_{j'}(\boeta) - \check m_{j'}(\hat \boeta) + P_{\bN_k}^{\perp}\check \bR_{j'} - P_{\bN_k} \check \nu_{j'} 
\end{align*}
Using the above expansion we have: 
\begin{align}
    T_{2} &= \frac{1}{n}\max_{1 \le j,j' \le p}\left|\widehat{\tilde \bW}_{*, j}^{\top}\left(\breve \bW^{\perp}_{*, j'} - \tilde \bW_{*, j'}\right)\right| \notag \\
    \label{eq:bound_T2} & \le \frac{1}{n}\max_{1 \le j, j' \le p}\left|\widehat{\tilde \bW}_{*, j}^{\top}\left(\breve \bW^{\perp}_{*, j'} - \bW^{\perp}_{*, j'} \right)\right| + \frac{1}{n}\max_{1 \le j, j' \le p}\left|\widehat{\tilde \bW}_{*, j}^{\top}\left(m_{j'}(\boeta) - \check m_{j'}(\hat \boeta)\right)\right| \notag \\
    & \qquad \qquad + \frac{1}{n}\max_{1 \le j, j' \le p}\left|\widehat{\tilde \bW}_{*, j}^{\top}P_{\bN_k}^{\perp}\check \bR_{j'}\right| + \frac{1}{n}\max_{1 \le j, j' \le p}\left|\widehat{\tilde \bW}_{*, j}^{\top}P_{\bN_k} \check \nu_{j'}\right|
\end{align}
The bound the first term of the RHS of the above equation, we have using Cauchy-Schwarz inequality: 
\begin{align*}
& \frac{1}{n}\max_{1 \le j, j' \le p}\left|\widehat{\tilde \bW}_{*, j}^{\top}\left(\breve \bW^{\perp}_{*, j'} - \bW^{\perp}_{*, j'} \right)\right| \\
& \le \underbrace{\sqrt{\max_j \frac1n \left\|\widehat{\tilde \bW}_{*, j}\right\|^2}}_{\lesssim_\bbP 1} \underbrace{\sqrt{\max_{j'} \frac1n \left\|\breve \bW^{\perp}_{*, j'} - \bW^{\perp}_{*, j'}\right\|^2}}_{\lesssim_\bbP \ \dot r_n + \sqrt{\frac{s_\gamma \log{p}}{n}}} \hspace*{0.2in} [\text{Equation }\eqref{eq:forgotten}] \\
& \lesssim_\bbP \ \dot r_n + \sqrt{\frac{s_\gamma \log{p}}{n}} \,.
\end{align*}
In the above inequality we have used the fact $\max_j (1/n)\left\|\widehat{\tilde \bW}_{*, j}\right\|^2 = O_p(1)$ which follows from the following application of triangle inequality: 
$$
\frac{1}{n}\max_{2 \le j \le p}\left\|\widehat{\tilde \bW}_{*, j}\right\|^2 = \frac{1}{n}\max_{2 \le j \le p}\left\|\tilde \bW_{*, j}\right\|^2  + \frac{1}{n}\max_{2 \le j \le p}\left\|\left(\tilde \bW_{*, j} - \widehat{\tilde \bW}_{*, j}\right)\right\|^2  \lesssim_\bbP 1 \,.
$$
as the first term is $O_p(1)$ due to bounded variance of $\tilde W_j$ (which follows from sub-gaussianity Assumption \ref{assm:sg}) and the second term is $o_p(1)$ from the previous parts of this Lemma. To bound the second term of equation \eqref{eq:bound_T2}: 
\begin{align*}
\frac{1}{n}\max_{1 \le j, j' \le p}\left|\widehat{\tilde \bW}_{*, j}^{\top}\left(m_{j'}(\boeta) - \check m_{j'}(\hat \boeta)\right)\right| & \lesssim \sqrt{\max_j \frac1n \left\|\widehat{\tilde \bW}_{*, j}\right\|^2} \sqrt{\max_{j'}\frac{1}{n}\sum_i \left(\check m_{j'}(\hat \eta_i) - m_{j'}(\eta_i)\right)^2} \\
& \lesssim_\bbP \sqrt{\frac{s_\gamma \log{p}}{n}} \,.
\end{align*}
where the last rate follows from the similar calculation as the first term of the RHS of equation \eqref{eq:exp_1_expanded}. To bound the third term in equation \eqref{eq:bound_T2} note that the entries of $\widehat{\tilde \bW}_{*, j}$ are centered and independent conditionally on $\hat \eta$. Therefore we have: 
\begin{align*}
    \bbP\left(\frac{1}{n}\max_{1 \le j, j' \le p}\left|\widehat{\tilde \bW}_{*, j}^{\top}P_{\bN_k}^{\perp}\check \bR_{j'}\right| > t \right) & \le \sum_{1 \le j,j' \le p} \bbP\left(\frac{1}{n}\left|\widehat{\tilde \bW}_{*, j}^{\top}P_{\bN_k}^{\perp}\check \bR_{j'}\right| > t \right) \\
    & \le \sum_{1 \le j,j' \le p} \bbE\left[\bbP\left(\frac{1}{n}\left|\widehat{\tilde \bW}_{*, j}^{\top}P_{\bN_k}^{\perp}\check \bR_{j'}\right| > t \mid \hat \boeta\right)\right] \\
    & \le \sum_{1 \le j,j' \le p} \bbE\left[2\exp{\left(-c\frac{nt^2}{\sigma_W \frac{\check \bR_{j'}^{\top}\check \bR_{j'}}{n}}\right)}\right] \\
    & \le 2\exp{\left(2\log{p}-c\frac{nt^2}{\sigma_W r_n^2}\right)}
\end{align*}
Therefore we have with probability going to 1 with an appropriate choice of $t$: 
$$
\frac{1}{n}\max_{1 \le j,j' \le p}\left|\widehat{\tilde \bW}_{*, j}^{\top}P_{\bN_k}^{\perp}\check \bR_{j'}\right| \lesssim_\bbP \  r_n\sqrt{\frac{\log{p}}{n}} \,.
$$
For the last term of equation \eqref{eq:bound_T2} note that: 
\begin{align*}
    \frac{1}{n}\max_{1 \le j,j' \le p}\left|\widehat{\tilde \bW}_{*, j}^{\top}P_{\bN_k} \check \nu_{j'}\right| & =  \frac{1}{n}\max_{1 \le j,j' \le p}\left|\check \nu_j^{\top}P_{\bN_k} \check \nu_{j'}\right| \hspace{0.2in} \left[\text{ As }\check \nu_j = \widehat{\tilde \bW}_{*, j}\right] \\
    & = \sqrt{\frac{1}{n}\max_j \left|\check \nu_j^{\top}P_{\bN_k} \check \nu_{j}\right|} \times \sqrt{\frac{1}{n}\max_{j'} \left|\check \nu_{j'}^{\top}P_{\bN_k} \check \nu_{j'}\right|} \\
    & \lesssim_\bbP \  \dot r_n^2 + \frac{\log{p}}{n} + \dot r_n \sqrt{\frac{\log{p}}{n}} \,.
\end{align*}
where the last inequality follows from the same calculation as for $\hat \nu_j$ in the first part of this Lemma. Combining the bounds for different components of $T_2$ we have, with probability going to $1$: 
$$
\max_{1 \le j, j' \le p} \frac{1}{n}\left|\left(\breve \bW_{*, j}^{\perp}\right)^{\top}\left(\breve \bW^{\perp}_{*, j'} - \tilde \bW_{*, j'}\right)\right| \lesssim_\bbP \  \dot r_n +  \sqrt{\frac{s_\gamma \log{p}}{n}} \,.
$$
This completes the proof the lemma. 
\end{proof}

\subsection{Proof of Lemma \ref{lem:choice_lambda}}
To obtain $\lambda_1$ we need to bound the $\ell_\infty$ norm on the first term of the RHS of equation \eqref{eq:s_lasso}. We can bound that term as: 
\begin{align}
    \frac1n \left\|\left(\bS^{\perp} - \breve \bW^{\perp}_{-1}\theta^*_S\right)^{\top}\breve \bW^{\perp}_{-1}\right\|_\infty  & \le \frac{1}{n}\max_{2 \le j \le p}\left|\left(\breve \bW_{*, j}^{\perp}\right)^{\top}\left(\bS^{\perp} - \breve \bW^{\perp}_{-1}\theta^*_S\right)\right| \notag \\
    & \le \underbrace{\frac{1}{n}\max_{2 \le j \le p}\left|\left(\breve \bW_{*, j}^{\perp}\right)^{\top}\left(\tilde \bS - \tilde \bW_{-1}\theta^*_S\right)\right|}_{T_1} + \underbrace{\frac{1}{n}\max_{2 \le j \le p}\left|\left(\breve \bW_{*, j}^{\perp}\right)^{\top}\left(\bS^{\perp} - \tilde \bS\right)\right|}_{T_2}  \notag \\
    \label{eq:basic_exp_lemma_lambda} & \qquad \qquad + \underbrace{\max_{2 \le j \le p} \frac{1}{n}\left|\left(\breve \bW_{*, j}^{\perp}\right)^{\top}\left(\breve \bW_{-1}^{\perp} - \tilde \bW_{-1}\right)\theta^*_S\right|}_{T_3}
\end{align} 
Observe that, we can bound $T_2$ via similar calculation we did to prove the third display of Lemma \ref{lem:sq_rate} as it is a special case for $j' = 1$. Therefore we have: 
$$
T_2 \lesssim_\bbP \  \dot r_n +  \sqrt{\frac{s_\gamma \log{p}}{n}} \,.
$$
We next bound $T_3$. First of all, an application $\ell_1$ - $\ell_\infty$ bound yields: 
\begin{align*}
    T_3 & = \max_{1 \le j \le p} \frac{1}{n}\left|\left(\breve \bW_{*, j}^{\perp}\right)^{\top}\left(\breve \bW^{\perp}_{-1} - \tilde \bW_{-1}\right)\theta^*_S\right| \\
    & \le \|\theta^*_S\|_1 \max_{1 \le j, j' \le p} \frac{1}{n}\left|\left(\breve \bW_{*, j}^{\perp}\right)^{\top}\left(\breve \bW^{\perp}_{*, j'} - \tilde \bW_{*, j'}\right)\right| \\
    & \lesssim_\bbP \|\theta^*_S\|_1 \left[\dot r_n +  \sqrt{\frac{s_\gamma \log{p}}{n}}\right] \,,
\end{align*}
where again the last inequality follows from third conclusion of Lemma \ref{lem:sq_rate}. 
To bound $T_1$, we further divide it into two terms: 
\begin{align}
    T_1 & = \frac{1}{n}\max_{2 \le j \le p}\left|\left(\breve \bW_{*, j}^{\perp}\right)^{\top}\left(\tilde \bS - \tilde \bW_{-1}\theta^*_S\right)\right| \notag \\
   \label{eq:centering_issue} & \le \underbrace{\frac{1}{n}\max_{2 \le j \le p}\left|\left(\breve \bW_{*, j}^{\perp} - \tilde \bW_{*, j}\right)^{\top}\left(\tilde \bS - \tilde \bW_{-1}\theta^*_S\right)\right|}_{T_{11}} + \underbrace{\frac{1}{n}\max_{2 \le j \le p}\left|\tilde \bW_{*, j}^{\top}\left(\tilde \bS - \tilde \bW_{-1}\theta^*_S\right)\right|}_{T_{12}}
\end{align}
To bound $T_{12}$ we use the first order condition on the definition of $\theta^*_S$. Recall that $\theta^*_S$ is defined as: 
$$
\theta^*_S = \argmin_\delta \bbE\left[\left(S - \tilde W_1^{\top}\delta\right)^2\mathds{1}_{|\eta| \le \tau}\right]
$$
Therefore we have: 
$$
\bbE\left[\tilde W_{-1}(S - \tilde W_{-1}^{\top}\theta^*_S)\mathds{1}_{|\eta| \le \tau}\right] = 0 \,.
$$
and consequently for any $2 \le j \le p$ and $1 \le i \le n$, the term $\tilde \bW_{i, j}(\tilde \bS_i - \tilde \bW_{i,-1}^{\top}\theta^*_S)\mathds{1}_{|\eta_i| \le \tau}$ is a centered subexponential and independent over $i$ with $\psi_1$ norm bounded by $\sigma^2_W\sqrt{1 + \|\theta^*_S\|^2}$. Note that $\|\theta^*_S\|$ is also bounded, which follows from the definition of $\theta^*_S$ (see equation \eqref{eq:def_theta_s}): 
\begin{align*}
\theta^*_S = \left(\Sigma_{\tau_{-1, -1}}\right)^{-1}\bbE[\tilde W_{-1}\tilde S\mathds{1}_{|\eta| \le \tau}] = \left(\Sigma_{\tau_{-1, -1}}\right)^{-1} \Sigma_{\tau_{-1, 1}} & = -\frac{1}{\left(\Sigma_\tau^{-1}\right)_{1, 1}}\left(\Sigma_\tau^{-1}\right)_{-1, 1} \\
& =  -\frac{1}{\left(\Sigma_\tau^{-1}\right)_{1, 1}} \left(\Sigma_\tau^{-1}e_1\right)_{-1}
\end{align*}
This implies: 
$$
\left\|\theta^*_S\right\|^2 = \frac{1}{\left(\Sigma_\tau^{-1}\right)_{1, 1}^2}\left\| \left(\Sigma_\tau^{-1}e_1\right)_{-1}\right\|^2 \le \frac{1}{\left(\Sigma_\tau^{-1}\right)_{1, 1}^2}\left\| \left(\Sigma_\tau^{-1}e_1\right)\right\|^2 \le \left(\frac{\lambda_{\max}(\Sigma_\tau^{-1})}{\lambda_{\min}(\Sigma_\tau^{-1})}\right)^2 \le \left(\frac{C_{\max}}{C_{\min}}\right)^2 
$$
where the last bound follows from Assumption \ref{assm:sigma}. Hence we have: 
$$
\left\|\tilde \bW_{i, j}(\tilde \bS_i - \tilde \bW_{i,-1}^{\top}\theta^*_S)\mathds{1}_{|\eta_i| \le \tau}\right\|_{\psi_1} \le \sigma^2_W\sqrt{1 +\left(\frac{C_{\max}}{C_{\min}}\right)^2 }
$$
We now bound $T_{12}$  as follows: 
\begin{align*}
    & \bbP\left(\frac{1}{n}\max_{1 \le j \le p}\left|\tilde \bW_{*, j}^{\top}\left(\tilde \bS - \tilde \bW_{-1}\theta^*_S\right)\right| > t, \Omega_n\right) \\
    & \le \sum_j \bbP\left(\frac{1}{n}\left|\tilde \bW_{*, j}^{\top}\left(\tilde \bS - \tilde \bW_{-1}\theta^*_S\right)\right| > t\right) \\
   & =  \sum_{j}  \bbP\left(\frac{1}{n}\left|\sum_i\tilde \bW_{i, j}\left(\tilde \bS_i - \tilde \bW_{-1, i}^{\top}\theta^*_S\right)\mathds{1}_{|\hat \eta_i| \le \tau}\right| > t\right) \\ 
   & \le  \sum_{j}  \bbP\left(\frac{1}{n}\left|\sum_i\tilde \bW_{i, j}\left(\tilde \bS_i - \tilde \bW_{-1, i}^{\top}\theta^*_S\right)\mathds{1}_{|\eta_i| \le \tau}\right| > \frac{t}{2}\right) \\
   & \qquad \qquad +  \sum_{j}  \bbP\left(\frac{1}{n}\left|\sum_i\tilde \bW_{i, j}\left(\tilde \bS_i - \tilde \bW_{-1, i}^{\top}\theta^*_S\right)\left(\mathds{1}_{|\hat \eta_i| \le \tau}- \mathds{1}_{| \eta_i| \le \tau}\right)\right| > \frac{t}{2}\right)  \\
   & \triangleq T_{121} + T_{122}
\end{align*}
Bounding $T_{121}$ is straightforward as we have already established $\tilde \bW_{i, j}\left(\tilde \bS_i - \tilde \bW_{-1, i}^{\top}\theta^*_S\right)\mathds{1}_{|\eta_i| \le t}$ is sub-exponential random variable, hence applying Bernstein's inequality we have with probability going to 1: 
\begin{align}
\label{eq:t12}
T_{121} \lesssim_\bbP \  \sqrt{\frac{\log{p}}{n}} \,.
\end{align}
We next bound $T_{122}$, which is trickier. Although the terms $\tilde \bW_{i, j}\left(\tilde \bS_i - \tilde \bW_{-1, i}^{\top}\theta^*_S\right)\left(\mathds{1}_{|\hat \eta_i| \le \tau}- \mathds{1}_{| \eta_i| \le \tau}\right)$ are sub-exponential by similar argument as before, they are not centered. Therefore, we further need to bound the expectation of the terms, i.e. $\bbE\left[\mathds{1}_{\Omega_n}\bbE\left[W_j\left(\tilde \bS - \tilde W_{-1}^{\top}\theta^*_S\right)\left(\mathds{1}_{|\hat \eta| \le \tau}- \mathds{1}_{| \eta| \le \tau}\right) \mid \cD_1\right]\right]$ (as we can always restrict ourselves on $\Omega_n$ to find the rate). Note that: 
\begin{align}
& \left|\bbE\left[\mathds{1}_{\Omega_n}\bbE\left[W_j\left(\tilde \bS - \tilde W_{-1}^{\top}\theta^*_S\right)\left(\mathds{1}_{|\hat \eta| \le \tau}- \mathds{1}_{| \eta| \le \tau}\right) \mid \cD_1\right]\right]\right| \notag \\
& \le \bbE\left[\mathds{1}_{\Omega_n}\left|\bbE\left[W_j\left(\tilde \bS - \tilde W_{-1}^{\top}\theta^*_S\right)\left(\mathds{1}_{|\hat \eta| \le \tau}- \mathds{1}_{| \eta| \le \tau}\right) \mid \cD_1\right]\right|\right]  \notag \\
& \le \left\|W_j\left(\tilde \bS - \tilde W_{-1}^{\top}\theta^*_S\right)\right\|_{\alpha}\bbE\left[\mathds{1}_{\Omega_n}\left(\bbE\left[\left|\mathds{1}_{|\hat \eta| \le \tau}- \mathds{1}_{| \eta| \le \tau}\right|^{\beta} \mid \cD_1\right]\right)^{\frac{1}{\beta}}\right] \notag \\
\label{eq:final_eq_new_1} & \lesssim \alpha \bbE\left[\mathds{1}_{\Omega_n}\left(\bbP\left(|\hat \eta| \le \tau, |\eta| > \tau \mid \cD_1\right) + \bbP\left(|\hat \eta| > \tau, |\eta| \le \tau \mid \cD_1\right)\right)^{\frac{1}{\beta}}\right]
\end{align}
We now analyze the probabilities inside the expectation. Fix $\rho > 0$.  
\begin{align}
\bbP\left(|\hat \eta| \le \tau, |\eta| > \tau \mid \cD_1\right)  & = \bbP\left(|\hat \eta| \le \tau, |\eta| > (1+ \rho)\tau \mid \cD_1\right) + \bbP\left(|\hat \eta| \le \tau, \tau < |\eta| \le (1 + \rho)\tau \mid \cD_1\right) \notag \\
& \le \bbP\left(\left|\hat \eta - \eta\right| > \rho \tau \mid \cD_1\right) + \bbP\left(\tau < |\eta| \le (1 + \rho)\tau \mid \cD_1\right) \notag \\
& \le \bbP\left(\left|Z^{\top}a_n\right| >  \frac{\rho \tau}{\|\hat \gamma_n - \gamma_0\|}\right) + \|f_\eta\|_\infty \rho \tau \notag \\
\label{eq:Delta_bound_1} & \le 2\exp{\left(-\frac{c \rho^2 \tau^2}{\sigma_W^2 \|\hat \gamma_n - \gamma_0\|_2^2}\right)} + \|f_\eta\|_\infty \rho \tau
\end{align}
Similar calculation also yields: 
\begin{align}
\label{eq:Delta_bound_2} \bbP\left(|\hat \eta| > \tau, |\eta| \le \tau \mid \cD_1\right) & \le  2\exp{\left(-\frac{c \rho^2 \tau^2}{\sigma_W^2 \|\hat \gamma_n - \gamma_0\|_2^2}\right)} + \|f_\eta\|_\infty \rho \tau \,.
\end{align}
Therefore we have from equation \eqref{eq:final_eq_new_1}: 
\begin{align*}
& \left|\bbE\left[\mathds{1}_{\Omega_n}\bbE\left[W_j\left(\tilde \bS - \tilde W_{-1}^{\top}\theta^*_S\right)\left(\mathds{1}_{|\hat \eta| \le \tau}- \mathds{1}_{| \eta| \le \tau}\right) \mid \cD_1\right]\right]\right| \\
&  \lesssim \alpha \bbE\left[\mathds{1}_{\Omega_n}\left(4\exp{\left(-\frac{c \rho^2 \tau^2}{\sigma_W^2 \|\hat \gamma_n - \gamma_0\|_2^2}\right)} + 2\|f_\eta\|_\infty \rho \tau\right)^{\frac{1}{\beta}}\right] \\
& = \alpha \bbE\left[\left(\mathds{1}_{\Omega_n}\left[4\exp{\left(-\frac{c \rho^2 \tau^2}{\sigma_W^2 \|\hat \gamma_n - \gamma_0\|_2^2}\right)} + 2\|f_\eta\|_\infty \rho \tau\right]\right)^{\frac{1}{\beta}}\right] \\
& \le \alpha \left(\left[4\exp{\left(-\frac{c \rho^2 \tau^2}{C\sigma_W^2 \frac{s_\gamma \log{p}}{n}}\right)} + 2\|f_\eta\|_\infty \rho \tau\right]\right)^{\frac{1}{\beta}}
\end{align*}
Choosing $\rho = (\sqrt{C}\sigma_W/\tau \sqrt{c})\sqrt{((s_\gamma \log{p})/n) \log{(n/(s_\gamma \log{p}))}}$  we have: 
\begin{align*}
& \alpha \bbE\left[\mathds{1}_{\Omega_n}\left(\bbP\left(|\hat \eta| \le \tau, |\eta| > \tau \mid \cD_1\right) + \bbP\left(|\hat \eta| > \tau, |\eta| \le \tau \mid \cD_1\right)\right)^{\frac{1}{\beta}}\right] \\
 & \lesssim \alpha \left(\frac{s_\gamma \log{p}}{n}\log{\left(\frac{n}{s_\gamma \log{p}}\right)}\right)^{\frac{1}{2\beta}} \\
& \triangleq \alpha t_n^{\frac{1}{\beta}} \hspace{0.2in} \left[t_n = \sqrt{\frac{s_\gamma \log{p}}{n}\log{\left(\frac{n}{s_\gamma \log{p}}\right)}}\right]\,.
\end{align*}
The above equation is true for all $\alpha, \beta > 1$ such that $1/\alpha + 1/\beta = 1$. Therefore minimizing the expression over $\alpha, \beta$ we have that the optimal choice of $\alpha$ is $\alpha = \log{(1/t_n)}$ which yields that: 
\begin{align*}
\left|\bbE\left[\mathds{1}_{\Omega_n}\bbE\left[W_j\left(\tilde \bS - \tilde W_{-1}^{\top}\theta^*_S\right)\left(\mathds{1}_{|\hat \eta| \le \tau}- \mathds{1}_{| \eta| \le \tau}\right) \mid \cD_1\right]\right]\right|  & \lesssim \log{\frac{1}{t_n}}t_n^{1- \frac{1}{\log{\frac{1}{t_n}}}} \\
& = t_n \log{\frac{1}{t_n}} \\
& \lesssim \sqrt{\frac{s_\gamma \log{p}}{n}}\left(\log{\frac{n}{s_\gamma \log{p}}}\right)^{3/2} \,.
\end{align*}
Using this bounds on the expectation of $\tilde \bW_{i, j}\left(\tilde \bS_i - \tilde \bW_{-1, i}^{\top}\theta^*_S\right)\left(\mathds{1}_{|\hat \eta_i| \le \tau}- \mathds{1}_{| \eta_i| \le \tau}\right)$ we have: 
$$
T_{122} \lesssim_\bbP \sqrt{\frac{s_\gamma \log{p}}{n}}\left(\log{\frac{n}{s_\gamma \log{p}}}\right)^{3/2}  + \sqrt{\frac{\log{p}}{n}} = \sqrt{\frac{s_\gamma \log{p}}{n}}\left(\log{\frac{n}{s_\gamma \log{p}}}\right)^{3/2}  \,.
$$
Combining the bound on $T_{121}$ and $T_{122}$ we have: 
$$
T_{12} \lesssim_\bbP \sqrt{\frac{s_\gamma \log{p}}{n}}\left(\log{\frac{n}{s_\gamma \log{p}}}\right)^{3/2} \,.
$$

\noindent
To bound $T_{11}$ not that from equation \eqref{eq:exp_1} we have: 
\begin{align*}
\breve \bW_{*, j}^{\perp} - \tilde \bW_{*, j} & = \breve \bW_{*, j}^{\perp} -  \bW^{\perp}_{*, j} +  \bW^{\perp}_{*, j} - \tilde \bW_{*, j} \\
& = \breve \bW_{*, j}^{\perp} -  \bW^{\perp}_{*, j}  + \left[m_j(\eta) -  \check m_j(\hat \eta)\right] + P^{\perp}_{\bN_k}\check \bR_j - P_{\bN_k}\check \nu_j
\end{align*}
Using this we have: 
\begin{align*}
    T_{11} & = \frac{1}{n}\max_{1 \le j \le p}\left|\left(\breve \bW_{*, j}^{\perp} - \tilde \bW_{*, j}\right)^{\top}\left(\tilde \bS - \tilde \bW_{-1}\theta^*_S\right)\right| \\
    & \le \underbrace{\frac{1}{n}\max_{1 \le j \le p}\left|\left(\breve \bW_{*, j}^{\perp} -  \bW^{\perp}_{*, j} \right)^{\top}\left(\tilde \bS - \tilde \bW_{-1}\theta^*_S\right)\right|}_{T_{111}}  + \underbrace{\frac{1}{n}\max_{1 \le j \le p}\left|\left(m_j(\eta) -  \check m_j(\hat \eta)\right)^{\top}\left(\tilde \bS - \tilde \bW_{-1}\theta^*_S\right)\right|}_{T_{112}} \\
    & \qquad \qquad + \underbrace{\frac{1}{n}\max_{1 \le j \le p}\left|\check \bR_j^{\top}P^{\perp}_{\bN_k}\left(\tilde \bS - \tilde \bW_{-1}\theta^*_S\right)\right|}_{T_{113}}  + \underbrace{\frac{1}{n}\max_{1 \le j \le p}\left|\check \nu_j^{\top}P_{\bN_k}\left(\tilde \bS - \tilde \bW_{-1}\theta^*_S\right)\right|}_{T_{114}}
\end{align*}
We start with bounding $T_{111}$. As mentioned previously, we have $\breve \bW_{*, j}^{\perp} -\bW^{\perp}_{*, j}$ is active for $p_1 + 2 \le j \le p_1 + p_2 + 1$ and for such a $j$: 
$$
\breve \bW_{i, j}^{\perp} -\bW^{\perp}_{i, j} = Z_{i,j}(\hat b'(\hat \eta_i) - b(\eta_i)) \,.
$$
Using this, we have: 
\begin{align*}
T_{111} & = \frac{1}{n}\max_{p_1 + 2 \le j \le p_1 + p_2 + 1}\left|\left(\breve \bW_{*, j}^{\perp} -  \bW^{\perp}_{*, j} \right)^{\top}\left(\tilde \bS - \tilde \bW_{-1}\theta^*_S\right)\right| \\
&\le \underbrace{\sqrt{\max_{p_1 + 2 \le j \le p_1 + p_2 + 1}\frac1n \sum_i Z_{i,j}^2\left(\hat b'(\hat \eta_i) - b'(\eta_i)\right)^2}}_{T_{1111}} \times \underbrace{\sqrt{\frac1n \left\|\tilde \bS - \tilde \bW_{-1}\theta^*_S\right\|^2}}_{T_{1112}}
\end{align*}
It follows from WLLN that $T_{1112} = O_p(1)$. For $T_{1111}$ similar calculation as of \eqref{eq:forgotten} yields: 
\begin{align*}
T_{1111} & = \sqrt{\max_{p_1 + 2 \le j \le p_1 + p_2 + 1}\frac1n \sum_i Z_{i,j}^2\left(\hat b'(\hat \eta_i) - b(\eta_i)\right)^2} \\
& \lesssim \sqrt{\max_{j}\frac1n \sum_i Z_{i,j}^2\left(\hat b'(\hat \eta_i) - b'(\hat \eta_i)\right)^2} + \sqrt{\max_{j}\frac1n \sum_i Z_{i,j}^2\left(b'(\hat \eta_i) - b'(\eta_i)\right)^2} \\
& \lesssim_\bbP \  \dot r_n \times \sqrt{\max_{j}\frac1n \sum_i Z_{i,j}^2} + \|b'\|_\infty \sqrt{\max_j \frac1n \sum_i Z_{i,j}^2\left(\hat \eta_i - \eta_i\right)^2} \\
& \lesssim_\bbP \  \dot r_n \times \sqrt{\max_{j}\frac1n \sum_i Z_{i,j}^2} + \|b'\|_\infty \sqrt{\frac{s_\gamma \log{p}}{n}}\sqrt{\max_j \frac1n \sum_i Z_{i,j}^2\left(Z_i^{\top}a_n\right)^2} \\
& \lesssim_\bbP \dot r_n + \sqrt{\frac{s_\gamma \log{p}}{n}} \,.
\end{align*}
Note that in the above analysis we have the used the facts: 
\begin{align*}
\max_{1 \le j \le p_2}\frac1n \sum_i Z_{i,j}^2 & = O_p(1) \hspace{0.2in} [\text{Lemma }\ref{lem:subg_second_moment}]\\
\max_{1 \le j \le p_2} \frac1n \sum_i Z_{i,j}^2\left(Z_i^{\top}a_n\right)^2 & = O_p(1) \hspace{0.2in}[\text{Lemma } \ref{lem:subg_fourth_moment}]\,.
\end{align*}
which follows from the subgaussianity of $Z$. Therefore the above bounds yield: 
$$
T_{111} \le T_{1111} \times T_{1112} \lesssim_\bbP \ \dot r_n + \sqrt{\frac{s_\gamma \log{p}}{n}} \,.
$$
To bound $T_{112}$, we use Cauchy-Schwarz inequality to show that with probability going to 1: 
\begin{align*}
    T_{112} & = \frac{1}{n}\max_{1 \le j \le p}\left|\left(m_j(\boeta) -  \check m_j(\hat \boeta)\right)^{\top}\left(\tilde \bS - \tilde \bW_{-1}\theta^*_S\right)\right| \\
    & \le \sqrt{\frac{1}{n}\max_{1 \le j \le p}\left\|m_j(\boeta) -  \check m_j(\hat \boeta) \right\|^2} \times \underbrace{\sqrt{\frac{1}{n}\left\|\tilde \bS - \tilde \bW_{-1}\theta^*_S\right\|^2}}_{O_p(1)} \\
    & \lesssim_\bbP \  \sqrt{\frac{s_\gamma \log{p_2}}{n}} \,.
\end{align*}
For $T_{113}$, we further sub-divide it as follows: 
\begin{align}
   T_{113}  & \le \underbrace{\frac{1}{n}\max_{1 \le j \le p}\left|\check \bR_j^{\top}P^{\perp}_{\bN_k}\left(\hat{\tilde \bS} - \hat{\tilde \bW}_{-1}
    \theta^*_S\right)\right|}_{T_{1131}}  + \underbrace{\frac{1}{n}\max_{1 \le j \le p}\left|\check \bR_j^{\top}P^{\perp}_{\bN_k}\left(\tilde \bS - \hat{\tilde \bS}\right)\right|}_{T_{1132}}  \notag \\
    \label{eq:exp_2}& \qquad \qquad \qquad + 
    \underbrace{\frac{1}{n}\max_{1 \le j \le p}\left|\check \bR_j^{\top}P^{\perp}_{\bN_k}\left(\tilde \bW_{-1} - \hat{\tilde \bW}_{-1}\right)
    \theta^*_S\right|}_{T_{1133}}
\end{align}
To bound $T_{1131}$, note that $\check \bR_j^{\top}P_{\hat \bN_k}^{\perp}$ is measurable with respect to $\hat \eta$, whereas $\hat{\tilde \bS} - \hat{\tilde \bW}_{-1}\theta^*_S$ has mean zero conditional on $\hat \eta$. Therefore, from subgaussian concentration inequality we have: 
\begin{align*}
    \bbP\left(\frac{1}{n}\max_{1 \le j \le p}\left|\check \bR_j^{\top}P_{\hat \bN_k}^{\perp}\left(\hat{\tilde \bS} - \hat{\tilde \bW}_{-1}\theta^*_S\right)\right| > t\right) & = \sum_{1 \le j \le p}\bbP\left(\frac{1}{n}\left|\check \bR_j^{\top}P_{\hat \bN_k}^{\perp}\left(\hat{\tilde \bS} - \hat{\tilde \bW}_{-1}\theta^*_S\right)\right| > t\right) \\
    & = \sum_{1 \le j \le p}\bbE_{\hat \eta}\left[\bbP\left(\frac{1}{n}\left|\check \bR_j^{\top}P_{\hat \bN_k}^{\perp}\left(\hat{\tilde \bS} - \hat{\tilde \bW}_{-1}\theta^*_S\right)\right| > t \mid \hat \eta\right)\right] \\
    & \le \sum_{1 \le j \le p}\bbE_{\hat \eta}\left[2\exp{\left(-c\frac{nt^2}{\sigma_W \frac{\check \bR_j^{\top}\check \bR_j}{n}}\right)}\right] \\
    & \le 2\exp{\left(\log{p}-c\frac{nt^2}{\sigma_W r_n^2}\right)}
\end{align*}
Therefore we have with probability going to $1$: 
$$
T_{1131} = \frac{1}{n}\max_{1 \le j \le p}\left|\check \bR_j^{\top}P_{\hat \bN_k}^{\perp}\left(\hat{\tilde \bS} - \hat{\tilde \bW}_{-1}\theta^*_S\right)\right| \lesssim_\bbP \  r_n\sqrt{\frac{\log{p}}{n}} \,.
$$
For $T_{1132}$ of equation \eqref{eq:exp_2} we apply CS inequality and use some already derived bounds to conclude that with probability going to 1: 
\begin{align*}
T_{1132} = \frac{1}{n}\max_{1 \le j \le p}\left|\check \bR_j^{\top}P^{\perp}_{\bN_k}\left(\tilde \bS - \hat{\tilde \bS}\right)\right| & \le \sqrt{\frac{1}{n}\max_{1 \le j \le p} \check \bR_j^{\top} \check \bR_j} \times \sqrt{\frac{1}{n}\left\|\tilde \bS - \hat{\tilde \bS}\right\|^2} \\
& \lesssim_\bbP \  r_n \sqrt{\frac{s_\gamma \log{p}}{n}} \,.
\end{align*}
For $T_{1133}$ we have: 
\begin{align*}
    \frac{1}{n}\max_{1 \le j \le p}\left|\check \bR_j^{\top}P^{\perp}_{\bN_k}\left(\tilde \bW_{-1} - \hat{\tilde \bW}_{-1}\right)
    \theta^*_S\right| & \le \|\theta^*_S\|_1 \frac{1}{n}\max_{1 \le j, j' \le p}\left|\check \bR_j^{\top}P^{\perp}_{\bN_k}\left(\tilde \bW_{*, j'} - \hat{\tilde \bW}_{*, j'}\right)\right| \\
    & \le \|\theta^*_S\|_1 \sqrt{\frac{1}{n}\max_{1 \le j \le p} \check \bR_j^{\top}\check \bR_j} \times \sqrt{\frac{1}{n}\max_{1 \le j' \le p}\left\|\tilde \bW_{*, j'} - \hat{\tilde \bW}_{*, j'}\right\|^2} \\
    & \lesssim_\bbP \  \|\theta^*_S\|_1 r_n \sqrt{\frac{s_\gamma \log{p}}{n}} \,.
\end{align*}
which concludes: 
$$
T_{1133}  \lesssim_{\bbP}  \|\theta^*_S\|_1 r_n \sqrt{\frac{s_\gamma \log{p}}{n}} \,.
$$
Combining the bounds on the different parts of $T_{113}$ we have: 
$$
T_{113} \lesssim_\bbP \ \left(1 \vee \|\theta^*_S\|_1\right)r_n \sqrt{\frac{s_\gamma \log{p}}{n}} \,.
$$
Finally, to bound $T_{114}$, we first expand it as we just did for the residual term $\check \bR_j$:    
\begin{align}
    T_{114} & = \frac{1}{n}\max_{1 \le j \le p}\left|\check \nu_j^{\top}P_{\bN_k}\left(\tilde \bS - \tilde \bW_{-1}\theta^*_S\right)\right|  \\
    & \le \underbrace{\frac{1}{n}\max_{1 \le j \le p}\left|\check \nu_j^{\top}P_{\bN_k}\left(\hat{\tilde \bS} - \hat{\tilde \bW}_{-1}
    \theta^*_S\right)\right|}_{T_{1141}}  + \underbrace{\frac{1}{n}\max_{1 \le j \le p}\left|\check \nu_j^{\top}P_{\bN_k}\left(\tilde \bS - \hat{\tilde \bS}\right)\right|}_{T_{1142}}  \notag \\
    \label{eq:exp_3}& \qquad \qquad \qquad + \underbrace{\frac{1}{n}\max_{1 \le j \le p}\left|\check \nu_j^{\top}P_{\bN_k}\left(\tilde \bW_{-1} - \hat{\tilde \bW}_{-1}\right)
    \theta^*_S\right|}_{T_{1143}} 
\end{align}
To bound $T_{1141}$, we use the same technique we use to bound $(\check \nu_j^{\top}P_{\bN_k}\check \nu_j)/n$ in the proof of Lemma \ref{lem:sq_rate}. Recall that we use Hanson-Wright inequality using the fact that $\check \nu_j$ has mean 0 conditional on $\hat \eta$ and $P_{\bN_k}$ is measurable with respect to $\hat \eta$. Therefore we have: 
\begin{align*}
    T_{1141} & = \frac{1}{n}\max_{1 \le j \le p}\left|\check \nu_j^{\top}P_{\bN_k}\left(\hat{\tilde \bS} - \hat{\tilde \bW}_{-1}
    \theta^*_S\right)\right| \\
    & \le \sqrt{\frac{1}{n}\max_{1 \le j \le p}\check \nu_j^{\top}P_{\bN_k}\check \nu_j} \times \sqrt{\frac{1}{n}\left(\hat{\tilde \bS} - \hat{\tilde \bW}_{-1}
    \theta^*_S\right)^{\top}P_{\bN_k}\left(\hat{\tilde \bS} - \hat{\tilde \bW}_{-1}
    \theta^*_S\right)} \\
    & \lesssim_\bbP \  \dot r_n^2 + \left(\frac{\log{p}}{n} \vee \dot r_n \sqrt{\frac{\log{p}}{n}}\right) \\
    & \lesssim_\bbP \ \left(\dot r_n + \sqrt{\frac{\log{p}}{n}}\right)^2 \,.
\end{align*}
For $T_{1142}$ of equation \eqref{eq:exp_3} we have: 
\begin{align*}
    \frac{1}{n}\max_{1 \le j \le p}\left|\check \nu_j^{\top}P_{\bN_k}\left(\tilde \bS - \hat{\tilde \bS}\right)\right| & \le \sqrt{\frac{1}{n}\max_{1 \le j \le p}\check \nu_j^{\top}P_{\bN_k}\check \nu_j} \times \sqrt{\frac{1}{n}\left\|\tilde \bS - \hat{\tilde \bS}\right\|^2} \\
    & \lesssim_\bbP \sqrt{\dot r_n^2 + \left(\frac{\log{p}}{n} \vee \dot r_n \sqrt{\frac{\log{p}}{n}}\right)} \times \sqrt{\frac{s_\gamma \log{p}}{n}} \\
    & \lesssim_\bbP \  \left(\dot r_n + \sqrt{\frac{\log{p}}{n}}\right)\sqrt{\frac{s_\gamma \log{p}}{n}} \,.
\end{align*}
Similarly for $T_{1143}$ of equation \eqref{eq:exp_3} we have: 
\begin{align*}
    \frac{1}{n}\max_{1 \le j \le p}\left|\check \nu_j^{\top}P_{\bN_k}\left(\tilde \bW_{-1} - \hat{\tilde \bW}_{-1}\right)
    \theta^*_S\right|  & \le \|\theta^*_S\|_1 \frac{1}{n}\max_{1 \le j, j' \le p}\left|\check \nu_j^{\top}P_{\bN_k}\left(\tilde \bW_{*, j'} - \hat{\tilde \bW}_{*, j'}\right)\right| \\
    & \le \|\theta^*_S\|_1 \sqrt{\frac{1}{n}\max_{1 \le j \le p}\check \nu_j^{\top}P_{\bN_k}\check \nu_j} \times \sqrt{\frac{1}{n}\max_{1 \le j' \le p}\left\|\tilde \bW_{*, j'} - \hat{\tilde \bW}_{*, j'}\right\|^2} \\
    & \le \|\theta^*_S\|_1\sqrt{\dot r_n^2 + \left(\frac{\log{p}}{n} \vee \dot r_n \sqrt{\frac{\log{p}}{n}}\right)} \times \sqrt{\frac{s_\gamma \log{p_2}}{n}} \\
    & \le \|\theta^*_S\|_1\left(\dot r_n + \sqrt{\frac{\log{p}}{n}}\right)\sqrt{\frac{s_\gamma \log{p_2}}{n}} \,.
\end{align*}
Aggregating the bound on the different parts of $T_{114}$ we have: 
$$
T_{114} \lesssim_\bbP \ \left(\dot r_n + \sqrt{\frac{\log{p}}{n}}\right)\left(\dot r_n + \sqrt{\frac{\log{p}}{n}} + (1 + \|\theta^*_S\|_1)\sqrt{\frac{s_\gamma \log{p}}{n}}\right) \,. 
%
$$
This completes the bounds for $T_{11}$, which yields: 
$$
T_{11} \lesssim_\bbP \ \dot r_n + \sqrt{\frac{s_\gamma \log{p}}{n}} + \left(1 \vee \|\theta^*_S\|_1\right)\left(\dot r_n\sqrt{\frac{s_\gamma \log{p}}{n}} \vee \sqrt{s_\gamma}\frac{\log{p}}{n}\right)\,.
$$
Combining the bounds for $T_{11}$ and $T_{12}$ we have: 
$$
T_{1} \lesssim_\bbP \ \dot r_n +  \sqrt{\frac{s_\gamma \log{p}}{n}}\left(\log{\frac{n}{s_\gamma \log{p}}}\right)^{3/2}  + \left(1 \vee \|\theta^*_S\|_1\right)\left(r_n\sqrt{\frac{s_\gamma \log{p}}{n}} \vee \sqrt{s_\gamma}\frac{\log{p}}{n}\right) \,.
$$
So far we have: 
\begin{align*}
    T_{1} & \lesssim_\bbP \ \dot r_n +  \sqrt{\frac{s_\gamma \log{p}}{n}}\left(\log{\frac{n}{s_\gamma \log{p}}}\right)^{3/2}  + \left(1 \vee \|\theta^*_S\|_1\right)\left(\dot r_n\sqrt{\frac{s_\gamma \log{p}}{n}} \vee \sqrt{s_\gamma}\frac{\log{p}}{n}\right) \,, \\
    T_2 & \lesssim_\bbP \  \left[\dot r_n +  \sqrt{\frac{s_\gamma \log{p}}{n}}\right]  \,, \\
    T_3 & \lesssim_\bbP \  (1 \vee \|\theta^*_S\|_1) \left[\dot r_n +  \sqrt{\frac{s_\gamma \log{p}}{n}}\right] \,.
\end{align*}
which implies: 
\begin{align*}
T_1 + T_2 + T_3  & \lesssim_\bbP \  (1 + \|\theta^*_S\|_1)\left[\dot r_n +   \sqrt{\frac{s_\gamma \log{p}}{n}}\left(\log{\frac{n}{s_\gamma \log{p}}}\right)^{3/2}  + \dot r_n\sqrt{\frac{s_\gamma \log{p}}{n}} \vee \sqrt{s_\gamma}\frac{\log{p}}{n}\right] \asymp \lambda_1 \\
& \lesssim_\bbP (1 + \|\theta^*_S\|_1)\left[\dot r_n +  \sqrt{\frac{s_\gamma \log{p}}{n}} \right] + \sqrt{\frac{s_\gamma \log{p}}{n}}\left(\log{\frac{n}{s_\gamma \log{p}}}\right)^{3/2}  \\
& \asymp \lambda_1 \,.
\end{align*}
where the last inequality follows from the fact that $\dot r_n\sqrt{s_\gamma \log{p}/n} \ll \dot r_n$ and $\sqrt{s_\gamma}(\log{p}/n) \ll \sqrt{(s_\gamma \log{p})/n}$. 
\\\\
\noindent
We next find the appropriate bound on $\lambda_0$. From equation \eqref{eq:y_lasso}, the value of $\lambda_0$ depends on the the bound on the $\ell_\infty$ norm of the first term of RHS. We start with the similar division as we used for $\lambda_1$ as follows: 
\begin{align*}
     \frac1n \left\|\left(\bY^{\perp} - \breve \bW^{\perp}_{-1}\theta^*_Y \right)^{\top}\breve \bW^{\perp}_{-1}\right\|_\infty  & \le \underbrace{\frac{1}{n}\max_{1 \le j \le p}\left|\left(\breve \bW_{*, j}^{\perp}\right)^{\top}\left(\tilde \bY - \tilde \bW_{-1}\theta^*_Y\right)\right|}_{T_1} + \underbrace{\frac{1}{n}\max_{1 \le j \le p}\left|\left(\breve \bW_{*, j}^{\perp}\right)^{\top}\left(\bY^{\perp} - \tilde \bY\right)\right|}_{T_2}  \\
    & \qquad \qquad + \underbrace{\max_{1 \le j \le p} \frac{1}{n}\left|\left(\breve \bW_{*, j}^{\perp}\right)^{\top}\left(\breve \bW_{-1}^{\perp} - \tilde \bW_{-1}\right)\theta^*_Y\right|}_{T_3}
\end{align*}
Observe that, $T_3$ here is identical to the term $T_3$ in equation \eqref{eq:basic_exp_lemma_lambda} in estimating $\lambda_1$ only $\theta^*_S$ replaced by $\theta^*_Y$. Therefore, from a direct approach to our previous calculations we conclude that with probability going to 1: 
$$
T_3 \lesssim_\bbP \  \|\theta^*_Y\|_1\left[\dot r_n + \sqrt{\frac{s_\gamma \log{p}}{n}}\right] \,.
$$
$T_2$ can also be bounded along the same line of argument used to prove the third display of Lemma \ref{lem:sq_rate}. 
An expansion of $\bY^{\perp} - \tilde \bY$ along the line of equation \eqref{eq:exp_1} yields: 
$$
\bY^{\perp} - \tilde \bY = m_0(\boeta) - \check m_0(\hat \boeta) + P_{\bN_k}^{\perp}\check \bR_0 - P_{\bN_k}\check \nu_0 \,.
$$
where $\check \nu_0 = \bY - \check m_0(\hat \boeta)$ is subgaussian with constant $\sigma_Y$. Thereby following the line of argument as of Lemma \ref{lem:sq_rate} we have: 
$$
\frac{1}{n}\left\|\bY^{\perp} - \tilde \bY\right\|^2 \lesssim_\bbP \frac{s_\gamma \log{p}}{n} + \left[\dot r_n^2 + \left(\frac{\log{p}}{n} \vee \dot r_n \sqrt{\frac{\log{p}}{n}}\right)\right]
$$
On the other hand for the cross term we expand it as: 
\begin{align*}
T_2 & \le \underbrace{\frac{1}{n}\max_{1 \le j \le p}\left|\left(\breve \bW_{*, j}^{\perp} - \widehat{\tilde \bW}_{*, j}\right)^{\top}\left(\bY^{\perp} - \tilde \bY\right)\right|}_{T_{21}} + \underbrace{\frac{1}{n}\max_{1 \le j \le p}\left|\widehat{\tilde \bW}_{*, j}^{\top}\left(\bY^{\perp} - \tilde \bY\right)\right|}_{T_{22}}
\end{align*}
For $T_{21}$, by CS inequality we have: 
\begin{align*}
    T_{21} & \le \sqrt{\frac{1}{n}\max_{1 \le j \le p}\left\|\left(\breve \bW_{*, j}^{\perp} - \widehat{\tilde \bW}_{*, j}\right)\right\|^2} \times \sqrt{\frac{1}{n}\left\|\left(\bY^{\perp} - \tilde \bY\right)\right\|^2} \\
    & \lesssim_\bbP \ \dot r_n^2 + \frac{s_\gamma \log{p}}{n} +\dot r_n \sqrt{\frac{\log{p}}{n}} \hspace{0.2in}[\text{Lemma }\ref{lem:sq_rate}]\,.
\end{align*}
To bound $T_{22}$: 
\begin{align}
    T_{22} &= \frac{1}{n}\max_{1 \le j \le p}\left|\widehat{\tilde \bW}_{*, j}^{\top}\left(\bY^{\perp} - \tilde \bY\right)\right| \notag \\
    \label{eq:bound_T2_Y} & \le \underbrace{\frac{1}{n}\max_{1 \le j \le p}\left|\widehat{\tilde \bW}_{*, j}^{\top}\left(m_{0}(\eta) - \check m_{0}(\hat \eta)\right)\right|}_{T_{221}} + \underbrace{\frac{1}{n}\max_{1 \le j \le p}\left|\widehat{\tilde \bW}_{*, j}^{\top}P_{\bN_k}^{\perp}\check \bR_{0}\right|}_{T_{222}} + \underbrace{\frac{1}{n}\max_{1 \le j \le p}\left|\widehat{\tilde \bW}_{*, j}^{\top}P_{\bN_k} \check \nu_{0}\right|}_{T_{223}}
\end{align}
To bound $T_{221}$ of equation \eqref{eq:bound_T2_Y}, first note that: 
$$
\frac{1}{n}\max_{1 \le j \le p}\left\|\widehat{\tilde \bW}_{*, j}\right\|^2 \lesssim \frac{1}{n}\max_{1 \le j \le p}\left\|\tilde \bW_{*, j}\right\|^2  + \frac{1}{n}\max_{1 \le j \le p}\left\|\left(\tilde \bW_{*, j} - \widehat{\tilde \bW}_{*, j}\right)\right\|^2  \lesssim_\bbP 1 \,.
$$
where the last inequality follows from the fact that $(1/n)\max_{1 \le j \le p}\|\tilde \bW_{*, j}\|^2 \lesssim_\bbP 1$ (as $\var(W_{-j})$ has uniformly bounded variances over $j$ and $W_{j}$'s are uniformly sub-gaussian) and as we have established in Lemma \ref{lem:sq_rate} $(1/n)\max_j \|(\tilde \bW_{*, j} - \widehat{\tilde \bW}_{*, j})\|^2  = o_p(1)$. 
Therefore we have: 
\begin{align*}
T_{221} & = \frac{1}{n}\max_{1 \le j \le p}\left|\widehat{\tilde \bW}_{*, j}^{\top}\left(m_{0}(\eta) - \check m_{0}(\hat \eta)\right)\right| \\
& \lesssim \underbrace{\sqrt{\frac{1}{n}\max_{1 \le j \le p}\left\|\widehat{\tilde \bW}_{*, j}\right\|^2}}_{\lesssim_\bbP 1} \times \underbrace{\sqrt{\frac{1}{n}\sum_i \left(\check m_{0}(\hat \eta_i) - m_{0}(\eta_i)\right)^2}}_{\lesssim_\bbP \sqrt{\frac{s_\gamma \log{p}}{n}}} \\
& \lesssim_\bbP \  \sqrt{\frac{s_\gamma \log{p}}{n}} \,.
\end{align*}
where the bound $(1/n)\|\check m_0(\hat \boeta) - m_0(\boeta)\|^2$ follows from equation \eqref{eq:m_diff_bound}. 

To bound $T_{222}$ of equation \eqref{eq:bound_T2_Y}, note that the entries of $\widehat{\tilde \bW}_{*, j}$ are centered and independent conditionally on $\hat \eta$. Therefore we have: 
\begin{align*}
    \bbP\left(\frac{1}{n}\max_{1 \le j \le p}\left|\widehat{\tilde \bW}_{*, j}^{\top}P_{\bN_k}^{\perp}\check\bR_{0}\right| > t \right) & \le \sum_{1 \le j \le p} \bbP\left(\frac{1}{n}\left|\widehat{\tilde \bW}_{*, j}^{\top}P_{\bN_k}^{\perp}\check\bR_{0}\right| > t \right) \\
    & \le \sum_{1 \le j \le p} \bbE\left[\bbP\left(\frac{1}{n}\left|\widehat{\tilde \bW}_{*, j}^{\top}P_{\bN_k}^{\perp}\check\bR_{0}\right| > t \mid \hat \eta\right)\right] \\
    & \le \sum_{1 \le j \le p} \bbE\left[2\exp{\left(-c\frac{nt^2}{\sigma_W \frac{\check\bR_{0}^{\top}\check\bR_{0}}{n}}\right)}\right] \\
    & \le 2\exp{\left(2\log{p}-c\frac{nt^2}{\sigma_W r_n^2}\right)}
\end{align*}
Using an appropriate choice of $t$ we conclude that with probability going to 1: 
$$
T_{222} = \frac{1}{n}\max_{1 \le j \le p}\left|\widehat{\tilde \bW}_{*, j}^{\top}P_{\bN_k}^{\perp}\check \bR_{0}\right| \lesssim_\bbP \  r_n\sqrt{\frac{\log{p}}{n}} \,.
$$
For the last term of equation \eqref{eq:bound_T2} note that: 
$$
T_{223} = \frac{1}{n}\max_{1 \le j \le p}\left|\widehat{\tilde \bW}_{*, j}^{\top}P_{\bN_k} \check \nu_{0}\right| =  \frac{1}{n}\max_{1 \le j \le p}\left|\check \nu_j^{\top}P_{\bN_k} \check \nu_{0}\right|
$$
as by definition $\check \nu_j = \widehat{\tilde \bW}_{*, j} = \bW_{*, j} - \bbE[W_j \mid \hat \boeta]$. Therefore an application of Cauchy-Schwarz inequality yields:
$$
T_{223} \le \sqrt{\frac{1}{n}\max_{1 \le j \le p}\left|\check \nu_j^{\top}P_{\bN_k} \check \nu_j\right|}\times \sqrt{\frac{1}{n}\left|\check \nu_0^{\top}P_{\bN_k} \check \nu_{0}\right|}
$$
Now we can bound both the terms of the above equation via similar calculation used to yield the rate in equation \eqref{eq:resid_etahat_bound}. Therefore we have: 
$$
T_{223} \lesssim_\bbP \ \dot r_n^2 + \left(\frac{\log{p}}{n} \vee \dot r_n \sqrt{\frac{\log{p}}{n}}\right) \,.
$$
Combining the bounds for different components of $T_{22}$ we have, with probability going to $1$: 
$$
T_{22} = \max_{1 \le j \le p} \frac{1}{n}\left|\left(\breve \bW_{*, j}^{\perp}\right)^{\top}\left(\bY^{\perp} -  \tilde \bY\right)\right| \lesssim_\bbP \left[\dot r_n^2 + \left(\frac{\log{p}}{n} \vee \dot r_n \sqrt{\frac{\log{p}}{n}}\right)\right] + \sqrt{\frac{s_\gamma \log{p}}{n}} \,.
$$
This implies we can bound $T_2$ as: 
\begin{align*}
T_2 & \lesssim_\bbP \  \left[\dot r_n^2 + \left(\frac{\log{p}}{n} \vee \dot r_n \sqrt{\frac{\log{p}}{n}}\right)\right] + \sqrt{\frac{s_\gamma \log{p}}{n}} \\
& \lesssim_\bbP \dot r_n^2 + \sqrt{\frac{s_\gamma \log{p}}{n}} \,.
\end{align*}
Combining the bounds on $T_2$ and $T_3$ we have: 
$$
T_2 + T_3 \lesssim_\bbP (1 \vee \|\theta^*_Y\|_1)\left[\dot r_n + \sqrt{\frac{s_\gamma \log{p}}{n}}\right] \,.
$$
Finally for $T_1$ we expand it as before: 
\begin{align*}
    T_1 & = \frac{1}{n}\max_{1 \le j \le p}\left|\left(\breve \bW_{*, j}^{\perp}\right)^{\top}\left(\tilde \bY - \tilde \bW_{-1}\theta^*_Y\right)\right| \\
    & \le \underbrace{\frac{1}{n}\max_{1 \le j \le p}\left|\left(\breve \bW_{*, j}^{\perp} - \tilde \bW_{*, j}\right)^{\top}\left(\tilde \bY - \tilde \bW_{-1}\theta^*_Y\right)\right|}_{T_{11}} + \underbrace{\frac{1}{n}\max_{1 \le j \le p}\left|\tilde \bW_{*, j}^{\top}\left(\tilde \bY - \tilde \bW_{-1}\theta^*_Y\right)\right|}_{T_{12}}
\end{align*}
To bound $T_{12}$ we have to adopt similar approach taken to bound $T_{12}$ of the analysis of $\lambda_1$ (equation \eqref{eq:centering_issue}). The reason is also same, i.e. the term $\tilde \bW_{i, j}(\tilde \bY_i - \tilde \bW_{i, -1}^{\top}\theta^*_Y)\mathds{1}(|\hat \eta_i| \le \tau)$ is not centered and consequently to apply Bernstein's inequality we need to bound the expectation of the above term. For brevity, we won't repeat the same calculation here as the same calculation will lead us to same conclusion: 
$$
T_{12} \lesssim_{\bbP}  \sqrt{\frac{s_\gamma \log{p}}{n}}\left(\log{\frac{n}{s_\gamma \log{p}}}\right)^{3/2} + \sqrt{\frac{\log{p}}{n}} \lesssim_\bbP   \sqrt{\frac{s_\gamma \log{p}}{n}}\left(\log{\frac{n}{s_\gamma \log{p}}}\right)^{3/2}\,.
$$
Analysis of $T_{11}$ is also similar to the $T_{11}$ term in the derivation of $\lambda_1$, with $\tilde \bS -   \tilde \bW_{-1}\theta^*_S$ is now replaced by $\tilde \bY - \tilde \bW_{-1}\theta^*_Y$. Further, as mentioned previously, we need to the careful about the subgaussian constant $\sigma_Y$ of $\tilde \bY - \tilde \bW_{-1}\theta^*_Y$. This yields that with probability going to $1$: 
$$
T_{11} \lesssim_\bbP \left(r_n^2 + \frac{\log{p}}{n}\right) + \left\{1 + (1 + \|\theta^*_Y\|_1)\left(r_n + \sqrt{\frac{\log{p}}{n}}\right)\right\}\sqrt{\frac{s_\gamma \log{p}}{n}}
$$
Hence we have: 
\begin{align*}
T_1 & = T_{11} + T_{12} \\
& \lesssim_\bbP  \bbP \left(r_n^2 + \frac{\log{p}}{n}\right) + \left\{\left(\log{\frac{n}{s_\gamma \log{p}}}\right)^{3/2} + (1 + \|\theta^*_Y\|_1)\left(r_n + \sqrt{\frac{\log{p}}{n}}\right)\right\}\sqrt{\frac{s_\gamma \log{p}}{n}} \\
& \lesssim_\bbP r_n^2 + \left\{\left(\log{\frac{n}{s_\gamma \log{p}}}\right)^{3/2} + (1 + \|\theta^*_Y\|_1)\left(r_n + \sqrt{\frac{\log{p}}{n}}\right)\right\}\sqrt{\frac{s_\gamma \log{p}}{n}}
\,.
\end{align*}
Finally, combining the bounds on $T_1, T_2, T_3$ we conclude: 
\begin{align*}
& \frac1n \left\|\left(\bY^{\perp} - \breve \bW^{\perp}_{-1}\theta^*_Y \right)^{\top}\breve \bW^{\perp}_{-1}\right\|_\infty  \\ 
& \qquad \qquad \lesssim_\bbP  (1 \vee \|\theta^*_Y\|_1)\left[\dot r_n + \sqrt{\frac{s_\gamma \log{p}}{n}}\right]  + \sqrt{\frac{s_\gamma \log{p}}{n}}\left(\log{\frac{n}{s_\gamma \log{p}}}\right)^{3/2} \\
&  \qquad \qquad \asymp \lambda_0 \,.
\end{align*}
This completes the proof of this lemma.

%

\subsection{Proof of Proposition \ref{prop:denom_conv}}
A basic expansion yields: 
\begin{align}
    \frac{1}{n_3}\left\|\bS^{\perp} -  \breve \bW^{\perp}_{-1}\hat \theta_{-1,1}\right\|^2 & = \frac{1}{n_3}\left\|\tilde \bS -  \tilde \bW_{-1}\theta^*_S\right\|^2 + \frac{1}{n_3}\left\|\tilde \bS - \bS^{\perp} \right\|^2 + \frac{1}{n_3}\left\|\breve \bW^{\perp}_{-1}\left(\hat \theta_{-1,1} - \theta^*_S \right)\right\|^2 \notag \\
    \label{eq:denom_exp_1} & \qquad \qquad \qquad + \frac{1}{n'_3}\left\|\left(
    \breve \bW^{\perp} - \tilde \bW\right)\theta^*_S\right\|^2
\end{align}
That the first term on the RHS of equation \eqref{eq:denom_exp_1} is $O_p(1)$ directly follows from assumption \ref{assm:sigma}. The bound on the second term was already established is the proof of Lemma \ref{lem:sq_rate} which ensures that this term is $o_p(1)$. The asymptotic negligibility of the third term directly follows from the prediction consistency of LASSO. Finally for the last term, define $\cS_1$ to be the set of active elements in $\theta^*_S$. By our assumption, we have $|\cS_1| \le s_1$. Hence we have:  
\begin{align*}
\frac{1}{n_3}\left\|\left(\breve \bW^{\perp} - \tilde \bW\right)\theta^*_S\right\|^2 = \frac{1}{n_3}\left\|\left(\breve \bW^{\perp}_{\cS_1} - \tilde \bW_{\cS_1}\right)\theta^*_S\right\|^2 & \le \lambda_{\max}\left(\frac{\left(\breve \bW^{\perp}_{\cS_1} - \tilde \bW_{\cS_1}\right)^{\top}\left(\breve \bW^{\perp}_{\cS_1} - \tilde \bW_{\cS_1}\right)}{n'_3}\right) \\
& \le \frac{1}{n_3}\tr\left(\left(\breve \bW^{\perp}_{\cS_1} - \tilde \bW_{\cS_1}\right)^{\top}\left(\breve \bW^{\perp}_{\cS_1} - \tilde \bW_{\cS_1}\right)\right) \\
& \le \frac{1}{n_3}\sum_{j \in \cS_1}\left\|\breve \bW^{\perp}_{*, j} - \tilde \bW_{*, j}\right\|^2  \\
& \lesssim_\bbP s_1 \left\{\dot r_n^2 + \frac{s_\gamma \log{p}}{n} + \dot r_n \sqrt{\frac{\log{p}}{n}}\right\} \,.
\end{align*}
where the last line follows from the second part of Lemma \ref{lem:sq_rate}. Now we have presented some sufficient condition in subsection \ref{sec:discussion_sparsity_rate} of the main document (especially \eqref{eq:disc_1}, \eqref{eq:disc_2}) under which the above bound in $o(1)$. Therefore under those sufficient condition, we have established that: 
\begin{align*}
 \frac{1}{n_3}\left\|\bS^{\perp} -  \breve \bW^{\perp}_{-1}\hat \theta_{-1,1}\right\|^2 & = \frac{1}{n_3}\left\|\tilde \bS -  \tilde \bW_{-1}\theta^*_S\right\|^2 + o_p(1) \\
 & = \frac{1}{n_3}\sum_{i=1}^{n/3} \left(\tilde S_i - \tilde W_{-1}^{\top}\theta^*_S\right)^2\mathds{1}_{|\hat \eta_i| \le \tau} \\
& = \frac{1}{n_3}\sum_{i=1}^{n/3} \left(\tilde S_i - \tilde W_{-1}^{\top}\theta^*_S\right)^2\mathds{1}_{|\eta| \le \tau} + o_p(1) \\
& = \bbE\left[\left(\tilde S - \tilde W_{-1}^{\top}\theta^*_S\right)^2 \mathds{1}_{|\eta| \le \tau}\right] 
\end{align*}
This completes the proof. 
%
%
%
%

\subsection{Proof of Proposition \ref{prop:RE}}
As per assumption \ref{assm:sparsity_RE}, we have some constants $\kappa, c > 0$ such that with high probability: 
$$
\inf_{\|\Delta_{S^c}\|_1 \le c\|\Delta_S\|_1} \frac{\frac{1}{n}\left\|\tilde \bW_{-1}\Delta\right\|^2}{\left\|\Delta\right\|^2} \ge \kappa \,.
$$
As we regress both $\bY$ and $\bS$ on $\breve \bW_{-1}^{\perp}$, the set $S$ here generically used to denote the active set of both $\theta^*_S$ and $\theta^*_Y$. We will show that the above inequality also holds for $\breve \bW_{-1}^{\perp}$ some $\kappa'$ (which can be taken as $\kappa/2$ for all large $n$) with high probabiltiy. Towards that end, first triangle inequality yields: 
$$
\left\|\breve \bW_{-1}^{\perp}\Delta\right\| \ge \left\|\tilde \bW_{-1}\Delta\right\| - \left\|\left(\breve \bW_{-1}^{\perp} - \tilde \bW_{-1}\right)\Delta\right\|
$$
We next show that with probability going to 1, 
$$
\sup_{\|\Delta_{S^c}\|_1 \le c\|\Delta_S\|_1} \frac{\frac1n \left\|\left(\breve \bW_{-1}^{\perp} - \tilde \bW_{-1}\right)\Delta\right\|^2}{\|\Delta\|^2} = o_p(1)
$$
which will complete the proof. Note that we have: 
$$
\left\|\left(\breve \bW_{-1}^{\perp} - \tilde \bW_{-1}\right)\Delta\right\|^2 = \sum_{j=1}^p \Delta^2_j \left\|\breve \bW_{*, j}^{\perp} - \tilde \bW_{*, j}\right\|^2 + \sum_{j \neq k} \Delta_j \Delta_k \left\langle \bW_{*, j}^{\perp} - \tilde \bW_{*, j}, \bW_{*, k}^{\perp} - \tilde \bW_{*, k} \right \rangle 
$$
which yields the following bound: 
\begin{align*}
    \frac{\frac1n \left\|\left(\breve \bW_{-1}^{\perp} - \tilde \bW_{-1}\right)\Delta\right\|^2}{\|\Delta\|^2} & \le \frac{1}{n} \max_j \left\|\breve \bW_{*, j}^{\perp} - \tilde \bW_{*, j}\right\|^2\left(1 +  \sup_{\|\Delta_{S^c}\|_1 \le c\|\Delta_S\|_1} \frac{\|\Delta\|_1^2}{\|\Delta\|^2}\right)
\end{align*}
For any $\Delta$ with $\|\Delta_{S^c}\|_1 \le c\|\Delta_S\|_1$ we have: 
\begin{align*}
    \|\Delta\|_1^2 & = \left(\|\Delta_S\|_1 + \|\Delta_{S^c}\|_1\right)^2 \\
    & \le (c+1)^2 \|\Delta_S\|^2_1 \\
    & \le (c+1)^2 \|\Delta_S\|^2_1  \\
    & \le s(c+1)^2\|\Delta_S\|^2 \\
    & \le s(c+1)^2\|\Delta\|^2 
\end{align*}
Hence, we have: 
$$
\sup_{\|\Delta_{S^c}\|_1 \le c\|\Delta_S\|_1} \frac{\frac1n \left\|\left(\breve \bW_{-1}^{\perp} - \tilde \bW_{-1}\right)\Delta\right\|^2}{\|\Delta\|^2} \le \frac{1}{n} \max_j \left\|\breve \bW_{*, j}^{\perp} - \tilde \bW_{*, j}\right\|^2 (1 \vee s)
$$
where $s = s_0 \vee s_1$. Using Lemma \ref{lem:sq_rate} we conclude with probability approaching to 1: 
$$
\sup_{\|\Delta_{S^c}\|_1 \le c\|\Delta_S\|_1} \frac{\frac1n \left\|\left(\breve \bW_{-1}^{\perp} - \tilde \bW_{-1}\right)\Delta\right\|^2}{\|\Delta\|^2} \lesssim ( s_0 \vee s_1)\left(\left[\dot r_n^2 + \left(\frac{\log{p}}{n} \vee \dot r_n \sqrt{\frac{\log{p}}{n}}\right)\right] \vee \frac{s_\gamma \log{p}}{n}\right)  
$$
Again, as in the case of for the proof of Proposition \ref{prop:denom_conv}, the above bound is $o(1)$ under certain sufficient condition as discussed in detail in subsection \ref{sec:discussion_sparsity_rate} of the main document. Under those sufficient conditions, we establish the RE condition. 

\section{Supplementary lemmas}
\begin{lemma}
\label{lem:prod_subexp}
Suppose $X, Y$ are two sub-exponential random variables, i.e. $\|X\|_{\psi_1}$ and $\|Y\|_{\psi_1}$ are finite. Then we have: 
$$
\|XY\|_{\psi_{1/2}} \le \|X\|_{\psi_1}\|Y\|_{\psi_1} \,.
$$
\end{lemma}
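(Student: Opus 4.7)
The plan is to work directly from the Orlicz-norm definition. Recall that for $\alpha>0$, $\|W\|_{\psi_\alpha}$ is the infimum of $t>0$ such that $\bbE[\exp(|W|^\alpha/t^\alpha)]\le 2$. Setting $a=\|X\|_{\psi_1}$ and $b=\|Y\|_{\psi_1}$, the goal reduces to showing
$$
\bbE\!\left[\exp\!\left(\sqrt{|XY|/(ab)}\right)\right]\le 2.
$$

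The main tool will be the AM-GM inequality applied inside the exponent: for nonnegative $u,v$, $\sqrt{uv}\le (u+v)/2$. Taking $u=|X|/a$, $v=|Y|/b$ gives
$$
\sqrt{\tfrac{|XY|}{ab}}=\sqrt{\tfrac{|X|}{a}\cdot\tfrac{|Y|}{b}}\le \tfrac{1}{2}\!\left(\tfrac{|X|}{a}+\tfrac{|Y|}{b}\right),
$$
so that
$$
\exp\!\left(\sqrt{|XY|/(ab)}\right)\le \exp\!\left(\tfrac{|X|}{2a}\right)\exp\!\left(\tfrac{|Y|}{2b}\right).
$$
Now I would take expectation and apply Cauchy-Schwarz to split the product:
$$
\bbE\!\left[\exp\!\left(\tfrac{|X|}{2a}\right)\exp\!\left(\tfrac{|Y|}{2b}\right)\right]\le \sqrt{\bbE[\exp(|X|/a)]\cdot \bbE[\exp(|Y|/b)]}.
$$
By the definition of $a=\|X\|_{\psi_1}$ and $b=\|Y\|_{\psi_1}$, each factor on the right is at most $2$, so the product under the square root is at most $4$, and the whole expression is bounded by $2$. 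This establishes $\|XY\|_{\psi_{1/2}}\le ab$, which is the claim.

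There is no real obstacle here: the entire argument rests on AM-GM (to linearize the square root) combined with Cauchy-Schwarz (to decouple the two factors), and the Orlicz bound $\le 2$ is then immediate. The one small point worth being careful about is the edge case $a=0$ or $b=0$, which forces $X\equiv 0$ or $Y\equiv 0$ a.s., making $XY\equiv 0$ and the inequality trivially true; alternatively one handles this by a limiting argument with $a,b$ replaced by $a+\epsilon,b+\epsilon$ and then letting $\epsilon\downarrow 0$.
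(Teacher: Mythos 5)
Your proposal is correct and follows essentially the same route as the paper: normalize to $\|X\|_{\psi_1}=\|Y\|_{\psi_1}=1$, apply AM--GM inside the exponent to bound $\sqrt{|XY|}$ by $(|X|+|Y|)/2$, and then decouple the resulting product of exponentials. The only cosmetic difference is that you decouple via Cauchy--Schwarz, whereas the paper's proof uses the elementary inequality $uv\le\tfrac12(u^2+v^2)$; both yield the bound $\le 2$ in one line and are interchangeable here.
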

\begin{proof}
The proof follows along the similar line of arguments used in the proof of Lemma 2.7.7 of \cite{vershynin2018high}, which we present here for the sake of completeness. Without loss of generality assume that $\|X\|_{\psi_1} = \|Y\|_{\psi_1} = 1$ (otherwise we can always scale by it). 
\begin{align*}
\bbE\left[e^{\sqrt{|XY|}}\right] & \le \bbE\left[e^{\frac{|X| + |Y|}{2}}\right] \\
& = \bbE\left[e^{\frac{|X|}{2}}e^{\frac{|Y|}{2}}\right] \\
& \le \frac{1}{2}\bbE\left[e^{|X|} + e^{|Y|}\right] \\
& \le 2 \,.
\end{align*}
This completes the proof. 
\end{proof}

\begin{lemma}
\label{lem:subg_second_moment}
Suppose $X$ be a $n \times p$ matrix with i.i.d. rows and suppose each co-ordinates are centered sub-gaussian with sub-gaussian constant $\sigma$. Then we have: 
$$
\bbP\left(\max_{1 \le j \le p} \frac{1}{n} \sum_{i=1}^n X_{i, j}^2 > 3\sigma^2\right) \le 2\exp{\left[\log{p}-cn\right]} 
$$
for some constant $c$. 
\end{lemma}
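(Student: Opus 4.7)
The plan is to reduce the claim to a standard one-dimensional sub-exponential concentration inequality applied to $X_{i,j}^2$, followed by a union bound over $j \in \{1,\dots,p\}$. First, fix a coordinate $j$. Since each $X_{i,j}$ is centered sub-gaussian with constant $\sigma$, the square $X_{i,j}^2$ is sub-exponential with $\|X_{i,j}^2\|_{\psi_1} \le C\sigma^2$ for a universal constant $C$ (a standard identity; e.g., Lemma 2.7.6 in Vershynin), and moreover $\bbE[X_{i,j}^2] \le 2\sigma^2$ by the sub-gaussian moment bound.

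Next, I would apply Bernstein's inequality for sums of i.i.d. centered sub-exponential random variables to the sequence $\{X_{i,j}^2 - \bbE[X_{i,j}^2]\}_{i=1}^n$. This yields, for any $t > 0$,
$$
\bbP\!\left(\frac{1}{n}\sum_{i=1}^n \left(X_{i,j}^2 - \bbE[X_{i,j}^2]\right) > t\right) \le 2\exp\!\left(-c' n \min\!\left\{\frac{t^2}{\sigma^4}, \frac{t}{\sigma^2}\right\}\right).
$$
Choosing $t = \sigma^2$ and using $\bbE[X_{i,j}^2] \le 2\sigma^2$, the event $\{(1/n)\sum_i X_{i,j}^2 > 3\sigma^2\}$ is contained in the deviation event above, so the per-coordinate probability is bounded by $2\exp(-cn)$ for some absolute constant $c>0$.

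Finally, applying a union bound over the $p$ coordinates yields
$$
\bbP\!\left(\max_{1 \le j \le p} \frac{1}{n}\sum_{i=1}^n X_{i,j}^2 > 3\sigma^2\right) \le 2p\exp(-cn) = 2\exp(\log p - cn),
$$
which is exactly the claimed bound. There is essentially no real obstacle here; the only care needed is in tracking the universal constants coming from the sub-gaussian-to-sub-exponential conversion and in verifying that Bernstein's inequality applies in the regime $t = \sigma^2$ (where the linear term in the exponent dominates, giving the clean $\exp(-cn)$ rate rather than a Gaussian-tail rate).
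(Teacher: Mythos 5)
Your proof is correct and follows the same line as the paper: observe that $X_{i,j}^2$ is sub-exponential with $\psi_1$-norm $\lesssim\sigma^2$ and mean at most $2\sigma^2$, apply Bernstein's inequality to the centered sum, set $t=\sigma^2$, and union bound over the $p$ coordinates. No meaningful differences.
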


\begin{proof}
From the sub-gaussianity of $X_{i, j}$ we have $\bbE[X_{i, j}^2] \le 2\sigma^2$. Furthermore we know $X_{i, j}^2$ is sub-exponential with $\|X_{i, j}^2\|_{\psi_1} = \|X_{i ,j}\|^2_{\psi_2} = \sigma^2$. Therefore using Bernstein inequality we have for any $1 \le j \le p$: 
\begin{align*}
& \bbP\left(\frac{1}{n} \sum_{i=1}^n X_{i, j}^2 - 2\sigma^2 > t\right) \\
& \le \bbP\left(\frac{1}{n} \sum_{i=1}^n X_{i, j}^2 - \bbE[X_j^2] > t + (2\sigma^2  - \bbE[X_j^2])\right) \\
& \le \bbP\left(\frac{1}{n} \sum_{i=1}^n X_{i, j}^2 - \bbE[X_j^2] > t \right) \\
& \le 2\exp{\left[-c\min\left(\frac{n^2 t^2}{n\sigma^4}, \frac{nt}{\sigma^2}\right) \right]} \\
& = 2\exp{\left[-c\min\left(\frac{n t^2}{\sigma^4}, \frac{nt}{\sigma^2}\right) \right]}
\end{align*}
Therefore, an application of union bound yields: 
$$
\bbP\left(\max_{1 \le j \le p}\frac{1}{n}\sum_i X_{i, j}^2 - 2\sigma^2 > t\right) \le 2\exp{\left[\log{p}-c\min\left(\frac{n t^2}{\sigma^4}, \frac{nt}{\sigma^2}\right) \right]}
$$ 
If we take $t = \sigma^2$, we have: 
$$
\bbP\left(\max_{1 \le j \le p}\frac{1}{n}\sum_i X_{i, j}^2 > 3\sigma^2 \right) \le 2\exp{\left[\log{p}-cn\right]} = o(1) 
$$
as long as $\log{p}/n \to 0$. 
\end{proof}

\begin{lemma}
\label{lem:subg_fourth_moment}
Suppose $X_1, \dots, X_n$ are i.i.d centered sub-gaussian random vector (with sub-gaussian constant $\sigma$) in dimension $p$. Then for any two vectors $a, b \in S^{p-1}$ we have: 
$$
\bbP\left(\max_{1 \le j \le p}\frac1n \sum_{i=1}^n \left(X_i^{\top}a\right)^2 \left(X_i^{\top}b\right)^2 > (2 + C_1)\mu\right) \le 2e^{\log{p} -\frac{\sqrt{n\mu}}{\log{(n+1)}}}
$$
for all large $n$, for some constant $C_1$ (involves $\sigma$) and $\mu$ as the mean of $\left(X_i^{\top}a\right)^2 \left(X_i^{\top}b\right)^2$. Moreover, $\mu$ is bounded by $\sigma^4$ and consequently we have: 
$$
max_{1 \le j \le p}\frac1n \sum_{i=1}^n \left(X_i^{\top}a\right)^2 \left(X_i^{\top}b\right)^2  = O_p(1)
$$ 
as long as $(\log{p}\log{n})/\sqrt{n} \to 0$. 
\end{lemma}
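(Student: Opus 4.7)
The plan is to reduce the lemma to a sub-Weibull concentration bound for sums of i.i.d.\ non-negative random variables with $\psi_{1/2}$ tails, via truncation and Bernstein's inequality, followed by a union bound over $j$.

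First, I fix $a,b\in S^{p-1}$ and study $Y_i:=(X_i^\top a)^2(X_i^\top b)^2$. Since $X_i^\top a$ and $X_i^\top b$ are centered sub-Gaussian with $\psi_2$-norm bounded by $\sigma$, their squares are sub-exponential with $\psi_1$-norm at most $\sigma^2$. Applying Lemma \ref{lem:prod_subexp} to the product gives $\|Y_i\|_{\psi_{1/2}}\le \sigma^4$, hence the tail estimate $\bbP(Y_i>t)\le 2\exp(-\sqrt{t}/\sigma^4)$ and the moment bound $\bbE Y_i^2\lesssim \sigma^8$ via the standard gamma-integral computation. Cauchy–Schwarz together with the sub-Gaussian fourth-moment bound gives the population upper bound $\mu=\bbE Y_i\le\sqrt{\bbE(X^\top a)^4\,\bbE(X^\top b)^4}\lesssim \sigma^4$, which yields the second conclusion of the lemma once the concentration part is established.

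Next, I truncate. Set $M_n:=C_0\,\sigma^8\log^2(n+1)$ with $C_0$ chosen large enough that $2n\exp(-\sqrt{M_n}/\sigma^4)\to 0$, so that with probability $1-o(1)$ we have $\max_i Y_i\le M_n$ and therefore the sum coincides with $\sum_i \tilde Y_i$, $\tilde Y_i:=Y_i\mathds{1}(Y_i\le M_n)$. The truncation bias is negligible: by Cauchy–Schwarz and the tail bound,
\begin{equation*}
|\bbE\tilde Y_i-\mu|\le \sqrt{\bbE Y_i^2\cdot \bbP(Y_i>M_n)}\lesssim \sigma^4\exp\!\bigl(-\tfrac{1}{2}\sqrt{M_n}/\sigma^4\bigr)=o(\mu)
\end{equation*}
whenever $\mu$ satisfies the nontriviality condition below. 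Applying Bernstein's inequality to the bounded i.i.d.\ variables $\tilde Y_i-\bbE\tilde Y_i$ with $|\tilde Y_i|\le M_n$ and $\mathrm{Var}(\tilde Y_i)\le \bbE Y_i^2\lesssim \sigma^8$ yields, for $t=C_1\mu$,
\begin{equation*}
\bbP\!\left(\tfrac{1}{n}\sum_i\tilde Y_i>\bbE\tilde Y_i+C_1\mu\right)\le \exp\!\left(-\frac{c\,n\mu^2}{\sigma^8+M_n\mu}\right).
\end{equation*}

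The target rate $\sqrt{n\mu}/\log(n+1)$ only matters when the bound is nontrivial, i.e.\ when $\sqrt{n\mu}/\log(n+1)\gtrsim \log p$, so $n\mu\gg\log^2(n+1)$; in that regime the truncation term $M_n\mu\asymp \mu\log^2(n+1)$ dominates $\sigma^8$ in the Bernstein denominator, so the exponent becomes $\asymp n\mu/\log^2(n+1)\ge \sqrt{n\mu}/\log(n+1)$, which is exactly the claimed rate. Combining this with the Bernstein bound for each fixed pair $(a,e_j)$ and applying a union bound over $j=1,\dots,p$ produces the $\log p$ factor. The main technical obstacle is the bookkeeping to absorb the truncation bias and the truncation-event probability into the constant $C_1$ and to verify that the variance-dominated regime of Bernstein's bound never binds when the claimed inequality is nontrivial; once that regime analysis is done, the stopping-time rate $\sqrt{n\mu}/\log(n+1)$ drops out, and the final claim ``$\max_j \tfrac1n\sum_i(X_i^\top e_j)^2(X_i^\top b)^2=O_p(1)$'' follows from the boundedness of $\mu$ and the hypothesis $(\log p\,\log n)/\sqrt n\to 0$.
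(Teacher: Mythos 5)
Your approach is a genuinely different route from the paper's. The paper does not truncate: it observes that $\bigl\|(X_i^\top a)^2(X_i^\top b)^2\bigr\|_{\psi_{1/2}}\le\sigma^4$ via Lemma \ref{lem:prod_subexp} and then directly invokes the generalized Bernstein inequality for sub-Weibull($\alpha=1/2$) sums from Theorem 3.2 of \citet{kuchibhotla2018moving}, which already packages the interpolation between the sub-Gaussian and heavy-tail regimes (the $\sqrt{t/n}$ and $t^2\log^2(n+1)/n$ terms). You instead re-derive that concentration bound from scratch by truncation at $M_n\asymp\sigma^8\log^2(n+1)$ followed by Bernstein for bounded variables. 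That is a legitimate elementary substitute for the KC machinery, and your truncation level, truncation-probability bound, and truncation-bias bound are all reasonable. The union bound over $j$ and the final $O_p(1)$ deduction coincide with the paper's.

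However, there is a gap in your regime analysis, precisely in the step you flag as "the main technical obstacle." You bound $\mathrm{Var}(\tilde Y_i)\le\bbE Y_i^2\lesssim\sigma^8$ and then claim that, whenever the stated inequality is nontrivial (i.e.\ $n\mu\gg\log^2(n+1)\log^2 p$), the term $M_n\mu$ dominates $\sigma^8$ in the Bernstein denominator. That implication is false: $M_n\mu\gtrsim\sigma^8$ requires $\mu\gtrsim\log^{-2}(n+1)$, whereas the nontriviality condition only gives $\mu\gtrsim\log^2(n+1)\log^2 p/n$. For example $\mu\asymp n^{-1/2}$ (with $p$ bounded) satisfies nontriviality but puts you in the variance-dominated regime, where the Bernstein exponent $\asymp n\mu^2/\sigma^8$ can fall strictly below the target $\sqrt{n\mu}/\log(n+1)$. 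The fix is short: instead of the crude variance bound $\lesssim\sigma^8$, use the truncation-specific bound
\begin{equation*}
\mathrm{Var}(\tilde Y_i)\le\bbE\bigl[Y_i^2\mathds{1}(Y_i\le M_n)\bigr]\le M_n\,\bbE\bigl[Y_i\mathds{1}(Y_i\le M_n)\bigr]\le M_n\mu \,,
\end{equation*}
which makes the Bernstein denominator $\lesssim M_n\mu$ unconditionally and yields the exponent $\gtrsim n\mu/M_n\asymp n\mu/\log^2(n+1)\ge\sqrt{n\mu}/\log(n+1)$ with no case analysis at all. With that one-line repair your proof goes through.
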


\begin{proof}
As $X_i$'s are sub-gaussian, we have: 
$$
\mu \triangleq \bbE\left[ \left(X_i^{\top}a\right)^2 \left(X_i^{\top}b\right)^2 \right] \le \sqrt{\bbE\left[\left(X_i^{\top}a\right)^4\right]\bbE\left[\left(X_i^{\top}b\right)^4\right]} \le 16 \sigma^2 \,.
$$
From Lemma \ref{lem:prod_subexp}, it is immediate that: 
$$
\left\| \left(X_i^{\top}a\right)^2 \left(X_i^{\top}b\right)^2 \right\|_{\psi_{1/2}} \le \left\| \left(X_i^{\top}a\right)^2\right\|_{\psi_1} \left\| \left(X_i^{\top}a\right)^2\right\|_{\psi_1} = \sigma^4 \,.
$$
For the rest of the proof, we use Theorem 3.2 of \cite{kuchibhotla2018moving} with $\alpha = 1/2$, which yields:
$$
\left|\frac{1}{n}\sum_{i=1}^n \left(X_i^{\top}a\right)^2 \left(X_i^{\top}b\right)^2 - \mu\right|_{\psi_{1/2}, L_n(1/2)} \le \frac{2e\sqrt{6} \sigma^4}{\sqrt{n}}  
$$
with
$$
L_n(1/2) \le C\sigma^8 \frac{\log^2{(n+1)}}{\sqrt{n}} \,.
$$
Therefore using the tail bound of \cite{kuchibhotla2018moving} (last display of page 8) we have: 
\begin{align*}
\bbP\left(\frac{1}{n}\sum_{i=1}^n \left(X_i^{\top}a\right)^2 \left(X_i^{\top}b\right)^2 - \mu > \frac{C_1}{\sqrt{n}}\left\{\sqrt{t} + \frac{t^2 \log^2{(n+1)}}{\sqrt{n}}\right\}\right) \le 2e^{-t} \,.
\end{align*}
Choosing $t = \sqrt{n\mu}/\log{(n+1)}$ we have: 
\begin{align*}
\bbP\left(\frac{1}{n}\sum_{i=1}^n \left(X_i^{\top}a\right)^2 \left(X_i^{\top}b\right)^2 > (2+C_1)\mu\right) \le 2e^{-\frac{\sqrt{n\mu}}{\log{(n+1)}}} \,.
\end{align*}
Therefore, a simple application of union bound yields: 
\begin{align*}
\bbP\left(\max_{1 \le j \le p}\frac{1}{n}\sum_{i=1}^n \left(X_i^{\top}a\right)^2 \left(X_i^{\top}b\right)^2 > (2+C_1)\mu\right) \le 2e^{\log{p} -\frac{\sqrt{n\mu}}{\log{(n+1)}}} \,.
\end{align*}
where the right hand side of the above equation in $o(1)$ as long as $(\log{p}\log{n})/\sqrt{n} \to 0$.
\end{proof}

\section{List of covariates}
\begin{table}[h]
\caption{Description of covariates of Turkey municipality voting data \label{tab:description_islamic}}
\fbox{%
\begin{tabular*}{0.9\textwidth}{*{2}{l}} 
vshr-islam 1994 & Islamic vote share 1994 \\
 \hline
partycount & Number of parties receiving votes 1994 \\
 \hline
lpop1994 & Log Population in 1994 \\
 \hline
merkezi & District center \\
 \hline
merkezp & Province center \\
 \hline
subbuyuk & Sub-metro center \\
 \hline
buyuk & Metro center \\
 \hline
ageshr19 & Population share below 19 in 2000 \\
 \hline
ageshr60 & Population share below 60 in 2000 \\
 \hline
sexr & Gender ratio in 2000 \\
 \hline
shhs & Household size in 2000 \\
 \hline
i89 & Indicator to Islamic mayor in 1989 \\
 \hline
 partycount & Number of parties in the election 1994 \\
 \hline 
 hischshr1520m & Share Men aged 15-20 with High School Education in 2000 
\end{tabular*}}
\end{table}

\begin{table}[h]
\caption{Table of covariates of GPA data \label{tab:covariates_LSO}}
\centering
\fbox{%
\begin{tabular*}{0.9\textwidth}{*{2}{l}}
Col. names & Explanation  \\
\hline 
hsgrade\_pct & High school grade in percentage \\
\hline 
totcredits\_year1 & Total credits taken in first year \\
\hline 
loc\_campus1 & Indicator whether the student in from Campus 1 \\
\hline 
loc\_campus2 & Indicator whether the student in from Campus 2 \\
\hline
male & Indicator of whether the student is male  \\
\hline
bpl\_north\_america & Whether the birth place in North America \\
\hline 
age\_at\_entry & Age of the student when they entered the college \\
\hline 
english & Indicator of whether the student is native english speaker \\
\end{tabular*}}
\end{table}

\bibliographystyle{plain}
\bibliography{mybib}

\end{document}